\newcommand{\subsubsubsection}[1]{\paragraph{#1}\mbox{}\\}
\DeclareMathOperator{\Tr}{Tr}
\newtheorem{theorem}{Theorem}
\newtheorem{corollary}[theorem]{Corollary}
\newtheorem{definition}[theorem]{Definition}
\newtheorem{proposition}[theorem]{Proposition}
\newtheorem{remark}[theorem]{Remark}
\newenvironment{proof}[1][Proof]{\noindent\textbf{#1.} }{\ \rule{0.5em}{0.5em}}
\begin{document}

\title{Evolved Quantum Boltzmann Machines}

\author{Michele Minervini}
\affiliation{School of Electrical and Computer Engineering, Cornell University, Ithaca, New York 14850, USA}
\author{Dhrumil Patel}  
\affiliation{Department of Computer Science, Cornell University, Ithaca, New York 14850, USA}
\author{Mark M. Wilde}
\affiliation{School of Electrical and Computer Engineering, Cornell University, Ithaca, New York 14850, USA}

\date{\today}

\begin{abstract}
We introduce evolved quantum Boltzmann machines as a variational ansatz for quantum optimization and learning tasks. Given two parameterized Hamiltonians $G(\theta)$ and $H(\phi)$, an evolved quantum Boltzmann machine consists of preparing a thermal state of the first Hamiltonian $G(\theta)$ followed by unitary evolution according to the second Hamiltonian $H(\phi)$. Alternatively, one can think of it as first realizing imaginary time evolution according to $G(\theta)$ followed by real time evolution according to $H(\phi)$. After defining this ansatz, we provide analytical expressions for the gradient vector and illustrate their application in ground-state energy estimation and generative modeling, showing how the gradient for these tasks can be estimated by means of quantum algorithms that involve classical sampling, Hamiltonian simulation, and the Hadamard test. We also establish analytical expressions for the Fisher--Bures, Wigner--Yanase, and Kubo--Mori information matrix elements of evolved quantum Boltzmann machines, as well as quantum algorithms for estimating each of them, which leads to at least three different general natural gradient descent algorithms based on this ansatz, and we derive fundamental limitations on the estimation of time-evolved thermal states. Along the way, we establish a broad generalization of the main result of [Luo, \textit{Proc.~Am.~Math.~Soc.}~132, 885 (2004)], proving that the Fisher--Bures and Wigner--Yanase information matrices of general parameterized families of states differ by no more than a factor of two in the matrix (Loewner) order, making them essentially interchangeable for training when using natural gradient descent. 
\end{abstract}

\maketitle

\tableofcontents

\allowdisplaybreaks

\section{Introduction}

\subsection{Motivation}

Quantum computers offer a promising solution for various computational challenges~\cite{Montanaro2016overview_q_algo}. In particular, there is growing interest in how quantum computation can provide a speedup over classical algorithms in optimization and learning tasks or if it can be useful for purely quantum tasks in these areas~\cite{Biamonte2017qml}. The potential for quantum utility in these domains relies on efficiently representing quantum states, which requires the design of both expressive and tractable ansatzes. These representations not only determine the computational efficiency of an algorithm but also affect its ability to explore the solution space effectively.

In this context, several variational quantum ansatzes have been proposed~\cite{Peruzzo2014vqe,Amin2018qbm, Verdon2019qhbm,Ferguson2021mbqc}, with parameterized quantum circuits (PQCs) having emerged as a prominent heuristic~\cite{Peruzzo2014vqe,McClean2016VQA,Mitarai2018q_circuit_learning}. PQCs have been considered for solving practical problems such as ground-state energy estimation~\cite{Peruzzo2014vqe, Jones2019vqe,Cerezo2022vqe}, approximate combinatorial optimization~\cite{farhi2014qaoa,Wang2018qaoa,Hadfield2019qaoa}, and even machine-learning tasks like clustering~\cite{bermejo2024variational_clustering}, classification~\cite{Rebentrost2014qsvm,Havlicek2019supervised_learning,Schuld2020quantum_classifier}, and generative modeling~\cite{Benedetti2019pqc, Leadbeater2021gen_modell, Abbas2021qnn} (see \cite{Cerezo2021vqa} for a review). However, PQCs face significant hurdles, including the barren plateau problem~\cite{McClean2018barren_plateaus, Cerezo2021barren_plateaus, Holmes2022barren_plateaus, marrero2021barrenp_lateaus}, where gradient magnitudes decay exponentially with system size, making training infeasible for larger systems. These challenges underscore the importance of exploring alternative ansatzes that maintain expressivity while mitigating such optimization difficulties.

Quantum Boltzmann machines have emerged as an alternative ansatz, being an expressive and trainable model that incorporates ideas from both variational algorithms and statistical physics~\cite{Amin2018qbm,Benedetti2017qbm,kieferova2017qbm}. They generalize classical Boltzmann machines -- a classic machine learning technique that underpins many deep learning models~\cite{hinton1983optimal,hinton2006fast,salakhutdinov2009deep} -- by replacing the classical energy function with a parameterized quantum Hamiltonian. This quantum extension allows for the inclusion of non-commuting interaction terms, thereby enhancing the model’s expressivity, and the resulting quantum states take the form of thermal states. Interest in quantum Boltzmann machines has been further bolstered by recent breakthroughs in thermal-state preparation~\cite{chen2023q_Gibbs_sampl,chen2023thermalstatepreparation,rajakumar2024gibbssampling,bergamaschi2024gibbs_sampling,chen2024sim_Lindblad,rouze2024efficientthermalization,bakshi2024hightemperaturegibbsstates,ding2024preparationlowtemperaturegibbs}. Other recent works have shown how these models can be used in quantum machine learning tasks, such as generative modeling~\cite{Coopmans2024qbm_gen_learn,Tüysüz2024qbm_gen_model} and ground-state energy estimation~\cite{patel2024quantumboltzmannmachine}, as well as for solving semi-definite programs~\cite{liu2025quantumthermodynamicssemidefiniteoptimization,minervini2025constrainedfreeenergyminimization}. The geometry of parameterized thermal states has also been investigated, alongside quantum algorithms for estimating their information matrix elements~\cite{patel2024naturalgradientparameterestimation}, opening up applications like geometry-aware gradient descent algorithms.

\subsection{Main results}

In this paper, our first fundamental contribution is to establish a new ansatz for parameterizing quantum states, which generalizes quantum Boltzmann machines. For doing so, we consider general parameterized Hamiltonians of the form
\begin{align}
G(\theta) & \coloneqq\sum_{j=1}^J\theta_{j} G_{j},\label{eq:G}\\
H(\phi) & \coloneqq\sum_{k=1}^K \phi_{k}H_{k},\label{eq:H}
\end{align}
where $\theta_j \in \mathbb{R}$ for all $j \in \{1, \ldots, J\}$, $\phi_k \in \mathbb{R}$ for all $k \in \{1, \ldots, K\}$, and $G_j$ and $H_k$ are Hermitian operators. For Hamiltonians of physical interest, $G_j$ and $H_k $ act non-trivially on only a constant number of qubits.
We define the \textit{evolved quantum  Boltzmann machine} ansatz to be as follows:
\begin{align}
\omega(\theta,\phi) &  \coloneqq e^{-iH(\phi)}\rho(\theta)e^{iH(\phi)},\label{eq:ansatz}\\
\rho(\theta) &  \coloneqq\frac{e^{-G(\theta)}}{Z(\theta)},
\label{eq:param-thermal-state}
\end{align}
where $Z(\theta)\coloneqq\operatorname{Tr}[e^{-G(\theta)}]$ is the partition function.  Since the parameterized thermal state $\rho(\theta)$ defines a quantum Boltzmann machine~\cite{Amin2018qbm,Benedetti2017qbm,kieferova2017qbm} and $e^{-iH(\phi)}$ represents a unitary evolution, the state $\omega(\theta,\phi) $ is indeed an evolved quantum Boltzmann machine. In order for the Hamiltonian $H(\phi)$ to play a non-trivial role, it is necessary that $[H(\phi),G(\theta)] \neq 0$.

The main idea behind introducing the parameter $\phi$ via $H(\phi)$ is to enrich the representational capacity of quantum Boltzmann machines, enabling evolved quantum Boltzmann machines to explore a broader class of quantum states. Indeed, the unitary evolution allows for exploration of directions in the quantum state manifold that are inaccessible to thermal states of $G(\theta)$ alone, potentially enabling the proposed model to capture structures or correlations that would otherwise be difficult to represent. Moreover, the addition of real-time evolution introduces greater flexibility in the ansatz design: since preparing thermal states of arbitrary Hamiltonians can be computationally demanding \cite{devulapalli2026complexitythermalizationfinitequantum}, one can choose a $G(\theta)$ for which thermal state preparation is more tractable, and then enhance expressivity through a comparatively lightweight real-time evolution generated by $H(\phi)$. This trade-off allows for constructing expressive ansatz while mitigating the cost of thermal-state preparation.

Another contribution of our paper consists of analytic formulas for the gradient of~\eqref{eq:ansatz} with respect to the parameter vectors~$\theta$ and~$\phi$, as well as quantum algorithms for estimating the elements of the gradient vector. Similar to previously reported algorithms from~\cite{patel2024quantumboltzmannmachine,patel2024naturalgradientparameterestimation}, these algorithms involve a combination of classical random sampling, Hamiltonian simulation~\cite{lloyd1996universal,childs2018toward}, and the Hadamard test~\cite{Cleve1998}. These results support using evolved quantum Boltzmann machines for optimization and learning tasks, such as ground-state energy estimation, constrained Hamiltonian optimization~\cite{chen2023qslackslackvariableapproachvariational},  and generative modeling tasks. Table~\ref{table:grad-results} summarizes these findings.

We then explore various quantum generalizations of Fisher information (i.e., Fisher–Bures, Wigner–Yanase, and Kubo–Mori), establishing elegant expressions for the corresponding information matrix elements for evolved quantum Boltzmann machines. Of practical relevance, we prove that these matrix elements can be estimated using standard quantum subroutines -- specifically, a combination of the Hadamard test~\cite{Cleve1998}, classical random sampling, and Hamiltonian simulation~\cite{lloyd1996universal,childs2018toward}. Table~\ref{table:FB-WY-KM-results} summarizes our analytical findings for the Fisher--Bures, Wigner--Yanase, and Kubo--Mori information matrix elements of evolved quantum Boltzmann machines. 

These developments pave the way for further applications of evolved quantum Boltzmann machines. In this paper, we introduce two of them. Our first application is to develop a natural gradient descent algorithm applied specifically to evolved quantum Boltzmann machine learning. Notably, the gradient update step in our approach can be performed on a quantum computer using standard quantum subroutines. Our second application is to the problem of estimating a Hamiltonian when given access to time-evolved thermal-state samples of the form in~\eqref{eq:ansatz}.

Before moving on, let us note that fixing $\phi$ and varying $\theta$ leads to quantum Boltzmann machines as a special case, for which it is already known how to evaluate analytic gradients~\cite{Coopmans2024qbm_gen_learn,patel2024quantumboltzmannmachine} and information matrix elements~\cite{patel2024naturalgradientparameterestimation}. Alternatively, fixing the parameter vector $\theta$ and allowing the parameter vector $\phi$ to vary in the Hamiltonian evolution $e^{-iH(\phi)}$ leads to a special case of an evolved quantum Boltzmann machine, which we refer to as a \textit{quantum evolution machine}. It can be seen as an alternative parameterized ansatz itself, differing from the more common layered parameterized circuits. By Trotterization, a quantum evolution machine is related to the earlier proposed Hamiltonian variational ansatz~\cite{WH2015,Hadfield2019qaoa}, but there is a strong distinction in how one evaluates analytic gradients and information matrix elements for quantum evolution machines, as seen later on in~\eqref{eq:grad_wrt_phi} of Theorem~\ref{thm:gradient-eQBM} and Theorems~\ref{thm:FB-phi},~\ref{thm:WY-phi},~\ref{thm:KM-phi}, respectively, when compared to how it is done for the Hamiltonian variational ansatz. Quantum evolution machines are related to the general ansatz considered in~\cite[Section~2]{Banchi2021measuringanalytic} and  to \cite[Eq.~(2)]{Wiersema2024herecomessun}, but again there are distinctions in how we evaluate analytic gradients for them, and the initial state of a quantum evolution machine is a thermal state. Additionally, here we derive analytical expressions for quantum generalizations of Fisher information for this ansatz, and we establish quantum algorithms for estimating them. 

\setlength{\tabcolsep}{5pt} % Default value: 6pt
\renewcommand{\arraystretch}{1.8}
\begin{table*}
    \begin{tabular}
[c]{|l|l|c|c|}\hline\hline
Application & Gradient formula & Equation & Quantum Circuit
\\\hline\hline
\multirow{2}{*}{\shortstack{Ground-state energy\\estimation}} & $\begin{array}{l}\frac{\partial}{\partial\theta_{j}}\operatorname{Tr}[O\omega(\theta,\phi)]  =
         -\frac{1}{2}\left\langle\left\{e^{iH(\phi)} O e^{-iH(\phi)} , \Phi_{\theta}(G_{j})\right\} \right\rangle_{\rho(\theta)} \\ 
         \qquad \qquad \qquad \qquad \qquad + \left\langle O\right\rangle_{\omega(\theta,\phi)}\left\langle G_{j}\right\rangle_{\rho(\theta)}  \end{array}$
     & Equation~\eqref{eq:VQE-grad-theta} & Figure~\ref{fig:VQE-grad-theta} \\ \cline{2-4}
   & $ \frac{\partial}{\partial\phi_{k}}\operatorname{Tr}[O\omega(\theta,\phi)]  =
    i\left\langle\big[\Psi^{\dagger}_{\phi}(H_{k}),O\big] \right\rangle_{\omega(\theta,\phi)}$  & 
    Equation~\eqref{eq:VQE-grad-phi} & Figure~\ref{fig:VQE-grad-phi} \\ \hline
\multirow{2}{*}{Generative modeling} & $\frac{\partial}{\partial \theta_j} D( \eta \| \omega(\theta,\phi) ) = \left\langle G_j \right\rangle_{\eta(\phi)} - \left\langle G_j \right\rangle_{\rho(\theta)}$ & 
    Equation~\eqref{eq:gen_model_der_theta} & \\ \cline{2-4}
   & $\frac{\partial}{\partial \phi_k} D( \eta \| \omega(\theta,\phi) ) = i \left\langle  \left[  G(\theta)  ,\Psi_\phi (H_k) \right] \right\rangle_{\eta(\phi)} $ & Equation~\eqref{eq:gen_mod_grad_phi} & Figure~\ref{fig:gen-mod-grad-phi} \\\hline\hline
\end{tabular}
\caption{Summary of our analytical results for various gradients of evolved quantum Boltzmann machines, which includes  quantum Boltzmann machines (rows~1 and 3) and quantum evolution machines (rows~2 and~4) as special~cases.}
\label{table:grad-results}
\end{table*}

\setlength{\tabcolsep}{5pt} % Default value: 6pt
\renewcommand{\arraystretch}{1.8}
\begin{table*}
\centering
\begin{tabular}
[c]{|l|l|c|c|}\hline\hline
Quantity & Formula & Theorem & Quantum Circuit\\\hline\hline 
Fisher--Bures $\theta$ & $I_{ij}^{\operatorname{FB}}(\theta)=\frac{1}
{2}\left\langle \{\Phi_{\theta}(G_{i}),\Phi_{\theta}(G_{j})\}\right\rangle
_{\rho(\theta)}-\left\langle G_{i}\right\rangle _{\rho(\theta)}\left\langle
G_{j}\right\rangle _{\rho(\theta)}$ & Theorem~\ref{thm:FB-theta} & Figure~\ref{fig:FB-theta} \\\hline
Fisher--Bures $\phi$ & $I_{ij}^{\operatorname{FB}}(\phi)=\left\langle \left[  \Phi_{\theta}(\Psi_{\phi}(H_{i})), \left[  G(\theta) , \Psi_{\phi}
(H_{j})\right]  \right]
\right\rangle _{\rho(\theta)}$ & Theorem~\ref{thm:FB-phi} & Figure~\ref{fig:FB-phi} \\\hline
Fisher--Bures $\theta,\phi$ & $I_{ij}^{\operatorname{FB}}(\theta
,\phi)=i\left\langle \left[  \Phi_{\theta}(G_{i}),\Psi_{\phi}
(H_{j})\right]  \right\rangle _{\rho(\theta)}$ & Theorem~\ref{thm:FB-theta-phi} & Figure~\ref{fig:FB-theta-phi} \\\hline\hline
\multirow{2}{*}{Wigner--Yanase $\theta$} & $I_{ij}^{\operatorname{WY}}(\theta)=\frac{1}
{2}\operatorname{Tr}\!\left[  \Phi_{\frac{\theta}{2}}(G_{i})\sqrt{\rho
(\theta)}\Phi_{\frac{\theta}{2}}(G_{j})\sqrt{\rho(\theta)}\right]  $ & \multirow{2}{*}{Theorem~\ref{thm:WY-theta}} & Figure~\ref{fig:WY-theta-2} \\
& $\qquad\qquad+\frac{1}{4}\left\langle \left\{  \Phi_{\frac{\theta}{2}}
(G_{i}),\Phi_{\frac{\theta}{2}}(G_{j})\right\}  \right\rangle _{\rho(\theta
)}-\left\langle G_{i}\right\rangle _{\rho(\theta)}\left\langle G_{j}
\right\rangle _{\rho(\theta)}$ &  & Figure~\ref{fig:WY-theta}\\\hline
\multirow{2}{*}{Wigner--Yanase $\phi$} & $I_{ij}^{\operatorname{WY}}(\phi)= - 8 \Tr\!\left[ \Psi_{\phi}(H_{j}) \sqrt{\rho(\theta)} \Psi_{\phi}(H_{i}) \sqrt{\rho(\theta)} \right]$ & \multirow{2}{*}{Theorem~\ref{thm:WY-phi}} & Figure~\ref{fig:WY-phi-1}\\
& $\qquad\qquad\quad +\, 4\, \left\langle 
 \left\{ \Psi_{\phi}(H_{i}) , \Psi_{\phi}(H_{j}) \right\}   \right\rangle_{\rho(\theta)}$ &  & Figure~\ref{fig:WY-phi-2} \\\hline
Wigner--Yanase $\theta,\phi$ & $I_{ij}^{\operatorname{WY}}(\theta,\phi
)=i\left\langle \left[  \Phi_{\frac{\theta}{2}}(G_{j}),\Psi_{\phi
}(H_{i})\right]  \right\rangle _{\rho(\theta)}$ & Theorem~\ref{thm:WY-theta-phi} & Figure~\ref{fig:WY-theta-phi} \\\hline\hline
Kubo--Mori $\theta$ & $I_{ij}^{\operatorname{KM}}(\theta)=\frac{1}
{2}\left\langle \left\{  G_{i},\Phi_{\theta}(G_{j})\right\}  \right\rangle
_{\rho(\theta)}-\left\langle G_{i}\right\rangle _{\rho(\theta)}\left\langle
G_{j}\right\rangle _{\rho(\theta)}$ & Theorem~\ref{thm:KM-theta} & Figure~\ref{fig:KM-theta} \\\hline
Kubo--Mori $\phi$ & $I_{ij}^{\operatorname{KM}}(\phi)=\left\langle \left[  \Psi_{\phi}(H_{i}), \left[  G(\theta) , \Psi_{\phi}
(H_{j})\right]  \right]
\right\rangle _{\rho(\theta)}$ & Theorem~\ref{thm:KM-phi} & Figure~\ref{fig:KM-phi} \\\hline
Kubo--Mori $\theta,\phi$ & $I_{ij}^{\operatorname{KM}}(\theta,\phi)= \frac
{i}{2}\left\langle \left\{  \Phi_{\theta}(G_{i}),\left[ G(\theta), \Psi_{\phi}(H_{j})\right]  \right\}  \right\rangle _{\rho(\theta)}$ & Theorem~\ref{thm:KM-theta-phi} & Figure~\ref{fig:KM-theta-phi}
\\\hline\hline
\end{tabular}
\caption{Summary of our analytical results for the matrix elements of the Fisher--Bures, Wigner--Yanase, and Kubo--Mori information matrices for evolved quantum Boltzmann machines, which includes as special cases quantum Boltzmann machines (rows~1, 4, and 7) and quantum evolution machines (rows~2, 5, and 8).}
\label{table:FB-WY-KM-results}
\end{table*}

\subsection{Paper organization}

The rest of our paper is organized as follows. In Section~\ref{sec:gradient}, we present our analytical expressions for the gradient of evolved quantum Boltzmann machines, along with two applications for optimization and learning tasks. See Table~\ref{table:grad-results} for a summary of these findings and Figure~\ref{fig:VQE-circuits} for quantum circuits that estimate elements of the gradient for ground-state energy estimation and generative modeling. Section~\ref{sec:q_Fisher_info_theory} provides background on quantum generalizations of the Fisher information and their connections to smooth divergences. In particular, we introduce the Fisher--Bures, Wigner--Yanase, and Kubo--Mori information matrices, each of which is associated with Uhlmann fidelity~\cite{Uhl76}, Holevo fidelity~\cite{Kholevo1972}, and quantum relative entropy~\cite{Umegaki62}, respectively. Therein, we also  establish a broad generalization of \cite[Theorem~2]{Luo2004}, proving that the Fisher--Bures and Wigner--Yanase information matrices of general parameterized families of states differ by no more than a factor of two in the matrix (Loewner) order (Corollary~\ref{cor:WY-FB-ineqs}). In Section~\ref{sec:general-consids-info-mats}, we present general considerations regarding the geometry of evolved quantum Boltzmann machines. After that, we provide our formulas for the elements of the Fisher–Bures (Section~\ref{sec:FB-info}), Wigner-Yanase (Section~\ref{sec:WY-info}), and Kubo–Mori (Section~\ref{sec:KM-info}) information matrices for evolved quantum Boltzmann machines, only sketching the idea behind their proofs in the main text while including detailed proofs in the appendices. See Table~\ref{table:FB-WY-KM-results} for a summary of these findings. The quantum circuits involved in the quantum algorithms for estimating each matrix element are outlined in the respective sections corresponding to each information matrix (see Figures~\ref{fig:FB-circuits},~\ref{fig:WY-circuits}, and~\ref{fig:KM-circuits} also). In Section~\ref{sec:applications}, we finally delve into more detail about our two applications of the results related to the information matrices of evolved quantum Boltzmann machines: a natural gradient method for evolved quantum Boltzmann machine learning (Section~\ref{sec:app-nat-grad}) and estimating the parameters of a Hamiltonian from copies of a time-evolved thermal state (Section~\ref{sec:app-estimating}). Section~\ref{sec:special-cases} briefly discusses how quantum Boltzmann machines and quantum evolution machines are special cases of evolved quantum Boltzmann machines. 
We conclude in Section~\ref{sec:conclusion} with a summary and some directions for future work.

\section{Gradient of evolved quantum Boltzmann machines}

\label{sec:gradient}

For the purposes of optimization, we are interested in determining the elements of the gradient vector
$\nabla\!_{\theta,\phi}\omega(\theta,\phi)$. As we show in what follows, the following quantum channels appear in the expressions for the gradient:
\begin{align}
\Phi_{\theta}(X) &  \coloneqq\int_{\mathbb{R}}dt\ p(t)\ e^{-iG(\theta
)t}Xe^{iG(\theta)t},\label{eq:Phi}\\
\Psi_{\phi}(X) &  \coloneqq \int_{0}^{1}dt\ e^{iH(\phi)t}Xe^{-iH(\phi)t},\label{eq:Psi}
\end{align}
where
\begin{equation}
p(t)\coloneqq\frac{2}{\pi}\ln\left\vert \coth\!\left(  \frac{\pi t}{2}\right)
\right\vert 
\label{eq:high-peak-tent-density}
\end{equation}
is a probability density function known as the \textit{high-peak-tent} density~\cite{patel2024quantumboltzmannmachine}. We also make use of the Hilbert--Schmidt adjoint of~\eqref{eq:Psi}, which is given by
\begin{equation}
\Psi^\dagger_{\phi}(X)   = \int_{0}^{1}dt\ e^{-iH(\phi)t} X e^{iH(\phi)t}.
\label{eq:Psi-adjoint}
\end{equation}
These channels also appear later on in various expressions for quantum generalizations of Fisher information (see, e.g., Table~\ref{table:FB-WY-KM-results}).

The channel $\Phi_{\theta}$ was previously shown to be relevant for ground-state energy estimation~\cite{patel2024quantumboltzmannmachine} and for natural gradient descent~\cite{patel2024naturalgradientparameterestimation} using quantum Boltzmann machines, and related, it is relevant for the same tasks when using evolved quantum Boltzmann machines. It also appeared in prior work on generative modeling~\cite{Coopmans2024qbm_gen_learn}, Hamiltonian learning~\cite{Anshu2021ham_learning_qbm}, and quantum belief propagation~\cite{Hastings2007,Kim2012}.

As far as we are aware, the use of the channel $\Psi_{\phi}$ for general quantum machine learning tasks is original to the present paper. However, the technique behind the   derivations that lead to the channel $\Psi_{\phi}$ (i.e., Duhamel's formula) is the same technique used in the derivations of~\cite{Banchi2021measuringanalytic}. Furthermore, our quantum algorithms that make use of the channel $\Psi_{\phi}$ are, in a broad sense, similar to the algorithms proposed in~\cite{Banchi2021measuringanalytic}, in that they both make use of classical random sampling. The channel $\Psi_{\phi}$ has also appeared in the context of evaluating the Fisher--Bures information matrix of time-evolved pure states on quantum computers~\cite{Ho2023}, with applications to quantum metrology. 
In Section~\ref{sec:app-estimating}, we consider the same task but for general time-evolved states, representing a significant generalization of the task considered in~\cite{Ho2023}. This builds upon our expressions for the Fisher--Bures and Wigner--Yanase information matrices for general time-evolved states (Theorems~\ref{thm:FB-phi} and~\ref{thm:WY-phi}).

Observe that the channel $\Phi_{\theta}$ can be realized by picking $t$ at
random from $p(t)$ and then applying the Hamiltonian evolution $e^{-iG(\theta
)t}$. Similarly, $\Psi_{\phi}$ can be realized by picking $t$ uniformly at
random from $\left[  0,1\right]  $ and then applying the Hamiltonian evolution
$e^{iH(\phi)t}$. These observations play a role later on in our quantum algorithms for estimating the elements of the gradient and information matrices, as they require realizing them on a quantum computer.

Theorem~\ref{thm:gradient-eQBM} below presents our expressions for the gradient of evolved quantum Boltzmann machines:

\begin{theorem}
\label{thm:gradient-eQBM}
The partial derivatives for the parameterized family in~\eqref{eq:ansatz} are as follows:
\begin{align}
    \begin{split}
        \frac{\partial}{\partial\theta_{j}}\omega(\theta,\phi)  & = -\frac{1}{2}\left\{  e^{-iH(\phi)}\Phi_{\theta}(G_{j})e^{iH(\phi)},\omega(\theta,\phi)\right\} \\
        & \hspace{0.5cm} +\omega(\theta,\phi)\left\langle G_{j}\right\rangle_{\rho(\theta)},
    \end{split}\label{eq:grad_wrt_theta}\\
    \frac{\partial}{\partial\phi_{k}}\omega(\theta,\phi) & =i\left[\omega(\theta,\phi),\Psi^{\dagger}_{\phi}(H_{k})\right].\label{eq:grad_wrt_phi}
\end{align}

\end{theorem}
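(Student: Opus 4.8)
The plan is to exploit the product structure $\omega(\theta,\phi) = U(\phi)\rho(\theta)U(\phi)^{\dagger}$ with $U(\phi) \coloneqq e^{-iH(\phi)}$, so that differentiating with respect to $\theta_j$ acts only on $\rho(\theta)$ while differentiating with respect to $\phi_k$ acts only on the conjugating unitaries. This cleanly separates the two formulas, and the common tool throughout is Duhamel's formula $\frac{\partial}{\partial x}e^{A(x)} = \int_0^1 ds\, e^{sA(x)}A'(x)e^{(1-s)A(x)}$.

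First I would establish \eqref{eq:grad_wrt_phi}, as it is the more direct of the two. Since $U(\phi)$ is the only $\phi$-dependent factor, the product rule gives $\partial_{\phi_k}\omega = (\partial_{\phi_k}U)\rho U^{\dagger} + U\rho(\partial_{\phi_k}U^{\dagger})$. Applying Duhamel to $U = e^{-iH(\phi)}$ with $\partial_{\phi_k}H(\phi)=H_k$ yields $\partial_{\phi_k}U = -i\int_0^1 ds\, e^{-isH(\phi)}H_k e^{-i(1-s)H(\phi)} = -i\,\Psi_\phi(H_k)\,U$, where the last equality rewrites $e^{-i(1-s)H(\phi)} = e^{isH(\phi)}e^{-iH(\phi)}$ and recognizes the definition in \eqref{eq:Psi}. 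Since each $H_k$ is Hermitian, $\Psi_\phi(H_k)$ is Hermitian, so $\partial_{\phi_k}U^{\dagger} = i\,U^{\dagger}\Psi_\phi(H_k)$. Substituting and using $U\rho U^{\dagger} = \omega$ collapses both terms to $-i\Psi_\phi(H_k)\omega + i\omega\Psi_\phi(H_k) = i\left[\omega,\Psi_\phi(H_k)\right]$.

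Next I would prove \eqref{eq:grad_wrt_theta}. Here $\partial_{\theta_j}\omega = U(\partial_{\theta_j}\rho)U^{\dagger}$, so it suffices to compute the thermal-state gradient and then conjugate. Writing $\rho = e^{-G(\theta)}/Z(\theta)$ and applying the quotient rule, the denominator contributes $+\rho\,\langle G_j\rangle_{\rho(\theta)}$: indeed $\partial_{\theta_j}Z = \operatorname{Tr}[\partial_{\theta_j}e^{-G(\theta)}] = -\operatorname{Tr}[G_j e^{-G(\theta)}]$ by Duhamel and cyclicity of the trace, which equals $-Z\langle G_j\rangle_{\rho(\theta)}$. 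The numerator contributes $\frac{1}{Z}\partial_{\theta_j}e^{-G(\theta)} = -\frac{1}{Z}\int_0^1 ds\, e^{-sG(\theta)}G_j e^{-(1-s)G(\theta)}$. The crux is to identify this Duhamel integral with $\frac{1}{2}\left\{\Phi_\theta(G_j),\rho\right\}$: in the eigenbasis $\{\ket{n}\}$ of $G(\theta)$ with eigenvalues $\{g_n\}$, the integral has matrix elements $(G_j)_{mn}\frac{e^{-g_n}-e^{-g_m}}{g_m-g_n}$, while $\frac{1}{2}\left\{\Phi_\theta(G_j),\rho\right\}$ has matrix elements $\frac{1}{2}(G_j)_{mn}\,\hat p(g_m-g_n)\frac{e^{-g_m}+e^{-g_n}}{Z}$, where $\hat p$ is the Fourier transform of the high-peak-tent density of \eqref{eq:high-peak-tent-density}. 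These agree precisely because $\hat p(\nu) = \frac{2}{\nu}\tanh(\nu/2)$, so that $\hat p(\nu)\frac{e^{-g_m}+e^{-g_n}}{2} = \frac{e^{-g_n}-e^{-g_m}}{\nu}$ with $\nu = g_m-g_n$. Assembling both contributions gives $\partial_{\theta_j}\rho = -\frac{1}{2}\left\{\Phi_\theta(G_j),\rho\right\} + \rho\langle G_j\rangle_{\rho(\theta)}$, and conjugating by $U$ (inserting $U^{\dagger}U = I$ inside the anticommutator) produces $e^{-iH(\phi)}\Phi_\theta(G_j)e^{iH(\phi)}$ and $\omega$, matching the claimed formula.

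I expect the main obstacle to be the identification of the Duhamel integral with the high-peak-tent channel $\Phi_\theta$, since this is the only nonroutine step and rests on the special Fourier-transform identity $\hat p(\nu) = \frac{2}{\nu}\tanh(\nu/2)$. This is exactly the content established for quantum Boltzmann machines in the prior work~\cite{patel2024quantumboltzmannmachine}, so in practice I would either cite that derivation or reproduce the short spectral computation above; everything else (the product rule, Duhamel's formula, Hermiticity of $\Psi_\phi(H_k)$, and the trace-cyclicity argument for $\partial_{\theta_j}Z$) is routine.
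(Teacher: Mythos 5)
Your proposal is correct and follows essentially the same route as the paper's proof: the $\phi$-derivative via Duhamel's formula applied to $e^{\mp iH(\phi)}$ yielding $\partial_{\phi_k}e^{-iH(\phi)}=-i\Psi_\phi(H_k)e^{-iH(\phi)}$, and the $\theta$-derivative by conjugating the known quantum Boltzmann machine gradient $\partial_{\theta_j}\rho(\theta)=-\tfrac{1}{2}\{\Phi_\theta(G_j),\rho(\theta)\}+\rho(\theta)\langle G_j\rangle_{\rho(\theta)}$ with $e^{-iH(\phi)}$. The only difference is that the paper cites prior work for that last identity while you re-derive it via the spectral computation with $\hat p(\nu)=\tfrac{2}{\nu}\tanh(\nu/2)$, which is exactly the content of the cited lemma and is carried out correctly.
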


\begin{proof}
See Appendix~\ref{proof:gradient-eQBM}. For a proof of~\eqref{eq:grad_wrt_phi}, see also~\cite[Eq.~(5)]{Ho2023}.
\end{proof}

\medskip

Theorem~\ref{thm:gradient-eQBM} serves as a key result for implementing gradient-based optimization techniques when using evolved quantum Boltzmann machines. For a general parameterized family $(\sigma(\gamma))_{\gamma \in \mathbb{R}^L}$, let $\mathcal{L}(\gamma)$ denote a loss function, which is a function of the parameter vector $\gamma$. The goal of an optimization algorithm is to minimize $\mathcal{L}(\gamma)$. The standard gradient descent algorithm does so by means of the following update rule:
\begin{equation}
\label{eq:grad_descent}
    \gamma_{m+1}\coloneqq \gamma_m - \mu \nabla_\gamma \mathcal{L}(\gamma_m),
\end{equation}
where $\mu > 0$ is the learning rate or step size and $\nabla_\gamma \mathcal{L}(\gamma_m)$ is the gradient,  indicating the direction of the steepest descent.
Thus, access to the gradient is essential for optimization when using gradient descent.

The following subsections present two examples of problems for which Theorem~\ref{thm:gradient-eQBM} is useful.

\subsection{Ground-state energy estimation}

\label{sec:GSEE}

The goal of ground-state energy estimation is to minimize the following  objective
function:
\begin{equation}
\inf_{\theta,\phi}\operatorname{Tr}[O \omega(\theta,\phi)],
\label{eq:VQE-costfunction}
\end{equation}
where $O$ is an observable that is efficiently measurable. Theorem~\ref{thm:gradient-eQBM}  implies that the gradient of~\eqref{eq:VQE-costfunction} is given by
\begin{align}
    \begin{split}
        \frac{\partial}{\partial\theta_{j}}\operatorname{Tr}[O\omega(\theta,\phi)] & =
        \left\langle O\right\rangle_{\omega(\theta,\phi)}\left\langle G_{j}\right\rangle_{\rho(\theta)} \\
        & \hspace{-1cm} -\frac{1}{2}\left\langle\left\{e^{iH(\phi)} O e^{-iH(\phi)} , \Phi_{\theta}(G_{j})\right\} \right\rangle_{\rho(\theta)},
    \end{split}\label{eq:VQE-grad-theta} \\
    \frac{\partial}{\partial\phi_{k}}\operatorname{Tr}[O\omega(\theta,\phi)] & =
    i\left\langle\big[\Psi^{\dagger}_{\phi}(H_{k}),O\big] \right\rangle_{\omega(\theta,\phi)}.
    \label{eq:VQE-grad-phi}
\end{align}
See Appendix~\ref{app:VQE} for proofs of~\eqref{eq:VQE-grad-theta}--\eqref{eq:VQE-grad-phi}. 

Supposing that each $G_j$ in~\eqref{eq:G} and each $H_k$ in~\eqref{eq:H} are Hermitian, unitary, and efficiently realizable, both the partial derivatives can be estimated using a quantum computer using standard quantum subroutines. As in prior work \cite{patel2024quantumboltzmannmachine,patel2024naturalgradientparameterestimation}, we note that the circuit constructions presented throughout our paper can straightforwardly be generalized
beyond the case of each $G_{j}$ and $H_k$ being a Pauli string, if they instead are Hermitian operators 
block encoded into unitary circuits~\cite{Low2019hamiltonian,Gilyen2019}. Additionally, similar to prior work on quantum Boltzmann machines~\cite{Coopmans2024qbm_gen_learn,patel2024quantumboltzmannmachine,patel2024naturalgradientparameterestimation}, we assume here and throughout our paper that samples
of the thermal state~$\rho(\theta)$ in~\eqref{eq:param-thermal-state}  are available,
for every possible choice of $\theta$. Based on the assumption that both $H(\phi)$ and $G(\theta)$ are local Hamiltonians, one can efficiently implement the unitary evolutions $e^{-iH(\phi)t}$ and $e^{-iG(\theta)t}$, where $t \in \mathbb{R}$~\cite{childs2018toward}.

The first term of~\eqref{eq:VQE-grad-theta} can be straightforwardly estimated by a quantum algorithm described in detail in~\cite[Algorithm 2]{patel2024quantumboltzmannmachine}. The quantum circuit that plays a role in the procedure for estimating the second term in~\eqref{eq:VQE-grad-theta} is depicted in Figure~\ref{fig:VQE-grad-theta}. The procedure is described in detail in Appendix~\ref{app:VQE_grad_theta-est} as Algorithm~\ref{algo:VQE-grad-theta-2}. The quantum circuit used in the algorithm for estimating~\eqref{eq:VQE-grad-phi} is shown in Figure~\ref{fig:VQE-grad-phi}, with $S$ denoting the phase gate
\begin{equation}
S = 
\begingroup
\renewcommand{\arraystretch}{1.2} % Adjust row spacing
\setlength{\arraycolsep}{6pt}     % Adjust column spacing
\begin{pmatrix}
1 & 0 \\
0 & i
\end{pmatrix}.
\endgroup
\end{equation}
The corresponding algorithm is given in Appendix~\ref{app:VQE_grad_phi-est} as Algorithm~\ref{algo:VQE-grad-phi}. Note that one can alternatively use the quantum algorithm presented in~\cite{Banchi2021measuringanalytic} to evaluate the expression in~\eqref{eq:VQE-grad-phi}.

\begin{figure*}
    \centering
    \begin{subfigure}{\textwidth}
        \centering
        \scalebox{1.5}{
        \Qcircuit @C=1.em @R=1.2em {
        \lstick{\ket{1}} & \gate{\operatorname{Had}}  & \ctrl{1} &  \qw & \gate{\operatorname{Had}} & \meter & \rstick{\hspace{-1.2em}Z} \\
        \lstick{\rho(\theta)} & \qw  & \gate{G_{j}} & \gate{e^{-iG(\theta)t}} & \gate{e^{-iH(\phi)}}  & \meter & \rstick{\hspace{-1.2em}O}
    }
    }  
    \vspace{20pt}
    \caption{Quantum circuit that realizes an unbiased estimate of $-\frac{1}{2}\!\left\langle\left\{ e^{iH(\phi)} O e^{-iH(\phi)}, \Phi_{\theta}(G_{j})\right\} \right\rangle_{\rho(\theta)} $. For each run of the circuit, the time $t$ is sampled at random from the probability density $p(t)$ in~\eqref{eq:high-peak-tent-density}. For details of the algorithm, see Appendix~\ref{app:VQE_grad_theta-est}.}
    \label{fig:VQE-grad-theta}
    \end{subfigure}
    \vspace{10pt}
    
    \begin{subfigure}{\textwidth}
        \centering
        \scalebox{1.5}{
        \Qcircuit @C=1.em @R=1.2em {
        \lstick{\ket{1}} & \gate{\operatorname{Had}} & \gate{S} & \ctrl{1} & \gate{\operatorname{Had}} & \meter & \rstick{\hspace{-1.2em}Z} \\
        \lstick{\omega(\theta,\phi)} & \gate{e^{iH(\phi)t}} & \qw & \gate{H_{k}} & \gate{e^{-iH(\phi)t}} & \meter & \rstick{\hspace{-1.2em}O}
    }
    }
    \vspace{20pt}
    \caption{Quantum circuit that realizes an unbiased estimate of $\frac{i}{2}\left\langle\big[\Psi^{\dagger}_{\phi}(H_{k}),O\big] \right\rangle_{\omega(\theta,\phi)}$. For each run of the circuit, the time $t$ is sampled uniformly at random from $[0,1]$. For details of the algorithm, see Appendix~\ref{app:VQE_grad_phi-est}.}
    \label{fig:VQE-grad-phi}
    \end{subfigure}
    \vspace{10pt}
    
    \begin{subfigure}{\textwidth}
        \centering
        \scalebox{1.5}{
        \Qcircuit @C=1.em @R=1.2em {
        \lstick{\ket{0}} & \gate{\operatorname{Had}} & \gate{S} & \ctrl{1} & \gate{\operatorname{Had}} & \meter & \rstick{\hspace{-1.2em}Z} \\
        \lstick{\eta} & \gate{e^{iH(\phi)(1-t)}} & \qw & \gate{H_{k}} & \gate{e^{iH(\phi)t}} & \meter & \rstick{\hspace{-1.2em}G(\theta)}
    }
    }
    \vspace{20pt}
    \caption{Quantum circuit that realizes an unbiased estimate of $\frac{i}{2} \left\langle  \left[  G(\theta)  ,\Psi_\phi (H_k) \right] \right\rangle_{\eta(\phi)} $. For each run of the circuit, the time $t$ is sampled uniformly at random from $[0,1]$. For details of the algorithm, see Appendix~\ref{app:gen-mod_grad_phi-est}.}
    \label{fig:gen-mod-grad-phi}
    \end{subfigure}
    
    \caption{Quantum circuits involved in estimating  the gradient of the objective functions for the ground-state energy estimation problem and the generative modeling problem. (a) Quantum circuit involved in the estimation of $\partial_{\theta_j}\!\Tr\!\left[ O\omega(\theta,\phi) \right]$; (b) quantum circuit involved in the estimation of $\partial_{\phi_k}\!\Tr\!\left[ O\omega(\theta,\phi) \right]$; (c) quantum circuit involved in the estimation of $\partial_{\phi_k}\! D( \eta \| \omega(\theta,\phi) ) $.}
    \label{fig:VQE-circuits}
\end{figure*}
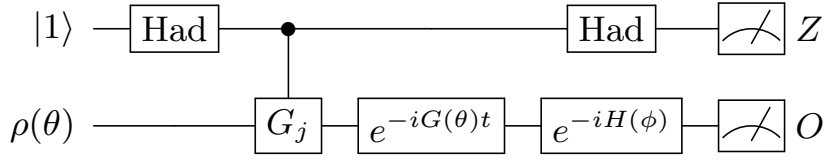
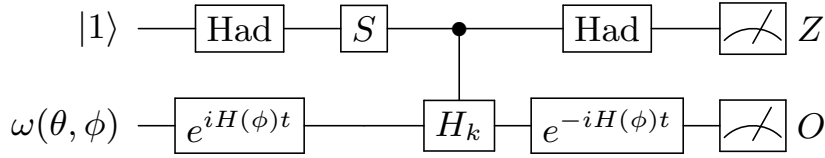
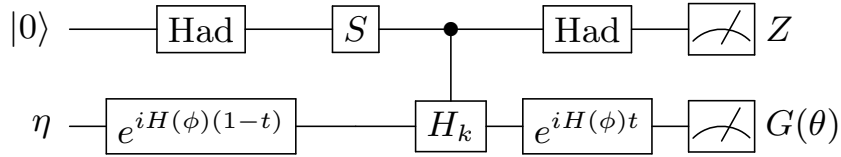

\subsection{Generative modeling}

\label{sec:gen-mod}

Generative modeling involves learning a target quantum state $\eta$ by approximating it with states from a parameterized family. Here, we use the parameterized family $(\omega(\theta,\phi))_{\theta, \phi}$ representing the evolved quantum Boltzmann machine states defined in~\eqref{eq:ansatz}. A natural measure of closeness between the target state~$\eta$ and the model state $\omega(\theta,\phi)$ is the quantum relative entropy, which is defined for general positive-definite states $\omega$ and $\tau$ as~\cite{Umegaki62}
\begin{equation}
\label{eq:quant_rel_entr}
    D( \omega \| \tau ) \coloneqq \Tr\!\left[ \omega (\ln \omega  - \ln \tau) \right].
\end{equation}
This is indeed a natural measure because it satisfies several properties discussed later on in~\eqref{eq:faithful-div}--\eqref{eq:div-non-neg}.
For the states $\eta$ and $\omega(\theta,\phi)$,  the quantum relative entropy can be written as follows:
\begin{equation}
\label{eq:rel_entr_alt}
    D( \eta \| \omega(\theta,\phi) ) \coloneqq \Tr\!\left[ \eta \ln \eta \right] + \Tr\!\left[ G(\theta) \eta(\phi)  \right] + \ln Z(\theta),
\end{equation}
where we have introduced the evolved target state 
\begin{equation}
\eta(\phi) \coloneqq e^{iH(\phi)} \eta e^{-iH(\phi)}.    
\end{equation}
See Appendix~\ref{app:gen_model} for a  derivation of~\eqref{eq:rel_entr_alt}. From~\eqref{eq:rel_entr_alt} or unitary invariance of the quantum relative entropy \cite[Exercise~11.8.6]{Wbook17}, it follows that 
\begin{equation}
     D( \eta \| \omega(\theta,\phi) ) = D( \eta(\phi) \| \rho(\theta) ).
     \label{eq:unitary-invariance-rel-ent}
\end{equation}
Thus, using evolved quantum Boltzmann machines for generative modeling can be seen as allowing the target state to evolve unitarily according to $e^{iH(\phi)}$, potentially improving the learning process when compared to standard quantum Boltzmann machines~\cite{Coopmans2024qbm_gen_learn}.

The quantum relative entropy in~\eqref{eq:quant_rel_entr} can be minimized using gradient descent, as defined by the update rule in~\eqref{eq:grad_descent}, which requires computing its derivatives with respect to the parameters of the model state. To this end, we now show the gradient expressions for the generative modeling problem when using evolved quantum Boltzmann machines.

\begin{theorem}
\label{thm:gen_model_der_theta}
The $j$th element of the gradient  of the quantum relative entropy in~\eqref{eq:quant_rel_entr} with respect to the $\theta$ parameter vector is as follows:
\begin{equation}
    \frac{\partial}{\partial \theta_j} D( \eta \| \omega(\theta,\phi) ) = \left\langle G_j \right\rangle_{\eta(\phi)} - \left\langle G_j \right\rangle_{\rho(\theta)}.
    \label{eq:gen_model_der_theta}
\end{equation}
\end{theorem}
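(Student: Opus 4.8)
The plan is to start from the closed-form expression for the relative entropy in~\eqref{eq:rel_entr_alt}, namely $D(\eta \| \omega(\theta,\phi)) = \Tr[\eta \ln \eta] + \Tr[G(\theta)\eta(\phi)] + \ln Z(\theta)$, and to differentiate it term by term with respect to $\theta_j$. The first term $\Tr[\eta \ln \eta]$ is the negative entropy of the fixed target state and is manifestly independent of $\theta$, so it drops out immediately. This reduces the task to computing the $\theta_j$-derivatives of the energy term $\Tr[G(\theta)\eta(\phi)]$ and the log-partition-function term $\ln Z(\theta)$.

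For the energy term, I would use that $\eta(\phi)$ depends only on $\phi$ while $G(\theta) = \sum_{j}\theta_j G_j$ is linear in $\theta$ by~\eqref{eq:G}, so that $\partial_{\theta_j} G(\theta) = G_j$. This gives $\partial_{\theta_j}\Tr[G(\theta)\eta(\phi)] = \Tr[G_j \eta(\phi)] = \langle G_j\rangle_{\eta(\phi)}$, which is the first term on the right-hand side of~\eqref{eq:gen_model_der_theta}. The one step requiring care is the log-partition-function term, because $G(\theta)$ and its derivative $G_j$ need not commute. Here I would invoke Duhamel's formula, $\partial_{\theta_j} e^{-G(\theta)} = -\int_0^1 du\, e^{-(1-u)G(\theta)} G_j\, e^{-u G(\theta)}$, take the trace, and use cyclicity of the trace to collapse the integral, obtaining $\partial_{\theta_j}\Tr[e^{-G(\theta)}] = -\Tr[G_j e^{-G(\theta)}]$. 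Dividing by $Z(\theta)$ then yields $\partial_{\theta_j}\ln Z(\theta) = -\Tr[G_j \rho(\theta)] = -\langle G_j\rangle_{\rho(\theta)}$, the second term of~\eqref{eq:gen_model_der_theta}.

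Summing the three contributions completes the proof. The main (and only mild) obstacle is precisely the noncommutativity of $G(\theta)$ and $G_j$ in the partition-function derivative; everything else is elementary once~\eqref{eq:rel_entr_alt} is in hand. It is worth noting that the same cyclicity-of-trace argument that tames Duhamel's formula is exactly what makes the dependence on the noncommutative structure wash out, leaving the clean expectation-value difference in~\eqref{eq:gen_model_der_theta}.
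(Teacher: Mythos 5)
Your proposal is correct and follows essentially the same route as the paper: the paper's proof in Appendix~\ref{app:grad_gen_model} likewise differentiates the decomposition~\eqref{eq:rel_entr_alt} term by term, obtaining $\langle G_j\rangle_{\eta(\phi)}$ from the linear energy term and $-\langle G_j\rangle_{\rho(\theta)}$ from $\partial_{\theta_j}\ln Z(\theta)$. Your explicit Duhamel-plus-cyclicity justification of the partition-function derivative is a correct elaboration of a step the paper leaves implicit (it relies on the standard identity from prior work), so the two arguments are the same in substance.
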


\begin{proof}
    See Appendix~\ref{app:grad_gen_model}.
\end{proof}
\medskip 

For standard quantum Boltzmann machines, the gradient with respect to the parameter vector $\theta$ is equal to the difference between the target $\eta$ and model $\rho(\theta)$ expectation values of the operators in the Hamiltonian~\eqref{eq:G} of the ansatz \cite[Eq.~(4)]{Coopmans2024qbm_gen_learn}. However, for evolved quantum Boltzmann machines, it is  equal to the difference between the expectation values of the same operators but evaluated for the evolved target $\eta(\phi)$ and the thermal state~$\rho(\theta)$.

\begin{theorem}
\label{thm:gen_model_der_phi}
    The $k$th element of the gradient  of the quantum relative entropy in~\eqref{eq:quant_rel_entr} with respect to the $\phi$ parameter vector is as follows:
\begin{equation}
\label{eq:gen_mod_grad_phi}
   \frac{\partial}{\partial \phi_k} D( \eta \| \omega(\theta,\phi) ) = i \left\langle  \left[  G(\theta)  ,\Psi_\phi (H_k) \right] \right\rangle_{\eta(\phi)}. 
\end{equation}
\end{theorem}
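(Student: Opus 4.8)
The plan is to start from the alternative expression for the relative entropy in~\eqref{eq:rel_entr_alt}. Among its three summands, the term $\Tr[\eta\ln\eta]$ is the ($\phi$-independent) negative entropy of the target and $\ln Z(\theta)$ depends only on $\theta$; hence only the middle term $\Tr[G(\theta)\eta(\phi)]$ carries $\phi$-dependence. The computation therefore reduces to
\begin{equation}
\frac{\partial}{\partial\phi_k} D(\eta\|\omega(\theta,\phi)) = \Tr\!\left[ G(\theta)\, \frac{\partial}{\partial\phi_k}\eta(\phi)\right],
\end{equation}
so everything hinges on differentiating the evolved target state $\eta(\phi) = U(\phi)\,\eta\, U(\phi)^{\dagger}$, where $U(\phi) \coloneqq e^{iH(\phi)}$.

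Next I would differentiate the conjugation using the product rule together with the unitarity constraint. Setting $K \coloneqq (\partial_{\phi_k}U)U^{\dagger}$ and differentiating $U U^{\dagger} = I$ to obtain $U(\partial_{\phi_k}U^{\dagger}) = -(\partial_{\phi_k}U)U^{\dagger} = -K$, the product rule collapses into commutator form:
\begin{equation}
\frac{\partial}{\partial\phi_k}\eta(\phi) = K\,\eta(\phi) - \eta(\phi)\,K = [K,\eta(\phi)].
\end{equation}
The key analytic step is then to identify $K$ explicitly. Applying Duhamel's formula to $\partial_{\phi_k} e^{iH(\phi)} = \int_0^1 ds\, e^{isH(\phi)}(iH_k)e^{i(1-s)H(\phi)}$ and multiplying on the right by $U^{\dagger} = e^{-iH(\phi)}$ telescopes the outer exponentials (since $e^{i(1-s)H(\phi)}e^{-iH(\phi)} = e^{-isH(\phi)}$), yielding
\begin{equation}
K = i\int_0^1 ds\, e^{isH(\phi)} H_k\, e^{-isH(\phi)} = i\,\Psi_\phi^{\dagger}(H_k),
\end{equation}
where the final equality is just the definition~\eqref{eq:Psi-adjoint}. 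I expect this Duhamel computation---and the telescoping that makes the channel $\Psi_\phi^{\dagger}$ appear---to be the main (and essentially only) obstacle; the remainder is bookkeeping.

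Finally I would substitute back to obtain $\partial_{\phi_k}\eta(\phi) = i[\Psi_\phi^{\dagger}(H_k),\eta(\phi)]$, so that
\begin{equation}
\frac{\partial}{\partial\phi_k} D(\eta\|\omega(\theta,\phi)) = i\,\Tr\!\left[ G(\theta)\,[\Psi_\phi^{\dagger}(H_k),\eta(\phi)]\right].
\end{equation}
Invoking the cyclic trace identity $\Tr[A[B,C]] = \Tr[[A,B]C]$ with $A = G(\theta)$, $B = \Psi_\phi^{\dagger}(H_k)$, and $C = \eta(\phi)$ shifts the commutator onto $G(\theta)$ and $\Psi_\phi^{\dagger}(H_k)$, giving $i\,\Tr[[G(\theta),\Psi_\phi^{\dagger}(H_k)]\,\eta(\phi)] = i\,\langle [G(\theta),\Psi_\phi^{\dagger}(H_k)]\rangle_{\eta(\phi)}$, which is precisely~\eqref{eq:gen_mod_grad_phi}. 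As an independent consistency check, one could instead differentiate $\omega(\theta,\phi)$ directly via~\eqref{eq:grad_wrt_phi} and combine this with the unitary-invariance relation~\eqref{eq:unitary-invariance-rel-ent}, verifying that the two routes agree.
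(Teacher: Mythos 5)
Your proposal is correct and follows essentially the same route as the paper's proof in Appendix~\ref{app:grad_gen_model}: both reduce to differentiating the $\phi$-dependent term $\Tr[G(\theta)\eta(\phi)]$ of~\eqref{eq:rel_entr_alt}, invoke Duhamel's formula for the derivative of $e^{\pm iH(\phi)}$, and finish with cyclicity of the trace. Your packaging of the derivative of the conjugation as $[K,\eta(\phi)]$ with $K=(\partial_{\phi_k}U)U^{\dagger}=i\Psi_{\phi}^{\dagger}(H_k)$ is a slightly tidier bookkeeping of the same computation the paper performs by differentiating the two exponentials separately and then inserting $e^{-iH(\phi)}e^{iH(\phi)}$ to produce $\Psi_{\phi}^{\dagger}$.
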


\begin{proof}
    See Appendix~\ref{app:grad_gen_model}.
\end{proof}
\medskip 

The quantum circuit depicted in Figure~\ref{fig:gen-mod-grad-phi} can be used for estimating the quantity in~\eqref{eq:gen_mod_grad_phi}.
The algorithm for estimating~\eqref{eq:gen_mod_grad_phi} involves running this circuit multiple times, with the time $t$ sampled uniformly from the interval $[0,1]$ for each run. More details about the algorithm can be found in Appendix~\ref{app:gen-mod_grad_phi-est}. Note that one can alternatively use the quantum algorithm presented in~\cite{Banchi2021measuringanalytic} to evaluate the expression in~\eqref{eq:gen_mod_grad_phi}. 

An additional motivation behind choosing quantum relative entropy as a measure of closeness for generative modeling is that, for fixed $\phi$, it is strictly convex in the parameter vector $\theta$. The claim about strict convexity follows from~\cite[Lemma~6, Supp.~Inf.]{Coopmans2024qbm_gen_learn} and the unitary invariance of the objective function with respect to $\phi$ (see~\eqref{eq:unitary-invariance-rel-ent}). In practice, searching is often restricted to a subspace of the full Hilbert space that corresponds to some space of efficiently preparable thermal states. Now, if the global optimum lies within this subspace, then it can be reached using just the standard quantum Boltzmann machine as an ansatz. Otherwise, one can use evolved quantum Boltzmann machines as an ansatz because it provides access to a larger search space, which may include the global minimum.

\section{Definitions and properties of quantum generalizations of Fisher information}

\label{sec:q_Fisher_info_theory}

In this section, we provide an overview of various quantum generalizations of
Fisher information, which we employ in the remaining sections of our paper.
The approach that we adopt is information-theoretic in nature, viewing Fisher
information as quantifying the distinguishability between infinitesimally
close states chosen from a parameterized family. As such, our starting point
is the information-theoretic notion of a divergence, which quantifies the
distinguishability of two states, and then we develop definitions and
properties of quantum generalizations of Fisher information from this perspective. See~\cite{Bengtsson2006,Liu2019,Sidhu2020,Jarzyna2020,Meyer2021fisherinformationin,sbahi2022provablyefficientvariationalgenerative,scandi2024quantumfisherinformationdynamical} for various reviews, related notions, and background on quantum generalizations of Fisher information.

\subsection{Defining quantum generalizations of Fisher information from smooth
divergences}

To begin with, let us define a smooth divergence~$\boldsymbol{D}$ to be a
smooth function $\boldsymbol{D}\colon\mathcal{D}_{+}(\mathcal{H}
)\times\mathcal{D}_{+}(\mathcal{H})\mapsto\mathbb{R}$ (i.e., from two
positive-definite states $\omega$ and $\tau$ to the reals) such
that the following properties hold for all states $\omega$ and $\tau$ and
every quantum channel $\mathcal{N}$:
\begin{align}
\text{Faithfulness} &  \text{:}\qquad\boldsymbol{D}(\omega\Vert\tau
)=0\quad\Leftrightarrow\quad\omega=\tau,\label{eq:faithful-div}\\
\text{Data-processing} &  \text{:}\qquad\boldsymbol{D}(\omega\Vert\tau
)\geq\boldsymbol{D}(\mathcal{N}(\omega)\Vert\mathcal{N}(\tau
)).\label{eq:data-proc-div}
\end{align}
Here we focus nearly exclusively on
positive-definite states because the exposition is much simpler, and this
assumption is satisfied for evolved quantum Boltzmann machines.
Due to~\eqref{eq:faithful-div} and~\eqref{eq:data-proc-div}, the inequality
\begin{equation}
\boldsymbol{D}(\omega\Vert\tau)\geq0
\label{eq:div-non-neg}
\end{equation}
holds for every pair of states, by considering the trace and replace channel
$\mathcal{R}_{\xi}(X)\coloneqq\operatorname{Tr}[X]\xi$, where $\xi$ is a
state. Indeed, consider that
\begin{equation}
\boldsymbol{D}(\omega\Vert\tau)\geq\boldsymbol{D}(\mathcal{R}_{\xi}
(\omega)\Vert\mathcal{R}_{\xi}(\tau))=\boldsymbol{D}(\xi\Vert\xi)=0.
\end{equation}
Thus, it follows that the minimum value of the divergence~$\boldsymbol{D}$ is
equal to zero, a property used later on in arriving at~\eqref{eq:zeroth-first-order-zero-taylor}.  It is also a direct consequence of the data-processing inequality in~\eqref{eq:data-proc-div} that the smooth divergence $\boldsymbol{D}$ is invariant under the action of a unitary channel~$\mathcal{U}$:
\begin{equation}
    \boldsymbol{D}(\omega\Vert\tau) = \boldsymbol{D}(\mathcal{U}(\omega)\Vert\mathcal{U}(\tau)).
    \label{eq:unitary-inv-D}
\end{equation}

Particular
examples of smooth divergences include the quantum relative entropy~\cite{Umegaki62} (defined already in~\eqref{eq:quant_rel_entr}),
the Petz--R\'{e}nyi relative entropy~\cite{P85,P86}, and the sandwiched R\'{e}nyi
relative entropy~\cite{MDSFT13,WWY14}, which are respectively defined as follows:
\begin{align}
D_{\alpha}(\omega\Vert\tau) &  \coloneqq\frac{1}{\alpha-1}\ln\operatorname{Tr}
[\omega^{\alpha}\tau^{1-\alpha}],\\
\widetilde{D}_{\alpha}(\omega\Vert\tau) &  \coloneqq\frac{1}{\alpha-1}
\ln\operatorname{Tr}\!\left[  \left(  \tau^{\frac{1-\alpha}{2\alpha}}
\omega\tau^{\frac{1-\alpha}{2\alpha}}\right)  ^{\alpha}\right]  .
\end{align}
The quantum relative entropy indeed obeys the data-processing inequality~\cite{Lindblad1975}, 
the Petz--R\'{e}nyi relative entropy obeys it for
$\alpha\in(0,1)\cup(1,2]$~\cite{P85,P86}, and the sandwiched R\'{e}nyi relative
entropy obeys it for $\alpha\in\lbrack1/2,1)\cup(1,\infty)$~\cite{FL13} (see also~\cite{W18opt}). We
especially focus on the case $\alpha=1/2$ for both the Petz-- and sandwiched
R\'{e}nyi relative entropies, which are given by
\begin{align}
D_{1/2}(\omega\Vert\tau) &  \coloneqq-\ln F_{H}(\omega,\tau), \label{eq:Petz-Renyi-order-2}\\
\widetilde{D}_{1/2}(\omega\Vert\tau) &  \coloneqq-\ln F(\omega,\tau).\label{eq:sandwiched-Renyi-order-2}
\end{align}
We have written the quantities in~\eqref{eq:Petz-Renyi-order-2}--\eqref{eq:sandwiched-Renyi-order-2} in terms of the Holevo~\cite{Kholevo1972} and Uhlmann
\cite{Uhl76} fidelities, respectively, which are defined as
\begin{align}
F_{H}(\omega,\tau) &  \coloneqq\left(  \operatorname{Tr}\!\left[  \sqrt
{\omega}\sqrt{\tau}\right]  \right)  ^{2},\label{eq:holevo-fidelity-def}\\
F(\omega,\tau) &  \coloneqq\left\Vert \sqrt{\omega}
\sqrt{\tau}\right\Vert _{1}^{2}.\label{eq:uhlmann-fidelity-def}
\end{align}
Interestingly, the following inequalities relate the Holevo and Uhlmann
fidelities:
\begin{equation}
F_{H}(\omega,\tau)\leq F(\omega,\tau)\leq\sqrt{F_{H}(\omega,\tau)},
\end{equation}
with the first one following from the variational characterization of the trace norm~\cite[Property~9.1.6]{Wbook17} and the second one by adapting~\cite[Theorem~6]{audenaert2008asymptotic} with the choice $s=1/2$, $A=\rho$, and $B=\sigma$. 
They directly imply the following inequalities:
\begin{equation}
-2\ln F_{H}(\omega,\tau)\geq-2\ln F(\omega,\tau)\geq-\ln F_{H}(\omega
,\tau),\label{eq:ineqs-fid-hol-fid}
\end{equation}
which we use later on in Corollary~\ref{cor:WY-FB-ineqs} to relate the
Fisher--Bures and Wigner--Yanase information matrices (defined later in Definition~\ref{def:FB-WY-KM-2nd-derivs}).

Now suppose that $\left(  \sigma(\gamma)\right)  _{\gamma\in\mathbb{R}^{L}}$
is a parameterized, differentiable family of states such that $\gamma$ is an
$L$-dimensional parameter vector, where $L\in\mathbb{N}$. Letting
$\varepsilon\in\mathbb{R}^{L}$ be such that $\left\Vert \varepsilon\right\Vert $
is small (i.e., $\left\Vert \varepsilon\right\Vert \ll1$), the following
smooth divergence~$\boldsymbol{D}$ characterizes the distinguishability of the
nearby states $\sigma(\gamma)$ and $\sigma(\gamma+\varepsilon)$:
\begin{equation}
\boldsymbol{D}(\sigma(\gamma)\Vert\sigma(\gamma+\varepsilon)).
\end{equation}
Since the divergence $\boldsymbol{D}$ is assumed to be smooth, it has a Taylor
expansion of the following form:
\begin{align}
&  \boldsymbol{D}(\sigma(\gamma)\Vert\sigma(\gamma+\varepsilon))\nonumber\\
&  =\boldsymbol{D}(\sigma(\gamma)\Vert\sigma(\gamma))+\varepsilon^{T}
\nabla\boldsymbol{D}+\frac{1}{2}\varepsilon^{T}I^{\boldsymbol{D}}
(\gamma)\varepsilon+O(\left\Vert \varepsilon\right\Vert ^{3}) \notag \\
&  =\frac{1}{2}\varepsilon^{T}I^{\boldsymbol{D}}(\gamma)\varepsilon
+O(\left\Vert \varepsilon\right\Vert ^{3}
),\label{eq:zeroth-first-order-zero-taylor}
\end{align}
where the gradient $\nabla\boldsymbol{D}$ is defined as the $L\times 1$ vector with the
following components:
\begin{equation}
\left[  \nabla\boldsymbol{D}\right]  _{i}=\left.  \frac{\partial}
{\partial\varepsilon_{i}}\boldsymbol{D}(\sigma(\gamma)\Vert\sigma
(\gamma+\varepsilon))\right\vert _{\varepsilon=0}
\end{equation}
and $I^{\boldsymbol{D}}(\gamma)$ is an $L\times L$ Hessian matrix defined in terms of its
matrix elements as
\begin{equation}
\left[  I^{\boldsymbol{D}}(\gamma)\right]  _{ij}\coloneqq\left.
\frac{\partial^{2}}{\partial\varepsilon_{i}\partial\varepsilon_{j}
}\boldsymbol{D}(\sigma(\gamma)\Vert\sigma(\gamma+\varepsilon))\right\vert
_{\varepsilon=0}.\label{eq:D-based-Fisher-matrix}
\end{equation}
We call $I^{\boldsymbol{D}}(\gamma)$ the $\boldsymbol{D}$-based Fisher
information matrix. If we would like to denote the dependence of the matrix
$I^{\boldsymbol{D}}(\gamma)$ on the family $\left(  \sigma(\gamma)\right)
_{\gamma\in\mathbb{R}^{L}}$, then we employ the notation
\begin{equation}
I^{\boldsymbol{D}}\!\left(  \gamma;\left(  \sigma(\gamma)\right)  _{\gamma
\in\mathbb{R}^{L}}\right)  \coloneqq I^{\boldsymbol{D}}(\gamma).
\end{equation}
The equality in~\eqref{eq:zeroth-first-order-zero-taylor} is a direct
consequence of the faithfulness assumption in~\eqref{eq:faithful-div} and the
assumed smoothness of $\boldsymbol{D}$. Indeed, it is evident from
\eqref{eq:faithful-div} that the zeroth order term $\boldsymbol{D}
(\sigma(\gamma)\Vert\sigma(\gamma))$ vanishes, and the first order term $\varepsilon^{T}
\nabla\boldsymbol{D}$
vanishes because we have expanded the function $\varepsilon\mapsto
\boldsymbol{D}(\sigma(\gamma)\Vert\sigma(\gamma+\varepsilon))$ about the
critical point $\varepsilon=0$, i.e., where $\boldsymbol{D}$ takes its minimum
value. Eq.~\eqref{eq:zeroth-first-order-zero-taylor} thus establishes a
pivotal connection between a smooth divergence $\boldsymbol{D}$ and its
$\boldsymbol{D}$-based Fisher information matrix. The main divergences on
which we focus are the quantum relative entropy, twice the negative logarithm
of the Holevo fidelity, and twice the negative logarithm of the Uhlmann
fidelity, which respectively lead to the Kubo--Mori, Wigner--Yanase, and
Fisher--Bures information matrices.

The basic properties of a smooth divergence $\boldsymbol{D}$ imply related
basic properties of the $\boldsymbol{D}$-based Fisher information matrix
$I^{\boldsymbol{D}}(\gamma)$, which we state formally in
Proposition~\ref{prop:fisher-info-matrix-props} below.\ These properties
include non-negativity of $I^{\boldsymbol{D}}(\gamma)$, the data-processing
inequality, and an ordering property.

\begin{proposition}
\label{prop:fisher-info-matrix-props}
Let $\left(  \sigma(\gamma)\right)
_{\gamma\in\mathbb{R}^{L}}$ be a parameterized, differentiable family of
states, and let $I^{\boldsymbol{D}}(\gamma)$ be the $\boldsymbol{D}$-based
Fisher information matrix, as defined in~\eqref{eq:D-based-Fisher-matrix}.
Then the following matrix inequalities hold for all $\gamma\in\mathbb{R}^{L}$,
 corresponding to non-negativity and data-processing under a
quantum channel $\mathcal{N}$, respectively:
\begin{align}
I^{\boldsymbol{D}}(\gamma) &  \geq0,\label{eq:non-neg-D-based-Fisher}\\
I^{\boldsymbol{D}}\!\left(  \gamma;\left(  \sigma(\gamma)\right)  _{\gamma
\in\mathbb{R}^{L}}\right)   &  \geq I^{\boldsymbol{D}}\!\left(  \gamma;\left(
\mathcal{N}(\sigma(\gamma))\right)  _{\gamma\in\mathbb{R}^{L}}\right)
.\label{eq:data-proc-D-based-Fisher}
\end{align}
Furthermore, if $\boldsymbol{D}^{1}$ and $\boldsymbol{D}^{2}$ are smooth
divergences for which the following inequality holds for all states $\omega$
and~$\tau$:
\begin{equation}
\boldsymbol{D}^{1}(\omega\Vert\tau)\geq\boldsymbol{D}^{2}(\omega\Vert
\tau),\label{eq:assumption-ordering}
\end{equation}
then the following matrix inequality holds for all $\gamma\in\mathbb{R}^{L}$:
\begin{equation}
I^{\boldsymbol{D}^{1}}(\gamma)\geq I^{\boldsymbol{D}^{2}}(\gamma
).\label{eq:fisher-matrix-ineq}
\end{equation}

\end{proposition}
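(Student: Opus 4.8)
The plan is to derive all three matrix inequalities from the single second-order expansion~\eqref{eq:zeroth-first-order-zero-taylor}, which already identifies $I^{\boldsymbol{D}}(\gamma)$ as the leading coefficient of the map $\varepsilon \mapsto \boldsymbol{D}(\sigma(\gamma)\Vert\sigma(\gamma+\varepsilon))$. The common engine is a limiting argument that upgrades a \emph{scalar} inequality between divergences at nearby parameter values into a \emph{Loewner} inequality between the associated Fisher matrices. Concretely, given any fixed vector $v\in\mathbb{R}^{L}$, I would substitute $\varepsilon = s v$ with $s>0$ into the relevant scalar inequality, insert the expansion~\eqref{eq:zeroth-first-order-zero-taylor} to replace each divergence by $\frac{1}{2}s^{2}v^{T}(\cdot)v + O(s^{3})$, divide through by $s^{2}$, and let $s\to 0^{+}$ so that the $O(\|\varepsilon\|^{3})=O(s^{3})$ remainder contributes $O(s)\to 0$. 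This isolates the quadratic form $v^{T}(\cdot)v$, and, since $v$ is arbitrary, yields the matrix statement.

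For non-negativity~\eqref{eq:non-neg-D-based-Fisher}, I would start from~\eqref{eq:div-non-neg} in the form $\boldsymbol{D}(\sigma(\gamma)\Vert\sigma(\gamma+\varepsilon))\geq 0$ and run the engine to conclude $v^{T}I^{\boldsymbol{D}}(\gamma)v\geq 0$ for every $v$, i.e., $I^{\boldsymbol{D}}(\gamma)\geq 0$. For data processing~\eqref{eq:data-proc-D-based-Fisher}, the key observation is that the transformed family $\gamma\mapsto\mathcal{N}(\sigma(\gamma))$ evaluated at $\gamma+\varepsilon$ is exactly $\mathcal{N}(\sigma(\gamma+\varepsilon))$, so the divergence $\boldsymbol{D}(\mathcal{N}(\sigma(\gamma))\Vert\mathcal{N}(\sigma(\gamma+\varepsilon)))$ admits the expansion~\eqref{eq:zeroth-first-order-zero-taylor} with coefficient $I^{\boldsymbol{D}}(\gamma;(\mathcal{N}(\sigma(\gamma)))_{\gamma\in\mathbb{R}^{L}})$. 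Applying the divergence data-processing inequality~\eqref{eq:data-proc-div} to the pair $\sigma(\gamma),\sigma(\gamma+\varepsilon)$ then produces a scalar inequality whose two sides expand with coefficients $I^{\boldsymbol{D}}(\gamma;(\sigma(\gamma))_{\gamma\in\mathbb{R}^{L}})$ and $I^{\boldsymbol{D}}(\gamma;(\mathcal{N}(\sigma(\gamma)))_{\gamma\in\mathbb{R}^{L}})$; subtracting and running the engine gives~\eqref{eq:data-proc-D-based-Fisher}. The ordering property is fully analogous: I would apply the pointwise hypothesis~\eqref{eq:assumption-ordering} to the nearby states $\sigma(\gamma),\sigma(\gamma+\varepsilon)$, subtract the two expansions, and run the engine once more to obtain~\eqref{eq:fisher-matrix-ineq}.

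The main point requiring care is the limiting argument itself. I must ensure that the remainder is genuinely $O(\|\varepsilon\|^{3})$ along the ray $\varepsilon = s v$, so that dividing by $s^{2}$ and sending $s\to 0^{+}$ annihilates it rather than leaving a residual contribution to the quadratic form; this is guaranteed precisely by the assumed smoothness of $\boldsymbol{D}$ together with the vanishing of the zeroth- and first-order terms established around~\eqref{eq:zeroth-first-order-zero-taylor}. A secondary subtlety, relevant only to the data-processing part, is that $\mathcal{N}(\sigma(\gamma))$ must remain positive-definite for the divergence $\boldsymbol{D}$ to be defined on the transformed family and for the expansion to apply; this is compatible with the standing restriction to positive-definite states adopted in this section, and when $\mathcal{N}$ is not trace-preserving into the positive-definite cone it can be handled by composing with an infinitesimal full-rank map and taking a limit.
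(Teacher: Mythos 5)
Your proposal is correct and follows essentially the same route as the paper's proof: substitute a scaled direction into the Taylor expansion~\eqref{eq:zeroth-first-order-zero-taylor}, invoke non-negativity, data processing, or the pointwise ordering of divergences as appropriate, divide by the square of the scale, and take the limit to extract the quadratic form. Your additional remarks on the uniformity of the $O(\|\varepsilon\|^{3})$ remainder along a ray and on positive-definiteness of the transformed family are sensible care points but do not alter the argument.
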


\begin{proof}
The proofs are short and based directly on
\eqref{eq:zeroth-first-order-zero-taylor}. Let $\delta\in\left(  0,1\right)  $,
and let $\hat{u}\in\mathbb{R}^{L}$ be a unit vector. Then, as mentioned previously, consider that
\eqref{eq:faithful-div},~\eqref{eq:div-non-neg}, and~\eqref{eq:zeroth-first-order-zero-taylor} imply that the following holds for
all such $\delta$, $\hat{u}$, and $\gamma\in\mathbb{R}^{L}$:
\begin{equation}
\frac{\delta^{2}}{2}\hat{u}^{T}I^{\boldsymbol{D}}(\gamma)\hat{u}+O(\delta
^{3})=\boldsymbol{D}(\sigma(\gamma)\Vert\sigma(\gamma+\delta\hat{u}))\geq0.
\end{equation}
Dividing by $\delta^{2}$ and taking the limit $\delta\rightarrow0$ then
implies that the following inequality holds for all $\hat{u}$ and $\gamma$:
\begin{equation}
\hat{u}^{T}I^{\boldsymbol{D}}(\gamma)\hat{u}\geq0,
\end{equation}
which is equivalent to~\eqref{eq:non-neg-D-based-Fisher}.

Similarly, Eq.~\eqref{eq:zeroth-first-order-zero-taylor} and the data-processing inequality in~\eqref{eq:data-proc-div} imply
that
\begin{align}
&  \frac{\delta^{2}}{2}\hat{u}^{T}I^{\boldsymbol{D}}\!\left(  \gamma;\left(
\sigma(\gamma)\right)  _{\gamma\in\mathbb{R}^{L}}\right)  \hat{u}+O(\delta
^{3})\nonumber\\
&  =\boldsymbol{D}(\sigma(\gamma)\Vert\sigma(\gamma+\delta\hat{u}))\\
&  \geq\boldsymbol{D}(\mathcal{N}(\sigma(\gamma))\Vert\mathcal{N}
(\sigma(\gamma+\delta\hat{u})))\\
&  =\frac{\delta^{2}}{2}\hat{u}^{T}I^{\boldsymbol{D}}\!\left(  \gamma;\left(
\mathcal{N}(\sigma(\gamma))\right)  _{\gamma\in\mathbb{R}^{L}}\right)  \hat
{u}+O(\delta^{3}).
\end{align}
Dividing by $\delta^{2}$ and taking the limit $\delta\rightarrow0$ then
implies that the following inequality holds for all $\hat{u}$ and $\gamma$:
\begin{equation}
\hat{u}^{T}I^{\boldsymbol{D}}\!\left(  \gamma;\left(  \sigma(\gamma)\right)
_{\gamma\in\mathbb{R}^{L}}\right)  \hat{u}\geq\hat{u}^{T}I^{\boldsymbol{D}
}\left(  \gamma;\left(  \mathcal{N}(\sigma(\gamma))\right)  _{\gamma
\in\mathbb{R}^{L}}\right)  \hat{u},
\end{equation}
which is equivalent to~\eqref{eq:data-proc-D-based-Fisher}.

Finally, the proof of~\eqref{eq:fisher-matrix-ineq} proceeds quite similarly
by making use of the assumption in~\eqref{eq:assumption-ordering}.
\end{proof}

\medskip

The following corollary is a direct consequence of the data-processing inequality in~\eqref{eq:data-proc-D-based-Fisher} and is also a consequence of the unitary invariance of the smooth divergence~$\boldsymbol{D}$:

\begin{corollary}
\label{cor:unitary-inv-Fisher-info}
    Let $\left(  \sigma(\gamma)\right)
_{\gamma\in\mathbb{R}^{L}}$ be a parameterized, differentiable family of
states, and let $I^{\boldsymbol{D}}(\gamma)$ be the $\boldsymbol{D}$-based
Fisher information matrix, as defined in~\eqref{eq:D-based-Fisher-matrix}.
Then the following matrix equality holds for all $\gamma\in\mathbb{R}^{L}$ and every unitary channel $\mathcal{U}$:
\begin{equation}
    I^{\boldsymbol{D}}\!\left(  \gamma;\left(  \sigma(\gamma)\right)  _{\gamma
\in\mathbb{R}^{L}}\right)     = I^{\boldsymbol{D}}\!\left(  \gamma;\left(
\mathcal{U}(\sigma(\gamma))\right)  _{\gamma\in\mathbb{R}^{L}}\right).
\label{eq:unitary-inv-Fisher-info}
\end{equation}
\end{corollary}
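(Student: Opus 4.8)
The plan is to give two short arguments, mirroring the two sources of the result flagged in the corollary statement. The cleanest route exploits the invertibility of a unitary channel. Since $\mathcal{U}(\cdot) = U(\cdot)U^{\dagger}$ for some unitary $U$, its inverse $\mathcal{U}^{-1}(\cdot) = U^{\dagger}(\cdot)U$ is itself a legitimate quantum channel. First I would apply the data-processing inequality~\eqref{eq:data-proc-D-based-Fisher} with the choice $\mathcal{N} = \mathcal{U}$ to the family $\left(\sigma(\gamma)\right)_{\gamma\in\mathbb{R}^{L}}$, obtaining
\begin{equation}
I^{\boldsymbol{D}}\!\left(\gamma;\left(\sigma(\gamma)\right)_{\gamma\in\mathbb{R}^{L}}\right) \geq I^{\boldsymbol{D}}\!\left(\gamma;\left(\mathcal{U}(\sigma(\gamma))\right)_{\gamma\in\mathbb{R}^{L}}\right).
\end{equation}
Next I would apply the same inequality in the reverse direction, taking the family $\left(\mathcal{U}(\sigma(\gamma))\right)_{\gamma\in\mathbb{R}^{L}}$ together with the channel $\mathcal{N} = \mathcal{U}^{-1}$ and using $\mathcal{U}^{-1}\circ\mathcal{U} = \operatorname{id}$ to simplify the resulting family back to $\left(\sigma(\gamma)\right)_{\gamma\in\mathbb{R}^{L}}$; this yields the opposite matrix inequality. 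Combining the two inequalities in the Loewner order then forces the claimed matrix equality~\eqref{eq:unitary-inv-Fisher-info}.

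Alternatively, I would argue directly from the definition~\eqref{eq:D-based-Fisher-matrix}. By the unitary invariance of the smooth divergence established in~\eqref{eq:unitary-inv-D}, applied with $\omega = \sigma(\gamma)$ and $\tau = \sigma(\gamma+\varepsilon)$, one has the pointwise identity
\begin{equation}
\boldsymbol{D}(\sigma(\gamma)\Vert\sigma(\gamma+\varepsilon)) = \boldsymbol{D}(\mathcal{U}(\sigma(\gamma))\Vert\mathcal{U}(\sigma(\gamma+\varepsilon)))
\end{equation}
for every $\varepsilon\in\mathbb{R}^{L}$. Because $\mathcal{U}$ is a fixed, $\gamma$-independent linear map, the family $\left(\mathcal{U}(\sigma(\gamma))\right)_{\gamma\in\mathbb{R}^{L}}$ is differentiable whenever $\left(\sigma(\gamma)\right)_{\gamma\in\mathbb{R}^{L}}$ is, so both sides above are smooth functions of $\varepsilon$ agreeing identically. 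Taking $\partial^{2}/\partial\varepsilon_{i}\partial\varepsilon_{j}$ and evaluating at $\varepsilon=0$ then transfers the equality directly to the Hessian matrix elements, again giving~\eqref{eq:unitary-inv-Fisher-info}.

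There is essentially no hard step here, since both routes are immediate consequences of results already in hand. The only point requiring any care is the observation, in the first argument, that a unitary channel possesses a channel inverse; this is precisely what upgrades the one-sided data-processing inequality into a two-sided equality. For the second argument, the one thing to verify is the interchange of differentiation with the fixed unitary, which is immediate from its linearity and $\gamma$-independence. I would present the data-processing route as the main proof and remark that the direct differentiation route provides an independent confirmation.
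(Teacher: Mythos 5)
Your proposal is correct and follows exactly the two routes the paper itself indicates: applying the data-processing inequality~\eqref{eq:data-proc-D-based-Fisher} in both directions using the channel inverse of a unitary channel, and alternatively differentiating the pointwise unitary-invariance identity~\eqref{eq:unitary-inv-D} twice at $\varepsilon=0$. The paper leaves the corollary as a direct consequence without writing out the details, and your write-up fills them in faithfully.
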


\subsection{Fisher--Bures, Wigner--Yanase, and Kubo--Mori information
matrices}

Throughout the rest of our paper, we consider three key quantum
generalizations of the Fisher information:\ the Fisher--Bures, Wigner--Yanase,
and Kubo--Mori information matrices. As mentioned previously, they are defined
in terms of the Uhlmann fidelity, Holevo fidelity, and quantum relative
entropy, respectively (see~\eqref{eq:uhlmann-fidelity-def},~\eqref{eq:holevo-fidelity-def}, and~\eqref{eq:quant_rel_entr}, respectively,
for definitions of the latter). These connections have been known for some time since~\cite{Hub92},~\cite{Gibilisco2003}, and~\cite{PetzToth1993}, respectively. Specifically, we define them as follows:

\begin{definition}
\label{def:FB-WY-KM-2nd-derivs}
For $\left(  \sigma(\gamma)\right)  _{\gamma\in\mathbb{R}^{L}}$ a
parameterized, differentiable family of states, the 
Fisher--Bures, Wigner--Yanase, and Kubo--Mori information matrix elements are defined
as follows, respectively:
\begin{align}
  I^{\operatorname{FB}}_{ij}(\gamma)  &  \coloneqq2\left.
\frac{\partial^{2}}{\partial\varepsilon_{i}\partial\varepsilon_{j}}
\widetilde{D}_{1/2}(\sigma(\gamma)\Vert\sigma(\gamma+\varepsilon))\right\vert
_{\varepsilon=0}\\
&  =2\left.  \frac{\partial^{2}}{\partial\varepsilon_{i}\partial
\varepsilon_{j}}\left[  -\ln F(\sigma(\gamma),\sigma(\gamma+\varepsilon
))\right]  \right\vert _{\varepsilon=0},\label{eq:FB-fid-2nd-deriv}\\
  I^{\operatorname{WY}}_{ij}(\gamma)  &  \coloneqq2\left.
\frac{\partial^{2}}{\partial\varepsilon_{i}\partial\varepsilon_{j}}
D_{1/2}(\sigma(\gamma)\Vert\sigma(\gamma+\varepsilon))\right\vert
_{\varepsilon=0}\\
&  =2\left.  \frac{\partial^{2}}{\partial\varepsilon_{i}\partial
\varepsilon_{j}}\left[  -\ln F_{H}(\sigma(\gamma),\sigma(\gamma+\varepsilon
))\right]  \right\vert _{\varepsilon=0},\label{eq:WY-Hol-fid-2nd-deriv}\\
  I^{\operatorname{KM}}_{ij}(\gamma)  &  \coloneqq\left.
\frac{\partial^{2}}{\partial\varepsilon_{i}\partial\varepsilon_{j}}
D(\sigma(\gamma)\Vert\sigma(\gamma+\varepsilon))\right\vert _{\varepsilon=0}.
\end{align}

\end{definition}

The particular prefactors used in the above definitions are related to those made when relating R\'enyi relative entropies to Fisher information (see~\cite[Eq.~(50)]{vEH2014}, \cite[Section~11]{Mat10fid}, and
\cite[Section~6.4]{Mat13,Matsumoto2018}).

These matrix elements have explicit expressions, which are given in Theorem~\ref{thm:QFI-formulas} below. Let us note that these expressions are well known (see, e.g.,~\cite{Sidhu2020,sbahi2022provablyefficientvariationalgenerative}), and for completeness of the exposition in this section, we provide explicit (and, in some cases, brief) proofs that connect the expressions in Definition~\ref{def:FB-WY-KM-2nd-derivs} to those in Theorem~\ref{thm:QFI-formulas} below.

\begin{theorem}
\label{thm:QFI-formulas}
For $\left(  \sigma(\gamma)\right)  _{\gamma\in\mathbb{R}^{L}}$ a
parameterized, differentiable family of states, the following equalities hold:
\begin{align}
  I^{\operatorname{FB}}_{ij}(\gamma) &  =\operatorname{Tr}
\!\left[  \mathcal{F}\!\left(  \partial_{i}\sigma(\gamma)\right)  
\partial_{j}\sigma(\gamma)  \right]  \label{eq:FB-formula-1}\\
&  =\sum_{k,\ell}\frac{2}{\lambda_{k}+\lambda_{\ell}}\langle k|\partial
_{i}\sigma(\gamma)|\ell\rangle\langle\ell|\partial_{j}\sigma(\gamma
)|k\rangle,\label{eq:FB-formula-2}\\
  I^{\operatorname{WY}}_{ij}(\gamma) &  =\operatorname{Tr}
\!\left[  \mathcal{W}\!\left(  \partial_{i}\sigma(\gamma)\right)  
\partial_{j}\sigma(\gamma)  \right] \label{eq:WY-formula-1} \\
&  =\sum_{k,\ell}\frac{4}{\left(  \sqrt{\lambda_{k}}+\sqrt{\lambda_{\ell}
}\right)  ^{2}}\langle k|\partial_{i}\sigma(\gamma)|\ell\rangle\langle
\ell|\partial_{j}\sigma(\gamma)|k\rangle,\label{eq:WY-formula-2}\\
  I^{\operatorname{KM}}_{ij}(\gamma) &  =\operatorname{Tr}
\!\left[  \mathcal{K}\!\left(  \partial_{i}\sigma(\gamma)\right)  
\partial_{j}\sigma(\gamma)  \right]  \label{eq:KM-formula-1}\\
&  =\sum_{k,\ell}\frac{\ln\lambda_{k}-\ln\lambda_{\ell}}{\lambda_{k}
-\lambda_{\ell}}\langle k|\partial_{i}\sigma(\gamma)|\ell\rangle\langle
\ell|\partial_{j}\sigma(\gamma)|k\rangle,
\label{eq:KM-formula-2}
\end{align}
where an eigendecomposition of $\sigma(\gamma)$ is given by $\sigma
(\gamma)=\sum_{k}\lambda_{k}|k\rangle\!\langle k|$ and we have employed the
shorthand $\partial_{i}\equiv\frac{\partial}{\partial\gamma_{i}}$. Also, the
superoperators $\mathcal{F}$, $\mathcal{W}$, and $\mathcal{K}$ are defined as
follows:
\begin{align}
 \mathcal{F}(X)&\coloneqq 2\int_{0}^{\infty}dt\ e^{-t\sigma(\gamma
)}Xe^{-t\sigma(\gamma)},\\
 \mathcal{W}(X)&\coloneqq4\int_{0}^{\infty}\int_{0}^{\infty}ds
\ dt\ e^{-\left(  s+t\right)  \sqrt{\sigma(\gamma)}}Xe^{-\left(
s+t\right)  \sqrt{\sigma(\gamma)}},\\
 \mathcal{K}(X)&\coloneqq\int_{0}^{\infty}dt\ \left(  \sigma(\gamma
)+tI\right)  ^{-1}X\left(  \sigma(\gamma)+tI\right)  ^{-1}.
\end{align}
\end{theorem}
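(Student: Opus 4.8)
The plan is to treat each of the three quantities in two stages: first extract the eigenbasis sum formulas \eqref{eq:FB-formula-2}, \eqref{eq:WY-formula-2}, and \eqref{eq:KM-formula-2} from the second-derivative definitions of Definition~\ref{def:FB-WY-KM-2nd-derivs}, and then recognize each sum as $\Tr[\mathcal{O}(\partial_i\sigma(\gamma))\,\partial_j\sigma(\gamma)]$ for the associated superoperator $\mathcal{O}\in\{\mathcal{F},\mathcal{W},\mathcal{K}\}$, yielding \eqref{eq:FB-formula-1}, \eqref{eq:WY-formula-1}, and \eqref{eq:KM-formula-1}. The second stage is routine: applying $\mathcal{F}$, $\mathcal{W}$, or $\mathcal{K}$ to $|k\rangle\!\langle\ell|$ and using the eigendecomposition $\sigma(\gamma)=\sum_k\lambda_k|k\rangle\!\langle k|$ collapses each operator integral to a scalar one, namely $\int_0^\infty e^{-t(\lambda_k+\lambda_\ell)}\,dt=(\lambda_k+\lambda_\ell)^{-1}$ for $\mathcal{F}$, $\int_0^\infty\!\int_0^\infty e^{-(s+t)(\sqrt{\lambda_k}+\sqrt{\lambda_\ell})}\,ds\,dt=(\sqrt{\lambda_k}+\sqrt{\lambda_\ell})^{-2}$ for $\mathcal{W}$, and $\int_0^\infty(\lambda_k+t)^{-1}(\lambda_\ell+t)^{-1}\,dt=(\ln\lambda_k-\ln\lambda_\ell)/(\lambda_k-\lambda_\ell)$ for $\mathcal{K}$, reproducing exactly the three kernels (coincident-eigenvalue cases are understood as limits, which is harmless since $\sigma(\gamma)$ is positive definite).

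For the first stage I would use throughout that, by \eqref{eq:zeroth-first-order-zero-taylor}, the zeroth- and first-order terms vanish, so only the genuinely quadratic contribution survives. For Kubo--Mori, write $D(\sigma(\gamma)\Vert\sigma(\gamma+\varepsilon))=\Tr[\sigma\ln\sigma]-\Tr[\sigma\ln\sigma_\varepsilon]$ with $\sigma\equiv\sigma(\gamma)$ and $\sigma_\varepsilon\equiv\sigma(\gamma+\varepsilon)$, and differentiate using the integral representation $\ln A=\int_0^\infty[(1+s)^{-1}-(A+sI)^{-1}]\,ds$, whose first Fréchet derivative in direction $X$ is $\int_0^\infty(\sigma+sI)^{-1}X(\sigma+sI)^{-1}\,ds$. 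Upon applying $\partial_{\varepsilon_i}\partial_{\varepsilon_j}|_{\varepsilon=0}$, the term carrying $\partial_i\partial_j\sigma$ equals $\Tr[(\partial_i\partial_j\sigma)\int_0^\infty(\sigma+sI)^{-1}\sigma(\sigma+sI)^{-1}\,ds]=\Tr[\partial_i\partial_j\sigma]=0$ by normalization of $\sigma(\gamma)$, since the inner integral equals $I$. Only the second-Fréchet-derivative (product) term remains, and evaluating it in the eigenbasis via a partial-fractions integral produces the Kubo--Mori kernel $(\ln\lambda_k-\ln\lambda_\ell)/(\lambda_k-\lambda_\ell)$.

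For Wigner--Yanase I would set $g(\varepsilon)\coloneqq\Tr[\sqrt{\sigma}\sqrt{\sigma_\varepsilon}]$, so that $F_H=g^2$ and $I^{\operatorname{WY}}_{ij}=4\,\partial_{\varepsilon_i}\partial_{\varepsilon_j}(-\ln g)|_{\varepsilon=0}$. Writing $R_\varepsilon\coloneqq\sqrt{\sigma_\varepsilon}$ and differentiating $R_\varepsilon^2=\sigma_\varepsilon$ gives, at $\varepsilon=0$, the Sylvester equation $\{\sqrt{\sigma},\partial_i R\}=\partial_i\sigma$, solved in the eigenbasis by $\langle k|\partial_i R|\ell\rangle=\langle k|\partial_i\sigma|\ell\rangle/(\sqrt{\lambda_k}+\sqrt{\lambda_\ell})$. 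Since any solution of $\{\sqrt{\sigma},M\}=N$ obeys $\Tr[\sqrt{\sigma}M]=\tfrac12\Tr[N]$, one finds $\partial_i g|_0=\tfrac12\Tr[\partial_i\sigma]=0$, so the $(\partial_i g)(\partial_j g)$ piece drops and $I^{\operatorname{WY}}_{ij}=-4\,\partial_i\partial_j g|_0$. Differentiating the Sylvester equation once more and applying the same trace identity gives $\Tr[\sqrt{\sigma}\,\partial_i\partial_j R]=\tfrac12\Tr[\partial_i\partial_j\sigma-\{\partial_i R,\partial_j R\}]$; the first trace vanishes and the anticommutator term produces precisely the kernel $4/(\sqrt{\lambda_k}+\sqrt{\lambda_\ell})^2$, establishing \eqref{eq:WY-formula-2}.

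The Fisher--Bures case is the main obstacle, because the Uhlmann fidelity $\|\sqrt{\sigma}\sqrt{\sigma_\varepsilon}\|_1=\Tr\sqrt{\sqrt{\sigma_\varepsilon}\,\sigma\,\sqrt{\sigma_\varepsilon}}$ is a trace norm / square root of a non-commuting product, so the clean Sylvester trick used for Wigner--Yanase does not apply directly. Here I would instead use the symmetric-logarithmic-derivative route: introduce $L_i$ solving $\partial_i\sigma=\tfrac12\{\sigma,L_i\}$, equivalently $L_i=\mathcal{F}(\partial_i\sigma)=2\int_0^\infty e^{-t\sigma}(\partial_i\sigma)e^{-t\sigma}\,dt$ (verified by checking $\tfrac12\{\sigma,\mathcal{F}(X)\}=X$ in the eigenbasis), and then carry out the Hübner-type second-order expansion of the fidelity to obtain $I^{\operatorname{FB}}_{ij}=-2\,\partial_i\partial_j F|_0=\Re\Tr[(\partial_i\sigma)L_j]=\Tr[\mathcal{F}(\partial_i\sigma)\,\partial_j\sigma]$, using that $F(\sigma,\sigma_\varepsilon)$ attains its maximal value $1$ at $\varepsilon=0$ so that its first derivative vanishes there. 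The delicate step is controlling the second-order expansion of $\Tr\sqrt{\sqrt{\sigma_\varepsilon}\,\sigma\,\sqrt{\sigma_\varepsilon}}$ and reducing it to the SLD form; positive-definiteness of $\sigma(\gamma)$ (assumed throughout) is what makes all operator square roots, logarithms, and resolvents smooth in $\varepsilon$ and legitimizes differentiating under the integral signs in each step above.
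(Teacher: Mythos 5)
Your Kubo--Mori and Wigner--Yanase arguments are correct and land on the same intermediate identities as the paper's proof. For Kubo--Mori the paper reaches $\Tr[(\partial_i\sigma)(\partial_j\ln\sigma)]$ by first establishing $\Tr[\sigma\,\partial_j\ln\sigma]=0$ and then differentiating that identity, whereas you differentiate the integral representation of the logarithm twice and kill the $\partial_i\partial_j\sigma$ term directly; your route leaves you with two three-resolvent integrals whose sum must be checked by partial fractions to reproduce $(\ln\lambda_k-\ln\lambda_\ell)/(\lambda_k-\lambda_\ell)$, and this does work out. For Wigner--Yanase your Sylvester-equation treatment of $\{\sqrt{\sigma},\partial_i R\}=\partial_i\sigma$ is an equivalent repackaging of the paper's use of the integral representation $\partial_i\sqrt{\sigma}=\int_0^\infty e^{-t\sqrt{\sigma}}(\partial_i\sigma)e^{-t\sqrt{\sigma}}\,dt$; both reduce the answer to $4\Tr[(\partial_i\sqrt{\sigma})(\partial_j\sqrt{\sigma})]$ and then to the eigenbasis kernel. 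The second stage, evaluating $\mathcal{F}$, $\mathcal{W}$, and $\mathcal{K}$ on $|k\rangle\!\langle\ell|$, is routine as you say.

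The Fisher--Bures case, however, is a genuine gap. You correctly identify the target identity $-2\,\partial_i\partial_j F|_{\varepsilon=0}=\Tr[(\partial_i\sigma)L_j]$ with $L_j=\mathcal{F}(\partial_j\sigma)$, and you correctly note that the first-order term vanishes because $F$ attains its maximum at $\varepsilon=0$, but the entire content of this part of the theorem is the second-order expansion of the Uhlmann fidelity, which you defer to ``a H\"ubner-type expansion'' and explicitly flag as the delicate step without carrying it out. Moreover, you have written the fidelity argument as $\sqrt{\sigma_\varepsilon}\,\sigma\,\sqrt{\sigma_\varepsilon}$, with the perturbed state on the outside, which would force you to differentiate a square root of a square root. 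The paper's computation becomes tractable precisely because it uses the equivalent form $\Tr\!\left[\sqrt{\sqrt{\sigma}\,\sigma_\varepsilon\,\sqrt{\sigma}}\right]$, in which the $\varepsilon$-dependence enters linearly inside a single square root; one then differentiates using the integral representations for $\partial\sqrt{(\cdot)}$ and $\partial(\cdot)^{-1/2}$, exploits $\sqrt{\sigma}\,\sigma\,\sqrt{\sigma}=\sigma^2$ at $\varepsilon=0$ to collapse all operator functions, and is left with $2\int_0^\infty dt\,\Tr[e^{-t\sigma}(\partial_i\sigma)e^{-t\sigma}(\partial_j\sigma)]$ after the trace-of-derivative terms cancel. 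Either perform this computation or cite H\"ubner's result explicitly; as written, the hardest third of the theorem is asserted rather than proved.
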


\begin{proof}
See Appendix~\ref{app:QFI-formulas}.
\end{proof}

\medskip Based on~\eqref{eq:ineqs-fid-hol-fid},~\eqref{eq:fisher-matrix-ineq},
and the definitions in~\eqref{eq:FB-fid-2nd-deriv} and~\eqref{eq:WY-Hol-fid-2nd-deriv}, we arrive at the following corollary relating the Fisher--Bures and Wigner--Yanase information matrices,
representing a broad generalization of the inequalities reported in~\cite[Theorem~2]{Luo2004}, while noting that the normalization conventions in \cite{Luo2004} are non-standard:

\begin{corollary}
\label{cor:WY-FB-ineqs}For $\left(  \sigma(\gamma)\right)  _{\gamma
\in\mathbb{R}^{L}}$ a parameterized, differentiable family of states, the
following inequalities hold:
\begin{equation}
I^{\operatorname{WY}}(\gamma)\geq I^{\operatorname{FB}}(\gamma)\geq
\frac{1}{2} I^{\operatorname{WY}}(\gamma).
\end{equation}

\end{corollary}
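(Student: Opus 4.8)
The plan is to recognize the two matrices in the statement as $\boldsymbol{D}$-based Fisher information matrices for appropriately chosen smooth divergences, and then to invoke the ordering property~\eqref{eq:fisher-matrix-ineq} of Proposition~\ref{prop:fisher-info-matrix-props} together with the fidelity inequalities~\eqref{eq:ineqs-fid-hol-fid}. Concretely, I would set $\boldsymbol{D}^{\operatorname{WY}} \coloneqq -2\ln F_{H} = 2 D_{1/2}$ and $\boldsymbol{D}^{\operatorname{FB}} \coloneqq -2\ln F = 2\widetilde{D}_{1/2}$. Comparing with Definition~\ref{def:FB-WY-KM-2nd-derivs}, namely~\eqref{eq:WY-Hol-fid-2nd-deriv} and~\eqref{eq:FB-fid-2nd-deriv}, the Hessians of these divergences at $\varepsilon = 0$ are exactly $I^{\operatorname{WY}}(\gamma)$ and $I^{\operatorname{FB}}(\gamma)$; that is, $I^{\operatorname{WY}} = I^{\boldsymbol{D}^{\operatorname{WY}}}$ and $I^{\operatorname{FB}} = I^{\boldsymbol{D}^{\operatorname{FB}}}$ in the notation of~\eqref{eq:D-based-Fisher-matrix}.

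First I would verify that $\boldsymbol{D}^{\operatorname{WY}}$, $\boldsymbol{D}^{\operatorname{FB}}$, and also the third comparison function $-\ln F_{H} = D_{1/2} = \tfrac{1}{2}\boldsymbol{D}^{\operatorname{WY}}$ each qualify as a smooth divergence on positive-definite states, so that Proposition~\ref{prop:fisher-info-matrix-props} applies. Faithfulness is immediate, since the Petz-- and sandwiched R\'enyi relative entropies vanish precisely when $\omega = \tau$, and data-processing holds at $\alpha = 1/2$ for both families (as noted after~\eqref{eq:sandwiched-Renyi-order-2}); smoothness on positive-definite states is clear, and multiplying a smooth divergence by a positive constant preserves all three properties.

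Next I would feed the pointwise fidelity inequalities~\eqref{eq:ineqs-fid-hol-fid}, rewritten as $\boldsymbol{D}^{\operatorname{WY}} \geq \boldsymbol{D}^{\operatorname{FB}} \geq \tfrac{1}{2}\boldsymbol{D}^{\operatorname{WY}}$, into the ordering property~\eqref{eq:fisher-matrix-ineq}. The first inequality yields $I^{\operatorname{WY}}(\gamma) = I^{\boldsymbol{D}^{\operatorname{WY}}}(\gamma) \geq I^{\boldsymbol{D}^{\operatorname{FB}}}(\gamma) = I^{\operatorname{FB}}(\gamma)$. For the second, the key bookkeeping point is that the Fisher matrix construction in~\eqref{eq:D-based-Fisher-matrix} is linear in $\boldsymbol{D}$, so the divergence $\tfrac{1}{2}\boldsymbol{D}^{\operatorname{WY}} = D_{1/2}$ induces the Fisher matrix $\tfrac{1}{2} I^{\operatorname{WY}}(\gamma)$; applying~\eqref{eq:fisher-matrix-ineq} to $\boldsymbol{D}^{\operatorname{FB}} \geq \tfrac{1}{2}\boldsymbol{D}^{\operatorname{WY}}$ then gives $I^{\operatorname{FB}}(\gamma) \geq \tfrac{1}{2} I^{\operatorname{WY}}(\gamma)$. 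Chaining the two displays establishes the claimed sandwich.

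I do not anticipate a genuine obstacle here, since the heavy lifting is already packaged in Proposition~\ref{prop:fisher-info-matrix-props} and in~\eqref{eq:ineqs-fid-hol-fid}; the only care required is in the normalization bookkeeping---tracking the factor of two implicit in the definitions~\eqref{eq:FB-fid-2nd-deriv} and~\eqref{eq:WY-Hol-fid-2nd-deriv} and the linearity of the Hessian in the divergence---and in confirming that each of the three comparison functions is a bona fide smooth divergence, so that the ordering property is licensed.
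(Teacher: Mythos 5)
Your proof is correct and follows essentially the same route as the paper, which derives the corollary directly from the fidelity inequalities in~\eqref{eq:ineqs-fid-hol-fid}, the ordering property~\eqref{eq:fisher-matrix-ineq} of Proposition~\ref{prop:fisher-info-matrix-props}, and the definitions~\eqref{eq:FB-fid-2nd-deriv} and~\eqref{eq:WY-Hol-fid-2nd-deriv}. Your explicit bookkeeping of the factor of two via linearity of the Hessian, and your check that each comparison function is a bona fide smooth divergence, are exactly the (unstated) details the paper relies on.
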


As a consequence of Corollary~\ref{cor:WY-FB-ineqs}, we see that the
Fisher--Bures and Wigner--Yanase information matrices differ only by a
constant in the matrix order. As such, they are essentially exchangeable when
employing an optimization algorithm like natural gradient descent. This point
is discussed further in Section~\ref{sec:app-nat-grad}. Furthermore, if there are scenarios in
which the Wigner--Yanase information matrix is much simpler to compute than
the Fisher--Bures information matrix, then Corollary~\ref{cor:WY-FB-ineqs} is
useful.

Corollary~\ref{cor:WY-FB-ineqs} also has implications for estimation theory
and the multiparameter Cramer--Rao bound (see, e.g.,~\cite[Section~V]{Sidhu2020}), in particular implying that the
Fisher--Bures and Wigner--Yanase information matrices provide essentially the
same fundamental limitations on an estimation scheme, up to a factor of two. See Section~\ref{sec:app-estimating} for further discussions.

\subsection{Pure-state families, Wigner--Yanase information, and
canonical purifications}

Let us recall the known formula for a pure, parameterized, differentiable
family $\left(  \psi(\gamma)\right)  _{\gamma\in\mathbb{R}^{L}}$ of states and
connect to the Fisher--Bures and Wigner--Yanase information matrices above. In
particular, due to the fact that
\begin{equation}
F(\psi,\varphi)=\sqrt{F_{H}(\psi,\varphi)}=\left\vert \langle\psi
|\varphi\rangle\right\vert ^{2}
\end{equation}
for pure states $\psi$ and $\varphi$, the Fisher--Bures and Wigner--Yanase
information matrices, as defined in~\eqref{eq:FB-fid-2nd-deriv} and
\eqref{eq:WY-Hol-fid-2nd-deriv}, respectively, are proportional and expressed
in terms of the state vector $|\psi(\gamma)\rangle$ as follows:
\begin{align}
   I^{\operatorname{FB}}_{ij}(\gamma)  
&  =\frac{1}{2}
I^{\operatorname{WY}}_{ij}(\gamma)  \\
&  =2\left.  \frac{\partial^{2}}{\partial\varepsilon_{i}\partial
\varepsilon_{j}}\left[  -\ln F(\psi(\gamma),\psi(\gamma
+\varepsilon))\right]  \right\vert _{\varepsilon
=0}\\
&  =2\left.  \frac{\partial^{2}}{\partial\varepsilon_{i}\partial
\varepsilon_{j}}\left[  -\ln\left\vert \langle\psi(\gamma)|\psi(\gamma
+\varepsilon)\rangle\right\vert ^{2}\right]  \right\vert _{\varepsilon
=0}.\label{eq:pure-state-2nd-deriv}
\end{align}
Explicitly calculating~\eqref{eq:pure-state-2nd-deriv} for $\left(
\psi(\gamma)\right)  _{\gamma\in\mathbb{R}^{L}}$ then leads to the following
well known expression (see, e.g., \cite[Eq.~(129)]{Sidhu2020}):
\begin{align}
&   I^{\operatorname{FB}}_{ij}(\gamma)   \notag \\
&  =\frac{1}{2}
I^{\operatorname{WY}}_{ij}(\gamma)  \\
&  =4\operatorname{Re}\!\left[  \left\langle \partial_{i}\psi(\gamma
)|\partial_{j}\psi(\gamma)\right\rangle -\left\langle \partial_{i}\psi
(\gamma)|\psi(\gamma)\right\rangle \left\langle \psi(\gamma)|\partial_{j}
\psi(\gamma)\right\rangle \right]  .\label{eq:formula-FB-pure}
\end{align}
We detail this calculation in Appendix~\ref{app:proof-formula-FB-pure}.

We can furthermore establish an explicit connection between the Wigner--Yanase
information of a parameterized family $\left(  \sigma(\gamma)\right)
_{\gamma\in\mathbb{R}^{L}}$ and the Fisher--Bures information of the pure
parameterized family $\left(  \varphi^{\sigma}(\gamma)\right)  _{\gamma
\in\mathbb{R}^{L}}$, where $\varphi^{\sigma}(\gamma)$ is a canonical
purification of $\sigma(\gamma)$, defined as
\begin{equation}
\varphi^{\sigma}(\gamma)\coloneqq\left(  \sqrt{\sigma(\gamma)}\otimes
I\right)  \Gamma\left(  \sqrt{\sigma(\gamma)}\otimes I\right)
,\label{eq:canonical-purification-FB-WY}
\end{equation}
and $\Gamma$ is the standard maximally entangled operator:
\begin{equation}
\Gamma\coloneqq\sum_{k,\ell}|k\rangle\!\langle\ell|\otimes|k\rangle\!\langle\ell|.
\end{equation}
In particular, the following proposition holds:

\begin{proposition}
\label{prop:FB-WY-canonical-purifications}
For a parameterized, differentiable family $\left(  \sigma(\gamma)\right)
_{\gamma\in\mathbb{R}^{L}}$ of positive-definite states, the following
equality holds:
\begin{equation}
I^{\operatorname{WY}}\!\left(  \gamma;\left(  \sigma(\gamma)\right)
_{\gamma\in\mathbb{R}^{L}}\right)  =I^{\operatorname{FB}}\!\left(
\gamma;\left(  \varphi^{\sigma}(\gamma)\right)  _{\gamma\in\mathbb{R}^{L}
}\right)  ,
\end{equation}
where $\varphi^{\sigma}(\gamma)$ is defined in~\eqref{eq:canonical-purification-FB-WY}.
\end{proposition}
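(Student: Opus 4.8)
The plan is to exploit the fact that $\varphi^{\sigma}(\gamma)$ is a \emph{pure} state, so that the Fisher--Bures information of the purified family can be evaluated using the pure-state formula~\eqref{eq:formula-FB-pure} rather than the general formula~\eqref{eq:FB-formula-2} (which would require dividing by the vanishing eigenvalues of a rank-one state). First I would observe that $\Gamma=|\Gamma\rangle\!\langle\Gamma|$ with $|\Gamma\rangle\coloneqq\sum_{k}|k\rangle\otimes|k\rangle$, so that, using the Hermiticity of $\sqrt{\sigma(\gamma)}$, the canonical purification is a genuine pure state $\varphi^{\sigma}(\gamma)=|\varphi^{\sigma}(\gamma)\rangle\!\langle\varphi^{\sigma}(\gamma)|$ with state vector
\begin{equation}
|\varphi^{\sigma}(\gamma)\rangle=\left(\sqrt{\sigma(\gamma)}\otimes I\right)|\Gamma\rangle .
\end{equation}
The single identity I rely on throughout is $\langle\Gamma|(A\otimes I)|\Gamma\rangle=\operatorname{Tr}[A]$, valid for any operator $A$ acting on the first system.

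Next I would differentiate, obtaining $|\partial_{i}\varphi^{\sigma}(\gamma)\rangle=\left(\partial_{i}\sqrt{\sigma(\gamma)}\otimes I\right)|\Gamma\rangle$, and then evaluate the two inner products appearing in~\eqref{eq:formula-FB-pure}. Using the trace identity together with the Hermiticity of $\partial_{i}\sqrt{\sigma(\gamma)}$, these become
\begin{align}
\langle\partial_{i}\varphi^{\sigma}|\partial_{j}\varphi^{\sigma}\rangle&=\operatorname{Tr}\!\left[\left(\partial_{i}\sqrt{\sigma}\right)\left(\partial_{j}\sqrt{\sigma}\right)\right],\\
\langle\varphi^{\sigma}|\partial_{j}\varphi^{\sigma}\rangle&=\operatorname{Tr}\!\left[\sqrt{\sigma}\,\partial_{j}\sqrt{\sigma}\right]=\tfrac{1}{2}\partial_{j}\operatorname{Tr}[\sigma]=0,
\end{align}
where the final equality uses that $\operatorname{Tr}[\sigma(\gamma)]=1$ for all $\gamma$. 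Substituting into~\eqref{eq:formula-FB-pure} makes the second (overlap) term vanish and the real part act trivially, yielding $I^{\operatorname{FB}}_{ij}(\gamma;(\varphi^{\sigma}(\gamma)))=4\operatorname{Tr}[(\partial_{i}\sqrt{\sigma})(\partial_{j}\sqrt{\sigma})]$.

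The last step is to match this against the Wigner--Yanase formula~\eqref{eq:WY-formula-2}. Differentiating the defining relation $\sqrt{\sigma}\sqrt{\sigma}=\sigma$ and taking matrix elements in the eigenbasis $\{|k\rangle\}$ of $\sigma(\gamma)$ gives $\langle k|\partial_{i}\sqrt{\sigma}|\ell\rangle=\langle k|\partial_{i}\sigma|\ell\rangle/(\sqrt{\lambda_{k}}+\sqrt{\lambda_{\ell}})$. Expanding $\operatorname{Tr}[(\partial_{i}\sqrt{\sigma})(\partial_{j}\sqrt{\sigma})]=\sum_{k,\ell}\langle k|\partial_{i}\sqrt{\sigma}|\ell\rangle\langle\ell|\partial_{j}\sqrt{\sigma}|k\rangle$ and inserting these matrix elements reproduces exactly $\sum_{k,\ell}\frac{4}{(\sqrt{\lambda_{k}}+\sqrt{\lambda_{\ell}})^{2}}\langle k|\partial_{i}\sigma|\ell\rangle\langle\ell|\partial_{j}\sigma|k\rangle=I^{\operatorname{WY}}_{ij}(\gamma)$, which establishes the claimed equality.

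I expect the main obstacle to be the careful bookkeeping in the middle stage---in particular, justifying the matrix-element formula for $\partial_{i}\sqrt{\sigma}$ from the derivative of $\sqrt{\sigma}^{2}=\sigma$, and verifying that $\langle\varphi^{\sigma}|\partial_{j}\varphi^{\sigma}\rangle$ vanishes so that the pure-state formula collapses to the single trace $4\operatorname{Tr}[(\partial_{i}\sqrt{\sigma})(\partial_{j}\sqrt{\sigma})]$. Once these two facts are in place, the remainder is a direct substitution, and the $\gamma$-dependent eigenbasis of $\sigma$ causes no difficulty because every intermediate quantity is a basis-independent trace.
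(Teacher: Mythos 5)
Your proof is correct, and it is in substance the argument the paper records as its \emph{alternative} proof in Appendix~\ref{app:proof-FB-WY-canonical-purifications}: differentiate the purification vector, check that the overlap $\langle\varphi^{\sigma}|\partial_{j}\varphi^{\sigma}\rangle$ vanishes by normalization, and match $4\operatorname{Tr}\!\left[\left(\partial_{i}\sqrt{\sigma}\right)\left(\partial_{j}\sqrt{\sigma}\right)\right]$ against~\eqref{eq:WY-formula-2}. The paper's \emph{primary} proof is different and shorter: it establishes the exact identity $F_{H}(\sigma_{1},\sigma_{2})=F(\varphi^{\sigma_{1}},\varphi^{\sigma_{2}})$ for \emph{all} pairs of states, via $\operatorname{Tr}[\sqrt{\sigma_{1}}\sqrt{\sigma_{2}}]=\langle\varphi^{\sigma_{1}}|\varphi^{\sigma_{2}}\rangle$, so that the divergences $-2\ln F_{H}$ and $-2\ln F$ literally coincide on the state family and its purified family, and the equality of the Hessians is immediate from Definition~\ref{def:FB-WY-KM-2nd-derivs} without ever writing down either information matrix. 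Your route buys an explicit closed form for both sides along the way; the paper's main route buys brevity and makes transparent that the proposition is a consequence of the Holevo-to-Uhlmann fidelity identity rather than of any cancellation inside the derivative formulas. The one technical point where you diverge from the appendix's computation is how you obtain the matrix elements of $\partial_{i}\sqrt{\sigma}$: you differentiate $\sqrt{\sigma}\,\sqrt{\sigma}=\sigma$ and solve the resulting Sylvester equation in the eigenbasis, which is legitimate here because positive definiteness guarantees $\sqrt{\lambda_{k}}+\sqrt{\lambda_{\ell}}>0$, whereas the paper invokes the integral representation $\partial_{i}\sqrt{\sigma}=\int_{0}^{\infty}dt\ e^{-t\sqrt{\sigma}}\left(\partial_{i}\sigma\right)e^{-t\sqrt{\sigma}}$ from \cite[Theorem~1.1]{DelMoral2018}; your version is the more elementary and self-contained of the two.
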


\begin{proof}
Defining $|\Gamma\rangle\coloneqq\sum_{k}|k\rangle\otimes|k\rangle$, the following equalities hold  for all states $\sigma_{1}$ and $\sigma_{2}$:
\begin{align}
F_{H}(\sigma_{1},\sigma_{2})  &  =\left(  \operatorname{Tr}[\sqrt{\sigma_{1}
}\sqrt{\sigma_{2}}]\right)  ^{2}\\
&  =\left(  \langle\Gamma|\left(  \sqrt{\sigma_{1}}\sqrt{\sigma_{2}}\otimes
I\right)  |\Gamma\rangle\right)  ^{2}\\
&  =\left(  \langle\varphi^{\sigma_{1}}|\varphi^{\sigma_{2}}\rangle\right)
^{2}\\
&  =\left\vert \langle\varphi^{\sigma_{1}}|\varphi^{\sigma_{2}}\rangle
\right\vert ^{2}\\
&  =F(\varphi^{\sigma_{1}},\varphi^{\sigma_{2}}).
\end{align}
So this implies that
\begin{equation}
-2\ln F_{H}(\sigma_{1},\sigma_{2})=-2\ln F(\varphi^{\sigma_{1}},\varphi
^{\sigma_{2}}).
\end{equation}
By the definitions in~\eqref{eq:FB-fid-2nd-deriv} and
\eqref{eq:WY-Hol-fid-2nd-deriv}, the claim follows. See Appendix~\ref{app:proof-FB-WY-canonical-purifications} for an
alternative proof.
\end{proof}

\section{Information matrices for evolved quantum Boltzmann machines}\label{sec:infor_matrix_eqbm}

In this section, we detail some further contributions of our paper, namely, analytical expressions for and quantum circuits and algorithms to estimate the matrix elements of the Fisher--Bures (Section~\ref{sec:FB-info}), Wigner--Yanase (Section~\ref{sec:WY-info}), and Kubo--Mori (Section~\ref{sec:KM-info}) information matrices for evolved quantum Boltzmann machines. As mentioned previously, these results are summarized in Table~\ref{table:FB-WY-KM-results}. The main applications of these findings are to natural gradient descent algorithms for quantum machine learning tasks (Section~\ref{sec:app-nat-grad}) and fundamental limitations on estimating time-evolved thermal states (Section~\ref{sec:app-estimating}).

Before delving into the results of Sections~\ref{sec:FB-info}--\ref{sec:KM-info}, we first present, in Section~\ref{sec:general-consids-info-mats}, some general considerations that apply to all of the information matrices for evolved quantum Boltzmann machines. 

\subsection{General considerations for the information matrices of evolved quantum Boltzmann machines}

\label{sec:general-consids-info-mats}

Let us first recall the parameterization in~\eqref{eq:ansatz}. The state $\omega(\theta, \phi)$ is parameterized by  $\theta \in \mathbb{R}^J$ and $\phi \in \mathbb{R}^K$. Thus, $\gamma = (\theta, \phi)$, $L = J+K$, and each information matrix is a block matrix of the following form
\begin{equation}
    I(\gamma) =
    \begin{pmatrix}
    \begin{bmatrix}
    \makebox[3em][c]{$I(\theta)$}
    \end{bmatrix} &
    \begin{bmatrix}
    \makebox[3em][c]{$I(\theta, \phi)$}
    \end{bmatrix} \\[1em]
    \begin{bmatrix}
    \makebox[3em][c]{$I(\phi, \theta)$}
    \end{bmatrix} &
    \begin{bmatrix}
    \makebox[3em][c]{$I(\phi)$}
    \end{bmatrix}
    \end{pmatrix},
    \label{eq:full-info-matrix-theta-phi}
\end{equation}
where the top left matrix is a $J\times J$ information matrix for the parameter vector $\theta$, the bottom right matrix is a $K\times K$  information matrix for the parameter vector~$\phi$, and the other $J\times K$ and $K\times J$ matrices capture cross terms between $\theta$ and $\phi$ (here note that $I(\phi, \theta) = I(\theta, \phi)^T$).

For the parameterized states $\omega(\theta,\phi)$, the  information matrices in Definition~\ref{def:FB-WY-KM-2nd-derivs} and Theorem~\ref{thm:QFI-formulas} have multiple components because the parameter vectors $\theta$ and $\phi$ are independent.
For the parametrized state $\omega(\theta,\phi)$, we denote the  various information matrices as follows:
\begin{equation}
    \begin{aligned}
I_{ij}(\theta) & \coloneqq \sum_{k,\ell}c(\lambda_{k},\lambda_{\ell})
\langle k|\partial_{\theta_i}\omega(\theta,\phi)|\ell\rangle\!\langle\ell|\partial_{\theta_j}\omega(\theta,\phi)|k\rangle,\\
I_{ij}(\phi) & \coloneqq \sum_{k,\ell}c(\lambda_{k},\lambda_{\ell})
\langle k|\partial_{\phi_i}\omega(\theta,\phi)|\ell\rangle\!\langle\ell|\partial_{\phi_j}\omega(\theta,\phi)|k\rangle,\\
I_{ij}(\theta,\phi) & \coloneqq \sum_{k,\ell}c(\lambda_{k},\lambda_{\ell})
\langle k|\partial_{\theta_i}\omega(\theta,\phi)|\ell\rangle\!\langle\ell|\partial_{\phi_j}\omega(\theta,\phi)|k\rangle,
\end{aligned}
\end{equation}
where $c(\lambda_{k},\lambda_{\ell})
$ is one of the functions in Theorem~\ref{thm:QFI-formulas},  $\partial_{\theta_i} \equiv \frac{\partial}{\partial_{\theta_i}}$, $\partial_{\phi_i}\equiv \frac{\partial}{\partial_{\phi_i}}$, and we used the spectral decomposition 
\begin{equation}
\label{eq:spectral_decomp}
    \omega(\theta,\phi) = \sum_k \lambda_k |k\rangle\! \langle k|.
\end{equation}
In the latter notation, we have suppressed the dependence of the eigenvalue $\lambda_k$ and the eigenvector $\ket{k}$ on the parameter vectors $\theta$ and $\phi$. In accordance with~\eqref{eq:ansatz} and~\eqref{eq:spectral_decomp}, the spectral decomposition for the parameterized thermal state $\rho(\theta)$ is as follows:
\begin{equation}
    \rho(\theta)=\sum_k \lambda_k |\tilde{k}\rangle\!\langle \tilde{k}|,
\end{equation}
where the eigenvalues $\lambda_k$ are the same as those of $\omega(\theta,\phi)$, while the eigenvectors $\ket{k}$ of $\omega(\theta,\phi)$ and $\ket{\tilde{k}}$ of $\rho(\theta)$ are related as follows: $\ket{k}=e^{-iH(\phi)}\ket{\tilde{k}}$.

By inspecting~\eqref{eq:FB-formula-2},~\eqref{eq:WY-formula-2}, and~\eqref{eq:KM-formula-2}, it is clear that the following calculations are helpful for evaluating all of the information matrix elements in Theorem~\ref{thm:QFI-formulas} for evolved quantum Boltzmann machines 
with respect to the $\theta$ parameters:
\begin{align}
\langle k|&\left[  \frac{\partial}{\partial\theta_{j}} \omega(\theta,\phi)\right]
|\ell\rangle   \nonumber\\
\begin{split}
& =\langle k| \Big( -\frac{1}
{2}\left\{  e^{-iH(\phi)}\Phi_{\theta}(G_{j})e^{iH(\phi)},\omega(\theta
,\phi)\right\} \\
& \hspace{0.4cm}+ \omega(\theta,\phi)\left\langle G_{j}\right\rangle
_{\rho(\theta)} \Big)   |\ell\rangle
\end{split}\\
\begin{split}
&  = -\frac{1}{2}\bra{k} e^{-iH(\phi)}\Phi_{\theta}(G_{j})e^{iH(\phi)}\omega(\theta,\phi) \ket{\ell} \\
&\hspace{0.4cm}-\frac{1}{2}\bra{k} \omega(\theta,\phi) e^{-iH(\phi)}\Phi_{\theta}(G_{j})e^{iH(\phi)} \ket{\ell} \\
&\hspace{0.4cm}+ \bra{k} \omega(\theta,\phi) \ket{\ell} \left\langle G_{j}\right\rangle
_{\rho(\theta)}
\end{split}\\
\begin{split}
& = -\frac{1}{2}\left( \lambda_\ell + \lambda_k\right) \bra{k} e^{-iH(\phi)}\Phi_{\theta}(G_{j})e^{iH(\phi)}\ket{\ell}\\
&\hspace{0.4cm}+ \delta_{k\ell} \lambda_\ell \left\langle G_{j}\right\rangle_{\rho(\theta)}\label{eq:Fisher-info-help-alt1}
\end{split}\\
& = -\frac{1}{2}\left( \lambda_\ell + \lambda_k\right) \bra{\tilde{k}} \Phi_{\theta}(G_{j})\ket{\tilde{\ell}}+ \delta_{k\ell} \lambda_\ell \left\langle G_{j}\right\rangle_{\rho(\theta)},\label{eq:Fisher-info-help-alt}
\end{align}
and with respect
to the $\phi$ parameters:
\begin{align}
\langle k|&\left[  \frac{\partial}{\partial\phi_{i}} \omega(\theta,\phi)\right]
|\ell\rangle   \nonumber\\
& =\langle k|  i\left[  \omega(\theta,\phi),\Psi_{\phi
}(H_{i})\right] |\ell\rangle\\
&  =i\langle k|\left(  \omega(\theta,\phi)\Psi^{\dagger}_{\phi}(H_{i})-\Psi_{\phi}
(H_{i})\omega(\theta,\phi)\right)  |\ell\rangle\\
&  =i\left(  \lambda_{k}-\lambda_{\ell}\right)  \langle k|\Psi_{\phi}
(H_{i})|\ell\rangle .\label{eq:Fisher-info-help-last}
\end{align}
These calculations follow by direct substitution using Theorem~\ref{thm:gradient-eQBM} and applying~\eqref{eq:spectral_decomp}.

Before proceeding to the forthcoming subsections, let us note that the expressions given for $I_{ij}^{\operatorname{FB}}(\theta)$, $I_{ij}^{\operatorname{WY}}(\theta)$, and $I_{ij}^{\operatorname{KM}}(\theta)$ in Theorems~\ref{thm:FB-theta}, \ref{thm:WY-theta}, and \ref{thm:KM-theta} below, respectively, generalize the classical formulas presented in \cite[Eq.~(6)]{Crooks2007}.

\subsection{Fisher–Bures information matrix of evolved quantum Boltzmann machines}

\label{sec:FB-info}

The main results of this section are Theorems~\ref{thm:FB-theta},~\ref{thm:FB-phi}, and~\ref{thm:FB-theta-phi}, which provide explicit analytical expressions for the Fisher--Bures information matrix elements of evolved quantum Boltzmann machines. Additionally, we give quantum circuits that play a key role in efficiently estimating the terms in these expressions (see Figure~\ref{fig:FB-circuits}).

\begin{theorem}
\label{thm:FB-theta}
For the parameterized family in~\eqref{eq:ansatz}, the Fisher--Bures information matrix elements, with respect to the $\theta$
parameters, are as follows:
\begin{equation}
I_{ij}^{\operatorname{FB}}(\theta) = \frac{1}{2} \left\langle \{\Phi_{\theta}(G_{i}),\Phi_{\theta}(G_{j})\}   \right\rangle_{\rho(\theta)}  - \left\langle G_{i} \right\rangle_{\rho(\theta)} \left\langle G_{j} \right\rangle_{\rho(\theta)}.
\label{eq:FB-theta}
\end{equation}
\end{theorem}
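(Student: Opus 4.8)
The plan is to start from the eigenbasis formula for the Fisher--Bures information matrix in~\eqref{eq:FB-formula-2}, specialized to $\sigma(\gamma)=\omega(\theta,\phi)$ with both derivatives taken with respect to the $\theta$ parameters, and then substitute the matrix-element expression already worked out in~\eqref{eq:Fisher-info-help-alt}. For the first factor I use $\langle k|\partial_{\theta_i}\omega|\ell\rangle = -\tfrac12(\lambda_k+\lambda_\ell)\langle\tilde k|\Phi_\theta(G_i)|\tilde\ell\rangle + \delta_{k\ell}\lambda_\ell\langle G_i\rangle_{\rho(\theta)}$, and for the second factor I use the same formula with the bra and ket indices interchanged, namely $\langle\ell|\partial_{\theta_j}\omega|k\rangle = -\tfrac12(\lambda_k+\lambda_\ell)\langle\tilde\ell|\Phi_\theta(G_j)|\tilde k\rangle + \delta_{\ell k}\lambda_k\langle G_j\rangle_{\rho(\theta)}$ (so that the Kronecker-delta term now carries $\lambda_k$ rather than $\lambda_\ell$). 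The product of these two matrix elements splits into four pieces. The crucial observation is that each ``principal'' matrix element carries a factor $-\tfrac12(\lambda_k+\lambda_\ell)$, so in the product of two principal terms the weight $\tfrac{2}{\lambda_k+\lambda_\ell}$ in~\eqref{eq:FB-formula-2} cancels one power of $(\lambda_k+\lambda_\ell)$, leaving the manageable factor $\tfrac12(\lambda_k+\lambda_\ell)$.

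First I would evaluate the principal$\times$principal contribution: after the cancellation it reads $\tfrac12\sum_{k,\ell}(\lambda_k+\lambda_\ell)\langle\tilde k|\Phi_\theta(G_i)|\tilde\ell\rangle\langle\tilde\ell|\Phi_\theta(G_j)|\tilde k\rangle$. Using the completeness relation $\sum_\ell|\tilde\ell\rangle\!\langle\tilde\ell|=I$ together with the spectral decomposition $\rho(\theta)=\sum_k\lambda_k|\tilde k\rangle\!\langle\tilde k|$, the $\lambda_k$ piece collapses to $\Tr[\rho(\theta)\Phi_\theta(G_i)\Phi_\theta(G_j)]$ and the $\lambda_\ell$ piece, after a cyclic permutation of the trace, to $\Tr[\rho(\theta)\Phi_\theta(G_j)\Phi_\theta(G_i)]$. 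Together these give precisely $\tfrac12\langle\{\Phi_\theta(G_i),\Phi_\theta(G_j)\}\rangle_{\rho(\theta)}$, which is the first term of~\eqref{eq:FB-theta}.

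Next I would treat the two mixed terms (principal times Kronecker-delta) and the delta$\times$delta term. In each mixed term the Kronecker delta forces $k=\ell$, so $\lambda_k+\lambda_\ell\to 2\lambda_k$ and the weight $\tfrac{2}{2\lambda_k}$ cancels cleanly, leaving a sum of the form $-\langle G_j\rangle_{\rho(\theta)}\sum_k\lambda_k\langle\tilde k|\Phi_\theta(G_i)|\tilde k\rangle = -\langle G_j\rangle_{\rho(\theta)}\Tr[\rho(\theta)\Phi_\theta(G_i)]$ (and symmetrically with $i\leftrightarrow j$). Here I would invoke the invariance $\Tr[\rho(\theta)\Phi_\theta(X)]=\langle X\rangle_{\rho(\theta)}$, which holds because $\rho(\theta)=e^{-G(\theta)}/Z(\theta)$ commutes with each $e^{-iG(\theta)t}$ appearing in the definition~\eqref{eq:Phi} of $\Phi_\theta$; equivalently, $\Phi_\theta^{\dagger}(\rho(\theta))=\rho(\theta)$. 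Thus each of the two mixed terms contributes $-\langle G_i\rangle_{\rho(\theta)}\langle G_j\rangle_{\rho(\theta)}$. Finally, in the delta$\times$delta term both deltas force $k=\ell$, and it evaluates to $\sum_k\lambda_k\langle G_i\rangle_{\rho(\theta)}\langle G_j\rangle_{\rho(\theta)}=\langle G_i\rangle_{\rho(\theta)}\langle G_j\rangle_{\rho(\theta)}$ using $\sum_k\lambda_k=1$.

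Summing the four contributions gives $\tfrac12\langle\{\Phi_\theta(G_i),\Phi_\theta(G_j)\}\rangle_{\rho(\theta)} - 2\langle G_i\rangle_{\rho(\theta)}\langle G_j\rangle_{\rho(\theta)} + \langle G_i\rangle_{\rho(\theta)}\langle G_j\rangle_{\rho(\theta)}$, which collapses to~\eqref{eq:FB-theta}. I expect the main obstacle to be purely organizational: carefully bookkeeping the four terms, getting the $k\leftrightarrow\ell$ relabeling right in the second derivative factor, and recognizing the invariance of the expectation value under $\Phi_\theta$. There is no genuine analytic difficulty once the matrix elements of~\eqref{eq:Fisher-info-help-alt} are in hand.
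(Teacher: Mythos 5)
Your proposal is correct and follows essentially the same route as the paper's proof in Appendix~\ref{proof:FB-theta}: both start from the eigenbasis formula~\eqref{eq:FB-formula-2}, substitute the matrix elements from~\eqref{eq:Fisher-info-help-alt}, exploit the cancellation of one factor of $(\lambda_k+\lambda_\ell)$ against the weight $\tfrac{2}{\lambda_k+\lambda_\ell}$, and evaluate the four resulting pieces using $\operatorname{Tr}[\rho(\theta)\Phi_\theta(G_i)]=\langle G_i\rangle_{\rho(\theta)}$ and $\sum_k\lambda_k=1$. The bookkeeping of the principal, mixed, and delta terms matches the paper's computation exactly.
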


\begin{proof}
See Appendix~\ref{proof:FB-theta}.
\end{proof}

\medskip

This is the same result obtained in~\cite{patel2024naturalgradientparameterestimation} in the case of using the parameterized family of thermal states. $I_{ij}^{\operatorname{FB}}(\theta)$ can be efficiently estimated on a quantum computer (under the assumption that each $G_j$ is a local operator, acting on a constant number of qubits)~\cite{patel2024naturalgradientparameterestimation}. Supposing that each $G_j$ in~\eqref{eq:G} is not only Hermitian but also unitary, as it is for the common case in which each $G_j$ is a tensor product of Pauli operators, we can use the quantum circuit shown in Figure~\ref{fig:FB-theta} for estimating the first term of~\eqref{eq:FB-theta}. Also, the second term
$\left\langle G_{i}\right\rangle _{\rho(\theta)}\left\langle G_{j}
\right\rangle _{\rho(\theta)}$ in~\eqref{eq:FB-theta} can be
estimated by means of a quantum algorithm. Since it can be written as
\begin{equation}
\left\langle G_{i}\right\rangle _{\rho(\theta)}\left\langle G_{j}\right\rangle
_{\rho(\theta)}=\operatorname{Tr}[\left(  G_{i}\otimes G_{j}\right)  \left(
\rho(\theta)\otimes\rho(\theta)\right)  ], \label{eq:simple-term-to-estimate}
\end{equation}
a procedure for estimating it is to generate the state $\rho(\theta
)\otimes\rho(\theta)$ and then measure the observable $G_{i}\otimes G_{j}$ on
these two copies. Through repetition, the estimate of $\left\langle
G_{i}\right\rangle _{\rho(\theta)}\left\langle G_{j}\right\rangle
_{\rho(\theta)}$ can be made as precise as desired. This procedure is
described in detail as~\cite[Algorithm~2]{patel2024quantumboltzmannmachine}. 

The elements of the Fisher--Bures information matrix for the parameter vector $\phi$ are as follows:

\begin{theorem}
\label{thm:FB-phi}
For the parameterized family in~\eqref{eq:ansatz}, the Fisher--Bures information matrix elements, with respect to the $\phi$
parameters, are as follows:
\begin{equation}
I_{ij}^{\operatorname{FB}}(\phi) = \left\langle \left[  \Phi_{\theta}(\Psi_{\phi}(H_{i})), \left[  G(\theta) , \Psi_{\phi}
(H_{j})\right]  \right]
\right\rangle _{\rho(\theta)}.
\label{eq:FB-phi}
\end{equation}
\end{theorem}

\begin{proof}
See Appendix~\ref{proof:FB-phi}.
\end{proof}

\medskip

The quantity in~\eqref{eq:FB-phi} can be estimated on a quantum computer by a generalisation of the single-qubit Hadamard test circuit, which is useful for evaluating the expectation value of nested commutators~\cite[Algorithm~3]{liEfficientQuantumGradient2024}. The quantum circuit used in this case is depicted in Figure~\ref{fig:FB-phi}.

The elements of the Fisher--Bures information matrix for the cross terms involving $\theta$ and $\phi$ are as follows:

\begin{theorem}\label{thm:FB-theta-phi}
For the parameterized family in~\eqref{eq:ansatz}, the Fisher--Bures information matrix elements, with respect to the $\theta$ and $\phi$
parameters, are as follows:
\begin{equation}
    I_{ij}^{\operatorname{FB}}(\theta, \phi) = i \left\langle \left[ \Phi_{\theta}(G_{i}) , \Psi_{\phi}(H_{j}) \right] \right\rangle_{\rho(\theta)}.\label{eq:FB-theta-phi}
\end{equation}
\end{theorem}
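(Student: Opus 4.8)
The plan is to compute the cross-term element directly from the spectral formula for the Fisher--Bures matrix in~\eqref{eq:FB-formula-2}, specialized to a $\theta$-derivative in the first slot and a $\phi$-derivative in the second, feeding in the matrix-element identities already assembled in Section~\ref{sec:general-consids-info-mats}. Concretely, I would start from
\begin{equation}
I_{ij}^{\operatorname{FB}}(\theta,\phi) = \sum_{k,\ell}\frac{2}{\lambda_{k}+\lambda_{\ell}}\langle k|\partial_{\theta_{i}}\omega(\theta,\phi)|\ell\rangle\langle\ell|\partial_{\phi_{j}}\omega(\theta,\phi)|k\rangle,
\end{equation}
and substitute~\eqref{eq:Fisher-info-help-alt} for the first factor and~\eqref{eq:Fisher-info-help-last} (relabeled $i\to j$ and with the roles of $k,\ell$ exchanged) for the second, namely $\langle\ell|\partial_{\phi_{j}}\omega(\theta,\phi)|k\rangle = i(\lambda_{\ell}-\lambda_{k})\langle\ell|\Psi_{\phi}(H_{j})|k\rangle$.

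The first simplification is that the inhomogeneous $\delta_{k\ell}\lambda_{\ell}\langle G_{i}\rangle_{\rho(\theta)}$ piece of the $\theta$-derivative drops out entirely: it is supported on $k=\ell$, whereas the $\phi$-derivative carries the antisymmetric factor $(\lambda_{\ell}-\lambda_{k})$, which vanishes there. In the surviving term the spectral weight $\frac{2}{\lambda_{k}+\lambda_{\ell}}$ cancels against the $-\frac{1}{2}(\lambda_{\ell}+\lambda_{k})$ coming from~\eqref{eq:Fisher-info-help-alt}, leaving
\begin{equation}
I_{ij}^{\operatorname{FB}}(\theta,\phi) = -i\sum_{k,\ell}(\lambda_{\ell}-\lambda_{k})\,\langle\tilde{k}|\Phi_{\theta}(G_{i})|\tilde{\ell}\rangle\,\langle\ell|\Psi_{\phi}(H_{j})|k\rangle.
\end{equation}

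Next I would place both factors in the $\rho(\theta)$-eigenbasis using $|k\rangle = e^{-iH(\phi)}|\tilde{k}\rangle$. The key identity is that conjugating~\eqref{eq:Psi} by $e^{\pm iH(\phi)}$ and changing the integration variable via $t\mapsto 1-t$ gives $e^{iH(\phi)}\Psi_{\phi}(H_{j})e^{-iH(\phi)} = \Psi_{\phi}^{\dag}(H_{j})$, the Hilbert--Schmidt adjoint in~\eqref{eq:Psi-adjoint}, so that $\langle\ell|\Psi_{\phi}(H_{j})|k\rangle = \langle\tilde{\ell}|\Psi_{\phi}^{\dag}(H_{j})|\tilde{k}\rangle$. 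Splitting the remaining sum along the two eigenvalues and resumming with $\rho(\theta)=\sum_{k}\lambda_{k}|\tilde{k}\rangle\!\langle\tilde{k}|$ (while using $\sum_{\ell}|\tilde{\ell}\rangle\!\langle\tilde{\ell}|=I$ in the other contribution) collapses the double sum into $i\Tr[\rho(\theta)\Phi_{\theta}(G_{i})\Psi_{\phi}^{\dag}(H_{j})] - i\Tr[\Phi_{\theta}(G_{i})\rho(\theta)\Psi_{\phi}^{\dag}(H_{j})]$; one application of cyclicity of the trace to the second term rewrites it as $i\langle[\Phi_{\theta}(G_{i}),\Psi_{\phi}^{\dag}(H_{j})]\rangle_{\rho(\theta)}$, which is precisely~\eqref{eq:FB-theta-phi}.

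I expect the only genuine obstacle to be the basis bookkeeping in the third step: one must carefully distinguish the eigenvectors $|k\rangle$ of $\omega(\theta,\phi)$ from those $|\tilde{k}\rangle$ of $\rho(\theta)$, and correctly convert $\Psi_{\phi}$ into its adjoint $\Psi_{\phi}^{\dag}$ through the $t\mapsto 1-t$ substitution. Everything else reduces to the same cancellation-of-spectral-weights mechanism that drives the diagonal $I^{\operatorname{FB}}(\theta)$ and $I^{\operatorname{FB}}(\phi)$ computations, so I would not anticipate any new analytic difficulty beyond tracking indices and signs.
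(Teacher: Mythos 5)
Your proposal is correct and follows essentially the same route as the paper's proof in Appendix~\ref{proof:FB-theta-phi}: both start from the spectral formula~\eqref{eq:FB-formula-2}, substitute the matrix-element identities from Section~\ref{sec:general-consids-info-mats}, observe that the $\delta_{k\ell}\langle G_i\rangle_{\rho(\theta)}$ term is killed by the antisymmetric factor $(\lambda_\ell-\lambda_k)$, cancel the spectral weights, and finish by resumming into traces and invoking $e^{iH(\phi)}\Psi_\phi(H_j)e^{-iH(\phi)}=\Psi_\phi^\dag(H_j)$. The only (cosmetic) difference is that you pass to the $\rho(\theta)$-eigenbasis at the level of matrix elements, whereas the paper carries the $e^{\pm iH(\phi)}$ conjugations through the trace manipulations and converts to $\rho(\theta)$ at the end.
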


\begin{proof}
See Appendix~\ref{proof:FB-theta-phi}.     
\end{proof}

\medskip 

The term in~\eqref{eq:FB-theta-phi} can be estimated by means of the quantum circuit depicted in Figure~\ref{fig:FB-theta-phi}.

\begin{figure*}
    \centering
    \begin{subfigure}{\textwidth}
        \centering
        \scalebox{1.5}{
        \Qcircuit @C=1.2em @R=1.5em {
        \lstick{\ket{0}} & \gate{\operatorname{Had}} & \ctrl{1} & \gate{\operatorname{Had}} & \meter & \rstick{\hspace{-1.2em}Z} \\
        \lstick{\rho(\theta)} & \qw & \gate{G_i} & \gate{e^{iG(\theta)(t_2-t_1)}} &  \meter & \rstick{\hspace{-1.2em}G_{j}}
        }
        }
        \vspace{20pt}
        \caption{Quantum circuit that realizes an unbiased estimate of $\frac{1}{2} \left\langle \{\Phi_{\theta}(G_{i}),\Phi_{\theta}(G_{j})\}   \right\rangle_{\rho(\theta)}$. For each run of the circuit, the times $t_1$ and $t_2$ are sampled independently at random from the probability density $p(t)$ in~\eqref{eq:high-peak-tent-density}. For details of the algorithm, see Appendix~\ref{app:FB-theta}.}
        \label{fig:FB-theta}
    \end{subfigure}
    \vspace{10pt}
    
    \begin{subfigure}{\textwidth}
        \centering
        \scalebox{1.29}{
        \Qcircuit @C=0.9em @R=1.0em {
            \lstick{\ket{1}} & \gate{\operatorname{Had}} & \gate{S} & \ctrl{2} & \qw & \qw & \qw & \qw & \gate{\operatorname{Had}} & \meter &\rstick{\hspace{-1.2em}Z} \\
            \lstick{\ket{1}} & \gate{\operatorname{Had}} & \gate{S} & \qw & \qw & \qw & \qw & \ctrl{1} & \gate{\operatorname{Had}} & \meter & \rstick{\hspace{-1.2em}Z}\\
            \lstick{\rho(\theta)} & \gate{e^{-iH(\phi)t_3}} & \qw & \gate{H_{i}} & \gate{e^{iH(\phi)t_3}} & \gate{e^{-iG(\theta)t_2}} & \gate{e^{-iH(\phi)t_1}} & \gate{H_{j}} & \gate{e^{iH(\phi)t_1}} & \meter & \rstick{\hspace{-1.2em}G(\theta)}
        }
        }
        \vspace{20pt}
         \caption{Quantum circuit that realizes an unbiased estimate of $\frac{1}{4} \left\langle \left[  \left[  \Psi_{\phi}(H_{j}),G(\theta)\right]  ,\Phi_{\theta}(\Psi_{\phi}(H_{i}))\right]\right\rangle _{\rho(\theta)}$. For each run of the circuit, the times $t_1$ and $t_3$ are sampled uniformly at random from $[0,1]$, and the time $t_2$ is sampled independently at random from the probability density $p(t)$ in~\eqref{eq:high-peak-tent-density}. For details of the algorithm, see Appendix~\ref{app:FB-phi}.}
        \label{fig:FB-phi}
    \end{subfigure}
    \vspace{10pt}
    
    \begin{subfigure}{\textwidth}
        \centering
        \scalebox{1.5}{
        \Qcircuit @C=1em @R=1.5em {
        \lstick{\ket{0}} & \gate{\operatorname{Had}} & \gate{S} & \ctrl{1} & \qw & \gate{\operatorname{Had}} & \meter & \rstick{\hspace{-1.2em}Z} \\
        \lstick{\rho(\theta)} & \gate{e^{-iH(\phi)t_2}} & \qw & \gate{H_{j}} & \gate{e^{iH(\phi)t_2}} & \gate{e^{iG(\theta)t_1}} & \meter & \rstick{\hspace{-1.2em}G_{i}}
        }
        }
        \vspace{20pt}
        \caption{Quantum circuit that realizes an unbiased estimate of $\frac{i}{2}\left\langle  \left[\Phi_{\theta}(G_{i}),\Psi_{\phi}(H_{j})\right]\right\rangle_{ \rho(\theta)}$. For each run of the circuit, the time $t_1$ is sampled independently at random from the probability density $p(t)$ in~\eqref{eq:high-peak-tent-density}, and $t_2$ is sampled uniformly at random from $[0,1]$. For details of the algorithm, see Appendix~\ref{app:FB-theta-phi}.}
        \label{fig:FB-theta-phi}
    \end{subfigure}
    
    \caption{Quantum circuits involved in the estimation of the Fisher--Bures information matrix elements of evolved quantum Boltzmann machines. (a) Quantum circuit involved in the estimation of $I_{ij}^{\operatorname{FB}}(\theta)$; (b) quantum circuit involved in the estimation of $I_{ij}^{\operatorname{FB}}(\phi)$; (c) quantum circuit involved in the estimation of $I_{ij}^{\operatorname{FB}}(\theta,\phi)$.}
    \label{fig:FB-circuits}
\end{figure*}
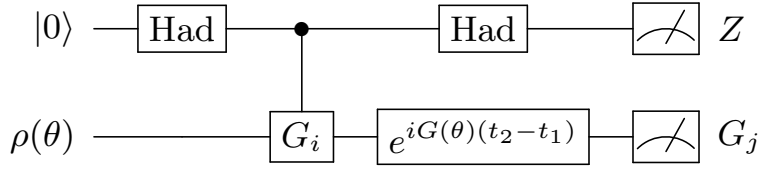
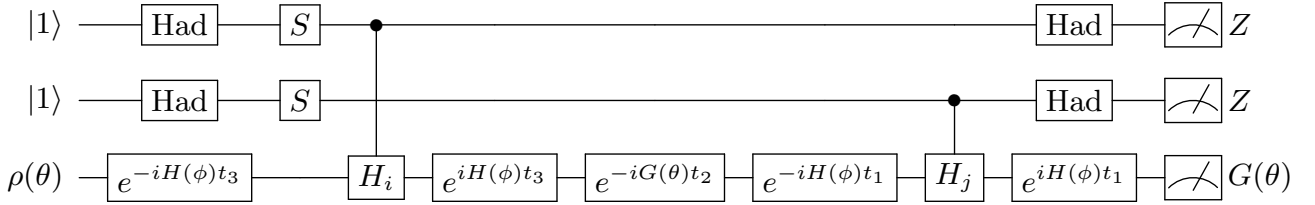
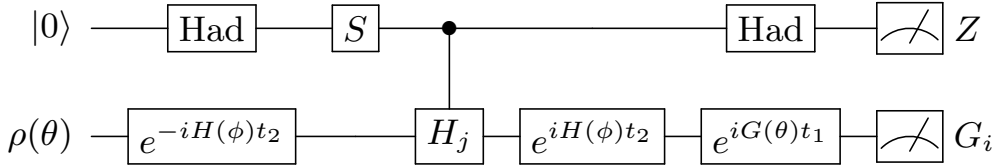

\subsection{Wigner--Yanase information matrix of evolved quantum Boltzmann machines}

\label{sec:WY-info}

The main results of this section are Theorems~\ref{thm:WY-theta},~\ref{thm:WY-phi}, and~\ref{thm:WY-theta-phi}, which provide explicit analytical expressions for the Wigner--Yanase information matrix elements of the parameterized family in~\eqref{eq:ansatz}. Additionally, we give quantum circuits that play a key role in efficiently estimating the terms in these expressions (see Figure~\ref{fig:WY-circuits}). We also provide a unique quantum algorithm for estimating the first term of~\eqref{eq:WY-theta}, which makes use of the canonical purification of a quantum Boltzmann machine. 

In order to obtain our findings here, we appeal to Proposition~\ref{prop:FB-WY-canonical-purifications} and, as such, we consider the canonically purified evolved quantum Boltzmann machine, defined as
\begin{align}
|\psi(\theta,\phi)\rangle & \coloneqq \left( \sqrt{\omega(\theta,\rho)}  \otimes
I\right)  |\Gamma\rangle\\
& = \left(  e^{-iH(\phi)} \sqrt{\rho(\theta)} \ e^{iH(\phi)} \otimes
I\right)  |\Gamma\rangle,\label{eq:purified-evolved-QBM}
\end{align}
where $\rho(\theta)$ is the parameterized thermal state defined in~\eqref{eq:param-thermal-state}, $H(\phi)$ is defined in~\eqref{eq:H}, and $\ket{\Gamma} \coloneqq \sum_k \ket{k}\ket{k}$ is the  maximally entangled vector. Observe that $|\psi(\theta,\phi)\rangle$ indeed purifies $\omega(\theta,\phi)$ because
\begin{equation}
    \omega(\theta,\phi) = \operatorname{Tr}_2[|\psi(\theta,\phi)\rangle\!\langle\psi(\theta,\phi)|].
\end{equation}

We have the following result for the partial derivatives of~\eqref{eq:purified-evolved-QBM}:
\begin{theorem}\label{thm:WY-partial-derivatives}
The partial derivatives for the parameterized family of canonically purified evolved quantum Boltzmann machines in~\eqref{eq:purified-evolved-QBM} are as follows:
\begin{align}
    \begin{split}
        |\partial_{\theta_j}\psi(\theta,\phi)\rangle 
        &= \frac{1}{2}\left\langle G_{j}\right\rangle _{\rho(\theta)}
        |\psi(\theta,\phi)\rangle \\
        &\hspace{-1.5cm} -\frac{1}{4}\left(  e^{-iH(\phi)}\left\{  \Phi_{\frac{\theta}{2}}
        (G_{j}),\sqrt{\rho(\theta)}\right\} e^{iH(\phi)} \otimes I\right)  |\Gamma\rangle,
    \end{split}\label{eq:part_der_theta_purified-evolved-QBM} \\
    |\partial_{\phi_i}\psi(\theta,\phi)\rangle 
    &= i\left(  \left[\sqrt{\omega(\theta,\phi)}, \Psi^{\dagger}_{\phi}(H_{i}) \right] \otimes I\right)  |\Gamma\rangle,\label{eq:part_der_phi_purified-evolved-QBM}
\end{align}
where we used the shorthands $\partial_{\theta_j} \equiv \frac{\partial}{\partial \theta_j}$ and $\partial_{\phi_i} \equiv \frac{\partial}{\partial \phi_i}$.
\end{theorem}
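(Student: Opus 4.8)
The plan is to differentiate the canonically purified state vector $|\psi(\theta,\phi)\rangle = (\sqrt{\omega(\theta,\phi)}\otimes I)|\Gamma\rangle$ directly, treating the $\theta$ and $\phi$ derivatives separately. Since $|\Gamma\rangle$ is constant and the factor on the second tensor leg is the identity, both partial derivatives reduce to computing $\partial_{\theta_j}\sqrt{\omega(\theta,\phi)}$ and $\partial_{\phi_i}\sqrt{\omega(\theta,\phi)}$ and tensoring with $I$. The key algebraic fact I will exploit is the factorization $\sqrt{\omega(\theta,\phi)} = e^{-iH(\phi)}\sqrt{\rho(\theta)}\,e^{iH(\phi)}$, which follows immediately from the definition~\eqref{eq:ansatz} because the unitary conjugation commutes with taking the square root.

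For the $\phi$ derivative~\eqref{eq:part_der_phi_purified-evolved-QBM}, I would differentiate $e^{-iH(\phi)}\sqrt{\rho(\theta)}\,e^{iH(\phi)}$ using the product rule together with Duhamel's formula for $\partial_{\phi_i}e^{-iH(\phi)}$. This is precisely the computation that produces the channel $\Psi_\phi$ in~\eqref{eq:Psi}; indeed, the same manipulation already underlies the gradient formula~\eqref{eq:grad_wrt_phi} in Theorem~\ref{thm:gradient-eQBM}. Concretely, $\partial_{\phi_i}\sqrt{\omega(\theta,\phi)} = i[\sqrt{\omega(\theta,\phi)},\Psi_\phi(H_i)]$, which is the operator analogue of~\eqref{eq:grad_wrt_phi} applied to $\sqrt{\omega}$ rather than $\omega$; one checks that the Duhamel integral assembles into exactly the $\Psi_\phi$ channel, and the commutator structure survives because the conjugating unitary and the square root share the same $\phi$ dependence. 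Tensoring with $I$ and applying to $|\Gamma\rangle$ then yields~\eqref{eq:part_der_phi_purified-evolved-QBM} directly.

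For the $\theta$ derivative~\eqref{eq:part_der_theta_purified-evolved-QBM}, I would hold $\phi$ fixed, so that $\partial_{\theta_j}\sqrt{\omega(\theta,\phi)} = e^{-iH(\phi)}\big(\partial_{\theta_j}\sqrt{\rho(\theta)}\big)e^{iH(\phi)}$, and the task reduces to differentiating $\sqrt{\rho(\theta)} = e^{-G(\theta)/2}/\sqrt{Z(\theta)}$. Here the partition-function factor contributes the term $\tfrac{1}{2}\langle G_j\rangle_{\rho(\theta)}$ via $\partial_{\theta_j}\ln\sqrt{Z(\theta)} = -\tfrac{1}{2}\langle G_j\rangle_{\rho(\theta)}$, while differentiating $e^{-G(\theta)/2}$ through Duhamel's formula produces the channel $\Phi_{\theta/2}$ (the high-peak-tent construction of~\eqref{eq:Phi} evaluated at half the parameter, reflecting the half exponent in the square root). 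Collecting terms gives the symmetrized form $-\tfrac{1}{4}\{\Phi_{\theta/2}(G_j),\sqrt{\rho(\theta)}\}$ after conjugation, matching~\eqref{eq:part_der_theta_purified-evolved-QBM}.

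The main obstacle I anticipate is the careful bookkeeping of the half exponent in the square root, namely verifying that differentiating $e^{-G(\theta)/2}$ via Duhamel yields the channel $\Phi_{\theta/2}$ with its particular high-peak-tent density rather than $\Phi_\theta$, and that the resulting anticommutator acquires the correct prefactor $\tfrac14$ once the $\tfrac12$ from the exponent and the symmetrization are both accounted for. Establishing the $\Phi_{\theta/2}$ identity rigorously requires reproducing the spectral argument behind~\eqref{eq:Phi} at the scaled parameter; this parallels the derivation of~\eqref{eq:grad_wrt_theta} but with $\sqrt{\rho(\theta)}$ in place of $\rho(\theta)$, so I expect the nontrivial step to be confirming that the same high-peak-tent density governs the half-exponent case. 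I would defer the full Duhamel computations to the appendix, as is done for Theorem~\ref{thm:gradient-eQBM}.
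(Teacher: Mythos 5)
Your proposal is correct and follows essentially the same route as the paper: factor $\sqrt{\omega(\theta,\phi)}=e^{-iH(\phi)}\sqrt{\rho(\theta)}\,e^{iH(\phi)}$, use Duhamel on the unitaries for the $\phi$ derivative, and differentiate $\sqrt{\rho(\theta)}=e^{-G(\theta/2)}/\sqrt{Z(\theta)}$ for the $\theta$ derivative. The ``half-exponent'' step you flag as nontrivial is handled in the paper simply by noting that linearity of $G(\theta)$ in $\theta$ gives $e^{-G(\theta)/2}=e^{-G(\theta/2)}$, so the known derivative formula for $e^{-G(\cdot)}$ applies verbatim at the rescaled parameter $\theta/2$, producing $\Phi_{\theta/2}$ with the same high-peak-tent density and the prefactor $\tfrac14=\tfrac12\cdot\tfrac12$ exactly as you anticipate.
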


\begin{proof}
See Appendix~\ref{proof:WY-partial-derivatives}.
\end{proof}

\medskip 

The elements of the Wigner--Yanase information matrix for the parameter vector $\theta$ are as follows:

\begin{theorem}
\label{thm:WY-theta}
For the parameterized family in~\eqref{eq:ansatz}, the Wigner--Yanase information matrix elements, with respect to the $\theta$ parameter vector, are as follows:
\begin{align}
    \begin{split}
        I_{ij}^{\operatorname{WY}}(\theta) &= \frac{1}{2}\operatorname{Tr}\!\left[  \Phi_{\frac{\theta}{2}}(G_{i})
        \sqrt{\rho(\theta)}\Phi_{\frac{\theta}{2}}(G_{j})\sqrt{\rho(\theta)}\right] \\
        & \hspace{-1.37cm} +\frac{1}{4}\left\langle \left\{  \Phi_{\frac{\theta}{2}}(G_{i}),
        \Phi_{\frac{\theta}{2}}(G_{j})\right\}  \right\rangle _{\rho(\theta)} 
        - \left\langle G_{i}\right\rangle _{\rho(\theta)}\left\langle G_{j}\right\rangle _{\rho(\theta)}.
    \end{split}\label{eq:WY-theta}
\end{align}
\end{theorem}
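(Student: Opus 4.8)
The plan is to reduce this Wigner--Yanase computation to a Fisher--Bures computation on a pure family, which is already available in closed form. By Proposition~\ref{prop:FB-WY-canonical-purifications}, the Wigner--Yanase information matrix of the family $(\omega(\theta,\phi))$ with respect to $\theta$ equals the Fisher--Bures information matrix of the canonically purified family $(|\psi(\theta,\phi)\rangle)$ from~\eqref{eq:purified-evolved-QBM}. Since this purified family is pure, I would then invoke the closed-form pure-state expression~\eqref{eq:formula-FB-pure}, giving
\begin{equation}
I_{ij}^{\operatorname{WY}}(\theta) = 4\operatorname{Re}\!\left[ \langle\partial_{\theta_i}\psi|\partial_{\theta_j}\psi\rangle - \langle\partial_{\theta_i}\psi|\psi\rangle\langle\psi|\partial_{\theta_j}\psi\rangle \right],
\end{equation}
where the vectors $|\partial_{\theta_j}\psi\rangle$ are supplied directly by Theorem~\ref{thm:WY-partial-derivatives}.

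The second step is to substitute the explicit form of $|\partial_{\theta_j}\psi\rangle$ from~\eqref{eq:part_der_theta_purified-evolved-QBM}. Writing $|\partial_{\theta_j}\psi\rangle = \tfrac{1}{2}\langle G_j\rangle_{\rho(\theta)}|\psi\rangle - |v_j\rangle$, with $|v_j\rangle \coloneqq \tfrac{1}{4}\!\left(e^{-iH(\phi)}\{\Phi_{\theta/2}(G_j),\sqrt{\rho(\theta)}\}e^{iH(\phi)}\otimes I\right)|\Gamma\rangle$, I expect the scalar parts proportional to $|\psi\rangle$ (which are real, since $\langle G_j\rangle_{\rho(\theta)}\in\mathbb{R}$ and $\langle\psi|\psi\rangle = 1$) to cancel almost entirely against the subtracted term, leaving
\begin{equation}
I_{ij}^{\operatorname{WY}}(\theta) = 4\operatorname{Re}\!\left[ \langle v_i|v_j\rangle - \langle v_i|\psi\rangle\langle\psi|v_j\rangle \right],
\end{equation}
so that only two overlaps remain to be evaluated.

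The third step is to evaluate these overlaps using the replacement identity $\langle\Gamma|(A\otimes I)|\Gamma\rangle = \operatorname{Tr}[A]$ for the maximally entangled vector, together with cyclicity of the trace. The factors $e^{\pm iH(\phi)}$ cancel inside the trace, yielding $\langle v_i|v_j\rangle = \tfrac{1}{16}\operatorname{Tr}[\{\Phi_{\theta/2}(G_i),\sqrt{\rho(\theta)}\}\{\Phi_{\theta/2}(G_j),\sqrt{\rho(\theta)}\}]$ and $\langle v_i|\psi\rangle = \tfrac{1}{4}\operatorname{Tr}[\{\Phi_{\theta/2}(G_i),\sqrt{\rho(\theta)}\}\sqrt{\rho(\theta)}]$. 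Expanding the anticommutators and repeatedly applying cyclicity (using $\sqrt{\rho(\theta)}\sqrt{\rho(\theta)}=\rho(\theta)$) collapses $\langle v_i|v_j\rangle$ into the symmetric combination $\tfrac{1}{16}\!\left(2\operatorname{Tr}[\Phi_{\theta/2}(G_i)\sqrt{\rho(\theta)}\Phi_{\theta/2}(G_j)\sqrt{\rho(\theta)}] + \langle\{\Phi_{\theta/2}(G_i),\Phi_{\theta/2}(G_j)\}\rangle_{\rho(\theta)}\right)$, which produces the first two terms of~\eqref{eq:WY-theta}. For the overlap $\langle v_i|\psi\rangle$, I would invoke the auxiliary fact $\langle\Phi_{\theta/2}(G_i)\rangle_{\rho(\theta)} = \langle G_i\rangle_{\rho(\theta)}$, which holds because $\rho(\theta)$ commutes with $e^{iG(\theta)t/2}$ and so each unitary in the averaging definition of $\Phi_{\theta/2}$ drops out under the trace; this gives $\langle v_i|\psi\rangle = \tfrac{1}{2}\langle G_i\rangle_{\rho(\theta)}$ and hence the final term $-\langle G_i\rangle_{\rho(\theta)}\langle G_j\rangle_{\rho(\theta)}$.

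I expect the main obstacle to be the bookkeeping in the third step: correctly tracking which operators are Hermitian so that the relevant traces are manifestly real (allowing $\operatorname{Re}[\cdot]$ to be dropped), and verifying the cancellations in the anticommutator product so that the cross term $\operatorname{Tr}[\Phi_{\theta/2}(G_i)\sqrt{\rho(\theta)}\Phi_{\theta/2}(G_j)\sqrt{\rho(\theta)}]$ emerges with the correct coefficient once the two equal cyclic contributions are combined. The cancellation of the $|\psi\rangle$-parts in the second step and the small lemma $\langle\Phi_{\theta/2}(G_i)\rangle_{\rho(\theta)} = \langle G_i\rangle_{\rho(\theta)}$ are the two ingredients that make the computation close cleanly onto~\eqref{eq:WY-theta}.
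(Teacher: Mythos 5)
Your proposal is correct and follows essentially the same route as the paper's proof: reduce to the Fisher--Bures information of the canonical purification via Proposition~\ref{prop:FB-WY-canonical-purifications}, apply the pure-state formula~\eqref{eq:formula-FB-pure} with the derivative vectors from Theorem~\ref{thm:WY-partial-derivatives}, and evaluate the resulting overlaps with the $\langle\Gamma|(A\otimes I)|\Gamma\rangle=\operatorname{Tr}[A]$ identity. Your only deviation is cosmetic bookkeeping---centering $|\partial_{\theta_j}\psi\rangle$ as $\tfrac{1}{2}\langle G_j\rangle_{\rho(\theta)}|\psi\rangle-|v_j\rangle$ so that the scalar parts cancel exactly (not just ``almost entirely'') in the combination $\langle\partial_i\psi|\partial_j\psi\rangle-\langle\partial_i\psi|\psi\rangle\langle\psi|\partial_j\psi\rangle$, whereas the paper expands the raw product and separately verifies $\langle\psi|\partial_j\psi\rangle=0$; both yield the same terms with the same coefficients.
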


\begin{proof}
See Appendix~\ref{proof:WY-theta}.
\end{proof}

\medskip

Note that we can rewrite the Wigner--Yanase
information matrix with respect to the parameter vector $\theta$ as
\begin{align}
\begin{split}
I_{ij}^{\operatorname{WY}}(\theta) & = - \left\langle G_{i}\right\rangle
_{\rho(\theta)}\left\langle G_{j}\right\rangle _{\rho(\theta)} \\
 & \hspace{-0.9cm} + \frac{1}{4}\operatorname{Tr}\!\left[  \left\{ \Phi_{\frac{\theta}{2}}
(G_{i}),\sqrt{\rho(\theta)}\right\}  \left\{  \Phi_{\frac{\theta}{2}}
(G_{j}),\sqrt{\rho(\theta)}\right\}  \right].
\end{split}
\end{align}

The last two terms of~\eqref{eq:WY-theta} can be easily estimated via a combination of classical sampling, Hamiltonian evolution, and the Hadamard test. The quantum circuit for estimating the second term of~\eqref{eq:WY-theta} is depicted in Figure~\ref{fig:WY-theta}, and the complete algorithm used to estimate this term is given in Appendix~\ref{app:WY-theta}. We defer the estimation of the first term of~\eqref{eq:WY-theta}  to Section~\ref{sec:1st-term-WY-theta} below.

The elements of the Wigner--Yanase information matrix for the parameter vector $\phi$ are as follows:

\begin{theorem}
\label{thm:WY-phi}
For the parameterized family in~\eqref{eq:ansatz}, the Wigner--Yanase information matrix elements, with respect to the $\phi$ parameter vector, are as follows:
\begin{align}\label{eq:WY-phi}
\begin{split}
I_{ij}^{\operatorname{WY}}(\phi)& = - 8 \Tr\!\left[ \Psi_{\phi}(H_{j}) \sqrt{\rho(\theta)} \Psi_{\phi}(H_{i}) \sqrt{\rho(\theta)} \right] \\
& \hspace{0.5cm} + 4 \left\langle 
 \left\{ \Psi_{\phi}(H_{i}) , \Psi_{\phi}(H_{j}) \right\}   \right\rangle_{\rho(\theta)}.
\end{split}
\end{align}
\end{theorem}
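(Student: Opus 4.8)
The plan is to leverage the canonical-purification machinery already assembled in this section, rather than to compute the eigenvalue sum in~\eqref{eq:WY-formula-2} directly. By Proposition~\ref{prop:FB-WY-canonical-purifications}, the Wigner--Yanase information matrix of the family $(\omega(\theta,\phi))$ equals the Fisher--Bures information matrix of the purified family $(|\psi(\theta,\phi)\rangle)$ from~\eqref{eq:purified-evolved-QBM}. Since the latter is a pure-state family, I would apply the pure-state formula~\eqref{eq:formula-FB-pure} to obtain
\[
I_{ij}^{\operatorname{WY}}(\phi) = 4\operatorname{Re}\!\left[\langle\partial_{\phi_i}\psi|\partial_{\phi_j}\psi\rangle - \langle\partial_{\phi_i}\psi|\psi\rangle\langle\psi|\partial_{\phi_j}\psi\rangle\right],
\]
after which everything reduces to evaluating two overlaps using the derivative formula~\eqref{eq:part_der_phi_purified-evolved-QBM} of Theorem~\ref{thm:WY-partial-derivatives}, namely $|\partial_{\phi_i}\psi\rangle = i([\sqrt{\omega(\theta,\phi)},\Psi_\phi(H_i)]\otimes I)|\Gamma\rangle$.

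First I would dispatch the cross term. Writing $|\psi\rangle = (\sqrt{\omega}\otimes I)|\Gamma\rangle$ and using $\langle\Gamma|(A\otimes I)|\Gamma\rangle = \Tr[A]$, together with the fact that $\sqrt{\omega}$ and $\Psi_\phi(H_i)$ are both Hermitian (so $[\sqrt{\omega},\Psi_\phi(H_i)]^\dagger = -[\sqrt{\omega},\Psi_\phi(H_i)]$), a short calculation gives $\langle\partial_{\phi_i}\psi|\psi\rangle = i\Tr\!\big[[\sqrt{\omega},\Psi_\phi(H_i)]\sqrt{\omega}\big] = 0$ by cyclicity of the trace, so the second term vanishes. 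For the first term, the same identities reduce it to
\[
\langle\partial_{\phi_i}\psi|\partial_{\phi_j}\psi\rangle = -\Tr\!\big[[\sqrt{\omega},\Psi_\phi(H_i)]\,[\sqrt{\omega},\Psi_\phi(H_j)]\big],
\]
and expanding the two commutators and recombining via cyclicity yields $-2\Tr[\sqrt{\omega}\,\Psi_\phi(H_i)\sqrt{\omega}\,\Psi_\phi(H_j)] + \Tr[\omega\{\Psi_\phi(H_i),\Psi_\phi(H_j)\}]$. This expression is manifestly real, so the $\operatorname{Re}$ is inert, and the prefactor $4$ produces exactly the coefficients $-8$ and $+4$ in~\eqref{eq:WY-phi}.

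The final step, and the one I expect to require the most care, is converting the answer from $\omega(\theta,\phi)$ and $\Psi_\phi$ into $\rho(\theta)$ and $\Psi_\phi^\dagger$ as written in the statement. I would conjugate each trace by $e^{iH(\phi)}$, which is legitimate by unitary invariance of the trace. Since $\sqrt{\omega(\theta,\phi)} = e^{-iH(\phi)}\sqrt{\rho(\theta)}\,e^{iH(\phi)}$, this conjugation sends $\sqrt{\omega}\mapsto\sqrt{\rho}$ and $\omega\mapsto\rho$. The crucial identity is that conjugating the channel $\Psi_\phi$ by $e^{iH(\phi)}$ returns its Hilbert--Schmidt adjoint: comparing~\eqref{eq:Psi} and~\eqref{eq:Psi-adjoint} and substituting $u=1-t$ in the time integral shows $e^{iH(\phi)}\Psi_\phi(H_i)e^{-iH(\phi)} = \Psi_\phi^\dagger(H_i)$. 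Applying this to both traces, and then using cyclicity together with the $i\leftrightarrow j$ symmetry of the first term to match the operator ordering displayed in~\eqref{eq:WY-phi}, completes the proof. The genuine obstacle is this conjugation-and-reindexing bookkeeping, where one must track the Hermiticity relations and the orientation of the integral defining $\Psi_\phi$; the manipulations on the maximally entangled vector and the commutator expansion are otherwise routine.
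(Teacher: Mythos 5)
Your proposal is correct and follows essentially the same route as the paper's proof: invoking Proposition~\ref{prop:FB-WY-canonical-purifications} with the pure-state formula~\eqref{eq:formula-FB-pure}, using~\eqref{eq:part_der_phi_purified-evolved-QBM} to show the cross term vanishes and to reduce the overlap to $-\Tr\!\big[[\sqrt{\omega},\Psi_\phi(H_i)][\sqrt{\omega},\Psi_\phi(H_j)]\big]$, and then converting to $\rho(\theta)$ and $\Psi_\phi^\dagger$ by conjugation with $e^{iH(\phi)}$. The conjugation identity $e^{iH(\phi)}\Psi_\phi(\cdot)e^{-iH(\phi)}=\Psi_\phi^\dagger(\cdot)$ that you flag as the delicate step is indeed correct and is exactly what the paper uses implicitly.
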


\begin{proof}
See Appendix~\ref{proof:WY-phi}.
\end{proof}

\medskip

The second term in~\eqref{eq:WY-phi} can be estimated on a quantum computer via a combination of classical
sampling, Hamiltonian evolution, and the Hadamard test. The quantum circuit that plays a role in its estimation is depicted in Figure~\ref{fig:WY-phi-2}. Regarding the first term, it resembles that of~\eqref{eq:WY-theta}, allowing the same procedure described in~\ref{sec:1st-term-WY-theta} to be applied. A detailed procedure for this case is provided in Appendix~\ref{app:1st-term-WY-phi}.

\begin{remark}
    The Wigner–Yanase information matrix e\-le\-men\-ts, given in~\eqref{eq:WY-phi}, can be equivalently expressed as
    \begin{align}
        I^{\operatorname{WY}}_{ij}(\phi) & = - \, 4 \Tr\!\bigg[  \left[ \Psi_{\phi}(H_i), \sqrt{\rho(\theta)}\right] \left[ \Psi_{\phi}\!(H_j), \sqrt{\rho(\theta)}\right] \bigg].
    \end{align}
    Notably, this alternative form is consistent with the common expression for the Wigner–Yanase metric found in the literature~\cite{Gibilisco2003,Hansen-Metric_adjusted_skew_information}.
\end{remark}

The elements of the Wigner--Yanase information matrix for the cross terms involving $\theta$ and $\phi$ are as follows:

\begin{theorem}
\label{thm:WY-theta-phi}
For the parameterized family in~\eqref{eq:ansatz}, the Wigner--Yanase information matrix elements, with respect to the $\theta$ and $\phi$ parameter vectors, are as follows:
\begin{equation}\label{eq:WY-theta-phi}
I_{ij}^{\operatorname{WY}}(\theta, \phi)=i\left\langle \left[   \Phi_{\frac{\theta}{2}}(G_{j}), \Psi_{\phi
}(H_{i})\right]  \right\rangle _{\rho(\theta)}.
\end{equation}
\end{theorem}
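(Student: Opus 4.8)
The plan is to obtain the cross-term Wigner--Yanase matrix element by passing to the canonical purification, exactly as one would for Theorems~\ref{thm:WY-theta} and~\ref{thm:WY-phi}. By Proposition~\ref{prop:FB-WY-canonical-purifications}, the Wigner--Yanase matrix of $(\omega(\theta,\phi))$ coincides with the Fisher--Bures matrix of the pure family $(|\psi(\theta,\phi)\rangle)$ in~\eqref{eq:purified-evolved-QBM}. Hence I would start from the pure-state formula~\eqref{eq:formula-FB-pure}, applied to the mixed $\theta$--$\phi$ block,
\begin{equation}
I_{ij}^{\operatorname{WY}}(\theta,\phi) = 4\operatorname{Re}\!\left[\langle\partial_{\theta_i}\psi|\partial_{\phi_j}\psi\rangle - \langle\partial_{\theta_i}\psi|\psi\rangle\langle\psi|\partial_{\phi_j}\psi\rangle\right],
\end{equation}
and substitute the explicit derivatives from Theorem~\ref{thm:WY-partial-derivatives}.

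First I would dispose of the second term. Writing $|\psi(\theta,\phi)\rangle = (\sqrt{\omega(\theta,\phi)}\otimes I)|\Gamma\rangle$ and using~\eqref{eq:part_der_phi_purified-evolved-QBM} together with the identity $\langle\Gamma|(M\otimes I)|\Gamma\rangle = \Tr[M]$, the overlap becomes $\langle\psi|\partial_{\phi_j}\psi\rangle = i\Tr[\sqrt{\omega(\theta,\phi)}\,[\sqrt{\omega(\theta,\phi)},\Psi_\phi(H_j)]]$, which vanishes by cyclicity of the trace. Thus only the inner-product term survives, and $I_{ij}^{\operatorname{WY}}(\theta,\phi) = 4\operatorname{Re}[\langle\partial_{\theta_i}\psi|\partial_{\phi_j}\psi\rangle]$.

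Next I would evaluate $\langle\partial_{\theta_i}\psi|\partial_{\phi_j}\psi\rangle$. The term in~\eqref{eq:part_der_theta_purified-evolved-QBM} proportional to $|\psi\rangle$ again contracts against $\langle\psi|\partial_{\phi_j}\psi\rangle = 0$ and drops out, leaving $-\tfrac{i}{4}\Tr[A_i B_j]$ where $A_i \coloneqq e^{-iH(\phi)}\{\Phi_{\frac{\theta}{2}}(G_i),\sqrt{\rho(\theta)}\}e^{iH(\phi)}$ and $B_j \coloneqq [\sqrt{\omega(\theta,\phi)},\Psi_\phi(H_j)]$, using $A_i^\dag = A_i$ and the $|\Gamma\rangle$ identity once more. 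Conjugating by $e^{iH(\phi)}$ sends $\sqrt{\omega(\theta,\phi)}\mapsto\sqrt{\rho(\theta)}$ and $\Psi_\phi(H_j)\mapsto\Psi_\phi^\dag(H_j)$ (a short change of variables on~\eqref{eq:Psi}--\eqref{eq:Psi-adjoint}), so the trace reduces to $\Tr[\{\Phi_{\frac{\theta}{2}}(G_i),\sqrt{\rho(\theta)}\}\,[\sqrt{\rho(\theta)},\Psi_\phi^\dag(H_j)]]$. Expanding the anticommutator and commutator and repeatedly applying $\sqrt{\rho(\theta)}^2 = \rho(\theta)$ together with cyclicity collapses this to $-\langle[\Phi_{\frac{\theta}{2}}(G_i),\Psi_\phi^\dag(H_j)]\rangle_{\rho(\theta)}$.

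Finally, since $[\Phi_{\frac{\theta}{2}}(G_i),\Psi_\phi^\dag(H_j)]$ is a commutator of Hermitian operators, it is anti-Hermitian and its expectation in $\rho(\theta)$ is purely imaginary; hence $i\langle[\cdots]\rangle_{\rho(\theta)}$ is real and the outer $\operatorname{Re}[\cdot]$ acts trivially, yielding $I_{ij}^{\operatorname{WY}}(\theta,\phi) = i\langle[\Phi_{\frac{\theta}{2}}(G_i),\Psi_\phi^\dag(H_j)]\rangle_{\rho(\theta)}$, which is the claimed expression~\eqref{eq:WY-theta-phi} (with the free indices $i,j$ read off from the definition of the cross-block $I_{ij}(\theta,\phi)$). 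The main obstacle I anticipate is the trace bookkeeping in the reduction $\Tr[\{\Phi_{\frac{\theta}{2}}(G_i),\sqrt{\rho(\theta)}\}\,[\sqrt{\rho(\theta)},\Psi_\phi^\dag(H_j)]] = -\langle[\Phi_{\frac{\theta}{2}}(G_i),\Psi_\phi^\dag(H_j)]\rangle_{\rho(\theta)}$, where the crossed terms must cancel cleanly. A direct spectral-sum evaluation using~\eqref{eq:WY-formula-2},~\eqref{eq:Fisher-info-help-alt}, and~\eqref{eq:Fisher-info-help-last} is also possible, but it is messier, because the coefficient $4/(\sqrt{\lambda_k}+\sqrt{\lambda_\ell})^2$ does not combine as transparently with the factors $(\lambda_k+\lambda_\ell)$ and $(\lambda_\ell - \lambda_k)$ arising from the two derivatives.
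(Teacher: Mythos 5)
Your proposal is correct and follows essentially the same route as the paper's proof: reduction to the canonical purification via Proposition~\ref{prop:FB-WY-canonical-purifications}, the pure-state formula~\eqref{eq:formula-FB-pure}, substitution of the derivatives from Theorem~\ref{thm:WY-partial-derivatives}, and the observation that $\left\langle \psi|\partial\psi\right\rangle =0$ kills all but the single cross term, which collapses to $\frac{i}{4}\left\langle \left[\Phi_{\frac{\theta}{2}}(G),\Psi_{\phi}^{\dag}(H)\right]\right\rangle_{\rho(\theta)}$ after conjugating by $e^{iH(\phi)}$. The only difference is bookkeeping: you pair $\partial_{\theta_i}$ with $\partial_{\phi_j}$ while the paper's proof pairs $\partial_{\phi_i}$ with $\partial_{\theta_j}$ (hence the apparent $i\leftrightarrow j$ swap relative to~\eqref{eq:WY-theta-phi}); since the commutator expectation is purely imaginary, the real part makes the two orderings agree, so this is purely a labeling convention, which you correctly flag.
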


\begin{proof}
See Appendix~\ref{proof:WY-theta-phi}.
\end{proof}

\medskip 

The quantity in~\eqref{eq:WY-theta-phi} can be estimated by means of the quantum circuit depicted in Figure~\ref{fig:WY-theta-phi}.

\begin{figure*}
    \centering
        \begin{subfigure}{\textwidth}
        \centering
        \scalebox{1.5}{ % Adjust scale to fit the column width
        \Qcircuit @C=1.0em @R=1.0em {
            \lstick{} &  \gate{e^{iG\left(\frac{\theta}{2}\right)t_1}} & \meter & \rstick{\hspace{-1.2em}G_i} \\
            \lstick{} & \gate{e^{-iG^T\!\left(\frac{\theta}{2}\right)t_2}} &   \meter & \rstick{\hspace{-1.2em}G^T_{j}}
            \inputgroupv{1}{2}{0.9em}{1.3em}{\hspace{-1.2em}\ket{\psi(\theta)}}
        }
        }
        \vspace{15pt}
        \caption{Quantum circuit that realizes an unbiased estimate of $\operatorname{Tr}\!\left[  \Phi_{\frac{\theta}{2}}(G_{i})\sqrt{\rho(\theta)}\Phi_{\frac{\theta}{2}}(G_{j})\sqrt{\rho(\theta)}\right]$. For each run of the circuit, the times $t_1$ and $t_2$ are sampled independently at random from the probability density $p(t)$ in~\eqref{eq:high-peak-tent-density}.}
        \label{fig:WY-theta-2}
    \end{subfigure}
    \vspace{10pt}
    
    \begin{subfigure}{\textwidth}
        \centering
        \scalebox{1.5}{ % Adjust scale to fit the column width
        \Qcircuit @C=1.0em @R=1.0em {
            \lstick{\ket{0}} & \gate{\operatorname{Had}} & \ctrl{1} &  \gate{\operatorname{Had}} & \meter & \rstick{\hspace{-1.2em}Z} \\
            \lstick{\rho(\theta)} & \qw & \gate{G_j} & \gate{e^{iG(\theta/2)(t_2-t_1)}} &   \meter & \rstick{\hspace{-1.2em}G_{i}}
        }
        }
        \vspace{15pt}
        \caption{Quantum circuit that realizes an unbiased estimate of $\frac{1}{2}\left\langle \left\{  \Phi_{\frac{\theta}{2}}(G_{i}),\Phi_{\frac{\theta}{2}}(G_{j})\right\}  \right\rangle _{\rho(\theta)}$. For each run of the circuit, the times $t_1$ and $t_2$ are sampled independently at random from the probability density $p(t)$ in~\eqref{eq:high-peak-tent-density}. For details of the algorithm, see Appendix~\ref{app:WY-theta}.}
        \label{fig:WY-theta}
    \end{subfigure}
    \vspace{10pt}

    \begin{subfigure}{\textwidth}
        \centering
        \scalebox{1.5}{ % Adjust scale to fit the column width
        \Qcircuit @C=1.0em @R=1.0em {
            \lstick{} &  \gate{e^{-iH(\phi)t_1}} & \meter & \rstick{\hspace{-1.2em}H_j} \\
            \lstick{} & \gate{e^{iH^T\!(\phi)t_2}} &   \meter & \rstick{\hspace{-1.2em}H^T_{i}}
            \inputgroupv{1}{2}{0.9em}{1.3em}{\hspace{-1.2em}\ket{\psi(\theta)}}
        }
        }
        \vspace{15pt}
        \caption{Quantum circuit that realizes an unbiased estimate of $\Tr\!\left[ \Psi_{\phi}(H_{j}) \sqrt{\rho(\theta)} \Psi_{\phi}(H_{i}) \sqrt{\rho(\theta)} \right]$. For each run of the circuit, the times $t_1$ and $t_2$ are sampled uniformly at random from $[0,1]$.}
        \label{fig:WY-phi-1}
    \end{subfigure}
    \vspace{10pt}
    
    \begin{subfigure}{\textwidth}
        \centering
        \scalebox{1.5}{ % Adjust scale
        \Qcircuit @C=1.0em @R=1.0em {
            \lstick{\ket{0}} & \gate{\operatorname{Had}} & \ctrl{1} &  \gate{\operatorname{Had}} & \meter & \rstick{\hspace{-1.2em}Z} \\
            \lstick{\rho(\theta)} & \gate{e^{-iH(\phi)t_1}} & \gate{H_j} & \gate{e^{iH(\phi)(t_1-t_2)}} &   \meter & \rstick{\hspace{-1.2em}H_{i}}
        }
        }
        \vspace{15pt}
        \caption{Quantum circuit that realizes an unbiased estimate of $\frac{1}{2}\left\langle \left\{  \Psi_{\phi}(H_{i}),\Psi_{\phi}(H_{j}) \right\}  \right\rangle _{\rho(\theta)}$. For each run of the circuit, the times $t_1$ and $t_2$ are sampled uniformly at random from $[0,1]$. For details of the algorithm, see Appendix~\ref{app:WY-phi-2}.}
        \label{fig:WY-phi-2}
    \end{subfigure}
    \vspace{10pt}
    
    \begin{subfigure}{\textwidth}
        \centering
        \scalebox{1.5}{ % Adjust scale
        \Qcircuit @C=1.0em @R=1.0em {
            \lstick{\ket{0}} & \gate{\operatorname{Had}} & \gate{S} & \ctrl{1} &  \qw & \gate{\operatorname{Had}} & \meter & \rstick{\hspace{-1.2em}Z} \\
            \lstick{\rho(\theta)} & \gate{e^{-iH(\phi)t_2}} & \qw & \gate{H_i} & \gate{e^{iH(\phi)t_2}} & \gate{e^{iG(\theta/2)t_1}} &  \meter & \rstick{\hspace{-1.2em}G_{j}}
        }
        }
        \vspace{15pt}
        \caption{Quantum circuit that realizes an unbiased estimate of $\frac{i}{2}\left\langle \left[   \Phi_{\frac{\theta}{2}}(G_{j}), \Psi_{\phi}(H_{i})\right]  \right\rangle _{\rho(\theta)}$. For each run of the circuit, the time $t_1$ is sampled independently at random from the probability density $p(t)$ in~\eqref{eq:high-peak-tent-density}, and the time $t_2$ is sampled uniformly at random from $[0,1]$. For details of the algorithm, see Appendix~\ref{app:WY-theta-phi}.}
        \label{fig:WY-theta-phi}
    \end{subfigure}
    \caption{Quantum circuits involved in the estimation of the Wigner--Yanase information matrix elements. (a)-(b) Quantum circuits involved in the estimation of $I_{ij}^{\operatorname{WY}}(\theta)$; (c) quantum circuit involved in the estimation of $I_{ij}^{\operatorname{WY}}(\phi)$; (d) quantum circuit involved in the estimation of $I_{ij}^{\operatorname{WY}}(\theta,\phi)$.}
    \label{fig:WY-circuits}
\end{figure*}

\subsubsection{Evaluating the first term of the $\theta$-Wigner--Yanase information matrix}

\label{sec:1st-term-WY-theta}

In order to estimate the first term of~\eqref{eq:WY-theta}, we assume that one  has access to the canonical purification
$|\psi(\theta)\rangle$ of a quantum Boltzmann machine, defined as
\begin{equation}
    |\psi(\theta)\rangle \coloneqq \left(  \sqrt{\rho(\theta)}\otimes
I\right)  |\Gamma\rangle.
\label{eq:canon-pure-QBM}
\end{equation}
This is also known as a thermofield double state \cite[Eq.~(1.4)]{Cottrell2019}. 
Since many quantum algorithms for thermal-state preparation actually prepare this canonical purification~\cite{Holmes2022quantumalgorithms,chen2023q_Gibbs_sampl,rouze2024efficientthermalization} (see Appendix~\ref{app:canon-pure-thermofield-double} for further details of this point), this assumption is just as reasonable as our assumption of having sample access to the thermal state $\rho(\theta)$. 
Under this assumption, the following identity implies that one can estimate the first term of~\eqref{eq:WY-theta} efficiently:
\begin{align}
\begin{split}
&\operatorname{Tr}\!\left[  \Phi_{\frac{\theta}{2}}(G_{i}
)\sqrt{\rho(\theta)}\Phi_{\frac{\theta}{2}}(G_{j})\sqrt{\rho(\theta)}\right]  \\
& = \langle\psi(\theta)|\left(  \Phi_{\frac{\theta}{2}}(G_{i})\otimes\left[
\Phi_{\frac{\theta}{2}}(G_{j})\right]  ^{T}\right)  |\psi(\theta)\rangle.
\label{eq:identity-WY-theta-canon-pur}
\end{split}
\end{align}
The identity in~\eqref{eq:identity-WY-theta-canon-pur} follows because
\begin{align}
&  \langle\psi(\theta)|\left(  \Phi_{\frac{\theta}{2}}(G_{i})\otimes\left[
\Phi_{\frac{\theta}{2}}(G_{j})\right]  ^{T}\right)  |\psi(\theta)\rangle\nonumber\\
&  =\langle\Gamma|\left(  \sqrt{\rho(\theta)}\Phi_{\frac{\theta}{2}}
(G_{i})\sqrt{\rho(\theta)}\otimes\left[  \Phi_{\frac{\theta}{2}}
(G_{j})\right]  ^{T}\right)  |\Gamma\rangle\\
&  =\langle\Gamma|\left(  \sqrt{\rho(\theta)}\Phi_{\frac{\theta}{2}}
(G_{i})\sqrt{\rho(\theta)}\Phi_{\frac{\theta}{2}}(G_{j})\otimes I\right)
|\Gamma\rangle\\
&  =\operatorname{Tr}\!\left[  \sqrt{\rho(\theta)}\Phi_{\frac{\theta}{2}}
(G_{i})\sqrt{\rho(\theta)}\Phi_{\frac{\theta}{2}}(G_{j})\right]  .
\end{align}
The second equality follows from the transpose trick \cite[Exercise~3.7.12]{Wbook17}.
Thus, in order to estimate the right-hand side of~\eqref{eq:identity-WY-theta-canon-pur}, we need to be able to measure the expectation of the operator $\left[  \Phi_{\frac
{\theta}{2}}(G_{j})\right]  ^{T}$. Consider that
\begin{align}
& \left[  \Phi_{\frac{\theta}{2}}(G_{j})\right]  ^{T} \notag \\ & =\left[  \int
_{\mathbb{R}}dt\ p(t)\ e^{-iG(\theta/2)t}G_{j}e^{iG(\theta/2)t}\right]
^{T}\\
& =\int_{\mathbb{R}}dt\ p(t)\ \left[  e^{-iG(\theta/2)t}G_{j}e^{iG(\theta
/2)t}\right]  ^{T}\\
& =\int_{\mathbb{R}}dt\ p(t) \left(  e^{iG(\theta/2)t}\right)^{T}\!G_{j}
^{T}\!\left(  e^{-iG(\theta/2)t}\right)  ^{T}\\
& =\int_{\mathbb{R}}dt\ p(t)\ e^{iG^{T}(\theta/2)t}G_{j}^{T}e^{-iG^{T}
(\theta/2)t} \, .
\end{align}
If each $G_{j}$ is a Pauli string, this is easy to implement by noting that
$I^{T}=I$, $\sigma_{X}^{T}=\sigma_{X}$, $\sigma_{Y}^{T}=-\sigma_{Y}$, and
$\sigma_{Z}^{T}=\sigma_{Z}$.
Then, adopting the shorthand $\psi(\theta)\equiv|\psi(\theta)\rangle\!
\langle\psi(\theta)|$ and applying the definition of $\Phi_{\theta}$ in~\eqref{eq:Phi}
and cyclicity and linearity of trace, consider that
\begin{align}
& \langle\psi(\theta)|\Phi_{\frac{\theta}{2}}(G_{i})\otimes\left[  \Phi
_{\frac{\theta}{2}}(G_{i})\right]  ^{T}|\psi(\theta)\rangle \notag \\
& =\operatorname{Tr}\!\left[  \left(  \Phi_{\frac{\theta}{2}}(G_{i}
)\otimes\left[  \Phi_{\frac{\theta}{2}}(G_{i})\right]  ^{T}\right)
\psi(\theta)\right]  \\
& =\mathbb{E}_{\tau_{1},\tau_{2}}\!\left[  \operatorname{Tr}\!\left[  \left(
G_{i}\otimes G_{j}^{T}\right)  \mathcal{U}_{\tau_{1},\tau_{2}}(\psi
(\theta))\right]  \right]  ,
\end{align}
where $\tau_{1}$ and $\tau_{2}$ are independent random variables each chosen
according to the high-peak tent probability density $p(t)$ in~\eqref{eq:high-peak-tent-density} and
$\mathcal{U}_{\tau_{1},\tau_{2}}$ is the following unitary channel:
\begin{align}
\begin{split}
\mathcal{U}_{\tau_{1},\tau_{2}}(Y) & \coloneqq \left(  e^{iG\left(  \frac{\theta}
{2}\right)  \tau_{1}}\otimes e^{-iG^{T}\left(  \frac{\theta}{2}\right)
\tau_{2}}\right)  Y \\
& \hspace{1.5cm} \times\left(  e^{-iG\left(  \frac{\theta}{2}\right)  \tau_{1}
}\otimes e^{iG^{T}\left(  \frac{\theta}{2}\right)  \tau_{2}}\right)  .
\end{split}
\end{align}

Thus, a quantum algorithm for estimating the first term of~\eqref{eq:WY-theta} consists of
repeating the following steps and averaging: prepare the canonical
purification $\psi(\theta)$ in~\eqref{eq:canon-pure-QBM}, pick $\tau_{1}$ and $\tau_{2}$ independently at
random according to~\eqref{eq:high-peak-tent-density}, apply the Hamiltonian evolution $\mathcal{U}
_{\tau_{1},\tau_{2}}$ to $\psi(\theta)$, and measure the observable
$G_{i}\otimes G_{j}^{T}$. The respective quantum circuit is shown in Figure~\ref{fig:WY-theta-2}.

\subsection{Kubo–Mori information matrix of evolved quantum Boltzmann machines}\label{sec:KM-info}

The main results of this section are Theorems~\ref{thm:KM-theta},~\ref{thm:KM-phi}, and~\ref{thm:KM-theta-phi}, which provide explicit analytical expressions for the Kubo--Mori information matrix elements of the parameterized family in~\eqref{eq:ansatz}. Additionally, we give quantum circuits that play a key role in efficiently estimating the terms in these expressions (see Figure~\ref{fig:KM-circuits}).

\begin{theorem}
\label{thm:KM-theta}
For the parameterized family in~\eqref{eq:ansatz}, the Kubo--Mori information matrix elements, with respect to the $\theta$
parameters, are as follows:
\begin{equation}\label{eq:KM-theta}
I_{ij}^{\operatorname{KM}}(\theta) = \frac{1}{2} \left\langle \left\{ G_i, \Phi_{\theta}(G_{j}) \right\} \right\rangle\!_{\rho(\theta)} - \left\langle G_{i}\right\rangle \!_{\rho(\theta)} \left\langle G_{j}\right\rangle \!_{\rho(\theta)}.
\end{equation}
\end{theorem}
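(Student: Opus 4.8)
The plan is to reduce the problem to the parameterized thermal state $\rho(\theta)$ and then exploit the linearity of $G(\theta)$ in $\theta$. Since the unitary $e^{-iH(\phi)}$ is independent of $\theta$, unitary invariance of the quantum relative entropy (Eq.~\eqref{eq:unitary-inv-D}) gives $D(\omega(\theta,\phi)\Vert\omega(\theta+\varepsilon,\phi))=D(\rho(\theta)\Vert\rho(\theta+\varepsilon))$ for every $\varepsilon$. Because the Kubo--Mori matrix is, by Definition~\ref{def:FB-WY-KM-2nd-derivs}, the Hessian of the quantum relative entropy at $\varepsilon=0$, this immediately reduces $I_{ij}^{\operatorname{KM}}(\theta)$ for $\omega(\theta,\phi)$ to the same quantity for $\rho(\theta)$; equivalently, one can invoke Corollary~\ref{cor:unitary-inv-Fisher-info} directly.

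Next I would compute the thermal-state relative entropy explicitly. Writing $\ln\rho(\theta)=-G(\theta)-\ln Z(\theta)$ and using that $G(\theta+\varepsilon)-G(\theta)=\sum_k\varepsilon_k G_k$ by linearity, one finds
\begin{equation}
D(\rho(\theta)\Vert\rho(\theta+\varepsilon))=\sum_k\varepsilon_k\left\langle G_k\right\rangle_{\rho(\theta)}+\ln Z(\theta+\varepsilon)-\ln Z(\theta).
\end{equation}
The term linear in $\varepsilon$ contributes nothing to the Hessian, so
\begin{equation}
I_{ij}^{\operatorname{KM}}(\theta)=\left.\frac{\partial^2}{\partial\varepsilon_i\partial\varepsilon_j}\ln Z(\theta+\varepsilon)\right|_{\varepsilon=0}=\partial_{\theta_i}\partial_{\theta_j}\ln Z(\theta).
\end{equation}
I would then evaluate this Hessian with Duhamel's formula: since $\partial_{\theta_j}\ln Z(\theta)=-\left\langle G_j\right\rangle_{\rho(\theta)}$, differentiating once more and using $e^{-sG(\theta)}=Z(\theta)^s\rho(\theta)^s$ yields the canonical-correlation (Bogoliubov) expression
\begin{equation}
I_{ij}^{\operatorname{KM}}(\theta)=\int_0^1 ds\,\Tr\!\left[\rho(\theta)^s G_i\,\rho(\theta)^{1-s}G_j\right]-\left\langle G_i\right\rangle_{\rho(\theta)}\left\langle G_j\right\rangle_{\rho(\theta)}.
\end{equation}

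The remaining and most delicate step is to recognize that the integral above equals $\tfrac12\langle\{G_i,\Phi_\theta(G_j)\}\rangle_{\rho(\theta)}$. Working in the eigenbasis of $G(\theta)$ with eigenvalues $E_a$ (so $\lambda_a=e^{-E_a}/Z(\theta)$), both sides diagonalize: the integral produces the weight $(\lambda_a-\lambda_b)/(\ln\lambda_a-\ln\lambda_b)$ on $\langle a|G_i|b\rangle\langle b|G_j|a\rangle$, while the anticommutator produces $\tfrac12(\lambda_a+\lambda_b)\,\hat p(E_a-E_b)$, where $\hat p$ is the Fourier transform of the high-peak-tent density $p$ in~\eqref{eq:high-peak-tent-density}. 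Matching the two reduces to the single scalar identity $\hat p(\omega)=\tanh(\omega/2)/(\omega/2)$, which is exactly the defining spectral property of $p$ established in~\cite{patel2024quantumboltzmannmachine}. This Fourier identity is the crux of the argument; everything else is bookkeeping. As an alternative that avoids the relative-entropy reduction, one can instead substitute the matrix elements in~\eqref{eq:Fisher-info-help-alt} directly into the Kubo--Mori formula~\eqref{eq:KM-formula-2}, separate the diagonal ($k=\ell$) and off-diagonal contributions, and use the same property of $\hat p$ together with $\hat p(0)=1$ to collapse the double sum into~\eqref{eq:KM-theta}.
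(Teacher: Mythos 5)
Your proof is correct, and your primary route is genuinely different from the paper's. The paper proves Theorem~\ref{thm:KM-theta} by inserting the matrix elements $\langle k|\partial_{\theta_i}\omega(\theta,\phi)|\ell\rangle$ from~\eqref{eq:Fisher-info-help-alt} directly into the spectral formula~\eqref{eq:KM-formula-2}, expanding the four resulting sums, handling the diagonal terms via $\lim_{x\to y}(\ln x-\ln y)/(x-y)=1/y$, and then collapsing the off-diagonal weight $\tfrac14(\ln\lambda_k-\ln\lambda_\ell)(\lambda_k+\lambda_\ell)^2/(\lambda_k-\lambda_\ell)$ using the Fourier identity $\int_{\mathbb{R}}dt\,p(t)e^{-i\omega t}=\tanh(\omega/2)/(\omega/2)$ from \cite[Lemma~12]{patel2024quantumboltzmannmachine} --- i.e., exactly the ``alternative'' you sketch in your last sentence. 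Your main argument instead exploits the exponential-family structure: unitary invariance (Corollary~\ref{cor:unitary-inv-Fisher-info}) reduces to $\rho(\theta)$, the linearity of $G(\theta)$ makes the relative entropy equal to a linear term plus $\ln Z(\theta+\varepsilon)-\ln Z(\theta)$, so the Kubo--Mori matrix is the Hessian of the log-partition function, which Duhamel's formula turns into the Bogoliubov form $\int_0^1 ds\,\Tr[\rho^s G_i\rho^{1-s}G_j]-\langle G_i\rangle\langle G_j\rangle$; the final identification with $\tfrac12\langle\{G_i,\Phi_\theta(G_j)\}\rangle_{\rho(\theta)}$ then rests on the same scalar identity $\hat p(\omega)=\tanh(\omega/2)/(\omega/2)$, via $(\lambda_a-\lambda_b)/(\ln\lambda_a-\ln\lambda_b)=\tfrac12(\lambda_a+\lambda_b)\tanh(\omega/2)/(\omega/2)$ with $\omega=E_b-E_a$ (and $\hat p(0)=1$ for the diagonal). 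All of these steps check out. What your route buys is conceptual economy and a connection the paper only mentions in passing in its conclusion (the Kubo--Mori matrix as the Hessian of $\ln Z$, hence of the generative-modeling objective~\eqref{eq:rel_entr_alt}); it also makes the convexity/Newton-method interpretation immediate. What the paper's route buys is uniformity: the same matrix-element machinery of Section~\ref{sec:general-consids-info-mats} handles all nine blocks of Table~\ref{table:FB-WY-KM-results}, whereas your shortcut is special to the $\theta$--$\theta$ block of the Kubo--Mori metric and does not extend to the $\phi$ or cross blocks.
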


\begin{proof}
    See Appendix~\ref{proof:KM-theta}.
\end{proof}

\medskip 

This is the same result obtained in~\cite{patel2024naturalgradientparameterestimation} in the case of using parameterized thermal states. $I_{ij}^{\operatorname{KM}}(\theta)$ can be efficiently estimated on a quantum computer (under the assumption that each $G_i$ is a local operator, acting on a constant number of qubits)~\cite{patel2024naturalgradientparameterestimation}. We can use the quantum circuit shown in Figure~\ref{fig:KM-theta} for estimating the first term in~\eqref{eq:KM-theta}. The second term in~\eqref{eq:KM-theta} is the same as the second term in~\eqref{eq:FB-theta}. As such, we can use the same procedure delineated in the paragraph surrounding~\eqref{eq:simple-term-to-estimate} in order to estimate it.

The elements of the Kubo--Mori information matrix for the parameter vector $\phi$ are as follows:

\begin{theorem}
\label{thm:KM-phi}
For the parameterized family in~\eqref{eq:ansatz}, the Kubo--Mori information matrix elements, with respect to the $\phi$
parameter vector, are as follows:
\begin{align}\label{eq:KM-phi}
I_{ij}^{\operatorname{KM}}(\phi)= \left\langle \left[  \Psi_{\phi}(H_{i}), \left[  G(\theta) , \Psi_{\phi}
(H_{j})\right]  \right]
\right\rangle _{\rho(\theta)}.
\end{align}
\end{theorem}

\begin{proof}
    See Appendix~\ref{proof:KM-phi}.
\end{proof}

\medskip

The quantity in~\eqref{eq:KM-phi} can be estimated on a quantum computer by a generalization~\cite[Algorithm~3]{liEfficientQuantumGradient2024} of the standard Hadamard test, which  evaluates the expectation value of nested commutators. The quantum circuit used in this case is depicted in Figure~\ref{fig:KM-phi}.

The elements of the Kubo--Mori information matrix for the cross terms involving $\theta$ and $\phi$ are as follows:
\begin{theorem}
\label{thm:KM-theta-phi}
For the parameterized family in~\eqref{eq:ansatz}, the Kubo--Mori information matrix elements, with respect to $\theta$ and $\phi$ parameter vectors, are as follows:
\begin{equation}
\label{eq:KM-theta-phi}
I_{ij}^{\operatorname{KM}}(\theta, \phi) =  \frac{i}{2} \left\langle  \left\{  \Phi_{\theta}(G_{i}) , \left[G(\theta)  , \Psi_{\phi}(H_j)\right] \right\} \right\rangle_{\rho(\theta)} .
\end{equation}
\end{theorem}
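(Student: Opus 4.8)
The plan is to evaluate the Kubo--Mori formula~\eqref{eq:KM-formula-2} for the cross-term $I_{ij}^{\operatorname{KM}}(\theta,\phi)$ directly in the eigenbasis $\{|\tilde{k}\rangle\}$ of $\rho(\theta)$, and then to recognize the resulting weighted sum as the matrix element of the claimed anticommutator in~\eqref{eq:KM-theta-phi}. First I would insert the two helper identities~\eqref{eq:Fisher-info-help-alt} and~\eqref{eq:Fisher-info-help-last} for the matrix elements of $\partial_{\theta_i}\omega(\theta,\phi)$ and $\partial_{\phi_j}\omega(\theta,\phi)$ into the sum $\sum_{k,\ell}\frac{\ln\lambda_k-\ln\lambda_\ell}{\lambda_k-\lambda_\ell}\langle k|\partial_{\theta_i}\omega|\ell\rangle\langle\ell|\partial_{\phi_j}\omega|k\rangle$. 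The diagonal contribution $\delta_{k\ell}\lambda_\ell\langle G_i\rangle_{\rho(\theta)}$ in~\eqref{eq:Fisher-info-help-alt} is immediately annihilated by the antisymmetric factor $(\lambda_\ell-\lambda_k)$ coming from~\eqref{eq:Fisher-info-help-last}, so only the $\Phi_{\theta}(G_i)$ piece survives.

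Next I would rewrite the $\phi$-derivative matrix element in the thermal eigenbasis. Using $|k\rangle = e^{-iH(\phi)}|\tilde{k}\rangle$ and conjugating $\Psi_\phi$ by $e^{iH(\phi)}$, the change of variables $s=1-t$ in~\eqref{eq:Psi} turns $\Psi_\phi$ into its Hilbert--Schmidt adjoint, yielding $\langle\ell|\Psi_\phi(H_j)|k\rangle = \langle\tilde{\ell}|\Psi_\phi^\dagger(H_j)|\tilde{k}\rangle$; this is precisely why $\Psi_\phi^\dagger$ rather than $\Psi_\phi$ appears in~\eqref{eq:KM-theta-phi}. After collecting scalars, the Kubo--Mori denominator $\lambda_k-\lambda_\ell$ cancels against the factor $(\lambda_\ell-\lambda_k)$, leaving the combined weight $\frac{i}{2}(\lambda_k+\lambda_\ell)(\ln\lambda_k-\ln\lambda_\ell)$ multiplying $\langle\tilde{k}|\Phi_\theta(G_i)|\tilde{\ell}\rangle\langle\tilde{\ell}|\Psi_\phi^\dagger(H_j)|\tilde{k}\rangle$.

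It then remains to identify this weighted sum with the right-hand side of~\eqref{eq:KM-theta-phi}. Here I would expand $\frac{i}{2}\langle\{\Phi_\theta(G_i),[G(\theta),\Psi_\phi^\dagger(H_j)]\}\rangle_{\rho(\theta)}$ in the same eigenbasis, using that both $\rho(\theta)$ and $G(\theta)$ are diagonal there, with $G(\theta)|\tilde{k}\rangle = g_k|\tilde{k}\rangle$ and $g_k = -\ln\lambda_k - \ln Z(\theta)$ by~\eqref{eq:param-thermal-state}. The inner commutator $[G(\theta),\Psi_\phi^\dagger(H_j)]$ contributes a factor $g_\ell-g_k = \ln\lambda_k-\ln\lambda_\ell$ (the constant $\ln Z(\theta)$ cancels), while the anticommutator with $\Phi_\theta(G_i)$, after relabeling $k\leftrightarrow\ell$ in one of its two terms, produces exactly the symmetric factor $(\lambda_k+\lambda_\ell)$ together with the thermal weight $\lambda_k$ from the expectation value; the two expansions then coincide.

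The main obstacle is the bookkeeping in this last step: one must symmetrize the anticommutator through the relabeling $k\leftrightarrow\ell$ and track the sign of $(g_\ell-g_k)$ so that the antisymmetric log-difference weight characteristic of the Kubo--Mori metric is matched against the commutator $[G(\theta),\cdot]$, while the symmetric $(\lambda_k+\lambda_\ell)$ weight is matched against the anticommutator $\{\Phi_\theta(G_i),\cdot\}$. The cancellation of the Kubo--Mori denominator and the conversion $\Psi_\phi\to\Psi_\phi^\dagger$ are the two clean structural facts that make this matching possible; everything else is routine expansion in the eigenbasis.
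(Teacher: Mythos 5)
Your proposal is correct and follows essentially the same route as the paper's proof: both insert the helper identities~\eqref{eq:Fisher-info-help-alt} and~\eqref{eq:Fisher-info-help-last} into the Kubo--Mori sum~\eqref{eq:KM-formula-2}, observe that the diagonal $\delta_{k\ell}\lambda_\ell\langle G_i\rangle_{\rho(\theta)}$ piece is killed by the factor $(\lambda_\ell-\lambda_k)$, cancel the Kubo--Mori denominator to obtain the weight $\tfrac{i}{2}(\lambda_k+\lambda_\ell)(\ln\lambda_k-\ln\lambda_\ell)$, and convert $\Psi_\phi$ into $\Psi_\phi^{\dagger}$ by conjugating into the eigenbasis of $\rho(\theta)$. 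The only difference is the finishing move: you expand the target anticommutator in that eigenbasis (using $g_k=-\ln\lambda_k-\ln Z(\theta)$, with $\ln Z(\theta)$ cancelling inside the commutator) and match the two weighted sums, whereas the paper resums the weighted sum forward into traces involving $\ln\omega(\theta,\phi)$ and then regroups using $[G(\theta),\rho(\theta)]=0$ --- an equivalent, if slightly longer, computation.
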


\begin{proof}   
See Appendix~\ref{proof:KM-theta-phi}.
\end{proof}

\medskip 

The quantity in~\eqref{eq:KM-theta-phi} can also be estimated on a quantum computer by a generalization~\cite[Algorithm~3]{liEfficientQuantumGradient2024} of the standard Hadamard test, which  evaluates the expectation value of nested anticommutators and commutators. The quantum circuit used in this case is depicted in Figure~\ref{fig:KM-theta-phi}.

\begin{figure*}
    \centering
    \begin{subfigure}{\textwidth}
        \centering
        \scalebox{1.5}{ % Adjust scale to fit the column width
        \Qcircuit @C=1.0em @R=1.0em {
            \lstick{|0\rangle} & \gate{\operatorname{Had}} & \ctrl{1} &  \gate{\operatorname{Had}} & \meter & \rstick{\hspace{-1.2em}Z} \\
            \lstick{\rho(\theta)} & \qw & \gate{G_j} & \gate{e^{-iG(\theta)t}} &   \meter & \rstick{\hspace{-1.2em}G_{i}}
        }
        }
        \vspace{20pt}
        \caption{Quantum circuit that realizes an unbiased estimate of $\frac{1}{2} \left\langle \left\{ G_i, \Phi_{\theta}(G_{j}) \right\}\right\rangle_{ \rho(\theta)} $. For each run of the circuit, $t$ is sampled independently at random from the probability density $p(t)$ in~\eqref{eq:high-peak-tent-density}. For details of the algorithm, see Appendix~\ref{app:KM-theta}.}
        \label{fig:KM-theta}
    \end{subfigure}
    \vspace{10pt}
    
    \begin{subfigure}{\textwidth}
        \centering
        \scalebox{1.5}{ % Adjust scale
        \Qcircuit @C=0.9em @R=1.0em {
            \lstick{\ket{1}} & \gate{\operatorname{Had}} & \gate{S} & \ctrl{2} & \qw & \qw &  \gate{\operatorname{Had}} & \meter &\rstick{\hspace{-1.2em}Z} \\
            \lstick{\ket{1}} & \gate{\operatorname{Had}} & \gate{S} & \qw & \qw & \ctrl{1} & \gate{\operatorname{Had}} & \meter & \rstick{\hspace{-1.2em}Z}\\
            \lstick{\rho(\theta)} &  \gate{e^{-iH(\phi)t_2}} & \qw & \gate{H_{i}} & \gate{e^{iH(\phi)(t_2-t_1)}}  & \gate{H_{j}} & \gate{e^{iH(\phi)t_1}} &  \meter & \rstick{\hspace{-1.2em}G(\theta)}
        }
        }
        \vspace{20pt}
        \caption{Quantum circuit that realizes an unbiased estimate of $ \frac{1}{4} \left\langle\left[ \left[\Psi_{\phi}(H_{j}) ,G(\theta) \right],\Psi_{\phi}(H_{i})   \right]\right\rangle_{ \rho(\theta)} $. For each run of the circuit, $t_1$ and $t_2$ are sampled uniformly at random from $[0,1]$. For details of the algorithm, see Appendix~\ref{app:KM-phi}.}
        \label{fig:KM-phi}
    \end{subfigure}
    \vspace{10pt}
    
    \begin{subfigure}{\textwidth}
        \centering
        \scalebox{1.5}{ % Adjust scale
        \Qcircuit @C=0.9em @R=1.0em {
            \lstick{\ket{1}} & \gate{\operatorname{Had}} & \qw & \ctrl{2} & \qw & \qw & \qw & \gate{\operatorname{Had}} & \meter &\rstick{\hspace{-1.2em}Z} \\
            \lstick{\ket{1}} & \gate{\operatorname{Had}} & \gate{S} & \qw & \qw & \qw & \ctrl{1} & \gate{\operatorname{Had}} & \meter & \rstick{\hspace{-1.2em}Z}\\
            \lstick{\rho(\theta)} &  \qw & \qw & \gate{G_{i}} & \gate{e^{-iG(\theta)t_1}} & \gate{e^{-iH(\phi)t_2}} & \gate{H_{j}} & \gate{e^{iH(\phi)t_2}} & \meter & \rstick{\hspace{-1.2em}G(\theta)}
        }
        }
        \vspace{20pt}
        \caption{Quantum circuit that realizes an unbiased estimate of $\frac{i}{4} \left\langle \left\{  \Phi_{\theta}(G_{i}) , \left[G(\theta) , \Psi_{\phi}(H_j)  \right] \right\}\right\rangle_{ \rho(\theta)} $. For each run of the circuit, $t_1$ is sampled independently at random from the probability density $p(t)$ in~\eqref{eq:high-peak-tent-density}, and $t_2$ is sampled uniformly at random from $[0,1]$. For details of the algorithm, see Appendix~\ref{app:KM-theta-phi}.}
        \label{fig:KM-theta-phi}
    \end{subfigure}
    \caption{Quantum circuits involved in the estimation of the Kubo--Mori information matrix elements. (a) Quantum circuit involved in the estimation of $I_{ij}^{\operatorname{KM}}(\theta)$; (b) quantum circuit involved in the estimation of $I_{ij}^{\operatorname{KM}}(\phi)$; (c) quantum circuit involved in the estimation of $I_{ij}^{\operatorname{KM}}(\theta,\phi)$.}
    \label{fig:KM-circuits}
\end{figure*}
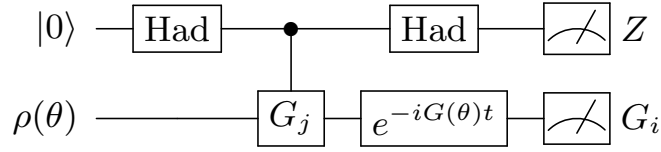
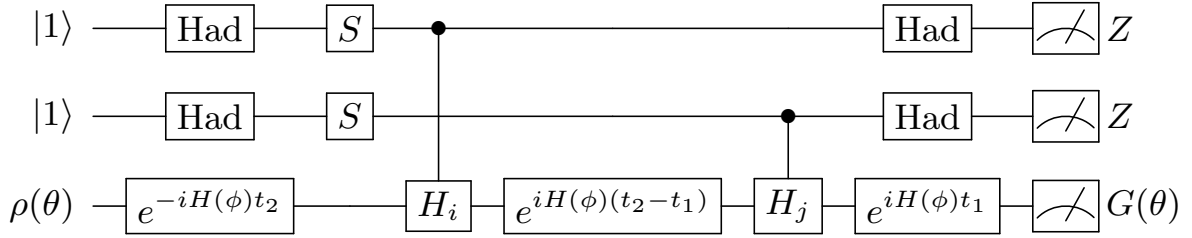
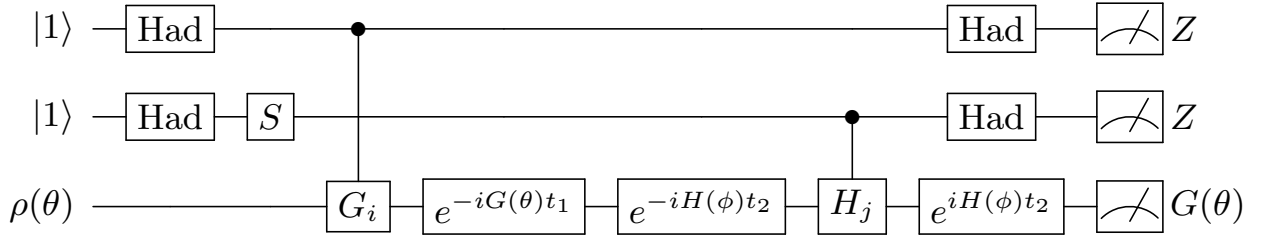

\section{Applications of information matrices of evolved quantum Boltzmann machines}

\label{sec:applications}

\subsection{Natural gradient for evolved quantum Boltzmann machine learning}

\label{sec:app-nat-grad}

The quantum generalizations of the Fisher information matrix considered in Section~\ref{sec:infor_matrix_eqbm} provide the foundation for a metric-aware optimization algorithm, known as  natural gradient descent, which can be effectively applied to evolved quantum Boltzmann machine learning. This type of optimization algorithm was introduced in the classical setting~\cite{amari1998nat_grad} and has been generalized to the quantum setting to account for various quantum generalizations of Fisher information~\cite{Stokes2020quantumnatural,sbahi2022provablyefficientvariationalgenerative,KS2022,sohail2024quantumnaturalstochasticpairwise,patel2024naturalgradientparameterestimation}. A convergence analysis of quantum natural gradient under certain assumptions has been presented in \cite{sohail2024quantumnaturalstochasticpairwise}.

Standard gradient descent, whose update rule is shown in~\eqref{eq:grad_descent}, relies on the Euclidean geometry of the parameter space. However, this geometry is typically not suited for the space of quantum states (see \cite[Section II.A]{patel2024naturalgradientparameterestimation}), resulting in slow convergence and difficulty escaping saddle points in the optimization landscape. To address these limitations, the natural gradient descent algorithm incorporates the geometry induced by the parameterization of quantum states~\cite{amari1998nat_grad}. For a parameterized family $(\sigma(\gamma))_{\gamma \in \mathbb{R}^L}$ and a loss function $\mathcal{L}(\gamma)$, the update rule of a quantum natural descent algorithm is given as follows:
\begin{equation}\label{eq:qngd}
    \gamma_{m+1}\coloneqq \gamma_m - \mu \left[ I^{\boldsymbol{D}}(\gamma_m) \right]^{-1} \nabla_\gamma \mathcal{L}(\gamma_m),
\end{equation}
where $I^{\boldsymbol{D}}(\gamma_m)$ is the $\boldsymbol{D}$-based Fisher information matrix defined in~\eqref{eq:D-based-Fisher-matrix}, which encodes the curvature of the parameter space, and $\mu$ is the learning rate. By incorporating the inverse of $I^{\boldsymbol{D}}(\gamma_m)$, the gradient is rescaled to account for the underlying geometry, enabling the optimization steps to align with the steepest descent direction with respect to the geometry defined by the smooth divergence $\boldsymbol{D}$, rather than the Euclidean geometry. This helps to navigate the optimization landscape more effectively, potentially avoiding getting trapped in local minima. 

Applying natural gradient to evolved quantum Boltzmann machines  involves estimating information matrices and computing their inverses. While this additional computation introduces some overhead, it is often offset by the improved convergence and ability to escape saddle points, potentially leading to fewer iterations and a faster overall optimization process~\cite{vSK2021}.
We have shown in Section~\ref{sec:infor_matrix_eqbm} how to evaluate the Fisher--Bures $I^{FB}(\gamma)$ (Section~\ref{sec:FB-info}), Wigner--Yanase $I^{WY}(\gamma)$ (Section~\ref{sec:WY-info}), and Kubo--Mori $I^{KM}(\gamma)$ (Section~\ref{sec:KM-info}) information matrix elements for evolved quantum Boltzmann machines. Thus, each of these quantum generalizations of the Fisher information matrix can be directly applied in the quantum natural gradient descent algorithm for evolved quantum Boltzmann machine learning by incorporating the chosen matrix into the update rule in~\eqref{eq:qngd}.

Let us also note that, due to Corollary~\ref{cor:WY-FB-ineqs}, the Fisher--Bures and Wigner--Yanase information matrices yield optimization steps that differ only by a constant factor. This equivalence makes these two information matrices essentially interchangeable for practical implementations of quantum natural gradient, offering flexibility in choosing the metric without affecting the overall optimization trajectory. Such flexibility is particularly valuable when computational constraints favor one metric over the other.

\subsection{Fundamental limitations on estimating time-evolved thermal states}

\label{sec:app-estimating}

In this section, we discuss how our findings are relevant for multiparameter estimation, in particular, to estimating the parameters of a time-evolved thermal state of the form in \eqref{eq:ansatz}. Previous studies have focused on parameter estimation of time-evolved states \cite{BD2016,Ho2023} or thermal states \cite{GarciaPintos2024ham_learning_qbm,abiuso2024limitsmetrologythermal}, but to the best of our knowledge, the problem of estimating the parameters of time-evolved thermal states has not been studied previously.

To briefly review the problem and similar to the review provided in \cite[Section~I-C-2]{patel2024naturalgradientparameterestimation}, consider that the following
multiparameter Cramer--Rao bound holds for an arbitrary unbiased estimator and
for a general parameterized family $(\sigma(\gamma))_{\gamma\in\mathbb{R}^{L}
}$ of states:
\begin{equation}
\operatorname{Cov}^{(n)}(\hat{\gamma},\gamma)\geq\frac{1}{n}\left[  I^{\operatorname{FB}
}(\gamma)\right]  ^{-1},
\label{eq:Cramer--Rao-multiple}
\end{equation}
where $n\in\mathbb{N}$ is the number of copies of the state $\sigma(\gamma)$
available, $\hat{\gamma}$ is an estimate of the parameter vector~$\gamma$, the matrix $I^{\operatorname{FB}}(\gamma)$ denotes the Fisher--Bures
information matrix, and the covariance matrix
$\operatorname{Cov}^{(n)}(\hat{\gamma},\gamma)$ measures errors in estimation and is
defined in terms of its matrix elements as
\begin{multline}
\lbrack\operatorname{Cov}^{(n)}(\hat{\gamma},\gamma)]_{k,\ell}\coloneqq\\
\sum_{m}\operatorname{Tr}[M_{m}^{(n)}\sigma(\gamma)^{\otimes n}](\hat{\gamma
}_{k}(m)-\gamma_{k})(\hat{\gamma}_{\ell}(m)-\gamma_{\ell}).
\label{eq:cov-mat-def}
\end{multline}
In \eqref{eq:cov-mat-def}, $(M_{m}^{(n)})_{m}$ is an arbitrary positive operator-value measure used for estimation, i.e.,
satisfying $M_{m}^{(n)}\geq0$ for all $m$ and $\sum_{m}M_{m}^{(n)}=I^{\otimes n}$. This measurement acts, in general, collectively on all $n$ copies of the
state $\sigma(\gamma)^{\otimes n}$. Additionally,
\begin{equation}
\hat{\gamma}(m)\coloneqq(\hat{\gamma}_{1}(m),\hat{\gamma}_{2}(m),\ldots
,\hat{\gamma}_{J}(m))
\label{eq:parameter-estimate-function}
\end{equation}
is a function that maps the measurement outcome $m$ to an estimate
$\hat{\gamma}(m)$ of the parameter vector $\gamma$. The inequality in \eqref{eq:Cramer--Rao-multiple} exploits the additivity of the Fisher--Bures information matrix, as reviewed in \cite[Appendix~A]{patel2024naturalgradientparameterestimation}. As noted in \cite[Eq.~(C11)]{sbahi2022provablyefficientvariationalgenerative}, the multiparameter Cramer--Rao bound in \eqref{eq:Cramer--Rao-multiple} can be written as follows:
\begin{equation}
\begin{pmatrix}
    \operatorname{Cov}^{(n)}(\hat{\gamma},\gamma) & I \\
    I & n  I^{\operatorname{FB}}(\gamma)
\end{pmatrix} \geq 0,
\end{equation}
which is a direct consequence of the Schur complement lemma.

There are several implications of our findings for parameter estimation of time-evolved thermal states:
\begin{enumerate}

    \item Our analytical expressions for the Fisher--Bures information matrix from Theorems~\ref{thm:FB-theta}, \ref{thm:FB-phi}, and \ref{thm:FB-theta-phi} can be plugged directly into \eqref{eq:Cramer--Rao-multiple} in order to obtain fundamental limits on the performance of an arbitrary  scheme for estimating time-evolved thermal states. Future work could conduct numerical studies of various schemes and examples of time-evolved thermal states in order to determine how close such schemes come to the fundamental limits established here. Moreover, one could explore the estimation of time-evolved bosonic Gaussian thermal states, as a generalization of the setting recently considered in~\cite{huang2024informationgeometrybosonicgaussian}.
    
    \item Similar in spirit to the main application of \cite{Ho2023}, it could be the case that the analytical expressions might be difficult to evaluate or even difficult computationally for a classical algorithm to approximate. In this case, our quantum algorithms could be helpful: one could perform them on time-evolved thermal states in order to estimate their Fisher--Bures information matrix elements, in order to have an understanding of the fundamental limits. Here our results also imply a broad generalization of the main findings and application of \cite{Ho2023}, given that our expressions and algorithms apply to  time-evolved mixed states (i.e., quantum evolution machines), whereas the results of \cite{Ho2023} apply exclusively to time-evolved pure states.
    
    \item Corollary~\ref{cor:WY-FB-ineqs} is useful for parameter estimation, because it indicates that the Fisher--Bures and Wigner--Yanase information matrices differ only by a factor of two in the matrix (Loewner) order, and thus one can be used as a substitute for the other while giving similar bounds in the low-error regime. This is also advantageous in the case that the Fisher--Bures information matrix elements are difficult to evaluate but the Wigner--Yanase information matrix elements are not.

    \item As discussed in \cite[Appendix~D]{sbahi2022provablyefficientvariationalgenerative} (specifically, Theorem~D.4 therein) by building on findings from the classical case \cite[Theorem~2]{amari1998nat_grad}, the natural gradient descent algorithm, under certain conditions, attains the Cramer--Rao bound asymptotically and thus is an optimal algorithm for asymptotic parameter estimation. Here we also note the similarity of natural gradient descent and the iterative scoring algorithm mentioned in~\cite[Eq.~(121)]{Sidhu2020}. As such, one could investigate certain classes of time-evolved thermal states to determine whether the Cramer--Rao bound could be attained asymptotically for them, by means of natural gradient descent.  However, based on \cite[Theorem~3]{patel2024naturalgradientparameterestimation}, we do not expect the aforementioned conditions for asymptotic optimality to be satisfied generally. Regardless, one could still possibly make effective use of our analytical expressions and quantum algorithms in order to devise an estimation strategy for time-evolved thermal states.
\end{enumerate}

\section{Quantum Boltzmann machines and quantum evolution machines as special cases}

\label{sec:special-cases}

In this section, we remark briefly on how quantum Boltzmann machines and quantum evolution machines are special cases of evolved quantum Boltzmann machines.

By fixing the parameter vector $\phi$ and allowing $\theta$ to vary, we obtain the following parameterized family of states:
\begin{equation}
    (\omega(\theta, \phi))_{\theta \in \mathbb{R}^J},
\end{equation}
where the state $\omega(\theta, \phi)$ is defined in~\eqref{eq:ansatz}. When $\phi = 0$, the resulting state is a parameterized thermal state or, equivalently, a quantum Boltzmann machine. As such, all of our findings in this paper apply to quantum Boltzmann machines. Even if $\phi \neq 0$, the resulting state is essentially a quantum Boltzmann machine, up to a fixed unitary evolution $e^{-iH(\phi)}$. All of our previous developments apply to this special case and recover previously reported results from~\cite{patel2024quantumboltzmannmachine,patel2024naturalgradientparameterestimation}. As such, one can perform gradient descent using the gradients reported in~\eqref{eq:grad_wrt_theta},~\eqref{eq:VQE-grad-theta}, and~\eqref{eq:gen_model_der_theta}. Also, one can perform natural gradient descent by using the information matrices in~\eqref{eq:FB-theta},~\eqref{eq:WY-theta}, or~\eqref{eq:KM-theta} (see also the caption of Table~\ref{table:FB-WY-KM-results}). Furthermore, we notice that all of the information matrices in these last referenced equations have no dependence on the parameter vector~$\phi$, consistent with the fact that they are unitarily invariant under the action of a unitary channel (Corollary~\ref{cor:unitary-inv-Fisher-info}), which in this case is $(\cdot) \to e^{-iH(\phi)} (\cdot)e^{iH(\phi)}$. 

By fixing the parameter vector $\theta$ and allowing $\phi$ to vary, we obtain the following parameterized family of states:
\begin{equation}
    (\omega(\theta, \phi))_{\phi \in \mathbb{R}^K},
\end{equation}
where the state $\omega(\theta, \phi)$ is defined in~\eqref{eq:ansatz}. We refer to this family as quantum evolution machines because they arise from the action of the parameterized unitary evolution $e^{-iH(\phi)}$ acting on the fixed state $\rho(\theta)$. All of our previous developments apply to this special case.  As such, one can perform gradient descent with quantum evolution machines by using the gradients reported in~\eqref{eq:grad_wrt_phi},~\eqref{eq:VQE-grad-phi}, and~\eqref{eq:gen_mod_grad_phi}. Also, one can perform natural gradient descent with quantum evolution machines by using the information matrices in~\eqref{eq:FB-phi},~\eqref{eq:WY-phi}, or~\eqref{eq:KM-phi} (see also the caption of Table~\ref{table:FB-WY-KM-results}).

\section{Conclusion}

\label{sec:conclusion}

\subsection{Summary and discussion}

In this paper, we propose evolved quantum Boltzmann machines as a variational ansatz for quantum optimization and learning. The main idea is captured by~\eqref{eq:ansatz}: beginning with two parameterized Hamiltonians $G(\theta)$ and $H(\phi)$, prepare a thermal state of $G(\theta)$ and then time-evolve it according to $H(\phi)$. One can alternatively think of it as imaginary time evolution according to $G(\theta)$, followed by real time evolution according to $H(\phi)$. In this way, the combination of imaginary-time and real-time dynamics gives rise to a broader class of expressible and trainable quantum states. Theorem~\ref{thm:gradient-eQBM} provides expressions for the gradient of the state $\omega(\theta, \phi)$ in~\eqref{eq:ansatz}. We subsequently applied them for the task of ground-state energy estimation in Section~\ref{sec:GSEE}, therein providing quantum algorithms for estimating the gradient (see Figure~\ref{fig:VQE-circuits}). We also considered the task of generative modeling (Section~\ref{sec:gen-mod}) and established expressions for the gradient for this task, as well as quantum algorithms for estimating it. As summarized in Table~\ref{table:FB-WY-KM-results}, we then established analytical expressions for the Fisher--Bures, Wigner--Yanase, and Kubo--Mori information matrices of evolved quantum Boltzmann machines. Figures~\ref{fig:FB-circuits},~\ref{fig:WY-circuits}, and~\ref{fig:KM-circuits} depict quantum algorithms that play an essential role in estimating their matrix elements, respectively. These results have applications in natural gradient descent algorithms when using evolved quantum Boltzmann machines---these algorithms are general purpose and can thus be employed for a broad variety of quantum optimization and learning tasks. Along the way, we also proved a broad generalization of the main finding of~\cite{Luo2004}, showing that the Fisher--Bures and Wigner--Yanase information matrices differ by no more than a factor of two in the matrix Loewner order, and are thus essentially interchangeable in natural gradient-descent algorithms.

Given that our paper develops full details of three different information matrices for evolved quantum Boltzmann machines, i.e., Fisher--Bures, Wigner--Yanase, and Kubo--Mori, a question arises as to which one is most suitable for a given application. While we have left this question somewhat open, at the least Corollary~\ref{cor:WY-FB-ineqs} indicates that the Fisher--Bures and Wigner--Yanase information matrices are essentially interchangeable for an algorithm like natural gradient descent. Given this, one would then employ whichever one of these is simpler to estimate. 
For the application of generative modeling, one might favor using the Kubo--Mori information matrix because the objective function is the quantum relative entropy, coinciding with the divergence that defines the Kubo--Mori information matrix (see Definition~\ref{def:FB-WY-KM-2nd-derivs}). Furthermore, the Kubo--Mori information matrix in this case is equal to the Hessian of the objective function in \eqref{eq:rel_entr_alt}, as observed from \cite[Theorem~2]{patel2024naturalgradientparameterestimation} and \cite[Eq.~(15), Supp.~Inf.]{Coopmans2024qbm_gen_learn}, indicating that natural gradient is equivalent to a second-order Newton search in this case. As such, the Kubo--Mori information matrix seems quite well aligned with the generative modeling problem, when compared to the other two information matrices.

\subsection{Future directions}

Going forward from here, there are several open directions and questions to address. First,
we have left it open to simulate the performance of evolved quantum Boltzmann machines for tasks of interest, such as ground-state energy estimation and generative modeling, with our main focus here being on developing the aforementioned fundamental theoretical findings. We plan to address this direction in future work. 
In parallel, recent work~\cite{wilde2025generativemodelingusingevolved} has delineated practical hybrid quantum--classical algorithms for training evolved quantum Boltzmann machines for generative modeling, by combining the evolved quantum Boltzmann gradient estimator with the Donsker--Varadhan variational representation of the relative entropy and related distinguishability measures. Building on this, it would be interesting to investigate how the information-geometric tools developed here can be incorporated into these training schemes, for example by replacing Euclidean gradient updates with natural-gradient steps or by extending the objective to broader families of distinguishability measures.
Beyond generative modeling, it is also open to employ evolved quantum Boltzmann machines as an ansatz for other optimization tasks, such as constrained Hamiltonian optimization or other semi-definite programming problems considered in~\cite{chen2023qslackslackvariableapproachvariational}.

An additional and important direction is to carry out a comparative study between evolved quantum Boltzmann machines and other existing variational ansatzes, in particular standard quantum Boltzmann machines. While evolved quantum Boltzmann machines clearly generalize both standard quantum Boltzmann machines and quantum evolution machines (as discussed in Section~\ref{sec:special-cases}), it remains to be determined whether this generalization translates into concrete performance advantages in practice. Future work will aim to benchmark these models under a fixed number of trainable parameters, analyzing expressivity versus trainability trade-offs, convergence speed, robustness to noise, and overall resource requirements. Such analysis will be crucial to positioning evolved quantum Boltzmann machines within the broader landscape of variational quantum algorithms for learning and optimization. 

In this context, the choice of the imaginary-time Hamiltonian $G(\theta)$ and the real-time evolution Hamiltonian 
$H(\phi)$ is expected to significantly influence the expressive power and optimization landscape of the evolved quantum Boltzmann machine. Richer or more structured choices of the Hamiltonians -- such as those incorporating multiple interaction terms or longer-range couplings -- can increase expressivity by enabling the ansatz to represent more complex correlations. However, greater expressivity often comes at the cost of introducing more trainable parameters, which can make optimization more challenging or less stable in practice. Conversely, simpler Hamiltonians may lead to easier training but at the expense of a more limited representational capacity. Exploring these trade-offs systematically, both analytically and numerically, remains an open and promising direction for future work.

Next, it is an open question to determine whether evolved quantum Boltzmann machines suffer from the barren plateau problem~\cite{McClean2018barren_plateaus, Cerezo2021barren_plateaus, Holmes2022barren_plateaus, marrero2021barrenp_lateaus}. The results of~\cite{Coopmans2024qbm_gen_learn} indicate that quantum Boltzmann machines are not subject to this problem for the generative modeling task, and so one should not encounter it when optimizing evolved quantum Boltzmann machines with respect to the $\theta$ parameter vector. It is open to determine whether the problem applies when optimizing with respect to the $\phi$ parameter vector. At the least, the standard unitary two-design argument~\cite{McClean2018barren_plateaus} for the onset of barren plateaus in parameterized quantum circuits does not seem to apply to evolved quantum Boltzmann machines, given that picking the elements of $\theta$ and $\phi$ randomly according to a multivariate Gaussian does not clearly lead to an averaged state that is maximally mixed. Further investigation is certainly required to determine whether this is the case, both for the generative modeling task, as well as for the ground-state energy estimation task.

Interestingly, our quantum algorithms for estimating the Fisher--Bures, Wigner--Yanase, and Kubo--Mori information matrix elements of evolved quantum Boltzmann machines rely on standard quantum subroutines -- specifically, the Hadamard test, classical random sampling, and Hamiltonian simulation -- and are efficient under the assumption that one can prepare samples of evolved quantum Boltzmann machines or, for the first term in~\eqref{eq:WY-theta}, their purifications. We remark that recent work~\cite{wilde2025quantumfisherinformationmatrices}, which develops broader quantum generalizations of the Fisher information matrix from Rényi relative entropies, has also provided a new expression for the Wigner--Yanase information matrix of time-evolved states that no longer requires access to purifications, thereby simplifying the estimation procedure presented here for quantum evolution machines.

As mentioned in~\eqref{eq:data-proc-D-based-Fisher}, the aforementioned information matrices obey the data-processing inequality for arbitrary quantum channels. These results stand in distinction to the complexity-theoretic barriers~\cite{watrous2002qszk,Wat06} in place for estimating other distinguishability measures like fidelity and trace distance, which obey the data-processing inequality but are not efficiently estimable in general. Thus, we have found distinguishability measures that both 1) obey the data-processing inequality and 2) are efficiently estimable on a quantum computer for a large class of states. As a future direction, we wonder whether there is an efficient algorithm for estimating distinguishability measures like quantum relative entropy, trace distance, and fidelity of evolved quantum Boltzmann machines. It seems like this might be difficult, given that the partition function $Z(\theta)$ appears in analytical expressions for each of these distinguishability measures. However, there do exist quantum algorithms for estimating various distinguishability measures that depend on the condition number of the underlying states~\cite{WGLZY22,WZ23,liu2024quantumalgorithmsmatrixgeometric}, and one could investigate whether these would be efficient for evolved quantum Boltzmann machines.

Given the ubiquitous role that time-evolved thermal states play in physics and the increasing relevance of quantum generalizations of Fisher information in various areas of physics like high energy and condensed matter, there is a distinct possibility that the findings of our paper could find applications well beyond those presented here. For example, in the AdS/CFT correspondence, the Kubo--Mori information of a perturbation in the conformal field theory is dual to the
bulk canonical energy of the  linearized gravitational perturbation~\cite{Lashkari2016} (see also~\cite{Kibe2022} for a review). Additionally, in condensed matter physics, quantum generalizations of Fisher information have played an essential role in understanding and detecting phase transitions~\cite{CVZ2007,ZGC2007,CAROLLO20201}. We suspect that our quantum algorithms for estimating information matrix elements should find use in both of these settings, and we leave such investigations to future work. 

\begin{acknowledgments}
    We thank Alvaro Alhambra, Pierre-Luc Dallaire-Demers, Zoe Holmes, James Watson, and Nicole Yunger Halpern   for helpful discussions. MMW is also grateful to Francesco Buscemi for bringing~\cite[Theorem~1.1]{DelMoral2018} to his attention.

MM and DP  acknowledge support from the Air Force Office of Scientific
Research under agreement no. FA2386-24-1-4069.
DP and MMW acknowledge support from
Air Force Research Laboratory under agreement no.~FA8750-23-2-0031.

The U.S.~Government is authorized to reproduce and
distribute reprints for Governmental purposes notwithstanding any copyright
notation thereon. The views and conclusions contained herein are those of the
authors and should not be interpreted as necessarily representing the official
policies or endorsements, either expressed or implied, of Air Force Research
Laboratory or the United States Air Force or the U.S.~Government.
\end{acknowledgments}

\section*{Author contributions}
\noindent
\textbf{Author Contributions}: The following describes the
different contributions of the authors of this work, using
roles defined by the CRediT (Contributor Roles Taxonomy) project~\cite{CRediT}:

\noindent\textbf{MM:} Formal analysis, Investigation, Methodology, Validation, Writing – original draft, Writing – review \& editing.

\noindent\textbf{DP:} Formal analysis, Investigation, Validation, Writing – review \& editing.

\noindent\textbf{MMW:} Conceptualization, Formal analysis, Funding acquisition, Investigation, Methodology, Supervision, Validation, Writing – original draft, Writing – review \& editing.

\bibliography{references}

%apsrev4-2.bst 2019-01-14 (MD) hand-edited version of apsrev4-1.bst
%Control: key (0)
%Control: author (8) initials jnrlst
%Control: editor formatted (1) identically to author
%Control: production of article title (0) allowed
%Control: page (0) single
%Control: year (1) truncated
%Control: production of eprint (0) enabled
\begin{thebibliography}{115}%
\makeatletter
\providecommand \@ifxundefined [1]{%
 \@ifx{#1\undefined}
}%
\providecommand \@ifnum [1]{%
 \ifnum #1\expandafter \@firstoftwo
 \else \expandafter \@secondoftwo
 \fi
}%
\providecommand \@ifx [1]{%
 \ifx #1\expandafter \@firstoftwo
 \else \expandafter \@secondoftwo
 \fi
}%
\providecommand \natexlab [1]{#1}%
\providecommand \enquote  [1]{``#1''}%
\providecommand \bibnamefont  [1]{#1}%
\providecommand \bibfnamefont [1]{#1}%
\providecommand \citenamefont [1]{#1}%
\providecommand \href@noop [0]{\@secondoftwo}%
\providecommand \href [0]{\begingroup \@sanitize@url \@href}%
\providecommand \@href[1]{\@@startlink{#1}\@@href}%
\providecommand \@@href[1]{\endgroup#1\@@endlink}%
\providecommand \@sanitize@url [0]{\catcode `\\12\catcode `\$12\catcode `\&12\catcode `\#12\catcode `\^12\catcode `\_12\catcode `\%12\relax}%
\providecommand \@@startlink[1]{}%
\providecommand \@@endlink[0]{}%
\providecommand \url  [0]{\begingroup\@sanitize@url \@url }%
\providecommand \@url [1]{\endgroup\@href {#1}{\urlprefix }}%
\providecommand \urlprefix  [0]{URL }%
\providecommand \Eprint [0]{\href }%
\providecommand \doibase [0]{https://doi.org/}%
\providecommand \selectlanguage [0]{\@gobble}%
\providecommand \bibinfo  [0]{\@secondoftwo}%
\providecommand \bibfield  [0]{\@secondoftwo}%
\providecommand \translation [1]{[#1]}%
\providecommand \BibitemOpen [0]{}%
\providecommand \bibitemStop [0]{}%
\providecommand \bibitemNoStop [0]{.\EOS\space}%
\providecommand \EOS [0]{\spacefactor3000\relax}%
\providecommand \BibitemShut  [1]{\csname bibitem#1\endcsname}%
\let\auto@bib@innerbib\@empty
%</preamble>
\bibitem [{\citenamefont {Montanaro}(2016)}]{Montanaro2016overview_q_algo}%
  \BibitemOpen
  \bibfield  {author} {\bibinfo {author} {\bibfnamefont {A.}~\bibnamefont {Montanaro}},\ }\bibfield  {title} {\bibinfo {title} {Quantum algorithms: An overview},\ }\href {https://doi.org/10.1038/npjqi.2015.23} {\bibfield  {journal} {\bibinfo  {journal} {npj Quantum Information}\ }\textbf {\bibinfo {volume} {2}},\ \bibinfo {pages} {15023} (\bibinfo {year} {2016})}\BibitemShut {NoStop}%
\bibitem [{\citenamefont {Biamonte}\ \emph {et~al.}(2017)\citenamefont {Biamonte}, \citenamefont {Wittek}, \citenamefont {Pancotti}, \citenamefont {Rebentrost}, \citenamefont {Wiebe},\ and\ \citenamefont {Lloyd}}]{Biamonte2017qml}%
  \BibitemOpen
  \bibfield  {author} {\bibinfo {author} {\bibfnamefont {J.}~\bibnamefont {Biamonte}}, \bibinfo {author} {\bibfnamefont {P.}~\bibnamefont {Wittek}}, \bibinfo {author} {\bibfnamefont {N.}~\bibnamefont {Pancotti}}, \bibinfo {author} {\bibfnamefont {P.}~\bibnamefont {Rebentrost}}, \bibinfo {author} {\bibfnamefont {N.}~\bibnamefont {Wiebe}},\ and\ \bibinfo {author} {\bibfnamefont {S.}~\bibnamefont {Lloyd}},\ }\bibfield  {title} {\bibinfo {title} {Quantum machine learning},\ }\href {https://doi.org/10.1038/nature23474} {\bibfield  {journal} {\bibinfo  {journal} {Nature}\ }\textbf {\bibinfo {volume} {549}},\ \bibinfo {pages} {195–202} (\bibinfo {year} {2017})}\BibitemShut {NoStop}%
\bibitem [{\citenamefont {Peruzzo}\ \emph {et~al.}(2014)\citenamefont {Peruzzo}, \citenamefont {McClean}, \citenamefont {Shadbolt}, \citenamefont {Yung}, \citenamefont {Zhou}, \citenamefont {Love}, \citenamefont {Aspuru-Guzik},\ and\ \citenamefont {O’Brien}}]{Peruzzo2014vqe}%
  \BibitemOpen
  \bibfield  {author} {\bibinfo {author} {\bibfnamefont {A.}~\bibnamefont {Peruzzo}}, \bibinfo {author} {\bibfnamefont {J.}~\bibnamefont {McClean}}, \bibinfo {author} {\bibfnamefont {P.}~\bibnamefont {Shadbolt}}, \bibinfo {author} {\bibfnamefont {M.-H.}\ \bibnamefont {Yung}}, \bibinfo {author} {\bibfnamefont {X.-Q.}\ \bibnamefont {Zhou}}, \bibinfo {author} {\bibfnamefont {P.~J.}\ \bibnamefont {Love}}, \bibinfo {author} {\bibfnamefont {A.}~\bibnamefont {Aspuru-Guzik}},\ and\ \bibinfo {author} {\bibfnamefont {J.~L.}\ \bibnamefont {O’Brien}},\ }\bibfield  {title} {\bibinfo {title} {A variational eigenvalue solver on a photonic quantum processor},\ }\href {https://doi.org/10.1038/ncomms5213} {\bibfield  {journal} {\bibinfo  {journal} {Nature Communications}\ }\textbf {\bibinfo {volume} {5}},\ \bibinfo {pages} {4213} (\bibinfo {year} {2014})}\BibitemShut {NoStop}%
\bibitem [{\citenamefont {Amin}\ \emph {et~al.}(2018)\citenamefont {Amin}, \citenamefont {Andriyash}, \citenamefont {Rolfe}, \citenamefont {Kulchytskyy},\ and\ \citenamefont {Melko}}]{Amin2018qbm}%
  \BibitemOpen
  \bibfield  {author} {\bibinfo {author} {\bibfnamefont {M.~H.}\ \bibnamefont {Amin}}, \bibinfo {author} {\bibfnamefont {E.}~\bibnamefont {Andriyash}}, \bibinfo {author} {\bibfnamefont {J.}~\bibnamefont {Rolfe}}, \bibinfo {author} {\bibfnamefont {B.}~\bibnamefont {Kulchytskyy}},\ and\ \bibinfo {author} {\bibfnamefont {R.}~\bibnamefont {Melko}},\ }\bibfield  {title} {\bibinfo {title} {Quantum {B}oltzmann machine},\ }\href {https://doi.org/10.1103/PhysRevX.8.021050} {\bibfield  {journal} {\bibinfo  {journal} {Physical Review X}\ }\textbf {\bibinfo {volume} {8}},\ \bibinfo {pages} {021050} (\bibinfo {year} {2018})}\BibitemShut {NoStop}%
\bibitem [{\citenamefont {Verdon}\ \emph {et~al.}(2019)\citenamefont {Verdon}, \citenamefont {Marks}, \citenamefont {Nanda}, \citenamefont {Leichenauer},\ and\ \citenamefont {Hidary}}]{Verdon2019qhbm}%
  \BibitemOpen
  \bibfield  {author} {\bibinfo {author} {\bibfnamefont {G.}~\bibnamefont {Verdon}}, \bibinfo {author} {\bibfnamefont {J.}~\bibnamefont {Marks}}, \bibinfo {author} {\bibfnamefont {S.}~\bibnamefont {Nanda}}, \bibinfo {author} {\bibfnamefont {S.}~\bibnamefont {Leichenauer}},\ and\ \bibinfo {author} {\bibfnamefont {J.}~\bibnamefont {Hidary}},\ }\href {https://arxiv.org/abs/1910.02071} {\bibinfo {title} {Quantum {H}amiltonian-based models and the variational quantum thermalizer algorithm}} (\bibinfo {year} {2019}),\ \Eprint {https://arxiv.org/abs/1910.02071} {arXiv:1910.02071 [quant-ph]} \BibitemShut {NoStop}%
\bibitem [{\citenamefont {Ferguson}\ \emph {et~al.}(2021)\citenamefont {Ferguson}, \citenamefont {Dellantonio}, \citenamefont {Balushi}, \citenamefont {Jansen}, \citenamefont {D\"ur},\ and\ \citenamefont {Muschik}}]{Ferguson2021mbqc}%
  \BibitemOpen
  \bibfield  {author} {\bibinfo {author} {\bibfnamefont {R.~R.}\ \bibnamefont {Ferguson}}, \bibinfo {author} {\bibfnamefont {L.}~\bibnamefont {Dellantonio}}, \bibinfo {author} {\bibfnamefont {A.~A.}\ \bibnamefont {Balushi}}, \bibinfo {author} {\bibfnamefont {K.}~\bibnamefont {Jansen}}, \bibinfo {author} {\bibfnamefont {W.}~\bibnamefont {D\"ur}},\ and\ \bibinfo {author} {\bibfnamefont {C.~A.}\ \bibnamefont {Muschik}},\ }\bibfield  {title} {\bibinfo {title} {Measurement-based variational quantum eigensolver},\ }\href {https://doi.org/10.1103/PhysRevLett.126.220501} {\bibfield  {journal} {\bibinfo  {journal} {Physical Review Letters}\ }\textbf {\bibinfo {volume} {126}},\ \bibinfo {pages} {220501} (\bibinfo {year} {2021})}\BibitemShut {NoStop}%
\bibitem [{\citenamefont {McClean}\ \emph {et~al.}(2016)\citenamefont {McClean}, \citenamefont {Romero}, \citenamefont {Babbush},\ and\ \citenamefont {Aspuru-Guzik}}]{McClean2016VQA}%
  \BibitemOpen
  \bibfield  {author} {\bibinfo {author} {\bibfnamefont {J.~R.}\ \bibnamefont {McClean}}, \bibinfo {author} {\bibfnamefont {J.}~\bibnamefont {Romero}}, \bibinfo {author} {\bibfnamefont {R.}~\bibnamefont {Babbush}},\ and\ \bibinfo {author} {\bibfnamefont {A.}~\bibnamefont {Aspuru-Guzik}},\ }\bibfield  {title} {\bibinfo {title} {The theory of variational hybrid quantum-classical algorithms},\ }\href {https://doi.org/10.1088/1367-2630/18/2/023023} {\bibfield  {journal} {\bibinfo  {journal} {New Journal of Physics}\ }\textbf {\bibinfo {volume} {18}},\ \bibinfo {pages} {023023} (\bibinfo {year} {2016})}\BibitemShut {NoStop}%
\bibitem [{\citenamefont {Mitarai}\ \emph {et~al.}(2018)\citenamefont {Mitarai}, \citenamefont {Negoro}, \citenamefont {Kitagawa},\ and\ \citenamefont {Fujii}}]{Mitarai2018q_circuit_learning}%
  \BibitemOpen
  \bibfield  {author} {\bibinfo {author} {\bibfnamefont {K.}~\bibnamefont {Mitarai}}, \bibinfo {author} {\bibfnamefont {M.}~\bibnamefont {Negoro}}, \bibinfo {author} {\bibfnamefont {M.}~\bibnamefont {Kitagawa}},\ and\ \bibinfo {author} {\bibfnamefont {K.}~\bibnamefont {Fujii}},\ }\bibfield  {title} {\bibinfo {title} {Quantum circuit learning},\ }\href {https://doi.org/10.1103/PhysRevA.98.032309} {\bibfield  {journal} {\bibinfo  {journal} {Phyical Review A}\ }\textbf {\bibinfo {volume} {98}},\ \bibinfo {pages} {032309} (\bibinfo {year} {2018})}\BibitemShut {NoStop}%
\bibitem [{\citenamefont {Jones}\ \emph {et~al.}(2019)\citenamefont {Jones}, \citenamefont {Endo}, \citenamefont {McArdle}, \citenamefont {Yuan},\ and\ \citenamefont {Benjamin}}]{Jones2019vqe}%
  \BibitemOpen
  \bibfield  {author} {\bibinfo {author} {\bibfnamefont {T.}~\bibnamefont {Jones}}, \bibinfo {author} {\bibfnamefont {S.}~\bibnamefont {Endo}}, \bibinfo {author} {\bibfnamefont {S.}~\bibnamefont {McArdle}}, \bibinfo {author} {\bibfnamefont {X.}~\bibnamefont {Yuan}},\ and\ \bibinfo {author} {\bibfnamefont {S.~C.}\ \bibnamefont {Benjamin}},\ }\bibfield  {title} {\bibinfo {title} {Variational quantum algorithms for discovering {H}amiltonian spectra},\ }\href {https://doi.org/10.1103/PhysRevA.99.062304} {\bibfield  {journal} {\bibinfo  {journal} {Physical Review A}\ }\textbf {\bibinfo {volume} {99}},\ \bibinfo {pages} {062304} (\bibinfo {year} {2019})}\BibitemShut {NoStop}%
\bibitem [{\citenamefont {Cerezo}\ \emph {et~al.}(2022)\citenamefont {Cerezo}, \citenamefont {Sharma}, \citenamefont {Arrasmith},\ and\ \citenamefont {Coles}}]{Cerezo2022vqe}%
  \BibitemOpen
  \bibfield  {author} {\bibinfo {author} {\bibfnamefont {M.}~\bibnamefont {Cerezo}}, \bibinfo {author} {\bibfnamefont {K.}~\bibnamefont {Sharma}}, \bibinfo {author} {\bibfnamefont {A.}~\bibnamefont {Arrasmith}},\ and\ \bibinfo {author} {\bibfnamefont {P.~J.}\ \bibnamefont {Coles}},\ }\bibfield  {title} {\bibinfo {title} {Variational quantum state eigensolver},\ }\href {https://doi.org/10.1038/s41534-022-00611-6} {\bibfield  {journal} {\bibinfo  {journal} {npj Quantum Information}\ }\textbf {\bibinfo {volume} {8}},\ \bibinfo {pages} {113} (\bibinfo {year} {2022})}\BibitemShut {NoStop}%
\bibitem [{\citenamefont {Farhi}\ \emph {et~al.}(2014)\citenamefont {Farhi}, \citenamefont {Goldstone},\ and\ \citenamefont {Gutmann}}]{farhi2014qaoa}%
  \BibitemOpen
  \bibfield  {author} {\bibinfo {author} {\bibfnamefont {E.}~\bibnamefont {Farhi}}, \bibinfo {author} {\bibfnamefont {J.}~\bibnamefont {Goldstone}},\ and\ \bibinfo {author} {\bibfnamefont {S.}~\bibnamefont {Gutmann}},\ }\href {https://arxiv.org/abs/1411.4028} {\bibinfo {title} {A quantum approximate optimization algorithm}} (\bibinfo {year} {2014}),\ \Eprint {https://arxiv.org/abs/1411.4028} {arXiv:1411.4028 [quant-ph]} \BibitemShut {NoStop}%
\bibitem [{\citenamefont {Wang}\ \emph {et~al.}(2018)\citenamefont {Wang}, \citenamefont {Hadfield}, \citenamefont {Jiang},\ and\ \citenamefont {Rieffel}}]{Wang2018qaoa}%
  \BibitemOpen
  \bibfield  {author} {\bibinfo {author} {\bibfnamefont {Z.}~\bibnamefont {Wang}}, \bibinfo {author} {\bibfnamefont {S.}~\bibnamefont {Hadfield}}, \bibinfo {author} {\bibfnamefont {Z.}~\bibnamefont {Jiang}},\ and\ \bibinfo {author} {\bibfnamefont {E.~G.}\ \bibnamefont {Rieffel}},\ }\bibfield  {title} {\bibinfo {title} {Quantum approximate optimization algorithm for {MaxCut}: A fermionic view},\ }\href {https://doi.org/10.1103/physreva.97.022304} {\bibfield  {journal} {\bibinfo  {journal} {Physical Review A}\ }\textbf {\bibinfo {volume} {97}},\ \bibinfo {pages} {022304} (\bibinfo {year} {2018})}\BibitemShut {NoStop}%
\bibitem [{\citenamefont {Hadfield}\ \emph {et~al.}(2019)\citenamefont {Hadfield}, \citenamefont {Wang}, \citenamefont {O’Gorman}, \citenamefont {Rieffel}, \citenamefont {Venturelli},\ and\ \citenamefont {Biswas}}]{Hadfield2019qaoa}%
  \BibitemOpen
  \bibfield  {author} {\bibinfo {author} {\bibfnamefont {S.}~\bibnamefont {Hadfield}}, \bibinfo {author} {\bibfnamefont {Z.}~\bibnamefont {Wang}}, \bibinfo {author} {\bibfnamefont {B.}~\bibnamefont {O’Gorman}}, \bibinfo {author} {\bibfnamefont {E.~G.}\ \bibnamefont {Rieffel}}, \bibinfo {author} {\bibfnamefont {D.}~\bibnamefont {Venturelli}},\ and\ \bibinfo {author} {\bibfnamefont {R.}~\bibnamefont {Biswas}},\ }\bibfield  {title} {\bibinfo {title} {From the quantum approximate optimization algorithm to a quantum alternating operator ansatz},\ }\href {https://doi.org/10.3390/a12020034} {\bibfield  {journal} {\bibinfo  {journal} {Algorithms}\ }\textbf {\bibinfo {volume} {12}},\ \bibinfo {pages} {34} (\bibinfo {year} {2019})}\BibitemShut {NoStop}%
\bibitem [{\citenamefont {Bermejo}\ and\ \citenamefont {Or{\'u}s}(2023)}]{bermejo2024variational_clustering}%
  \BibitemOpen
  \bibfield  {author} {\bibinfo {author} {\bibfnamefont {P.}~\bibnamefont {Bermejo}}\ and\ \bibinfo {author} {\bibfnamefont {R.}~\bibnamefont {Or{\'u}s}},\ }\bibfield  {title} {\bibinfo {title} {Variational quantum and quantum-inspired clustering},\ }\href {https://doi.org/https://doi.org/10.1038/s41598-023-39771-6} {\bibfield  {journal} {\bibinfo  {journal} {Scientific Reports}\ }\textbf {\bibinfo {volume} {13}},\ \bibinfo {pages} {13284} (\bibinfo {year} {2023})}\BibitemShut {NoStop}%
\bibitem [{\citenamefont {Rebentrost}\ \emph {et~al.}(2014)\citenamefont {Rebentrost}, \citenamefont {Mohseni},\ and\ \citenamefont {Lloyd}}]{Rebentrost2014qsvm}%
  \BibitemOpen
  \bibfield  {author} {\bibinfo {author} {\bibfnamefont {P.}~\bibnamefont {Rebentrost}}, \bibinfo {author} {\bibfnamefont {M.}~\bibnamefont {Mohseni}},\ and\ \bibinfo {author} {\bibfnamefont {S.}~\bibnamefont {Lloyd}},\ }\bibfield  {title} {\bibinfo {title} {Quantum support vector machine for big data classification},\ }\href {https://doi.org/10.1103/physrevlett.113.130503} {\bibfield  {journal} {\bibinfo  {journal} {Physical Review Letters}\ }\textbf {\bibinfo {volume} {113}},\ \bibinfo {pages} {130503} (\bibinfo {year} {2014})}\BibitemShut {NoStop}%
\bibitem [{\citenamefont {Havlíček}\ \emph {et~al.}(2019)\citenamefont {Havlíček}, \citenamefont {Córcoles}, \citenamefont {Temme}, \citenamefont {Harrow}, \citenamefont {Kandala}, \citenamefont {Chow},\ and\ \citenamefont {Gambetta}}]{Havlicek2019supervised_learning}%
  \BibitemOpen
  \bibfield  {author} {\bibinfo {author} {\bibfnamefont {V.}~\bibnamefont {Havlíček}}, \bibinfo {author} {\bibfnamefont {A.~D.}\ \bibnamefont {Córcoles}}, \bibinfo {author} {\bibfnamefont {K.}~\bibnamefont {Temme}}, \bibinfo {author} {\bibfnamefont {A.~W.}\ \bibnamefont {Harrow}}, \bibinfo {author} {\bibfnamefont {A.}~\bibnamefont {Kandala}}, \bibinfo {author} {\bibfnamefont {J.~M.}\ \bibnamefont {Chow}},\ and\ \bibinfo {author} {\bibfnamefont {J.~M.}\ \bibnamefont {Gambetta}},\ }\bibfield  {title} {\bibinfo {title} {Supervised learning with quantum-enhanced feature spaces},\ }\href {https://doi.org/10.1038/s41586-019-0980-2} {\bibfield  {journal} {\bibinfo  {journal} {Nature}\ }\textbf {\bibinfo {volume} {567}},\ \bibinfo {pages} {209–212} (\bibinfo {year} {2019})}\BibitemShut {NoStop}%
\bibitem [{\citenamefont {Schuld}\ \emph {et~al.}(2020)\citenamefont {Schuld}, \citenamefont {Bocharov}, \citenamefont {Svore},\ and\ \citenamefont {Wiebe}}]{Schuld2020quantum_classifier}%
  \BibitemOpen
  \bibfield  {author} {\bibinfo {author} {\bibfnamefont {M.}~\bibnamefont {Schuld}}, \bibinfo {author} {\bibfnamefont {A.}~\bibnamefont {Bocharov}}, \bibinfo {author} {\bibfnamefont {K.~M.}\ \bibnamefont {Svore}},\ and\ \bibinfo {author} {\bibfnamefont {N.}~\bibnamefont {Wiebe}},\ }\bibfield  {title} {\bibinfo {title} {Circuit-centric quantum classifiers},\ }\href {https://doi.org/10.1103/PhysRevA.101.032308} {\bibfield  {journal} {\bibinfo  {journal} {Physical Review A}\ }\textbf {\bibinfo {volume} {101}},\ \bibinfo {pages} {032308} (\bibinfo {year} {2020})}\BibitemShut {NoStop}%
\bibitem [{\citenamefont {Benedetti}\ \emph {et~al.}(2019)\citenamefont {Benedetti}, \citenamefont {Lloyd}, \citenamefont {Sack},\ and\ \citenamefont {Fiorentini}}]{Benedetti2019pqc}%
  \BibitemOpen
  \bibfield  {author} {\bibinfo {author} {\bibfnamefont {M.}~\bibnamefont {Benedetti}}, \bibinfo {author} {\bibfnamefont {E.}~\bibnamefont {Lloyd}}, \bibinfo {author} {\bibfnamefont {S.}~\bibnamefont {Sack}},\ and\ \bibinfo {author} {\bibfnamefont {M.}~\bibnamefont {Fiorentini}},\ }\bibfield  {title} {\bibinfo {title} {Parameterized quantum circuits as machine learning models},\ }\href {https://doi.org/10.1088/2058-9565/ab4eb5} {\bibfield  {journal} {\bibinfo  {journal} {Quantum Science and Technology}\ }\textbf {\bibinfo {volume} {4}},\ \bibinfo {pages} {043001} (\bibinfo {year} {2019})}\BibitemShut {NoStop}%
\bibitem [{\citenamefont {Leadbeater}\ \emph {et~al.}(2021)\citenamefont {Leadbeater}, \citenamefont {Sharrock}, \citenamefont {Coyle},\ and\ \citenamefont {Benedetti}}]{Leadbeater2021gen_modell}%
  \BibitemOpen
  \bibfield  {author} {\bibinfo {author} {\bibfnamefont {C.}~\bibnamefont {Leadbeater}}, \bibinfo {author} {\bibfnamefont {L.}~\bibnamefont {Sharrock}}, \bibinfo {author} {\bibfnamefont {B.}~\bibnamefont {Coyle}},\ and\ \bibinfo {author} {\bibfnamefont {M.}~\bibnamefont {Benedetti}},\ }\bibfield  {title} {\bibinfo {title} {$f$-divergences and cost function locality in generative modelling with quantum circuits},\ }\href {https://doi.org/10.3390/e23101281} {\bibfield  {journal} {\bibinfo  {journal} {Entropy}\ }\textbf {\bibinfo {volume} {23}},\ \bibinfo {pages} {1281} (\bibinfo {year} {2021})}\BibitemShut {NoStop}%
\bibitem [{\citenamefont {Abbas}\ \emph {et~al.}(2021)\citenamefont {Abbas}, \citenamefont {Sutter}, \citenamefont {Zoufal}, \citenamefont {Lucchi}, \citenamefont {Figalli},\ and\ \citenamefont {Woerner}}]{Abbas2021qnn}%
  \BibitemOpen
  \bibfield  {author} {\bibinfo {author} {\bibfnamefont {A.}~\bibnamefont {Abbas}}, \bibinfo {author} {\bibfnamefont {D.}~\bibnamefont {Sutter}}, \bibinfo {author} {\bibfnamefont {C.}~\bibnamefont {Zoufal}}, \bibinfo {author} {\bibfnamefont {A.}~\bibnamefont {Lucchi}}, \bibinfo {author} {\bibfnamefont {A.}~\bibnamefont {Figalli}},\ and\ \bibinfo {author} {\bibfnamefont {S.}~\bibnamefont {Woerner}},\ }\bibfield  {title} {\bibinfo {title} {The power of quantum neural networks},\ }\href {https://doi.org/10.1038/s43588-021-00084-1} {\bibfield  {journal} {\bibinfo  {journal} {Nature Computational Science}\ }\textbf {\bibinfo {volume} {1}},\ \bibinfo {pages} {403–409} (\bibinfo {year} {2021})}\BibitemShut {NoStop}%
\bibitem [{\citenamefont {Cerezo}\ \emph {et~al.}(2021{\natexlab{a}})\citenamefont {Cerezo}, \citenamefont {Arrasmith}, \citenamefont {Babbush}, \citenamefont {Benjamin}, \citenamefont {Endo}, \citenamefont {Fujii}, \citenamefont {McClean}, \citenamefont {Mitarai}, \citenamefont {Yuan}, \citenamefont {Cincio},\ and\ \citenamefont {Coles}}]{Cerezo2021vqa}%
  \BibitemOpen
  \bibfield  {author} {\bibinfo {author} {\bibfnamefont {M.}~\bibnamefont {Cerezo}}, \bibinfo {author} {\bibfnamefont {A.}~\bibnamefont {Arrasmith}}, \bibinfo {author} {\bibfnamefont {R.}~\bibnamefont {Babbush}}, \bibinfo {author} {\bibfnamefont {S.~C.}\ \bibnamefont {Benjamin}}, \bibinfo {author} {\bibfnamefont {S.}~\bibnamefont {Endo}}, \bibinfo {author} {\bibfnamefont {K.}~\bibnamefont {Fujii}}, \bibinfo {author} {\bibfnamefont {J.~R.}\ \bibnamefont {McClean}}, \bibinfo {author} {\bibfnamefont {K.}~\bibnamefont {Mitarai}}, \bibinfo {author} {\bibfnamefont {X.}~\bibnamefont {Yuan}}, \bibinfo {author} {\bibfnamefont {L.}~\bibnamefont {Cincio}},\ and\ \bibinfo {author} {\bibfnamefont {P.~J.}\ \bibnamefont {Coles}},\ }\bibfield  {title} {\bibinfo {title} {Variational quantum algorithms},\ }\href {https://doi.org/10.1038/s42254-021-00348-9} {\bibfield  {journal} {\bibinfo  {journal} {Nature Reviews Physics}\ }\textbf {\bibinfo {volume} {3}},\ \bibinfo {pages} {625–644} (\bibinfo {year}
  {2021}{\natexlab{a}})}\BibitemShut {NoStop}%
\bibitem [{\citenamefont {McClean}\ \emph {et~al.}(2018)\citenamefont {McClean}, \citenamefont {Boixo}, \citenamefont {Smelyanskiy}, \citenamefont {Babbush},\ and\ \citenamefont {Neven}}]{McClean2018barren_plateaus}%
  \BibitemOpen
  \bibfield  {author} {\bibinfo {author} {\bibfnamefont {J.~R.}\ \bibnamefont {McClean}}, \bibinfo {author} {\bibfnamefont {S.}~\bibnamefont {Boixo}}, \bibinfo {author} {\bibfnamefont {V.~N.}\ \bibnamefont {Smelyanskiy}}, \bibinfo {author} {\bibfnamefont {R.}~\bibnamefont {Babbush}},\ and\ \bibinfo {author} {\bibfnamefont {H.}~\bibnamefont {Neven}},\ }\bibfield  {title} {\bibinfo {title} {Barren plateaus in quantum neural network training landscapes},\ }\href {https://doi.org/10.1038/s41467-018-07090-4} {\bibfield  {journal} {\bibinfo  {journal} {Nature Communications}\ }\textbf {\bibinfo {volume} {9}},\ \bibinfo {pages} {4812} (\bibinfo {year} {2018})}\BibitemShut {NoStop}%
\bibitem [{\citenamefont {Cerezo}\ \emph {et~al.}(2021{\natexlab{b}})\citenamefont {Cerezo}, \citenamefont {Sone}, \citenamefont {Volkoff}, \citenamefont {Cincio},\ and\ \citenamefont {Coles}}]{Cerezo2021barren_plateaus}%
  \BibitemOpen
  \bibfield  {author} {\bibinfo {author} {\bibfnamefont {M.}~\bibnamefont {Cerezo}}, \bibinfo {author} {\bibfnamefont {A.}~\bibnamefont {Sone}}, \bibinfo {author} {\bibfnamefont {T.}~\bibnamefont {Volkoff}}, \bibinfo {author} {\bibfnamefont {L.}~\bibnamefont {Cincio}},\ and\ \bibinfo {author} {\bibfnamefont {P.~J.}\ \bibnamefont {Coles}},\ }\bibfield  {title} {\bibinfo {title} {Cost function dependent barren plateaus in shallow parametrized quantum circuits},\ }\href {https://doi.org/10.1038/s41467-021-21728-w} {\bibfield  {journal} {\bibinfo  {journal} {Nature Communications}\ }\textbf {\bibinfo {volume} {12}},\ \bibinfo {pages} {1791} (\bibinfo {year} {2021}{\natexlab{b}})}\BibitemShut {NoStop}%
\bibitem [{\citenamefont {Holmes}\ \emph {et~al.}(2022{\natexlab{a}})\citenamefont {Holmes}, \citenamefont {Sharma}, \citenamefont {Cerezo},\ and\ \citenamefont {Coles}}]{Holmes2022barren_plateaus}%
  \BibitemOpen
  \bibfield  {author} {\bibinfo {author} {\bibfnamefont {Z.}~\bibnamefont {Holmes}}, \bibinfo {author} {\bibfnamefont {K.}~\bibnamefont {Sharma}}, \bibinfo {author} {\bibfnamefont {M.}~\bibnamefont {Cerezo}},\ and\ \bibinfo {author} {\bibfnamefont {P.~J.}\ \bibnamefont {Coles}},\ }\bibfield  {title} {\bibinfo {title} {Connecting ansatz expressibility to gradient magnitudes and barren plateaus},\ }\href {https://doi.org/10.1103/prxquantum.3.010313} {\bibfield  {journal} {\bibinfo  {journal} {PRX Quantum}\ }\textbf {\bibinfo {volume} {3}},\ \bibinfo {pages} {010313} (\bibinfo {year} {2022}{\natexlab{a}})}\BibitemShut {NoStop}%
\bibitem [{\citenamefont {Marrero}\ \emph {et~al.}(2021)\citenamefont {Marrero}, \citenamefont {Kieferov{\'a}},\ and\ \citenamefont {Wiebe}}]{marrero2021barrenp_lateaus}%
  \BibitemOpen
  \bibfield  {author} {\bibinfo {author} {\bibfnamefont {C.~O.}\ \bibnamefont {Marrero}}, \bibinfo {author} {\bibfnamefont {M.}~\bibnamefont {Kieferov{\'a}}},\ and\ \bibinfo {author} {\bibfnamefont {N.}~\bibnamefont {Wiebe}},\ }\bibfield  {title} {\bibinfo {title} {Entanglement-induced barren plateaus},\ }\href {https://doi.org/10.1103/PRXQuantum.2.040316} {\bibfield  {journal} {\bibinfo  {journal} {PRX Quantum}\ }\textbf {\bibinfo {volume} {2}},\ \bibinfo {pages} {040316} (\bibinfo {year} {2021})}\BibitemShut {NoStop}%
\bibitem [{\citenamefont {Benedetti}\ \emph {et~al.}(2017)\citenamefont {Benedetti}, \citenamefont {Realpe-Gómez}, \citenamefont {Biswas},\ and\ \citenamefont {Perdomo-Ortiz}}]{Benedetti2017qbm}%
  \BibitemOpen
  \bibfield  {author} {\bibinfo {author} {\bibfnamefont {M.}~\bibnamefont {Benedetti}}, \bibinfo {author} {\bibfnamefont {J.}~\bibnamefont {Realpe-Gómez}}, \bibinfo {author} {\bibfnamefont {R.}~\bibnamefont {Biswas}},\ and\ \bibinfo {author} {\bibfnamefont {A.}~\bibnamefont {Perdomo-Ortiz}},\ }\bibfield  {title} {\bibinfo {title} {Quantum-assisted learning of hardware-embedded probabilistic graphical models},\ }\href {https://doi.org/10.1103/physrevx.7.041052} {\bibfield  {journal} {\bibinfo  {journal} {Physical Review X}\ }\textbf {\bibinfo {volume} {7}},\ \bibinfo {pages} {041052} (\bibinfo {year} {2017})}\BibitemShut {NoStop}%
\bibitem [{\citenamefont {Kieferova}\ and\ \citenamefont {Wiebe}(2017)}]{kieferova2017qbm}%
  \BibitemOpen
  \bibfield  {author} {\bibinfo {author} {\bibfnamefont {M.}~\bibnamefont {Kieferova}}\ and\ \bibinfo {author} {\bibfnamefont {N.}~\bibnamefont {Wiebe}},\ }\bibfield  {title} {\bibinfo {title} {Tomography and generative data modeling via quantum {B}oltzmann training},\ }\href {https://doi.org/10.1103/PhysRevA.96.062327} {\bibfield  {journal} {\bibinfo  {journal} {Physical Review A}\ }\textbf {\bibinfo {volume} {96}},\ \bibinfo {pages} {062327} (\bibinfo {year} {2017})}\BibitemShut {NoStop}%
\bibitem [{\citenamefont {Hinton}\ and\ \citenamefont {Sejnowski}(1983)}]{hinton1983optimal}%
  \BibitemOpen
  \bibfield  {author} {\bibinfo {author} {\bibfnamefont {G.~E.}\ \bibnamefont {Hinton}}\ and\ \bibinfo {author} {\bibfnamefont {T.~J.}\ \bibnamefont {Sejnowski}},\ }\bibfield  {title} {\bibinfo {title} {Optimal perceptual inference},\ }in\ \href {https://api.semanticscholar.org/CorpusID:10379672} {\emph {\bibinfo {booktitle} {Proceedings of the IEEE conference on Computer Vision and Pattern Recognition}}},\ Vol.\ \bibinfo {volume} {448}\ (\bibinfo {organization} {Washington},\ \bibinfo {year} {1983})\ pp.\ \bibinfo {pages} {448--453}\BibitemShut {NoStop}%
\bibitem [{\citenamefont {Hinton}\ \emph {et~al.}(2006)\citenamefont {Hinton}, \citenamefont {Osindero},\ and\ \citenamefont {Teh}}]{hinton2006fast}%
  \BibitemOpen
  \bibfield  {author} {\bibinfo {author} {\bibfnamefont {G.~E.}\ \bibnamefont {Hinton}}, \bibinfo {author} {\bibfnamefont {S.}~\bibnamefont {Osindero}},\ and\ \bibinfo {author} {\bibfnamefont {Y.-W.}\ \bibnamefont {Teh}},\ }\bibfield  {title} {\bibinfo {title} {A fast learning algorithm for deep belief nets},\ }\href {https://doi.org/10.1162/neco.2006.18.7.1527} {\bibfield  {journal} {\bibinfo  {journal} {Neural computation}\ }\textbf {\bibinfo {volume} {18}},\ \bibinfo {pages} {1527} (\bibinfo {year} {2006})}\BibitemShut {NoStop}%
\bibitem [{\citenamefont {Salakhutdinov}\ and\ \citenamefont {Hinton}(2009)}]{salakhutdinov2009deep}%
  \BibitemOpen
  \bibfield  {author} {\bibinfo {author} {\bibfnamefont {R.}~\bibnamefont {Salakhutdinov}}\ and\ \bibinfo {author} {\bibfnamefont {G.}~\bibnamefont {Hinton}},\ }\bibfield  {title} {\bibinfo {title} {Deep {B}oltzmann machines},\ }in\ \href {https://proceedings.mlr.press/v5/salakhutdinov09a.html} {\emph {\bibinfo {booktitle} {Proceedings of the Twelfth International Conference on Artificial Intelligence and Statistics}}},\ \bibinfo {series} {Proceedings of Machine Learning Research}, Vol.~\bibinfo {volume} {5}\ (\bibinfo  {publisher} {PMLR},\ \bibinfo {year} {2009})\ pp.\ \bibinfo {pages} {448--455}\BibitemShut {NoStop}%
\bibitem [{\citenamefont {Chen}\ \emph {et~al.}(2023{\natexlab{a}})\citenamefont {Chen}, \citenamefont {Kastoryano},\ and\ \citenamefont {Gilyén}}]{chen2023q_Gibbs_sampl}%
  \BibitemOpen
  \bibfield  {author} {\bibinfo {author} {\bibfnamefont {C.-F.}\ \bibnamefont {Chen}}, \bibinfo {author} {\bibfnamefont {M.~J.}\ \bibnamefont {Kastoryano}},\ and\ \bibinfo {author} {\bibfnamefont {A.}~\bibnamefont {Gilyén}},\ }\href {https://arxiv.org/abs/2311.09207} {\bibinfo {title} {An efficient and exact noncommutative quantum {G}ibbs sampler}} (\bibinfo {year} {2023}{\natexlab{a}}),\ \Eprint {https://arxiv.org/abs/2311.09207} {arXiv:2311.09207 [quant-ph]} \BibitemShut {NoStop}%
\bibitem [{\citenamefont {Chen}\ \emph {et~al.}(2023{\natexlab{b}})\citenamefont {Chen}, \citenamefont {Kastoryano}, \citenamefont {Brandão},\ and\ \citenamefont {Gilyén}}]{chen2023thermalstatepreparation}%
  \BibitemOpen
  \bibfield  {author} {\bibinfo {author} {\bibfnamefont {C.-F.}\ \bibnamefont {Chen}}, \bibinfo {author} {\bibfnamefont {M.~J.}\ \bibnamefont {Kastoryano}}, \bibinfo {author} {\bibfnamefont {F.~G. S.~L.}\ \bibnamefont {Brandão}},\ and\ \bibinfo {author} {\bibfnamefont {A.}~\bibnamefont {Gilyén}},\ }\href {https://arxiv.org/abs/2303.18224} {\bibinfo {title} {Quantum thermal state preparation}} (\bibinfo {year} {2023}{\natexlab{b}}),\ \Eprint {https://arxiv.org/abs/2303.18224} {arXiv:2303.18224 [quant-ph]} \BibitemShut {NoStop}%
\bibitem [{\citenamefont {Rajakumar}\ and\ \citenamefont {Watson}(2026)}]{rajakumar2024gibbssampling}%
  \BibitemOpen
  \bibfield  {author} {\bibinfo {author} {\bibfnamefont {J.}~\bibnamefont {Rajakumar}}\ and\ \bibinfo {author} {\bibfnamefont {J.~D.}\ \bibnamefont {Watson}},\ }\bibfield  {title} {\bibinfo {title} {Gibbs sampling gives quantum advantage at constant temperatures with {O}(1)-local {H}amiltonians},\ }\href {http://dx.doi.org/10.22331/q-2026-01-22-1981} {\bibfield  {journal} {\bibinfo  {journal} {Quantum}\ }\textbf {\bibinfo {volume} {10}},\ \bibinfo {pages} {1981} (\bibinfo {year} {2026})}\BibitemShut {NoStop}%
\bibitem [{\citenamefont {Bergamaschi}\ \emph {et~al.}(2024)\citenamefont {Bergamaschi}, \citenamefont {Chen},\ and\ \citenamefont {Liu}}]{bergamaschi2024gibbs_sampling}%
  \BibitemOpen
  \bibfield  {author} {\bibinfo {author} {\bibfnamefont {T.}~\bibnamefont {Bergamaschi}}, \bibinfo {author} {\bibfnamefont {C.-F.}\ \bibnamefont {Chen}},\ and\ \bibinfo {author} {\bibfnamefont {Y.}~\bibnamefont {Liu}},\ }\bibfield  {title} {\bibinfo {title} {Quantum computational advantage with constant-temperature {G}ibbs sampling},\ }in\ \href {http://dx.doi.org/10.1109/FOCS61266.2024.00071} {\emph {\bibinfo {booktitle} {2024 IEEE 65th Annual Symposium on Foundations of Computer Science (FOCS)}}}\ (\bibinfo  {publisher} {IEEE},\ \bibinfo {year} {2024})\ p.\ \bibinfo {pages} {1063–1085}\BibitemShut {NoStop}%
\bibitem [{\citenamefont {Chen}\ \emph {et~al.}(2024)\citenamefont {Chen}, \citenamefont {Li}, \citenamefont {Lu},\ and\ \citenamefont {Ying}}]{chen2024sim_Lindblad}%
  \BibitemOpen
  \bibfield  {author} {\bibinfo {author} {\bibfnamefont {H.}~\bibnamefont {Chen}}, \bibinfo {author} {\bibfnamefont {B.}~\bibnamefont {Li}}, \bibinfo {author} {\bibfnamefont {J.}~\bibnamefont {Lu}},\ and\ \bibinfo {author} {\bibfnamefont {L.}~\bibnamefont {Ying}},\ }\href {https://doi.org/10.48550/arXiv.2407.06594} {\bibinfo {title} {A randomized method for simulating {L}indblad equations and thermal state preparation}} (\bibinfo {year} {2024})\BibitemShut {NoStop}%
\bibitem [{\citenamefont {Rouz\'{e}}\ \emph {et~al.}(2025)\citenamefont {Rouz\'{e}}, \citenamefont {Fran\c{c}a},\ and\ \citenamefont {Alhambra}}]{rouze2024efficientthermalization}%
  \BibitemOpen
  \bibfield  {author} {\bibinfo {author} {\bibfnamefont {C.}~\bibnamefont {Rouz\'{e}}}, \bibinfo {author} {\bibfnamefont {D.~S.}\ \bibnamefont {Fran\c{c}a}},\ and\ \bibinfo {author} {\bibfnamefont {A.~M.}\ \bibnamefont {Alhambra}},\ }\bibfield  {title} {\bibinfo {title} {Efficient thermalization and universal quantum computing with quantum {G}ibbs samplers},\ }in\ \href {https://doi.org/10.1145/3717823.3718268} {\emph {\bibinfo {booktitle} {Proceedings of the 57th Annual ACM Symposium on Theory of Computing}}},\ \bibinfo {series and number} {STOC '25}\ (\bibinfo  {publisher} {Association for Computing Machinery},\ \bibinfo {address} {New York, NY, USA},\ \bibinfo {year} {2025})\ p.\ \bibinfo {pages} {1488–1495}\BibitemShut {NoStop}%
\bibitem [{\citenamefont {Bakshi}\ \emph {et~al.}(2024)\citenamefont {Bakshi}, \citenamefont {Liu}, \citenamefont {Moitra},\ and\ \citenamefont {Tang}}]{bakshi2024hightemperaturegibbsstates}%
  \BibitemOpen
  \bibfield  {author} {\bibinfo {author} {\bibfnamefont {A.}~\bibnamefont {Bakshi}}, \bibinfo {author} {\bibfnamefont {A.}~\bibnamefont {Liu}}, \bibinfo {author} {\bibfnamefont {A.}~\bibnamefont {Moitra}},\ and\ \bibinfo {author} {\bibfnamefont {E.}~\bibnamefont {Tang}},\ }\href {https://arxiv.org/abs/2403.16850} {\bibinfo {title} {High-temperature {G}ibbs states are unentangled and efficiently preparable}} (\bibinfo {year} {2024}),\ \Eprint {https://arxiv.org/abs/2403.16850} {arXiv:2403.16850 [quant-ph]} \BibitemShut {NoStop}%
\bibitem [{\citenamefont {Ding}\ \emph {et~al.}(2024)\citenamefont {Ding}, \citenamefont {Li}, \citenamefont {Lin},\ and\ \citenamefont {Zhang}}]{ding2024preparationlowtemperaturegibbs}%
  \BibitemOpen
  \bibfield  {author} {\bibinfo {author} {\bibfnamefont {Z.}~\bibnamefont {Ding}}, \bibinfo {author} {\bibfnamefont {B.}~\bibnamefont {Li}}, \bibinfo {author} {\bibfnamefont {L.}~\bibnamefont {Lin}},\ and\ \bibinfo {author} {\bibfnamefont {R.}~\bibnamefont {Zhang}},\ }\href {https://arxiv.org/abs/2410.01206} {\bibinfo {title} {Polynomial-time preparation of low-temperature {G}ibbs states for {2D} toric code}} (\bibinfo {year} {2024}),\ \Eprint {https://arxiv.org/abs/2410.01206} {arXiv:2410.01206 [quant-ph]} \BibitemShut {NoStop}%
\bibitem [{\citenamefont {Coopmans}\ and\ \citenamefont {Benedetti}(2024)}]{Coopmans2024qbm_gen_learn}%
  \BibitemOpen
  \bibfield  {author} {\bibinfo {author} {\bibfnamefont {L.}~\bibnamefont {Coopmans}}\ and\ \bibinfo {author} {\bibfnamefont {M.}~\bibnamefont {Benedetti}},\ }\bibfield  {title} {\bibinfo {title} {On the sample complexity of quantum {B}oltzmann machine learning},\ }\href {https://doi.org/10.1038/s42005-024-01763-x} {\bibfield  {journal} {\bibinfo  {journal} {Communications Physics}\ }\textbf {\bibinfo {volume} {7}},\ \bibinfo {pages} {274} (\bibinfo {year} {2024})}\BibitemShut {NoStop}%
\bibitem [{\citenamefont {Tüysüz}\ \emph {et~al.}(2024)\citenamefont {Tüysüz}, \citenamefont {Demidik}, \citenamefont {Coopmans}, \citenamefont {Rinaldi}, \citenamefont {Croft}, \citenamefont {Haddad}, \citenamefont {Rosenkranz},\ and\ \citenamefont {Jansen}}]{Tüysüz2024qbm_gen_model}%
  \BibitemOpen
  \bibfield  {author} {\bibinfo {author} {\bibfnamefont {C.}~\bibnamefont {Tüysüz}}, \bibinfo {author} {\bibfnamefont {M.}~\bibnamefont {Demidik}}, \bibinfo {author} {\bibfnamefont {L.}~\bibnamefont {Coopmans}}, \bibinfo {author} {\bibfnamefont {E.}~\bibnamefont {Rinaldi}}, \bibinfo {author} {\bibfnamefont {V.}~\bibnamefont {Croft}}, \bibinfo {author} {\bibfnamefont {Y.}~\bibnamefont {Haddad}}, \bibinfo {author} {\bibfnamefont {M.}~\bibnamefont {Rosenkranz}},\ and\ \bibinfo {author} {\bibfnamefont {K.}~\bibnamefont {Jansen}},\ }\href {https://doi.org/10.48550/arXiv.2410.16363} {\bibinfo {title} {Learning to generate high-dimensional distributions with low-dimensional quantum {B}oltzmann machines}} (\bibinfo {year} {2024})\BibitemShut {NoStop}%
\bibitem [{\citenamefont {Patel}\ \emph {et~al.}(2024)\citenamefont {Patel}, \citenamefont {Koch}, \citenamefont {Patel},\ and\ \citenamefont {Wilde}}]{patel2024quantumboltzmannmachine}%
  \BibitemOpen
  \bibfield  {author} {\bibinfo {author} {\bibfnamefont {D.}~\bibnamefont {Patel}}, \bibinfo {author} {\bibfnamefont {D.}~\bibnamefont {Koch}}, \bibinfo {author} {\bibfnamefont {S.}~\bibnamefont {Patel}},\ and\ \bibinfo {author} {\bibfnamefont {M.~M.}\ \bibnamefont {Wilde}},\ }\href {https://arxiv.org/abs/2410.12935} {\bibinfo {title} {Quantum {B}oltzmann machine learning of ground-state energies}} (\bibinfo {year} {2024}),\ \Eprint {https://arxiv.org/abs/2410.12935} {arXiv:2410.12935 [quant-ph]} \BibitemShut {NoStop}%
\bibitem [{\citenamefont {Liu}\ \emph {et~al.}(2025{\natexlab{a}})\citenamefont {Liu}, \citenamefont {Minervini}, \citenamefont {Patel},\ and\ \citenamefont {Wilde}}]{liu2025quantumthermodynamicssemidefiniteoptimization}%
  \BibitemOpen
  \bibfield  {author} {\bibinfo {author} {\bibfnamefont {N.}~\bibnamefont {Liu}}, \bibinfo {author} {\bibfnamefont {M.}~\bibnamefont {Minervini}}, \bibinfo {author} {\bibfnamefont {D.}~\bibnamefont {Patel}},\ and\ \bibinfo {author} {\bibfnamefont {M.~M.}\ \bibnamefont {Wilde}},\ }\href {https://arxiv.org/abs/2505.04514} {\bibinfo {title} {Quantum thermodynamics and semi-definite optimization}} (\bibinfo {year} {2025}{\natexlab{a}}),\ \Eprint {https://arxiv.org/abs/2505.04514} {arXiv:2505.04514 [quant-ph]} \BibitemShut {NoStop}%
\bibitem [{\citenamefont {Minervini}\ \emph {et~al.}(2025)\citenamefont {Minervini}, \citenamefont {Chin}, \citenamefont {Kupperman}, \citenamefont {Liu}, \citenamefont {Luo}, \citenamefont {Ly}, \citenamefont {Rethinasamy}, \citenamefont {Wang},\ and\ \citenamefont {Wilde}}]{minervini2025constrainedfreeenergyminimization}%
  \BibitemOpen
  \bibfield  {author} {\bibinfo {author} {\bibfnamefont {M.}~\bibnamefont {Minervini}}, \bibinfo {author} {\bibfnamefont {M.}~\bibnamefont {Chin}}, \bibinfo {author} {\bibfnamefont {J.}~\bibnamefont {Kupperman}}, \bibinfo {author} {\bibfnamefont {N.}~\bibnamefont {Liu}}, \bibinfo {author} {\bibfnamefont {I.}~\bibnamefont {Luo}}, \bibinfo {author} {\bibfnamefont {M.}~\bibnamefont {Ly}}, \bibinfo {author} {\bibfnamefont {S.}~\bibnamefont {Rethinasamy}}, \bibinfo {author} {\bibfnamefont {K.}~\bibnamefont {Wang}},\ and\ \bibinfo {author} {\bibfnamefont {M.~M.}\ \bibnamefont {Wilde}},\ }\href {https://arxiv.org/abs/2508.09103} {\bibinfo {title} {Constrained free energy minimization for the design of thermal states and stabilizer thermodynamic systems}} (\bibinfo {year} {2025}),\ \Eprint {https://arxiv.org/abs/2508.09103} {arXiv:2508.09103 [quant-ph]} \BibitemShut {NoStop}%
\bibitem [{\citenamefont {Patel}\ and\ \citenamefont {Wilde}(2025)}]{patel2024naturalgradientparameterestimation}%
  \BibitemOpen
  \bibfield  {author} {\bibinfo {author} {\bibfnamefont {D.}~\bibnamefont {Patel}}\ and\ \bibinfo {author} {\bibfnamefont {M.~M.}\ \bibnamefont {Wilde}},\ }\bibfield  {title} {\bibinfo {title} {Natural gradient and parameter estimation for quantum {B}oltzmann machines},\ }\href {http://dx.doi.org/10.1103/j8nb-by4l} {\bibfield  {journal} {\bibinfo  {journal} {Physical Review A}\ }\textbf {\bibinfo {volume} {112}},\ \bibinfo {pages} {052421} (\bibinfo {year} {2025})}\BibitemShut {NoStop}%
\bibitem [{\citenamefont {Devulapalli}\ \emph {et~al.}(2026)\citenamefont {Devulapalli}, \citenamefont {Mooney},\ and\ \citenamefont {Watson}}]{devulapalli2026complexitythermalizationfinitequantum}%
  \BibitemOpen
  \bibfield  {author} {\bibinfo {author} {\bibfnamefont {D.}~\bibnamefont {Devulapalli}}, \bibinfo {author} {\bibfnamefont {T.~C.}\ \bibnamefont {Mooney}},\ and\ \bibinfo {author} {\bibfnamefont {J.~D.}\ \bibnamefont {Watson}},\ }\href {https://arxiv.org/abs/2507.00405} {\bibinfo {title} {The complexity of thermalization in finite quantum systems}} (\bibinfo {year} {2026}),\ \Eprint {https://arxiv.org/abs/2507.00405} {arXiv:2507.00405 [quant-ph]} \BibitemShut {NoStop}%
\bibitem [{\citenamefont {Lloyd}(1996)}]{lloyd1996universal}%
  \BibitemOpen
  \bibfield  {author} {\bibinfo {author} {\bibfnamefont {S.}~\bibnamefont {Lloyd}},\ }\bibfield  {title} {\bibinfo {title} {Universal quantum simulators},\ }\href {https://doi.org/10.1126/science.273.5278.1073} {\bibfield  {journal} {\bibinfo  {journal} {Science}\ }\textbf {\bibinfo {volume} {273}},\ \bibinfo {pages} {1073} (\bibinfo {year} {1996})}\BibitemShut {NoStop}%
\bibitem [{\citenamefont {Childs}\ \emph {et~al.}(2018)\citenamefont {Childs}, \citenamefont {Maslov}, \citenamefont {Nam}, \citenamefont {Ross},\ and\ \citenamefont {Su}}]{childs2018toward}%
  \BibitemOpen
  \bibfield  {author} {\bibinfo {author} {\bibfnamefont {A.~M.}\ \bibnamefont {Childs}}, \bibinfo {author} {\bibfnamefont {D.}~\bibnamefont {Maslov}}, \bibinfo {author} {\bibfnamefont {Y.}~\bibnamefont {Nam}}, \bibinfo {author} {\bibfnamefont {N.~J.}\ \bibnamefont {Ross}},\ and\ \bibinfo {author} {\bibfnamefont {Y.}~\bibnamefont {Su}},\ }\bibfield  {title} {\bibinfo {title} {Toward the first quantum simulation with quantum speedup},\ }\href {https://doi.org/10.1073/pnas.1801723115} {\bibfield  {journal} {\bibinfo  {journal} {Proceedings of the National Academy of Sciences}\ }\textbf {\bibinfo {volume} {115}},\ \bibinfo {pages} {9456} (\bibinfo {year} {2018})}\BibitemShut {NoStop}%
\bibitem [{\citenamefont {Cleve}\ \emph {et~al.}(1998)\citenamefont {Cleve}, \citenamefont {Ekert}, \citenamefont {Macchiavello},\ and\ \citenamefont {Mosca}}]{Cleve1998}%
  \BibitemOpen
  \bibfield  {author} {\bibinfo {author} {\bibfnamefont {R.}~\bibnamefont {Cleve}}, \bibinfo {author} {\bibfnamefont {A.}~\bibnamefont {Ekert}}, \bibinfo {author} {\bibfnamefont {C.}~\bibnamefont {Macchiavello}},\ and\ \bibinfo {author} {\bibfnamefont {M.}~\bibnamefont {Mosca}},\ }\bibfield  {title} {\bibinfo {title} {Quantum algorithms revisited},\ }\href {https://doi.org/10.1098/rspa.1998.0164} {\bibfield  {journal} {\bibinfo  {journal} {Proceedings of the Royal Society A}\ }\textbf {\bibinfo {volume} {454}},\ \bibinfo {pages} {339} (\bibinfo {year} {1998})}\BibitemShut {NoStop}%
\bibitem [{\citenamefont {Chen}\ \emph {et~al.}(2025)\citenamefont {Chen}, \citenamefont {Westerheim}, \citenamefont {Holmes}, \citenamefont {Luo}, \citenamefont {Nuradha}, \citenamefont {Patel}, \citenamefont {Rethinasamy}, \citenamefont {Wang},\ and\ \citenamefont {Wilde}}]{chen2023qslackslackvariableapproachvariational}%
  \BibitemOpen
  \bibfield  {author} {\bibinfo {author} {\bibfnamefont {J.}~\bibnamefont {Chen}}, \bibinfo {author} {\bibfnamefont {H.}~\bibnamefont {Westerheim}}, \bibinfo {author} {\bibfnamefont {Z.}~\bibnamefont {Holmes}}, \bibinfo {author} {\bibfnamefont {I.}~\bibnamefont {Luo}}, \bibinfo {author} {\bibfnamefont {T.}~\bibnamefont {Nuradha}}, \bibinfo {author} {\bibfnamefont {D.}~\bibnamefont {Patel}}, \bibinfo {author} {\bibfnamefont {S.}~\bibnamefont {Rethinasamy}}, \bibinfo {author} {\bibfnamefont {K.}~\bibnamefont {Wang}},\ and\ \bibinfo {author} {\bibfnamefont {M.~M.}\ \bibnamefont {Wilde}},\ }\bibfield  {title} {\bibinfo {title} {Qslack: A slack-variable approach for variational quantum semi-definite programming},\ }\href {http://dx.doi.org/10.1103/lwxq-4myj} {\bibfield  {journal} {\bibinfo  {journal} {Physical Review A}\ }\textbf {\bibinfo {volume} {112}},\ \bibinfo {pages} {022607} (\bibinfo {year} {2025})}\BibitemShut {NoStop}%
\bibitem [{\citenamefont {Wecker}\ \emph {et~al.}(2015)\citenamefont {Wecker}, \citenamefont {Hastings},\ and\ \citenamefont {Troyer}}]{WH2015}%
  \BibitemOpen
  \bibfield  {author} {\bibinfo {author} {\bibfnamefont {D.}~\bibnamefont {Wecker}}, \bibinfo {author} {\bibfnamefont {M.~B.}\ \bibnamefont {Hastings}},\ and\ \bibinfo {author} {\bibfnamefont {M.}~\bibnamefont {Troyer}},\ }\bibfield  {title} {\bibinfo {title} {Progress towards practical quantum variational algorithms},\ }\href {https://doi.org/10.1103/PhysRevA.92.042303} {\bibfield  {journal} {\bibinfo  {journal} {Physical Review A}\ }\textbf {\bibinfo {volume} {92}},\ \bibinfo {pages} {042303} (\bibinfo {year} {2015})}\BibitemShut {NoStop}%
\bibitem [{\citenamefont {Banchi}\ and\ \citenamefont {Crooks}(2021)}]{Banchi2021measuringanalytic}%
  \BibitemOpen
  \bibfield  {author} {\bibinfo {author} {\bibfnamefont {L.}~\bibnamefont {Banchi}}\ and\ \bibinfo {author} {\bibfnamefont {G.~E.}\ \bibnamefont {Crooks}},\ }\bibfield  {title} {\bibinfo {title} {Measuring analytic gradients of general quantum evolution with the stochastic parameter shift rule},\ }\href {https://doi.org/10.22331/q-2021-01-25-386} {\bibfield  {journal} {\bibinfo  {journal} {{Quantum}}\ }\textbf {\bibinfo {volume} {5}},\ \bibinfo {pages} {386} (\bibinfo {year} {2021})}\BibitemShut {NoStop}%
\bibitem [{\citenamefont {Wiersema}\ \emph {et~al.}(2024)\citenamefont {Wiersema}, \citenamefont {Lewis}, \citenamefont {Wierichs}, \citenamefont {Carrasquilla},\ and\ \citenamefont {Killoran}}]{Wiersema2024herecomessun}%
  \BibitemOpen
  \bibfield  {author} {\bibinfo {author} {\bibfnamefont {R.}~\bibnamefont {Wiersema}}, \bibinfo {author} {\bibfnamefont {D.}~\bibnamefont {Lewis}}, \bibinfo {author} {\bibfnamefont {D.}~\bibnamefont {Wierichs}}, \bibinfo {author} {\bibfnamefont {J.}~\bibnamefont {Carrasquilla}},\ and\ \bibinfo {author} {\bibfnamefont {N.}~\bibnamefont {Killoran}},\ }\bibfield  {title} {\bibinfo {title} {Here comes the {SU}({N}): multivariate quantum gates and gradients},\ }\href {https://doi.org/10.22331/q-2024-03-07-1275} {\bibfield  {journal} {\bibinfo  {journal} {{Quantum}}\ }\textbf {\bibinfo {volume} {8}},\ \bibinfo {pages} {1275} (\bibinfo {year} {2024})}\BibitemShut {NoStop}%
\bibitem [{\citenamefont {Uhlmann}(1976)}]{Uhl76}%
  \BibitemOpen
  \bibfield  {author} {\bibinfo {author} {\bibfnamefont {A.}~\bibnamefont {Uhlmann}},\ }\bibfield  {title} {\bibinfo {title} {The `transition probability' in the state space of a *-algebra},\ }\href {https://www.sciencedirect.com/science/article/pii/0034487776900604} {\bibfield  {journal} {\bibinfo  {journal} {Reports on Mathematical Physics}\ }\textbf {\bibinfo {volume} {9}},\ \bibinfo {pages} {273} (\bibinfo {year} {1976})}\BibitemShut {NoStop}%
\bibitem [{\citenamefont {Holevo}(1972)}]{Kholevo1972}%
  \BibitemOpen
  \bibfield  {author} {\bibinfo {author} {\bibfnamefont {A.~S.}\ \bibnamefont {Holevo}},\ }\bibfield  {title} {\bibinfo {title} {On quasiequivalence of locally normal states},\ }\href {https://doi.org/10.1007/BF01035528} {\bibfield  {journal} {\bibinfo  {journal} {Theoretical and Mathematical Physics}\ }\textbf {\bibinfo {volume} {13}},\ \bibinfo {pages} {1071} (\bibinfo {year} {1972})}\BibitemShut {NoStop}%
\bibitem [{\citenamefont {Umegaki}(1962)}]{Umegaki62}%
  \BibitemOpen
  \bibfield  {author} {\bibinfo {author} {\bibfnamefont {H.}~\bibnamefont {Umegaki}},\ }\bibfield  {title} {\bibinfo {title} {Conditional expectations in an operator algebra {IV} (entropy and information)},\ }\href {https://doi.org/10.2996/kmj/1138844604} {\bibfield  {journal} {\bibinfo  {journal} {Kodai Mathematical Seminar Reports}\ }\textbf {\bibinfo {volume} {14}},\ \bibinfo {pages} {59} (\bibinfo {year} {1962})}\BibitemShut {NoStop}%
\bibitem [{\citenamefont {Luo}(2004)}]{Luo2004}%
  \BibitemOpen
  \bibfield  {author} {\bibinfo {author} {\bibfnamefont {S.}~\bibnamefont {Luo}},\ }\bibfield  {title} {\bibinfo {title} {{W}igner-{Y}anase skew information vs. quantum {F}isher information},\ }\href {http://www.jstor.org/stable/1194711} {\bibfield  {journal} {\bibinfo  {journal} {Proceedings of the American Mathematical Society}\ }\textbf {\bibinfo {volume} {132}},\ \bibinfo {pages} {885} (\bibinfo {year} {2004})}\BibitemShut {NoStop}%
\bibitem [{\citenamefont {Anshu}\ \emph {et~al.}(2021)\citenamefont {Anshu}, \citenamefont {Arunachalam}, \citenamefont {Kuwahara},\ and\ \citenamefont {Soleimanifar}}]{Anshu2021ham_learning_qbm}%
  \BibitemOpen
  \bibfield  {author} {\bibinfo {author} {\bibfnamefont {A.}~\bibnamefont {Anshu}}, \bibinfo {author} {\bibfnamefont {S.}~\bibnamefont {Arunachalam}}, \bibinfo {author} {\bibfnamefont {T.}~\bibnamefont {Kuwahara}},\ and\ \bibinfo {author} {\bibfnamefont {M.}~\bibnamefont {Soleimanifar}},\ }\bibfield  {title} {\bibinfo {title} {Sample-efficient learning of interacting quantum systems},\ }\href {https://doi.org/10.1038/s41567-021-01232-0} {\bibfield  {journal} {\bibinfo  {journal} {Nature Physics}\ }\textbf {\bibinfo {volume} {17}},\ \bibinfo {pages} {931–935} (\bibinfo {year} {2021})}\BibitemShut {NoStop}%
\bibitem [{\citenamefont {Hastings}(2007)}]{Hastings2007}%
  \BibitemOpen
  \bibfield  {author} {\bibinfo {author} {\bibfnamefont {M.~B.}\ \bibnamefont {Hastings}},\ }\bibfield  {title} {\bibinfo {title} {Quantum belief propagation: An algorithm for thermal quantum systems},\ }\href {https://doi.org/10.1103/PhysRevB.76.201102} {\bibfield  {journal} {\bibinfo  {journal} {Physical Review B}\ }\textbf {\bibinfo {volume} {76}},\ \bibinfo {pages} {201102} (\bibinfo {year} {2007})}\BibitemShut {NoStop}%
\bibitem [{\citenamefont {Kim}(2012)}]{Kim2012}%
  \BibitemOpen
  \bibfield  {author} {\bibinfo {author} {\bibfnamefont {I.~H.}\ \bibnamefont {Kim}},\ }\bibfield  {title} {\bibinfo {title} {Perturbative analysis of topological entanglement entropy from conditional independence},\ }\href {https://doi.org/10.1103/PhysRevB.86.245116} {\bibfield  {journal} {\bibinfo  {journal} {Physical Review B}\ }\textbf {\bibinfo {volume} {86}},\ \bibinfo {pages} {245116} (\bibinfo {year} {2012})}\BibitemShut {NoStop}%
\bibitem [{\citenamefont {Ho}(2023)}]{Ho2023}%
  \BibitemOpen
  \bibfield  {author} {\bibinfo {author} {\bibfnamefont {L.~B.}\ \bibnamefont {Ho}},\ }\bibfield  {title} {\bibinfo {title} {A stochastic evaluation of quantum {F}isher information matrix with generic {H}amiltonians},\ }\href {https://doi.org/10.1140/epjqt/s40507-023-00195-w} {\bibfield  {journal} {\bibinfo  {journal} {EPJ Quantum Technology}\ }\textbf {\bibinfo {volume} {10}},\ \bibinfo {pages} {37} (\bibinfo {year} {2023})}\BibitemShut {NoStop}%
\bibitem [{\citenamefont {Low}\ and\ \citenamefont {Chuang}(2019)}]{Low2019hamiltonian}%
  \BibitemOpen
  \bibfield  {author} {\bibinfo {author} {\bibfnamefont {G.~H.}\ \bibnamefont {Low}}\ and\ \bibinfo {author} {\bibfnamefont {I.~L.}\ \bibnamefont {Chuang}},\ }\bibfield  {title} {\bibinfo {title} {Hamiltonian simulation by qubitization},\ }\href {https://doi.org/10.22331/q-2019-07-12-163} {\bibfield  {journal} {\bibinfo  {journal} {{Quantum}}\ }\textbf {\bibinfo {volume} {3}},\ \bibinfo {pages} {163} (\bibinfo {year} {2019})}\BibitemShut {NoStop}%
\bibitem [{\citenamefont {Gily\'{e}n}\ \emph {et~al.}(2019)\citenamefont {Gily\'{e}n}, \citenamefont {Su}, \citenamefont {Low},\ and\ \citenamefont {Wiebe}}]{Gilyen2019}%
  \BibitemOpen
  \bibfield  {author} {\bibinfo {author} {\bibfnamefont {A.}~\bibnamefont {Gily\'{e}n}}, \bibinfo {author} {\bibfnamefont {Y.}~\bibnamefont {Su}}, \bibinfo {author} {\bibfnamefont {G.~H.}\ \bibnamefont {Low}},\ and\ \bibinfo {author} {\bibfnamefont {N.}~\bibnamefont {Wiebe}},\ }\bibfield  {title} {\bibinfo {title} {Quantum singular value transformation and beyond: exponential improvements for quantum matrix arithmetics},\ }in\ \href {https://doi.org/10.1145/3313276.3316366} {\emph {\bibinfo {booktitle} {Proceedings of the 51st Annual ACM SIGACT Symposium on Theory of Computing}}},\ \bibinfo {series and number} {STOC 2019}\ (\bibinfo  {publisher} {Association for Computing Machinery},\ \bibinfo {address} {New York, NY, USA},\ \bibinfo {year} {2019})\ pp.\ \bibinfo {pages} {193--204}\BibitemShut {NoStop}%
\bibitem [{\citenamefont {Wilde}(2017)}]{Wbook17}%
  \BibitemOpen
  \bibfield  {author} {\bibinfo {author} {\bibfnamefont {M.~M.}\ \bibnamefont {Wilde}},\ }\href {https://doi.org/10.1017/CBO9781139525343} {\emph {\bibinfo {title} {Quantum Information Theory}}},\ \bibinfo {edition} {2nd}\ ed.\ (\bibinfo  {publisher} {Cambridge University Press},\ \bibinfo {year} {2017})\ \Eprint {https://arxiv.org/abs/1106.1445} {arXiv:1106.1445} \BibitemShut {NoStop}%
\bibitem [{\citenamefont {Bengtsson}\ and\ \citenamefont {Zyczkowski}(2006)}]{Bengtsson2006}%
  \BibitemOpen
  \bibfield  {author} {\bibinfo {author} {\bibfnamefont {I.}~\bibnamefont {Bengtsson}}\ and\ \bibinfo {author} {\bibfnamefont {K.}~\bibnamefont {Zyczkowski}},\ }\href {https://doi.org/10.1017/CBO9780511535048} {\emph {\bibinfo {title} {Geometry of Quantum States: An Introduction to Quantum Entanglement}}}\ (\bibinfo  {publisher} {Cambridge University Press},\ \bibinfo {year} {2006})\BibitemShut {NoStop}%
\bibitem [{\citenamefont {Liu}\ \emph {et~al.}(2019)\citenamefont {Liu}, \citenamefont {Yuan}, \citenamefont {Lu},\ and\ \citenamefont {Wang}}]{Liu2019}%
  \BibitemOpen
  \bibfield  {author} {\bibinfo {author} {\bibfnamefont {J.}~\bibnamefont {Liu}}, \bibinfo {author} {\bibfnamefont {H.}~\bibnamefont {Yuan}}, \bibinfo {author} {\bibfnamefont {X.-M.}\ \bibnamefont {Lu}},\ and\ \bibinfo {author} {\bibfnamefont {X.}~\bibnamefont {Wang}},\ }\bibfield  {title} {\bibinfo {title} {Quantum {F}isher information matrix and multiparameter estimation},\ }\href {https://doi.org/10.1088/1751-8121/ab5d4d} {\bibfield  {journal} {\bibinfo  {journal} {Journal of Physics A: Mathematical and Theoretical}\ }\textbf {\bibinfo {volume} {53}},\ \bibinfo {pages} {023001} (\bibinfo {year} {2019})}\BibitemShut {NoStop}%
\bibitem [{\citenamefont {Sidhu}\ and\ \citenamefont {Kok}(2020)}]{Sidhu2020}%
  \BibitemOpen
  \bibfield  {author} {\bibinfo {author} {\bibfnamefont {J.~S.}\ \bibnamefont {Sidhu}}\ and\ \bibinfo {author} {\bibfnamefont {P.}~\bibnamefont {Kok}},\ }\bibfield  {title} {\bibinfo {title} {Geometric perspective on quantum parameter estimation},\ }\href {https://doi.org/10.1116/1.5119961} {\bibfield  {journal} {\bibinfo  {journal} {AVS Quantum Science}\ }\textbf {\bibinfo {volume} {2}},\ \bibinfo {pages} {014701} (\bibinfo {year} {2020})}\BibitemShut {NoStop}%
\bibitem [{\citenamefont {Jarzyna}\ and\ \citenamefont {Kolodynski}(2020)}]{Jarzyna2020}%
  \BibitemOpen
  \bibfield  {author} {\bibinfo {author} {\bibfnamefont {M.}~\bibnamefont {Jarzyna}}\ and\ \bibinfo {author} {\bibfnamefont {J.}~\bibnamefont {Kolodynski}},\ }\bibfield  {title} {\bibinfo {title} {Geometric approach to quantum statistical inference},\ }\href {https://doi.org/10.1109/JSAIT.2020.3017469} {\bibfield  {journal} {\bibinfo  {journal} {IEEE Journal on Selected Areas in Information Theory}\ }\textbf {\bibinfo {volume} {1}},\ \bibinfo {pages} {367} (\bibinfo {year} {2020})}\BibitemShut {NoStop}%
\bibitem [{\citenamefont {Meyer}(2021)}]{Meyer2021fisherinformationin}%
  \BibitemOpen
  \bibfield  {author} {\bibinfo {author} {\bibfnamefont {J.~J.}\ \bibnamefont {Meyer}},\ }\bibfield  {title} {\bibinfo {title} {Fisher information in noisy intermediate-scale quantum applications},\ }\href {https://doi.org/10.22331/q-2021-09-09-539} {\bibfield  {journal} {\bibinfo  {journal} {{Quantum}}\ }\textbf {\bibinfo {volume} {5}},\ \bibinfo {pages} {539} (\bibinfo {year} {2021})}\BibitemShut {NoStop}%
\bibitem [{\citenamefont {Sbahi}\ \emph {et~al.}(2022)\citenamefont {Sbahi}, \citenamefont {Martinez}, \citenamefont {Patel}, \citenamefont {Saberi}, \citenamefont {Yoo}, \citenamefont {Roeder},\ and\ \citenamefont {Verdon}}]{sbahi2022provablyefficientvariationalgenerative}%
  \BibitemOpen
  \bibfield  {author} {\bibinfo {author} {\bibfnamefont {F.~M.}\ \bibnamefont {Sbahi}}, \bibinfo {author} {\bibfnamefont {A.~J.}\ \bibnamefont {Martinez}}, \bibinfo {author} {\bibfnamefont {S.}~\bibnamefont {Patel}}, \bibinfo {author} {\bibfnamefont {D.}~\bibnamefont {Saberi}}, \bibinfo {author} {\bibfnamefont {J.~H.}\ \bibnamefont {Yoo}}, \bibinfo {author} {\bibfnamefont {G.}~\bibnamefont {Roeder}},\ and\ \bibinfo {author} {\bibfnamefont {G.}~\bibnamefont {Verdon}},\ }\href {https://arxiv.org/abs/2206.04663v1} {\bibinfo {title} {Provably efficient variational generative modeling of quantum many-body systems via quantum-probabilistic information geometry}} (\bibinfo {year} {2022}),\ \Eprint {https://arxiv.org/abs/2206.04663v1} {arXiv:2206.04663v1 [quant-ph]} \BibitemShut {NoStop}%
\bibitem [{\citenamefont {Scandi}\ \emph {et~al.}(2025)\citenamefont {Scandi}, \citenamefont {Abiuso}, \citenamefont {Surace},\ and\ \citenamefont {De~Santis}}]{scandi2024quantumfisherinformationdynamical}%
  \BibitemOpen
  \bibfield  {author} {\bibinfo {author} {\bibfnamefont {M.}~\bibnamefont {Scandi}}, \bibinfo {author} {\bibfnamefont {P.}~\bibnamefont {Abiuso}}, \bibinfo {author} {\bibfnamefont {J.}~\bibnamefont {Surace}},\ and\ \bibinfo {author} {\bibfnamefont {D.}~\bibnamefont {De~Santis}},\ }\bibfield  {title} {\bibinfo {title} {Quantum {F}isher information and its dynamical nature},\ }\href {https://doi.org/10.1088/1361-6633/ade453} {\bibfield  {journal} {\bibinfo  {journal} {Reports on Progress in Physics}\ }\textbf {\bibinfo {volume} {88}},\ \bibinfo {pages} {076001} (\bibinfo {year} {2025})}\BibitemShut {NoStop}%
\bibitem [{\citenamefont {Petz}(1985)}]{P85}%
  \BibitemOpen
  \bibfield  {author} {\bibinfo {author} {\bibfnamefont {D.}~\bibnamefont {Petz}},\ }\bibfield  {title} {\bibinfo {title} {Quasi-entropies for states of a von {N}eumann algebra},\ }\href {https://doi.org/10.2977/prims/1195178929} {\bibfield  {journal} {\bibinfo  {journal} {Publications of the Research Institute for Mathematical Sciences}\ }\textbf {\bibinfo {volume} {21}},\ \bibinfo {pages} {787} (\bibinfo {year} {1985})}\BibitemShut {NoStop}%
\bibitem [{\citenamefont {Petz}(1986)}]{P86}%
  \BibitemOpen
  \bibfield  {author} {\bibinfo {author} {\bibfnamefont {D.}~\bibnamefont {Petz}},\ }\bibfield  {title} {\bibinfo {title} {Quasi-entropies for finite quantum systems},\ }\href {https://doi.org/10.1016/0034-4877(86)90067-4} {\bibfield  {journal} {\bibinfo  {journal} {Reports in Mathematical Physics}\ }\textbf {\bibinfo {volume} {23}},\ \bibinfo {pages} {57} (\bibinfo {year} {1986})}\BibitemShut {NoStop}%
\bibitem [{\citenamefont {{M\"uller}-Lennert}\ \emph {et~al.}(2013)\citenamefont {{M\"uller}-Lennert}, \citenamefont {Dupuis}, \citenamefont {Szehr}, \citenamefont {Fehr},\ and\ \citenamefont {Tomamichel}}]{MDSFT13}%
  \BibitemOpen
  \bibfield  {author} {\bibinfo {author} {\bibfnamefont {M.}~\bibnamefont {{M\"uller}-Lennert}}, \bibinfo {author} {\bibfnamefont {F.}~\bibnamefont {Dupuis}}, \bibinfo {author} {\bibfnamefont {O.}~\bibnamefont {Szehr}}, \bibinfo {author} {\bibfnamefont {S.}~\bibnamefont {Fehr}},\ and\ \bibinfo {author} {\bibfnamefont {M.}~\bibnamefont {Tomamichel}},\ }\bibfield  {title} {\bibinfo {title} {On quantum {R\'enyi} entropies: a new generalization and some properties},\ }\href {https://doi.org/10.1063/1.4838856} {\bibfield  {journal} {\bibinfo  {journal} {Journal of Mathematical Physics}\ }\textbf {\bibinfo {volume} {54}},\ \bibinfo {pages} {122203} (\bibinfo {year} {2013})},\ \Eprint {https://arxiv.org/abs/1306.3142} {arXiv:1306.3142} \BibitemShut {NoStop}%
\bibitem [{\citenamefont {Wilde}\ \emph {et~al.}(2014)\citenamefont {Wilde}, \citenamefont {Winter},\ and\ \citenamefont {Yang}}]{WWY14}%
  \BibitemOpen
  \bibfield  {author} {\bibinfo {author} {\bibfnamefont {M.~M.}\ \bibnamefont {Wilde}}, \bibinfo {author} {\bibfnamefont {A.}~\bibnamefont {Winter}},\ and\ \bibinfo {author} {\bibfnamefont {D.}~\bibnamefont {Yang}},\ }\bibfield  {title} {\bibinfo {title} {Strong converse for the classical capacity of entanglement-breaking and {Hadamard} channels via a sandwiched {R\'enyi} relative entropy},\ }\href {https://doi.org/10.1007/s00220-014-2122-x} {\bibfield  {journal} {\bibinfo  {journal} {Communications in Mathematical Physics}\ }\textbf {\bibinfo {volume} {331}},\ \bibinfo {pages} {593} (\bibinfo {year} {2014})},\ \Eprint {https://arxiv.org/abs/1306.1586} {arXiv:1306.1586} \BibitemShut {NoStop}%
\bibitem [{\citenamefont {Lindblad}(1975)}]{Lindblad1975}%
  \BibitemOpen
  \bibfield  {author} {\bibinfo {author} {\bibfnamefont {G.}~\bibnamefont {Lindblad}},\ }\bibfield  {title} {\bibinfo {title} {Completely positive maps and entropy inequalities},\ }\href {https://doi.org/10.1007/BF01609396} {\bibfield  {journal} {\bibinfo  {journal} {Communications in Mathematical Physics}\ }\textbf {\bibinfo {volume} {40}},\ \bibinfo {pages} {147} (\bibinfo {year} {1975})}\BibitemShut {NoStop}%
\bibitem [{\citenamefont {Frank}\ and\ \citenamefont {Lieb}(2013)}]{FL13}%
  \BibitemOpen
  \bibfield  {author} {\bibinfo {author} {\bibfnamefont {R.~L.}\ \bibnamefont {Frank}}\ and\ \bibinfo {author} {\bibfnamefont {E.~H.}\ \bibnamefont {Lieb}},\ }\bibfield  {title} {\bibinfo {title} {Monotonicity of a relative {R\'enyi} entropy},\ }\href {https://doi.org/10.1063/1.4838835} {\bibfield  {journal} {\bibinfo  {journal} {Journal of Mathematical Physics}\ }\textbf {\bibinfo {volume} {54}},\ \bibinfo {pages} {122201} (\bibinfo {year} {2013})},\ \Eprint {https://arxiv.org/abs/1306.5358} {arXiv:1306.5358} \BibitemShut {NoStop}%
\bibitem [{\citenamefont {Wilde}(2018)}]{W18opt}%
  \BibitemOpen
  \bibfield  {author} {\bibinfo {author} {\bibfnamefont {M.~M.}\ \bibnamefont {Wilde}},\ }\bibfield  {title} {\bibinfo {title} {Optimized quantum $f$-divergences and data processing},\ }\href {https://doi.org/10.1088/1751-8121/aad5a1} {\bibfield  {journal} {\bibinfo  {journal} {Journal of Physics A}\ }\textbf {\bibinfo {volume} {51}},\ \bibinfo {pages} {374002} (\bibinfo {year} {2018})},\ \Eprint {https://arxiv.org/abs/1710.10252} {arXiv:1710.10252} \BibitemShut {NoStop}%
\bibitem [{\citenamefont {Audenaert}\ \emph {et~al.}(2008)\citenamefont {Audenaert}, \citenamefont {Nussbaum}, \citenamefont {Szko{\l}a},\ and\ \citenamefont {Verstraete}}]{audenaert2008asymptotic}%
  \BibitemOpen
  \bibfield  {author} {\bibinfo {author} {\bibfnamefont {K.~M.~R.}\ \bibnamefont {Audenaert}}, \bibinfo {author} {\bibfnamefont {M.}~\bibnamefont {Nussbaum}}, \bibinfo {author} {\bibfnamefont {A.}~\bibnamefont {Szko{\l}a}},\ and\ \bibinfo {author} {\bibfnamefont {F.}~\bibnamefont {Verstraete}},\ }\bibfield  {title} {\bibinfo {title} {Asymptotic error rates in quantum hypothesis testing},\ }\href {https://doi.org/10.1007/s00220-008-0417-5} {\bibfield  {journal} {\bibinfo  {journal} {Communications in Mathematical Physics}\ }\textbf {\bibinfo {volume} {279}},\ \bibinfo {pages} {251} (\bibinfo {year} {2008})},\ \Eprint {https://arxiv.org/abs/0708.4282} {arXiv:0708.4282} \BibitemShut {NoStop}%
\bibitem [{\citenamefont {{{Hubner}}}(1992)}]{Hub92}%
  \BibitemOpen
  \bibfield  {author} {\bibinfo {author} {\bibfnamefont {M.}~\bibnamefont {{{Hubner}}}},\ }\bibfield  {title} {\bibinfo {title} {Explicit computation of the {B}ures distance for density matrices},\ }\href {https://doi.org/https://doi.org/10.1016/0375-9601(92)91004-B} {\bibfield  {journal} {\bibinfo  {journal} {Physics Letters A}\ }\textbf {\bibinfo {volume} {163}},\ \bibinfo {pages} {239} (\bibinfo {year} {1992})}\BibitemShut {NoStop}%
\bibitem [{\citenamefont {Gibilisco}\ and\ \citenamefont {Isola}(2003)}]{Gibilisco2003}%
  \BibitemOpen
  \bibfield  {author} {\bibinfo {author} {\bibfnamefont {P.}~\bibnamefont {Gibilisco}}\ and\ \bibinfo {author} {\bibfnamefont {T.}~\bibnamefont {Isola}},\ }\bibfield  {title} {\bibinfo {title} {{Wigner--Yanase} information on quantum state space: The geometric approach},\ }\href {https://doi.org/10.1063/1.1598279} {\bibfield  {journal} {\bibinfo  {journal} {Journal of Mathematical Physics}\ }\textbf {\bibinfo {volume} {44}},\ \bibinfo {pages} {3752} (\bibinfo {year} {2003})}\BibitemShut {NoStop}%
\bibitem [{\citenamefont {Petz}\ and\ \citenamefont {Toth}(1993)}]{PetzToth1993}%
  \BibitemOpen
  \bibfield  {author} {\bibinfo {author} {\bibfnamefont {D.}~\bibnamefont {Petz}}\ and\ \bibinfo {author} {\bibfnamefont {G.}~\bibnamefont {Toth}},\ }\bibfield  {title} {\bibinfo {title} {The {B}ogoliubov inner product in quantum statistics},\ }\href {https://doi.org/10.1007/BF00739578} {\bibfield  {journal} {\bibinfo  {journal} {Letters in Mathematical Physics}\ }\textbf {\bibinfo {volume} {27}},\ \bibinfo {pages} {205} (\bibinfo {year} {1993})}\BibitemShut {NoStop}%
\bibitem [{\citenamefont {van Erven}\ and\ \citenamefont {Harremos}(2014)}]{vEH2014}%
  \BibitemOpen
  \bibfield  {author} {\bibinfo {author} {\bibfnamefont {T.}~\bibnamefont {van Erven}}\ and\ \bibinfo {author} {\bibfnamefont {P.}~\bibnamefont {Harremos}},\ }\bibfield  {title} {\bibinfo {title} {Rényi divergence and {K}ullback-{L}eibler divergence},\ }\href {https://doi.org/10.1109/TIT.2014.2320500} {\bibfield  {journal} {\bibinfo  {journal} {IEEE Transactions on Information Theory}\ }\textbf {\bibinfo {volume} {60}},\ \bibinfo {pages} {3797} (\bibinfo {year} {2014})}\BibitemShut {NoStop}%
\bibitem [{\citenamefont {Matsumoto}(2010)}]{Mat10fid}%
  \BibitemOpen
  \bibfield  {author} {\bibinfo {author} {\bibfnamefont {K.}~\bibnamefont {Matsumoto}},\ }\href@noop {} {\bibinfo {title} {Reverse test and quantum analogue of classical fidelity and generalized fidelity}} (\bibinfo {year} {2010}),\ \Eprint {https://arxiv.org/abs/1006.0302} {arXiv:1006.0302} \BibitemShut {NoStop}%
\bibitem [{\citenamefont {Matsumoto}(2013)}]{Mat13}%
  \BibitemOpen
  \bibfield  {author} {\bibinfo {author} {\bibfnamefont {K.}~\bibnamefont {Matsumoto}},\ }\href@noop {} {\bibinfo {title} {A new quantum version of $f$-divergence}} (\bibinfo {year} {2013}),\ \Eprint {https://arxiv.org/abs/1311.4722} {arXiv:1311.4722} \BibitemShut {NoStop}%
\bibitem [{\citenamefont {Matsumoto}(2018)}]{Matsumoto2018}%
  \BibitemOpen
  \bibfield  {author} {\bibinfo {author} {\bibfnamefont {K.}~\bibnamefont {Matsumoto}},\ }\bibfield  {title} {\bibinfo {title} {A new quantum version of $f$-divergence},\ }in\ \href {https://doi.org/10.1007/978-981-13-2487-1_10} {\emph {\bibinfo {booktitle} {Reality and Measurement in Algebraic Quantum Theory}}},\ Vol.\ \bibinfo {volume} {261},\ \bibinfo {editor} {edited by\ \bibinfo {editor} {\bibfnamefont {M.}~\bibnamefont {Ozawa}}, \bibinfo {editor} {\bibfnamefont {J.}~\bibnamefont {Butterfield}}, \bibinfo {editor} {\bibfnamefont {H.}~\bibnamefont {Halvorson}}, \bibinfo {editor} {\bibfnamefont {M.}~\bibnamefont {R{\'e}dei}}, \bibinfo {editor} {\bibfnamefont {Y.}~\bibnamefont {Kitajima}},\ and\ \bibinfo {editor} {\bibfnamefont {F.}~\bibnamefont {Buscemi}}}\ (\bibinfo  {publisher} {Springer Singapore},\ \bibinfo {address} {Singapore},\ \bibinfo {year} {2018})\ pp.\ \bibinfo {pages} {229--273},\ \bibinfo {note} {series Title: Springer Proceedings in Mathematics \& Statistics}\BibitemShut {NoStop}%
\bibitem [{\citenamefont {Crooks}(2007)}]{Crooks2007}%
  \BibitemOpen
  \bibfield  {author} {\bibinfo {author} {\bibfnamefont {G.~E.}\ \bibnamefont {Crooks}},\ }\bibfield  {title} {\bibinfo {title} {Measuring thermodynamic length},\ }\href {https://doi.org/10.1103/PhysRevLett.99.100602} {\bibfield  {journal} {\bibinfo  {journal} {Physical Review Letters}\ }\textbf {\bibinfo {volume} {99}},\ \bibinfo {pages} {100602} (\bibinfo {year} {2007})}\BibitemShut {NoStop}%
\bibitem [{\citenamefont {Li}\ \emph {et~al.}(2025)\citenamefont {Li}, \citenamefont {Dulal}, \citenamefont {Ohorodnikov}, \citenamefont {Wang},\ and\ \citenamefont {Ding}}]{liEfficientQuantumGradient2024}%
  \BibitemOpen
  \bibfield  {author} {\bibinfo {author} {\bibfnamefont {D.}~\bibnamefont {Li}}, \bibinfo {author} {\bibfnamefont {D.}~\bibnamefont {Dulal}}, \bibinfo {author} {\bibfnamefont {M.}~\bibnamefont {Ohorodnikov}}, \bibinfo {author} {\bibfnamefont {H.}~\bibnamefont {Wang}},\ and\ \bibinfo {author} {\bibfnamefont {Y.}~\bibnamefont {Ding}},\ }\bibfield  {title} {\bibinfo {title} {Efficient quantum gradient and higher-order derivative estimation via generalized {H}adamard test},\ }in\ \href {https://doi.ieeecomputersociety.org/10.1109/QCE65121.2025.00024} {\emph {\bibinfo {booktitle} {2025 IEEE International Conference on Quantum Computing and Engineering (QCE)}}}\ (\bibinfo  {publisher} {IEEE Computer Society},\ \bibinfo {address} {Los Alamitos, CA, USA},\ \bibinfo {year} {2025})\ pp.\ \bibinfo {pages} {130--141}\BibitemShut {NoStop}%
\bibitem [{\citenamefont {Hansen}(2008)}]{Hansen-Metric_adjusted_skew_information}%
  \BibitemOpen
  \bibfield  {author} {\bibinfo {author} {\bibfnamefont {F.}~\bibnamefont {Hansen}},\ }\bibfield  {title} {\bibinfo {title} {Metric adjusted skew information},\ }\href {https://doi.org/10.1073/pnas.0803323105} {\bibfield  {journal} {\bibinfo  {journal} {Proceedings of the National Academy of Sciences}\ }\textbf {\bibinfo {volume} {105}},\ \bibinfo {pages} {9909} (\bibinfo {year} {2008})}\BibitemShut {NoStop}%
\bibitem [{\citenamefont {Cottrell}\ \emph {et~al.}(2019)\citenamefont {Cottrell}, \citenamefont {Freivogel}, \citenamefont {Hofman},\ and\ \citenamefont {Lokhande}}]{Cottrell2019}%
  \BibitemOpen
  \bibfield  {author} {\bibinfo {author} {\bibfnamefont {W.}~\bibnamefont {Cottrell}}, \bibinfo {author} {\bibfnamefont {B.}~\bibnamefont {Freivogel}}, \bibinfo {author} {\bibfnamefont {D.~M.}\ \bibnamefont {Hofman}},\ and\ \bibinfo {author} {\bibfnamefont {S.~F.}\ \bibnamefont {Lokhande}},\ }\bibfield  {title} {\bibinfo {title} {How to build the thermofield double state},\ }\href {https://doi.org/10.1007/JHEP02(2019)058} {\bibfield  {journal} {\bibinfo  {journal} {Journal of High Energy Physics}\ }\textbf {\bibinfo {volume} {2019}},\ \bibinfo {pages} {58} (\bibinfo {year} {2019})}\BibitemShut {NoStop}%
\bibitem [{\citenamefont {Holmes}\ \emph {et~al.}(2022{\natexlab{b}})\citenamefont {Holmes}, \citenamefont {Muraleedharan}, \citenamefont {Somma}, \citenamefont {Subasi},\ and\ \citenamefont {{\c{S}}ahino{\u{g}}lu}}]{Holmes2022quantumalgorithms}%
  \BibitemOpen
  \bibfield  {author} {\bibinfo {author} {\bibfnamefont {Z.}~\bibnamefont {Holmes}}, \bibinfo {author} {\bibfnamefont {G.}~\bibnamefont {Muraleedharan}}, \bibinfo {author} {\bibfnamefont {R.~D.}\ \bibnamefont {Somma}}, \bibinfo {author} {\bibfnamefont {Y.}~\bibnamefont {Subasi}},\ and\ \bibinfo {author} {\bibfnamefont {B.}~\bibnamefont {{\c{S}}ahino{\u{g}}lu}},\ }\bibfield  {title} {\bibinfo {title} {Quantum algorithms from fluctuation theorems: {T}hermal-state preparation},\ }\href {https://doi.org/10.22331/q-2022-10-06-825} {\bibfield  {journal} {\bibinfo  {journal} {{Quantum}}\ }\textbf {\bibinfo {volume} {6}},\ \bibinfo {pages} {825} (\bibinfo {year} {2022}{\natexlab{b}})}\BibitemShut {NoStop}%
\bibitem [{\citenamefont {Amari}(1998)}]{amari1998nat_grad}%
  \BibitemOpen
  \bibfield  {author} {\bibinfo {author} {\bibfnamefont {S.-I.}\ \bibnamefont {Amari}},\ }\bibfield  {title} {\bibinfo {title} {Natural gradient works efficiently in learning},\ }\href {https://doi.org/10.1162/089976698300017746} {\bibfield  {journal} {\bibinfo  {journal} {Neural Computation}\ }\textbf {\bibinfo {volume} {10}},\ \bibinfo {pages} {251} (\bibinfo {year} {1998})}\BibitemShut {NoStop}%
\bibitem [{\citenamefont {Stokes}\ \emph {et~al.}(2020)\citenamefont {Stokes}, \citenamefont {Izaac}, \citenamefont {Killoran},\ and\ \citenamefont {Carleo}}]{Stokes2020quantumnatural}%
  \BibitemOpen
  \bibfield  {author} {\bibinfo {author} {\bibfnamefont {J.}~\bibnamefont {Stokes}}, \bibinfo {author} {\bibfnamefont {J.}~\bibnamefont {Izaac}}, \bibinfo {author} {\bibfnamefont {N.}~\bibnamefont {Killoran}},\ and\ \bibinfo {author} {\bibfnamefont {G.}~\bibnamefont {Carleo}},\ }\bibfield  {title} {\bibinfo {title} {Quantum natural gradient},\ }\href {https://doi.org/10.22331/q-2020-05-25-269} {\bibfield  {journal} {\bibinfo  {journal} {{Quantum}}\ }\textbf {\bibinfo {volume} {4}},\ \bibinfo {pages} {269} (\bibinfo {year} {2020})}\BibitemShut {NoStop}%
\bibitem [{\citenamefont {Koczor}\ and\ \citenamefont {Benjamin}(2022)}]{KS2022}%
  \BibitemOpen
  \bibfield  {author} {\bibinfo {author} {\bibfnamefont {B.}~\bibnamefont {Koczor}}\ and\ \bibinfo {author} {\bibfnamefont {S.~C.}\ \bibnamefont {Benjamin}},\ }\bibfield  {title} {\bibinfo {title} {Quantum natural gradient generalized to noisy and nonunitary circuits},\ }\href {https://doi.org/10.1103/PhysRevA.106.062416} {\bibfield  {journal} {\bibinfo  {journal} {Physical Review A}\ }\textbf {\bibinfo {volume} {106}},\ \bibinfo {pages} {062416} (\bibinfo {year} {2022})}\BibitemShut {NoStop}%
\bibitem [{\citenamefont {Sohail}\ \emph {et~al.}(2025)\citenamefont {Sohail}, \citenamefont {Heidari},\ and\ \citenamefont {Pradhan}}]{sohail2024quantumnaturalstochasticpairwise}%
  \BibitemOpen
  \bibfield  {author} {\bibinfo {author} {\bibfnamefont {M.~A.}\ \bibnamefont {Sohail}}, \bibinfo {author} {\bibfnamefont {M.}~\bibnamefont {Heidari}},\ and\ \bibinfo {author} {\bibfnamefont {S.~S.}\ \bibnamefont {Pradhan}},\ }\bibfield  {title} {\bibinfo {title} {Quantum natural stochastic pairwise coordinate descent},\ }\href {https://doi.org/10.1038/s41534-025-01047-4} {\bibfield  {journal} {\bibinfo  {journal} {npj Quantum Information}\ }\textbf {\bibinfo {volume} {11}},\ \bibinfo {pages} {109} (\bibinfo {year} {2025})}\BibitemShut {NoStop}%
\bibitem [{\citenamefont {van Straaten}\ and\ \citenamefont {Koczor}(2021)}]{vSK2021}%
  \BibitemOpen
  \bibfield  {author} {\bibinfo {author} {\bibfnamefont {B.}~\bibnamefont {van Straaten}}\ and\ \bibinfo {author} {\bibfnamefont {B.}~\bibnamefont {Koczor}},\ }\bibfield  {title} {\bibinfo {title} {Measurement cost of metric-aware variational quantum algorithms},\ }\href {https://doi.org/10.1103/PRXQuantum.2.030324} {\bibfield  {journal} {\bibinfo  {journal} {PRX Quantum}\ }\textbf {\bibinfo {volume} {2}},\ \bibinfo {pages} {030324} (\bibinfo {year} {2021})}\BibitemShut {NoStop}%
\bibitem [{\citenamefont {Baumgratz}\ and\ \citenamefont {Datta}(2016)}]{BD2016}%
  \BibitemOpen
  \bibfield  {author} {\bibinfo {author} {\bibfnamefont {T.}~\bibnamefont {Baumgratz}}\ and\ \bibinfo {author} {\bibfnamefont {A.}~\bibnamefont {Datta}},\ }\bibfield  {title} {\bibinfo {title} {Quantum enhanced estimation of a multidimensional field},\ }\href {https://doi.org/10.1103/PhysRevLett.116.030801} {\bibfield  {journal} {\bibinfo  {journal} {Physical Review Letters}\ }\textbf {\bibinfo {volume} {116}},\ \bibinfo {pages} {030801} (\bibinfo {year} {2016})}\BibitemShut {NoStop}%
\bibitem [{\citenamefont {Garc\'{\i}a-Pintos}\ \emph {et~al.}(2024)\citenamefont {Garc\'{\i}a-Pintos}, \citenamefont {Bharti}, \citenamefont {Bringewatt}, \citenamefont {Dehghani}, \citenamefont {Ehrenberg}, \citenamefont {Yunger~Halpern},\ and\ \citenamefont {Gorshkov}}]{GarciaPintos2024ham_learning_qbm}%
  \BibitemOpen
  \bibfield  {author} {\bibinfo {author} {\bibfnamefont {L.~P.}\ \bibnamefont {Garc\'{\i}a-Pintos}}, \bibinfo {author} {\bibfnamefont {K.}~\bibnamefont {Bharti}}, \bibinfo {author} {\bibfnamefont {J.}~\bibnamefont {Bringewatt}}, \bibinfo {author} {\bibfnamefont {H.}~\bibnamefont {Dehghani}}, \bibinfo {author} {\bibfnamefont {A.}~\bibnamefont {Ehrenberg}}, \bibinfo {author} {\bibfnamefont {N.}~\bibnamefont {Yunger~Halpern}},\ and\ \bibinfo {author} {\bibfnamefont {A.~V.}\ \bibnamefont {Gorshkov}},\ }\bibfield  {title} {\bibinfo {title} {Estimation of {H}amiltonian parameters from thermal states},\ }\href {https://doi.org/10.1103/PhysRevLett.133.040802} {\bibfield  {journal} {\bibinfo  {journal} {Physical Review Letters}\ }\textbf {\bibinfo {volume} {133}},\ \bibinfo {pages} {040802} (\bibinfo {year} {2024})}\BibitemShut {NoStop}%
\bibitem [{\citenamefont {Abiuso}\ \emph {et~al.}(2025)\citenamefont {Abiuso}, \citenamefont {Sekatski}, \citenamefont {Calsamiglia},\ and\ \citenamefont {Perarnau-Llobet}}]{abiuso2024limitsmetrologythermal}%
  \BibitemOpen
  \bibfield  {author} {\bibinfo {author} {\bibfnamefont {P.}~\bibnamefont {Abiuso}}, \bibinfo {author} {\bibfnamefont {P.}~\bibnamefont {Sekatski}}, \bibinfo {author} {\bibfnamefont {J.}~\bibnamefont {Calsamiglia}},\ and\ \bibinfo {author} {\bibfnamefont {M.}~\bibnamefont {Perarnau-Llobet}},\ }\bibfield  {title} {\bibinfo {title} {Fundamental limits of metrology at thermal equilibrium},\ }\href {http://dx.doi.org/10.1103/PhysRevLett.134.010801} {\bibfield  {journal} {\bibinfo  {journal} {Physical Review Letters}\ }\textbf {\bibinfo {volume} {134}},\ \bibinfo {pages} {010801} (\bibinfo {year} {2025})}\BibitemShut {NoStop}%
\bibitem [{\citenamefont {Huang}\ and\ \citenamefont {Wilde}(2024)}]{huang2024informationgeometrybosonicgaussian}%
  \BibitemOpen
  \bibfield  {author} {\bibinfo {author} {\bibfnamefont {Z.}~\bibnamefont {Huang}}\ and\ \bibinfo {author} {\bibfnamefont {M.~M.}\ \bibnamefont {Wilde}},\ }\href {https://arxiv.org/abs/2411.18268} {\bibinfo {title} {Information geometry of bosonic {G}aussian thermal states}} (\bibinfo {year} {2024}),\ \Eprint {https://arxiv.org/abs/2411.18268} {arXiv:2411.18268 [quant-ph]} \BibitemShut {NoStop}%
\bibitem [{\citenamefont {Wilde}(2025{\natexlab{a}})}]{wilde2025generativemodelingusingevolved}%
  \BibitemOpen
  \bibfield  {author} {\bibinfo {author} {\bibfnamefont {M.~M.}\ \bibnamefont {Wilde}},\ }\href {https://arxiv.org/abs/2512.02721} {\bibinfo {title} {Generative modeling using evolved quantum {B}oltzmann machines}} (\bibinfo {year} {2025}{\natexlab{a}}),\ \Eprint {https://arxiv.org/abs/2512.02721} {arXiv:2512.02721 [quant-ph]} \BibitemShut {NoStop}%
\bibitem [{\citenamefont {Wilde}(2025{\natexlab{b}})}]{wilde2025quantumfisherinformationmatrices}%
  \BibitemOpen
  \bibfield  {author} {\bibinfo {author} {\bibfnamefont {M.~M.}\ \bibnamefont {Wilde}},\ }\href {https://arxiv.org/abs/2510.02218} {\bibinfo {title} {Quantum {F}isher information matrices from {R}\'enyi relative entropies}} (\bibinfo {year} {2025}{\natexlab{b}}),\ \Eprint {https://arxiv.org/abs/2510.02218} {arXiv:2510.02218 [quant-ph]} \BibitemShut {NoStop}%
\bibitem [{\citenamefont {Watrous}(2002)}]{watrous2002qszk}%
  \BibitemOpen
  \bibfield  {author} {\bibinfo {author} {\bibfnamefont {J.}~\bibnamefont {Watrous}},\ }\bibfield  {title} {\bibinfo {title} {Limits on the power of quantum statistical zero-knowledge},\ }in\ \href {https://doi.org/10.1109/SFCS.2002.1181970} {\emph {\bibinfo {booktitle} {Proceedings of the 43rd Annual IEEE Symposium on Foundations of Computer Science}}}\ (\bibinfo {year} {2002})\ pp.\ \bibinfo {pages} {459--468},\ \Eprint {https://arxiv.org/abs/quant-ph/0202111} {arXiv:quant-ph/0202111} \BibitemShut {NoStop}%
\bibitem [{\citenamefont {Watrous}(2006)}]{Wat06}%
  \BibitemOpen
  \bibfield  {author} {\bibinfo {author} {\bibfnamefont {J.}~\bibnamefont {Watrous}},\ }\bibfield  {title} {\bibinfo {title} {Zero-knowledge against quantum attacks},\ }in\ \href {https://doi.org/10.1145/1132516.1132560} {\emph {\bibinfo {booktitle} {Proceedings of the Thirty-Eighth Annual ACM Symposium on Theory of Computing}}},\ \bibinfo {series and number} {STOC '06}\ (\bibinfo  {publisher} {Association for Computing Machinery},\ \bibinfo {address} {New York, NY, USA},\ \bibinfo {year} {2006})\ p.\ \bibinfo {pages} {296–305}\BibitemShut {NoStop}%
\bibitem [{\citenamefont {Wang}\ \emph {et~al.}(2024)\citenamefont {Wang}, \citenamefont {Guan}, \citenamefont {Liu}, \citenamefont {Zhang},\ and\ \citenamefont {Ying}}]{WGLZY22}%
  \BibitemOpen
  \bibfield  {author} {\bibinfo {author} {\bibfnamefont {Q.}~\bibnamefont {Wang}}, \bibinfo {author} {\bibfnamefont {J.}~\bibnamefont {Guan}}, \bibinfo {author} {\bibfnamefont {J.}~\bibnamefont {Liu}}, \bibinfo {author} {\bibfnamefont {Z.}~\bibnamefont {Zhang}},\ and\ \bibinfo {author} {\bibfnamefont {M.}~\bibnamefont {Ying}},\ }\bibfield  {title} {\bibinfo {title} {New quantum algorithms for computing quantum entropies and distances},\ }\href {https://doi.org/10.1109/TIT.2024.3399014} {\bibfield  {journal} {\bibinfo  {journal} {IEEE Transactions on Information Theory}\ }\textbf {\bibinfo {volume} {70}},\ \bibinfo {pages} {5653} (\bibinfo {year} {2024})}\BibitemShut {NoStop}%
\bibitem [{\citenamefont {Wang}\ and\ \citenamefont {Zhang}(2024)}]{WZ23}%
  \BibitemOpen
  \bibfield  {author} {\bibinfo {author} {\bibfnamefont {Q.}~\bibnamefont {Wang}}\ and\ \bibinfo {author} {\bibfnamefont {Z.}~\bibnamefont {Zhang}},\ }\bibfield  {title} {\bibinfo {title} {Fast quantum algorithms for trace distance estimation},\ }\href {https://doi.org/10.1109/TIT.2023.3321121} {\bibfield  {journal} {\bibinfo  {journal} {IEEE Transactions on Information Theory}\ }\textbf {\bibinfo {volume} {70}},\ \bibinfo {pages} {2720} (\bibinfo {year} {2024})}\BibitemShut {NoStop}%
\bibitem [{\citenamefont {Liu}\ \emph {et~al.}(2025{\natexlab{b}})\citenamefont {Liu}, \citenamefont {Wang}, \citenamefont {Wilde},\ and\ \citenamefont {Zhang}}]{liu2024quantumalgorithmsmatrixgeometric}%
  \BibitemOpen
  \bibfield  {author} {\bibinfo {author} {\bibfnamefont {N.}~\bibnamefont {Liu}}, \bibinfo {author} {\bibfnamefont {Q.}~\bibnamefont {Wang}}, \bibinfo {author} {\bibfnamefont {M.~M.}\ \bibnamefont {Wilde}},\ and\ \bibinfo {author} {\bibfnamefont {Z.}~\bibnamefont {Zhang}},\ }\bibfield  {title} {\bibinfo {title} {Quantum algorithms for matrix geometric means},\ }\href {https://doi.org/10.1038/s41534-025-00973-7} {\bibfield  {journal} {\bibinfo  {journal} {npj Quantum Information}\ }\textbf {\bibinfo {volume} {11}},\ \bibinfo {pages} {101} (\bibinfo {year} {2025}{\natexlab{b}})},\ \Eprint {https://arxiv.org/abs/2405.00673} {arXiv:2405.00673 [quant-ph]} \BibitemShut {NoStop}%
\bibitem [{\citenamefont {Lashkari}\ and\ \citenamefont {Van~Raamsdonk}(2016)}]{Lashkari2016}%
  \BibitemOpen
  \bibfield  {author} {\bibinfo {author} {\bibfnamefont {N.}~\bibnamefont {Lashkari}}\ and\ \bibinfo {author} {\bibfnamefont {M.}~\bibnamefont {Van~Raamsdonk}},\ }\bibfield  {title} {\bibinfo {title} {Canonical energy is quantum {F}isher information},\ }\href {https://doi.org/10.1007/JHEP04(2016)153} {\bibfield  {journal} {\bibinfo  {journal} {Journal of High Energy Physics}\ }\textbf {\bibinfo {volume} {2016}},\ \bibinfo {pages} {153} (\bibinfo {year} {2016})}\BibitemShut {NoStop}%
\bibitem [{\citenamefont {Kibe}\ \emph {et~al.}(2022)\citenamefont {Kibe}, \citenamefont {Mandayam},\ and\ \citenamefont {Mukhopadhyay}}]{Kibe2022}%
  \BibitemOpen
  \bibfield  {author} {\bibinfo {author} {\bibfnamefont {T.}~\bibnamefont {Kibe}}, \bibinfo {author} {\bibfnamefont {P.}~\bibnamefont {Mandayam}},\ and\ \bibinfo {author} {\bibfnamefont {A.}~\bibnamefont {Mukhopadhyay}},\ }\bibfield  {title} {\bibinfo {title} {Holographic spacetime, black holes and quantum error correcting codes: a review},\ }\href {https://doi.org/10.1140/epjc/s10052-022-10382-1} {\bibfield  {journal} {\bibinfo  {journal} {The European Physical Journal C}\ }\textbf {\bibinfo {volume} {82}},\ \bibinfo {pages} {463} (\bibinfo {year} {2022})}\BibitemShut {NoStop}%
\bibitem [{\citenamefont {Campos~Venuti}\ and\ \citenamefont {Zanardi}(2007)}]{CVZ2007}%
  \BibitemOpen
  \bibfield  {author} {\bibinfo {author} {\bibfnamefont {L.}~\bibnamefont {Campos~Venuti}}\ and\ \bibinfo {author} {\bibfnamefont {P.}~\bibnamefont {Zanardi}},\ }\bibfield  {title} {\bibinfo {title} {Quantum critical scaling of the geometric tensors},\ }\href {https://doi.org/10.1103/PhysRevLett.99.095701} {\bibfield  {journal} {\bibinfo  {journal} {Physical Review Letters}\ }\textbf {\bibinfo {volume} {99}},\ \bibinfo {pages} {095701} (\bibinfo {year} {2007})}\BibitemShut {NoStop}%
\bibitem [{\citenamefont {Zanardi}\ \emph {et~al.}(2007)\citenamefont {Zanardi}, \citenamefont {Giorda},\ and\ \citenamefont {Cozzini}}]{ZGC2007}%
  \BibitemOpen
  \bibfield  {author} {\bibinfo {author} {\bibfnamefont {P.}~\bibnamefont {Zanardi}}, \bibinfo {author} {\bibfnamefont {P.}~\bibnamefont {Giorda}},\ and\ \bibinfo {author} {\bibfnamefont {M.}~\bibnamefont {Cozzini}},\ }\bibfield  {title} {\bibinfo {title} {Information-theoretic differential geometry of quantum phase transitions},\ }\href {https://doi.org/10.1103/PhysRevLett.99.100603} {\bibfield  {journal} {\bibinfo  {journal} {Physical Review Letters}\ }\textbf {\bibinfo {volume} {99}},\ \bibinfo {pages} {100603} (\bibinfo {year} {2007})}\BibitemShut {NoStop}%
\bibitem [{\citenamefont {Carollo}\ \emph {et~al.}(2020)\citenamefont {Carollo}, \citenamefont {Valenti},\ and\ \citenamefont {Spagnolo}}]{CAROLLO20201}%
  \BibitemOpen
  \bibfield  {author} {\bibinfo {author} {\bibfnamefont {A.}~\bibnamefont {Carollo}}, \bibinfo {author} {\bibfnamefont {D.}~\bibnamefont {Valenti}},\ and\ \bibinfo {author} {\bibfnamefont {B.}~\bibnamefont {Spagnolo}},\ }\bibfield  {title} {\bibinfo {title} {Geometry of quantum phase transitions},\ }\href {https://doi.org/https://doi.org/10.1016/j.physrep.2019.11.002} {\bibfield  {journal} {\bibinfo  {journal} {Physics Reports}\ }\textbf {\bibinfo {volume} {838}},\ \bibinfo {pages} {1} (\bibinfo {year} {2020})},\ \bibinfo {note} {geometry of quantum phase transitions}\BibitemShut {NoStop}%
\bibitem [{\citenamefont {{Del Moral}}\ and\ \citenamefont {Niclas}(2018)}]{DelMoral2018}%
  \BibitemOpen
  \bibfield  {author} {\bibinfo {author} {\bibfnamefont {P.}~\bibnamefont {{Del Moral}}}\ and\ \bibinfo {author} {\bibfnamefont {A.}~\bibnamefont {Niclas}},\ }\bibfield  {title} {\bibinfo {title} {A {T}aylor expansion of the square root matrix function},\ }\href {https://doi.org/https://doi.org/10.1016/j.jmaa.2018.05.005} {\bibfield  {journal} {\bibinfo  {journal} {Journal of Mathematical Analysis and Applications}\ }\textbf {\bibinfo {volume} {465}},\ \bibinfo {pages} {259} (\bibinfo {year} {2018})}\BibitemShut {NoStop}%
\bibitem [{\citenamefont {{NISO}}()}]{CRediT}%
  \BibitemOpen
  \bibfield  {author} {\bibinfo {author} {\bibnamefont {{NISO}}},\ }\href@noop {} {\bibinfo {title} {Credit -- contributor roles taxonomy}},\ \bibinfo {howpublished} {\url{https://credit.niso.org/}},\ \bibinfo {note} {accessed: 2024-10-28}\BibitemShut {NoStop}%
\bibitem [{\citenamefont {Anshu}\ \emph {et~al.}(2020)\citenamefont {Anshu}, \citenamefont {Arunachalam}, \citenamefont {Kuwahara},\ and\ \citenamefont {Soleimanifar}}]{Anshu2020arXiv}%
  \BibitemOpen
  \bibfield  {author} {\bibinfo {author} {\bibfnamefont {A.}~\bibnamefont {Anshu}}, \bibinfo {author} {\bibfnamefont {S.}~\bibnamefont {Arunachalam}}, \bibinfo {author} {\bibfnamefont {T.}~\bibnamefont {Kuwahara}},\ and\ \bibinfo {author} {\bibfnamefont {M.}~\bibnamefont {Soleimanifar}},\ }\href {https://arxiv.org/abs/2004.07266v1} {\bibinfo {title} {Sample-efficient learning of quantum many-body systems}} (\bibinfo {year} {2020}),\ \Eprint {https://arxiv.org/abs/2004.07266v1} {arXiv:2004.07266v1 [quant-ph]} \BibitemShut {NoStop}%
\bibitem [{\citenamefont {Feldman}(2007)}]{Feldman2007}%
  \BibitemOpen
  \bibfield  {author} {\bibinfo {author} {\bibfnamefont {J.}~\bibnamefont {Feldman}},\ }\href@noop {} {\bibinfo {title} {Duhamel's formula}},\ \bibinfo {howpublished} {Lecture notes} (\bibinfo {year} {2007}),\ \bibinfo {note} {\url{https://personal.math.ubc.ca/~feldman/m428/duhamel.pdf}}\BibitemShut {NoStop}%
\end{thebibliography}%

\appendix

\onecolumngrid 

\section{Proof of Theorem~\ref{thm:gradient-eQBM}}

\label{proof:gradient-eQBM}

From previous work~\cite[Eq.~(10)]{patel2024quantumboltzmannmachine}, it is known that
\begin{align}
\frac{\partial}{\partial\theta_{j}}\omega(\theta,\phi) &  =e^{-iH(\phi
)}\left(  \frac{\partial}{\partial\theta_{j}}\rho(\theta)\right)  e^{iH(\phi
)}\\
&  =e^{-iH(\phi)}\left(  -\frac{1}{2}\left\{  \Phi_{\theta}(G_{j}),\rho
(\theta)\right\}  +\rho(\theta)\left\langle G_{j}\right\rangle _{\rho(\theta
)}\right)  e^{iH(\phi)}\\
&  =-\frac{1}{2}\left\{  e^{-iH(\phi)}\Phi_{\theta}(G_{j})e^{iH(\phi
)},e^{-iH(\phi)}\rho(\theta)e^{iH(\phi)}\right\}  +e^{-iH(\phi)}\rho(\theta)e^{iH(\phi)}\left\langle G_{j}\right\rangle
_{\rho(\theta)}\\
&  =-\frac{1}{2}\left\{  e^{-iH(\phi)}\Phi_{\theta}(G_{j})e^{iH(\phi)}
,\omega(\theta,\phi)\right\}  +\omega(\theta,\phi)\left\langle G_{j}
\right\rangle _{\rho(\theta)}.
\end{align}
See also~\cite[Eq.~(9)]{Hastings2007},~\cite[Proposition~20]{Anshu2020arXiv}, and~\cite[Lemma~5]{Coopmans2024qbm_gen_learn}.
Now we consider the derivative with respect the variable $\phi_k$:
\begin{align}
\frac{\partial}{\partial\phi_{k}}\omega(\theta,\phi) &  =\frac{\partial
}{\partial\phi_{k}}\left(  e^{-iH(\phi)}\rho(\theta)e^{iH(\phi)}\right)  \\
&  =\left(  \frac{\partial}{\partial\phi_{k}}e^{-iH(\phi)}\right)  \rho
(\theta)e^{iH(\phi)}+e^{-iH(\phi)}\rho(\theta)\left(  \frac{\partial}
{\partial\phi_{k}}e^{iH(\phi)}\right)  .
\end{align}
According to Duhamel's formula~\cite{Feldman2007}, the partial derivative of a matrix exponential
$e^{A(x)}$ with respect to some parameter $x$ is given as follows:
\begin{align}
\frac{\partial}{\partial x} e^{A(x)} &  =\int_{0}^{1}e^{(1-t)A(x)}\left(  \frac{\partial}{\partial x} A(x)\right)  e^{tA(x)}\ dt\label{eq:duh_for}\\
&  =\int_{0}^{1}e^{tA(x)}\left(  \frac{\partial}{\partial x} A(x)\right)  e^{\left(
1-t\right)  A(x)}\ dt.
\end{align}
From Duhamel's formula, consider that
\begin{align}
\frac{\partial}{\partial\phi_{k}}e^{-iH(\phi)} &  =\int_{0}^{1}dt\ e^{t\left(
-iH(\phi)\right)  }\left(  \frac{\partial}{\partial\phi_{k}}\left(
-iH(\phi)\right)  \right)  e^{\left(  1-t\right)  \left(  -iH(\phi)\right)
} \label{eq:deriv-phi-proof-1}\\
&  =\left[  \int_{0}^{1}dt\ e^{-itH(\phi)}\left(  -iH_{k}\right)
e^{itH(\phi)}\right]  e^{-iH(\phi)}\\
&  =-i\left[  \int_{0}^{1}dt\ e^{-itH(\phi)}H_{k}e^{itH(\phi)}\right]
e^{-iH(\phi)}\\
&  =-i\Psi^{\dagger}_{\phi}(H_{k})e^{-iH(\phi)}.
\label{eq:deriv-phi-proof-1-2}
\end{align}
Also, we have that
\begin{align}
\frac{\partial}{\partial\phi_{k}}e^{iH(\phi)} &  =\int_{0}^{1}dt\ e^{\left(
1-t\right)  iH(\phi)}\left(  \frac{\partial}{\partial\phi_{k}}iH(\phi)\right)
e^{itH(\phi)}\\
&  =ie^{iH(\phi)}\left[  \int_{0}^{1}dt\ e^{-itH(\phi)}H_{k}e^{itH(\phi
)}\right]  \\
&  =ie^{iH(\phi)}\Psi^{\dagger}_{\phi}(H_{k}).
\label{eq:deriv-phi-proof-last}
\end{align}
Then we find that
\begin{align}
  \frac{\partial}{\partial\phi_{k}}\omega(\theta,\phi)
&  =\left(  \frac{\partial}{\partial\phi_{k}}e^{-iH(\phi)}\right)  \rho
(\theta)e^{iH(\phi)}+e^{-iH(\phi)}\rho(\theta)\left(  \frac{\partial}
{\partial\phi_{k}}e^{iH(\phi)}\right)  \\
&  =-i\Psi^{\dagger}_{\phi}(H_{k})e^{-iH(\phi)}\rho(\theta)e^{iH(\phi)}+ie^{-iH(\phi
)}\rho(\theta)e^{iH(\phi)}\Psi^{\dagger}_{\phi}(H_{k})\\
&  =-i\Psi^{\dagger}_{\phi}(H_{k})\omega(\theta,\phi)+i\omega(\theta,\phi)\Psi_{\phi
}(H_{k})\\
&  =i\left[  \omega(\theta,\phi),\Psi^{\dagger}_{\phi}(H_{k})\right]  .
\end{align}

\section{Hadamard tests for expected values of commutators,  anticommutators, and nestings}
\label{app:Hadamard_test}

In this appendix, we present a generalized Hadamard test for estimating the expectation values of commutators and anticommutators of two operators. We then show how to extend this method to nested commutators and anticommutators involving multiple operators. These circuits form the foundation of the algorithms developed for estimating the gradient and the information matrix elements of  evolved quantum Boltzmann machines.  
\\Let us start with the case of two operators $U$ and $H$. We assume that $U$ is both unitary and Hermitian and that $H$ is Hermitian. We now present a quantum circuit used to estimate the following quantity:
\begin{equation}
    -\frac{1}{2}\big\langle\big\{U ,H\big\}\big\rangle_\rho\label{eq:anticomm-U0_U1},
\end{equation}
where $\rho$ is a generic quantum state, illustrated in Figure~\ref{fig:qc-primitive-anticomm}. 
The circuit consists of two quantum registers:
\begin{itemize}
    \item a control register, initialized in the state $|1\rangle \!\langle 1|$,
    \item a system register, initialized  in the state $\rho$.
\end{itemize}
To demonstrate that the output matches the desired quantity, let us track the state of the circuit in Figure~\ref{fig:qc-primitive-anticomm} as it progresses through
the various steps:
\begin{align}
|1\rangle\!\langle1|\otimes\rho & \rightarrow\frac{1}{2}\sum_{j,k\in\left\{
0,1\right\}  } (-1)^{j+k}|j\rangle\!\langle k|\otimes\rho\\
& \rightarrow\frac{1}{2}\sum_{j,k\in\left\{  0,1\right\}  } (-1)^{j+k} |j\rangle\!\langle
k|\otimes U^{j}\rho U^{k}.
\end{align}
Given that $X=\left(  \text{Had}\right)  Z\left(  \text{Had}\right)  $, where
$X$ and $Z$ are Pauli matrices, the final step is equivalent to determining
the expectation of the observable $X\otimes H$. This expectation is as
follows:
\begin{align}
& \operatorname{Tr}\!\left[  \left(  X\otimes H\right)  \left(  \frac{1}{2}
\sum_{j,k\in\left\{  0,1\right\}  }(-1)^{j+k}|j\rangle\!\langle k|\otimes U^{j}\rho
U^{k}\right)  \right]  \nonumber\\
& =\frac{1}{2}\sum_{j,k\in\left\{  0,1\right\}  }\operatorname{Tr}\!\left[
\left(  X\otimes H\right)  \left(  (-1)^{j+k}|j\rangle\!\langle k|\otimes U^{j}\rho
U^{k}\right)  \right]  \\
& =\frac{1}{2}\sum_{j,k\in\left\{  0,1\right\}  }(-1)^{j+k}\langle k|X|j\rangle
\operatorname{Tr}\!\left[  HU^{j}\rho U^{k}\right]  \\
& =\frac{1}{2}\sum_{j,k\in\left\{  0,1\right\}  :j\neq k}(-1)^{j+k}\langle
k|X|j\rangle\operatorname{Tr}\!\left[  HU^{j}\rho U^{k}\right]  \\
& =-\frac{1}{2}\left(  \operatorname{Tr}\!\left[  HU^{0}\rho U^{1}\right]
+\operatorname{Tr}\!\left[  HU^{1}\rho U^{0}\right]  \right)  \\
& =-\frac{1}{2}\left(  \operatorname{Tr}\!\left[  H\rho U\right]
+\operatorname{Tr}\!\left[  HU\rho\right]  \right)  \\
& =-\frac{1}{2}\left(  \operatorname{Tr}\!\left[  UH\rho\right]
+\operatorname{Tr}\!\left[  HU\rho\right]  \right)  \\
& =-\frac{1}{2}\left\langle \left\{  U,H\right\}  \right\rangle _{\rho}.
\end{align}
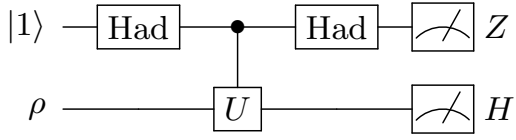
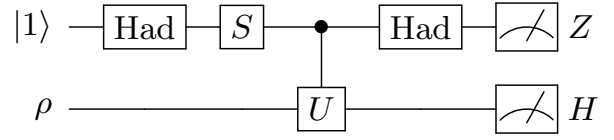
\begin{figure*}
    \centering
    % First subfigure
    \begin{subfigure}[t]{0.48\textwidth}
        \centering
        \scalebox{1.4}{ 
        \Qcircuit @C=1em @R=1.em {
            \lstick{\ket{1}} & \gate{\operatorname{Had}} & \ctrl{1} & \gate{\operatorname{Had}} & \meter & \rstick{\hspace{-1.2em}Z} \\
            \lstick{\rho} & \qw & \gate{U} & \qw & \meter & \rstick{\hspace{-1.2em}H}
        }
        }
        \vspace{15pt}
        \caption{Quantum circuit for estimating $-\tfrac{1}{2}\langle\{U ,H\}\rangle_\rho$ when $U$ is unitary and Hermitian and $H$ is Hermitian.}
        \label{fig:qc-primitive-anticomm}
    \end{subfigure}
    \hfill
    % Second subfigure
    \begin{subfigure}[t]{0.48\textwidth}
        \centering
        \scalebox{1.4}{ 
        \Qcircuit @C=1em @R=1em {
            \lstick{\ket{1}} & \gate{\operatorname{Had}} & \gate{S} & \ctrl{1} & \gate{\operatorname{Had}} & \meter & \rstick{\hspace{-1.2em}Z} \\
            \lstick{\rho} & \qw & \qw & \gate{U} & \qw & \meter  & \rstick{\hspace{-1.2em}H} 
        }
        }
        \vspace{15pt}
        \caption{Quantum circuit for estimating $\frac{i}{2}\langle[U,H]\rangle_\rho$ when $U$ is unitary and Hermitian and $H$ is Hermitian. }
        \label{fig:qc-primitive-comm}
    \end{subfigure}
    \caption{Quantum circuits for estimating the expected value of anticommutator and commutator of operators.}
    \label{fig:side-by-side-circuits-comm-anticomm}
\end{figure*}

Analogously, we can construct a related quantum circuit to estimate the expected value of the commutator of $U$ and $H$. If $U$ is unitary and Hermitian and $H$ is Hermitian, then the following quantity is of interest:
\begin{equation}
    \frac{i}{2}\big\langle\big[U ,H\big]\big\rangle_\rho \label{eq:U0_U1_comm}.
\end{equation}
The quantum circuit for estimating this quantity is shown in Figure~\ref{fig:qc-primitive-comm}. The setup is essentially the same as in the anticommutator case: we have a control qubit initially in the state $\ket{1}$ and a system register initialized in the state $\rho$. The only difference compared to the circuit in Figure~\ref{fig:qc-primitive-anticomm} is the addition of an $S$ gate applied to the control qubit immediately after the first Hadamard gate. To demonstrate that the output matches the desired quantity, let us track the state of the circuit in Figure~\ref{fig:qc-primitive-comm} as it
progresses through the various steps:
\begin{align}
|1\rangle\!\langle1|\otimes\rho & \rightarrow\frac{1}{2}\sum_{j,k\in\left\{
0,1\right\}  } (-1)^{j+k}|j\rangle\!\langle k|\otimes\rho\\
& \rightarrow\frac{1}{2}\sum_{j,k\in\left\{  0,1\right\}  } (-1)^{j+k}i^{j}\left(
-i\right)  ^{k}|j\rangle\!\langle k|\otimes\rho\\
& \rightarrow\frac{1}{2}\sum_{j,k\in\left\{  0,1\right\}  } (-1)^{j+k}i^{j}\left(
-i\right)  ^{k}|j\rangle\!\langle k|\otimes U^{j}\rho U^{k}.
\end{align}
As before, the final step is equivalent to determining the expectation of the
observable $X\otimes H$. This expectation is as follows:
\begin{align}
& \operatorname{Tr}\!\left[  \left(  X\otimes H\right)  \left(  \frac{1}{2}
\sum_{j,k\in\left\{  0,1\right\}  }(-1)^{j+k}i^{j}\left(  -i\right)  ^{k}|j\rangle
\langle k|\otimes U^{j}\rho U^{k}\right)  \right]  \nonumber\\
& =\frac{1}{2}\sum_{j,k\in\left\{  0,1\right\}  }(-1)^{j+k}i^{j}\left(  -i\right)
^{k}\operatorname{Tr}\!\left[  \left(  X\otimes H\right)  \left(  |j\rangle
\langle k|\otimes U^{j}\rho U^{k}\right)  \right]  \\
& =\frac{1}{2}\sum_{j,k\in\left\{  0,1\right\}  }(-1)^{j+k}i^{j}\left(  -i\right)
^{k}\langle k|X|j\rangle\operatorname{Tr}\!\left[  HU^{j}\rho U^{k}\right]  \\
& =\frac{1}{2}\sum_{j,k\in\left\{  0,1\right\}  :j\neq k}(-1)^{j+k}i^{j}\left(
-i\right)  ^{k}\langle k|X|j\rangle\operatorname{Tr}\!\left[  HU^{j}\rho
U^{k}\right]  \\
& =\frac{1}{2}\left(  i\operatorname{Tr}\!\left[  HU^{0}\rho U^{1}\right]
-i\operatorname{Tr}\!\left[  HU^{1}\rho U^{0}\right]  \right)  \\
& =\frac{i}{2}\left(  \operatorname{Tr}\!\left[  H\rho U\right]
-\operatorname{Tr}\!\left[  HU\rho\right]  \right)  \\
& =\frac{i}{2}\left(  \operatorname{Tr}\!\left[  UH\rho\right]
-\operatorname{Tr}\!\left[  HU\rho\right]  \right)  \\
& =\frac{i}{2}\left\langle \left[  U,H \right]  \right\rangle _{\rho}.
\end{align}

Thus, by repeatedly running the quantum circuits shown in Figures~\ref{fig:qc-primitive-anticomm} and~\ref{fig:qc-primitive-comm} with independent copies of $\rho$, one can obtain unbiased estimators for~\eqref{eq:anticomm-U0_U1} and~\eqref{eq:U0_U1_comm}.

A generalisation of this single-control-qubit Hadamard test circuit using multiple control qubits can be used to estimate the expectation of nested commutators and anticommutators~\cite[Algorithm 3]{liEfficientQuantumGradient2024}. We show here the quantum circuits that can be used to estimate the expected values of two nested commutators and of nested commutator and anticommutator, of interest for their application in this paper. The quantum circuit shown in Figure~\ref{fig:prim-nest-comm} is used for estimating the following quantity
\begin{align}
    \frac{1}{4} \left\langle \big[ \left[U_1 , H \right], U_0 \big]\right\rangle_\rho,
    \label{eq:nest-comm-app}
\end{align}
where $U_0$ and $U_1$ are Hermitian unitaries, $H$ is Hermitian and $\rho$ is a generic quantum state. The quantum circuit shown in Figure~\ref{fig:prim-nest-anticomm-comm} is used for estimating the following quantity
\begin{align}
    \frac{i}{4} \left\langle \big\{ U_0, \left[H, U_1 \right]\big\}\right\rangle_\rho,
    \label{eq:nest-comm-anticomm-app}
\end{align}
with the same requirements for $U_0$, $U_1$, and $H$ as in the previous circuit. 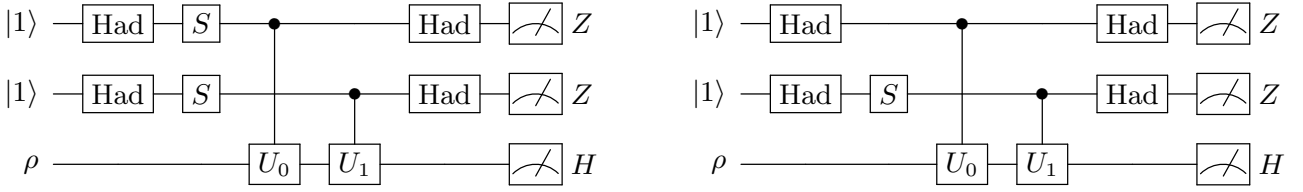
\begin{figure*}
    \centering
    % First subfigure
    \begin{subfigure}[t]{0.49\textwidth}
        \centering
        \scalebox{1.2}{ 
        \Qcircuit @C=1.em @R=1.em {
            \lstick{\ket{1}} & \gate{\operatorname{Had}} & \gate{S} & \ctrl{2} & \qw & \gate{\operatorname{Had}} & \meter &\rstick{\hspace{-1.2em}Z} \\
            \lstick{\ket{1}} & \gate{\operatorname{Had}} & \gate{S} & \qw & \ctrl{1} & \gate{\operatorname{Had}} & \meter & \rstick{\hspace{-1.2em}Z}\\
            \lstick{\rho} & \qw & \qw  & \gate{U_0} & \gate{U_1} & \qw  & \meter & \rstick{\hspace{-1.2em}H}
        }
        }
        \vspace{15pt}
        \caption{Quantum circuit for estimating $\frac{1}{4} \left\langle \big[ \left[U_1 ,H  \right], U_0 \big]\right\rangle_\rho$ when $U_0$ and $U_1$ are unitary and Hermitian and $H$ is Hermitian.}
        \label{fig:prim-nest-comm}
    \end{subfigure}
    \hfill
    % Second subfigure
    \begin{subfigure}[t]{0.49\textwidth}
        \centering
        \scalebox{1.2}{ 
        \Qcircuit @C=1.em @R=1.em {
            \lstick{\ket{1}} & \gate{\operatorname{Had}} & \qw & \ctrl{2} & \qw & \gate{\operatorname{Had}} & \meter &\rstick{\hspace{-1.2em}Z} \\
            \lstick{\ket{1}} & \gate{\operatorname{Had}} & \gate{S} & \qw & \ctrl{1} & \gate{\operatorname{Had}} & \meter & \rstick{\hspace{-1.2em}Z}\\
            \lstick{\rho} & \qw & \qw  & \gate{U_0} & \gate{U_1} & \qw  & \meter & \rstick{\hspace{-1.2em}H}
        }
        }
        \vspace{15pt}
        \caption{Quantum circuit for estimating $\frac{i}{4} \left\langle \big\{ U_0, \left[H, U_1 \right]\big\}\right\rangle_\rho$ when $U_0$ and $U_1$ are unitary and Hermitian and $H$ is Hermitian.}
        \label{fig:prim-nest-anticomm-comm}
    \end{subfigure}
    \caption{Quantum circuits for estimating expected values of nested commutators and anticommutator-commutators.}
    \label{fig:side-by-side-circuits}
\end{figure*}The approach to obtaining unbiased estimators of~\eqref{eq:nest-comm-app} and~\eqref{eq:nest-comm-anticomm-app} follows a similar procedure described previously for estimating~\eqref{eq:anticomm-U0_U1}. For completeness, below we detail justifications of the claims in~\eqref{eq:nest-comm-app} and~\eqref{eq:nest-comm-anticomm-app}. Let us track the state of the circuit in Figure~\ref{fig:prim-nest-comm}, as it progresses through the
various steps:
\begin{align}
 |1\rangle\!\langle1|\otimes|1\rangle\!\langle 1|\otimes\rho 
& \rightarrow\frac{1}{4}\sum_{j,k,\ell,m\in\left\{  0,1\right\}  }\left(
-1\right)  ^{j+k+\ell+m}|j\rangle\!\langle  k|\otimes|\ell\rangle\!\langle 
m|\otimes\rho\\
& \rightarrow\frac{1}{4}\sum_{j,k,\ell,m\in\left\{  0,1\right\}  }\left(
-1\right)  ^{j+k+\ell+m}i^{j+\ell}\left(  -i\right)  ^{k+m}|j\rangle\!\langle 
k|\otimes|\ell\rangle\!\langle  m|\otimes\rho\\
& \rightarrow\frac{1}{4}\sum_{j,k,\ell,m\in\left\{  0,1\right\}  }\left(
-1\right)  ^{j+k+\ell+m}i^{j+\ell}\left(  -i\right)  ^{k+m}|j\rangle\!\langle 
k|\otimes|\ell\rangle\!\langle  m|\otimes U_{1}^{\ell}U_{0}^{j}\rho U_{0}
^{k}U_{1}^{m}.
\end{align}
Given that $X=\ \left(  \text{Had}\right)  Z\left(  \text{Had}\right)  $, where $X$ and $Z$ are Pauli matrices, the
final step is equivalent to determining the expectation of the observable
$X\otimes X\otimes H$. This expectation is as follows:
\begin{align}
& \frac{1}{4}\operatorname{Tr}\!\left[  \left(  X\otimes X\otimes H\right)
\left(  \sum_{j,k,\ell,m\in\left\{  0,1\right\}  }\left(  -1\right)
^{j+k+\ell+m}i^{j+\ell}\left(  -i\right)  ^{k+m}|j\rangle\!\langle 
k|\otimes|\ell\rangle\!\langle  m|\otimes U_{1}^{\ell}U_{0}^{j}\rho U_{0}
^{k}U_{1}^{m}\right)  \right]  \nonumber\\
& =\frac{1}{4}\sum_{j,k,\ell,m\in\left\{  0,1\right\}  }\left(  -1\right)
^{j+k+\ell+m}i^{j+\ell}\left(  -i\right)  ^{k+m}\operatorname{Tr}\!\left[
\left(  X\otimes X\otimes H\right)  \left(  |j\rangle\!\langle  k|\otimes
|\ell\rangle\!\langle  m|\otimes U_{1}^{\ell}U_{0}^{j}\rho U_{0}^{k}U_{1}
^{m}\right)  \right]  \\
& =\frac{1}{4}\sum_{j,k,\ell,m\in\left\{  0,1\right\}  }\left(  -1\right)
^{j+k+\ell+m}i^{j+\ell}\left(  -i\right)  ^{k+m}\langle k|X|j\rangle\ \langle
m|X|\ell\rangle\operatorname{Tr}\!\left[  HU_{1}^{\ell}U_{0}^{j}\rho U_{0}
^{k}U_{1}^{m}\right]  \\
& =\frac{1}{4}\sum_{\substack{j,k,\ell,m\in\left\{  0,1\right\}  \ :\\j\neq
k,\ell\neq m}}\left(  -1\right)  ^{j+k+\ell+m}i^{j+\ell}\left(  -i\right)
^{k+m}\operatorname{Tr}\!\left[  HU_{1}^{\ell}U_{0}^{j}\rho U_{0}^{k}U_{1}
^{m}\right]  \\
& =\frac{1}{4}\left[
\begin{array}
[c]{c}
\left(  -1\right)  ^{0+1+0+1}i^{0+0}\left(  -i\right)  ^{1+1}\operatorname{Tr}
\left[  HU_{1}^{0}U_{0}^{0}\rho U_{0}^{1}U_{1}^{1}\right]  +\left(  -1\right)
^{0+1+1+0}i^{0+1}\left(  -i\right)  ^{1+0}\operatorname{Tr}\!\left[  HU_{1}
^{1}U_{0}^{0}\rho U_{0}^{1}U_{1}^{0}\right]  \\
+\left(  -1\right)  ^{1+0+0+1}i^{1+0}\left(  -i\right)  ^{0+1}
\operatorname{Tr}\!\left[  HU_{1}^{0}U_{0}^{1}\rho U_{0}^{0}U_{1}^{1}\right]
+\left(  -1\right)  ^{1+0+1+0}i^{1+1}\left(  -i\right)  ^{0+0}
\operatorname{Tr}\!\left[  HU_{1}^{1}U_{0}^{1}\rho U_{0}^{0}U_{1}^{0}\right]
\end{array}
\right]  \\
& =\frac{1}{4}\left[  -\operatorname{Tr}\!\left[  H\rho U_{0}U_{1}\right]
+\operatorname{Tr}\!\left[  HU_{1}\rho U_{0}\right]  +\operatorname{Tr}\!\left[
HU_{0}\rho U_{1}\right]  -\operatorname{Tr}\!\left[  HU_{1}U_{0}\rho\right]
\right]  \\
& =\frac{1}{4}\operatorname{Tr}\!\left[  \left[  \left[  U_{1},H\right], U_{0}
\right]  \rho\right]  \\
& = \frac{1}{4}\left\langle \left[  \left[  U_{1},H\right], U_{0}  \right]
\right\rangle _{\rho}.
\end{align}

We do the same for the circuit depicted in Figure~\ref{fig:prim-nest-anticomm-comm}. Again, tracking the state
as it progress through the circuit, consider that
\begin{align}
 |1\rangle\!\langle 1|\otimes|1\rangle\!\langle 1|\otimes\rho
& \rightarrow\frac{1}{4}\sum_{j,k,\ell,m\in\left\{  0,1\right\}  }\left(
-1\right)  ^{j+k+\ell+m}|j\rangle\!\langle  k|\otimes|\ell\rangle\!\langle 
m|\otimes\rho\\
& \rightarrow\frac{1}{4}\sum_{j,k,\ell,m\in\left\{  0,1\right\}  }\left(
-1\right)  ^{j+k+\ell+m}i^{\ell}\left(  -i\right)  ^{m}|j\rangle\!\langle 
k|\otimes|\ell\rangle\!\langle  m|\otimes\rho\\
& \rightarrow\frac{1}{4}\sum_{j,k,\ell,m\in\left\{  0,1\right\}  }\left(
-1\right)  ^{j+k+\ell+m}i^{\ell}\left(  -i\right)  ^{m}|j\rangle\!\langle 
k|\otimes|\ell\rangle\!\langle  m|\otimes U_{1}^{\ell}U_{0}^{j}\rho U_{0}
^{k}U_{1}^{m}.
\end{align}
The expectation of the observable $X\otimes X\otimes H$ is then as follows:
\begin{align}
& \operatorname{Tr}\!\left[  \left(  X\otimes X\otimes H\right)  \left(
\frac{1}{4}\sum_{j,k,\ell,m\in\left\{  0,1\right\}  }\left(  -1\right)
^{j+k+\ell+m}i^{\ell}\left(  -i\right)  ^{m}|j\rangle\!\langle  k|\otimes|\ell
\rangle\!\langle  m|\otimes U_{1}^{\ell}U_{0}^{j}\rho U_{0}^{k}U_{1}^{m}\right)
\right]  \nonumber\\
& =\frac{1}{4}\sum_{j,k,\ell,m\in\left\{  0,1\right\}  }\left(  -1\right)
^{j+k+\ell+m}i^{\ell}\left(  -i\right)  ^{m}\operatorname{Tr}\!\left[  \left(
X\otimes X\otimes H\right)  \left(  |j\rangle\!\langle  k|\otimes|\ell
\rangle\!\langle  m|\otimes U_{1}^{\ell}U_{0}^{j}\rho U_{0}^{k}U_{1}^{m}\right)
\right]  \\
& =\frac{1}{4}\sum_{j,k,\ell,m\in\left\{  0,1\right\}  }\left(  -1\right)
^{j+k+\ell+m}i^{\ell}\left(  -i\right)  ^{m}\langle k|X|j\rangle\ \langle
m|X|\ell\rangle\operatorname{Tr}\!\left[  HU_{1}^{\ell}U_{0}^{j}\rho U_{0}
^{k}U_{1}^{m}\right]  \\
& =\frac{1}{4}\sum_{\substack{j,k,\ell,m\in\left\{  0,1\right\}  :\\j\neq
k,\ell\neq m}}\left(  -1\right)  ^{j+k+\ell+m}i^{\ell}\left(  -i\right)
^{m}\operatorname{Tr}\!\left[
HU_{1}^{\ell}U_{0}^{j}\rho U_{0}^{k}U_{1}^{m}\right]  \\
& =\frac{1}{4}\left[
\begin{array}
[c]{c}
\left(  -1\right)  ^{0+1+0+1}i^{0}\left(  -i\right)  ^{1}\operatorname{Tr}\!\left[
HU_{1}^{0}U_{0}^{0}\rho U_{0}^{1}U_{1}^{1}\right]  +\left(  -1\right)
^{0+1+1+0}i^{1}\left(  -i\right)  ^{0}\operatorname{Tr}\!\left[  HU_{1}^{1}U_{0}
^{0}\rho U_{0}^{1}U_{1}^{0}\right]  \\
+\left(  -1\right)  ^{1+0+0+1}i^{0}\left(  -i\right)  ^{1}\operatorname{Tr}\!\left[
HU_{1}^{0}U_{0}^{1}\rho U_{0}^{0}U_{1}^{1}\right]  +\left(  -1\right)
^{1+0+1+0}i^{1}\left(  -i\right)  ^{0}\operatorname{Tr}\!\left[  HU_{1}^{1}U_{0}
^{1}\rho U_{0}^{0}U_{1}^{0}\right]
\end{array}
\right]  \\
& =\frac{1}{4}\left[  -i\operatorname{Tr}\!\left[  H\rho U_{0}U_{1}\right]
+i\operatorname{Tr}\!\left[  HU_{1}\rho U_{0}\right]  -i\operatorname{Tr}\!\left[
HU_{0}\rho U_{1}\right]  +i\operatorname{Tr}\!\left[  HU_{1}U_{0}\rho\right]
\right]  \\
& =\frac{i}{4}\left[  - \operatorname{Tr}\!\left[  U_{0}U_{1}H\rho \right]
+ \operatorname{Tr}\!\left[U_{0}  HU_{1}\rho \right]  -\operatorname{Tr}\!\left[U_{1}
HU_{0}\rho \right]  +\operatorname{Tr}\!\left[  HU_{1}U_{0}\rho\right]
\right]  \\
& =\frac{i}{4}\operatorname{Tr}\!\left[  \left\{  U_{0},\left[ H, U_{1}\right]
\right\}  \rho\right]  \\
& =\frac{i}{4}\left\langle \left\{  U_{0},\left[ H, U_{1}\right]  \right\}
\right\rangle _{\rho}.
\end{align}

\section{Ground-state energy estimation using evolved quantum Boltzmann machines}

\label{app:VQE}

\subsection{Gradient of the expected energy of an evolved quantum Boltzmann machine}

Let us start by giving detailed calculations on how to obtain~\eqref{eq:VQE-grad-theta}, that is, the partial derivative of the objective function in~\eqref{eq:VQE-costfunction} with respect to the parameter vector $\theta$. Using~\eqref{eq:grad_wrt_theta}, we find that
\begin{align}
\frac{\partial}{\partial\theta_{j}}\operatorname{Tr}[O\omega(\theta
,\phi)] & = \operatorname{Tr}\!\left[  O\left(  -\frac{1}{2}\left\{  e^{-iH(\phi)} \Phi_{\theta}(G_{j})e^{iH(\phi)},\omega(\theta,\phi)\right\} +\omega (\theta,\phi)\left\langle G_{j}\right\rangle _{\rho(\theta)}\right)  \right]
\\
&  =-\frac{1}{2}\operatorname{Tr}\!\left[  O\left\{  e^{-iH(\phi)}\Phi_{\theta
}(G_{j})e^{iH(\phi)},\omega(\theta,\phi)\right\}  \right] + \operatorname{Tr}\!
\left[  O\omega(\theta,\phi)\left\langle G_{j}\right\rangle _{\rho(\theta
)}\right]  \\
&  =-\frac{1}{2}\operatorname{Tr}\!\left[  e^{iH(\phi)} O e^{-iH(\phi)}  \left\{  \Phi_{\theta
}(G_{j}),\rho(\theta)\right\}  \right] + \operatorname{Tr}\!
\left[  O\omega(\theta,\phi)\left\langle G_{j}\right\rangle _{\rho(\theta
)}\right]  \\
&  =-\frac{1}{2}\operatorname{Tr}\!\left[  \left\{ e^{iH(\phi)} O e^{-iH(\phi)} , \Phi_{\theta
}(G_{j}) \right\} \rho(\theta)  \right] +\operatorname{Tr}\!
\left[  O\omega(\theta,\phi)\right]  \left\langle G_{j}\right\rangle
_{\rho(\theta)}\\
& = -\frac{1}{2}\left\langle \left\{ e^{iH(\phi)} O e^{-iH(\phi)} , \Phi_{\theta
}(G_{j})\right\} \right\rangle_{\rho(\theta)} + \left\langle O\right\rangle
_{\omega(\theta,\phi)}\left\langle G_{j}\right\rangle
_{\rho(\theta)}.
\end{align}

Using~\eqref{eq:grad_wrt_phi}, it is straightforward to prove the partial derivative of the cost function with respect to the parameter vector $\phi$ shown in~\eqref{eq:VQE-grad-phi}:
\begin{align}
\frac{\partial}{\partial\phi_{k}}\operatorname{Tr}[O\omega(\theta
,\phi)]& =i\operatorname{Tr}[O\left[  \omega(\theta,\phi),\Psi_{\phi}
(H_{k})\right]  ]\\
& = i\operatorname{Tr}\!\left[\left[\Psi_{\phi}
(H_{k}),O\right]  \omega(\theta,\phi)  \right]\\
& = i\left\langle\big[\Psi^{\dagger}_{\phi}(H_{k}),O\big] \right\rangle_{\omega(\theta,\phi)} .
\end{align}

\subsubsection{Quantum algorithm for estimating the partial derivative with respect to $\theta$}\label{app:VQE_grad_theta-est}

Here we show how to estimate the second term appearing in~\eqref{eq:VQE-grad-theta}. Consider that
\begin{align}
    & -\frac{1}{2} \left\langle\big\{ e^{iH(\phi)} O e^{-iH(\phi)}, \Phi_{\theta}(G_{j})\big\} \right\rangle_{\rho(\theta)} \nonumber \\
    & = -\frac{1}{2} \Tr\! \left[ \left\{ e^{iH(\phi)} O e^{-iH(\phi)}, \Phi_{\theta}(G_{j}) \right\}\rho(\theta)\right] \\
    & = \int_\mathbb{R} dt\, p(t) \left( -\frac{1}{2} \Tr\! \left[ \left\{ e^{iH(\phi)} O e^{-iH(\phi)} , e^{-iG(\theta)t} G_{j} e^{iG(\theta)t} \right\} \rho(\theta)\right] \right).\label{eq:VQE-grad-theta-eqiv}
\end{align}
We are now in a position to present an algorithm (Algorithm~\ref{algo:VQE-grad-theta-2}) to estimate the second term of~\eqref{eq:VQE-grad-theta} using its equivalent form shown in~\eqref{eq:VQE-grad-theta-eqiv}. At the core of our algorithm lies the quantum circuit that estimates the expected value of the anticommutator of two operators, $-\frac{1}{2} \left\langle \left\{ H, U\right\} \right\rangle_{\rho}$, where $H$ is Hermitian and $U$ is Hermitian and unitary (refer to Appendix~\ref{app:Hadamard_test} and Figure~\ref{fig:qc-primitive-anticomm}).  In this case, we choose $\rho=\rho(\theta)$, $U=e^{-iG(\theta)t} G_{j} e^{iG(\theta)t}$, $H=e^{iH(\phi)} O e^{-iH(\phi)}$. We then make some further simplifications that follow because $\rho(\theta)$ commutes with $e^{-iG(\theta)t}$. Accordingly, the quantum circuit that estimates the integrand of~\eqref{eq:VQE-grad-theta-eqiv} is depicted in Figure~\ref{fig:VQE-grad-theta}.

\begin{algorithm}[H]
\caption{\texorpdfstring{$\mathtt{gradient\_\theta\_ground\_state\_energy}(j, \theta, \{G_\ell\}_{\ell=1}^{J}, \phi, \{H_m\}_{m=1}^{K}, p(\cdot), \varepsilon, \delta)$}{estimate first term}}
\label{algo:VQE-grad-theta-2}
\begin{algorithmic}[1]
\STATE \textbf{Input:} Index $j \in [J]$, parameter vectors $\theta = \left( \theta_{1}, \ldots,  \theta_{J}\right)^{\mathsf{T}} \in \mathbb{R}^{J}$ and $\phi = \left( \phi_{1}, \ldots,  \phi_{K}\right)^{\mathsf{T}} \in \mathbb{R}^{K}$, Gibbs local Hamiltonians $\{G_\ell\}_{\ell=1}^{J}$ and $\{H_m\}_{m=1}^{K}$, probability distribution $p(t)$ over $\mathbb{R}$, precision $\varepsilon > 0$, error probability $\delta \in (0,1)$
\STATE $N \leftarrow \lceil\sfrac{2 \ln(\sfrac{2}{\delta})}{\varepsilon^2}\rceil$
\FOR{$n = 0$ to $N-1$}
\STATE Initialize the control register to $|1\rangle\!\langle 1 |$
\STATE Prepare the system register in the state $\rho(\theta)$
\STATE Sample $t$ at random with probability $p(t)$ (defined in~\eqref{eq:high-peak-tent-density})
\STATE Apply the Hadamard gate to the control register
\STATE Apply the following unitaries to the control and system registers:
\STATE \hspace{0.6cm} \textbullet~Controlled-$G_j$: $G_j$ is a local unitary acting on the system register, controlled by the control register
\STATE \hspace{0.6cm} \textbullet~$e^{-iG(\theta)t}$: Hamiltonian simulation for time $t$ on the system register
\STATE \hspace{0.6cm} \textbullet~$e^{-iH(\phi)}$: Hamiltonian simulation on the system register
\STATE Apply the Hadamard gate to the control register
\STATE Measure the control register in the computational basis and store the measurement outcome~$b_n$
\STATE Measure the system register in the eigenbasis of $O$ and store the measurement outcome $\lambda_n$
\STATE $Y_{n}^{(\theta)} \leftarrow (-1)^{b_n}\lambda_n$
\ENDFOR

\STATE \textbf{return} $\overline{Y}^{(\theta)} \leftarrow \frac{1}{N}\sum_{n=0}^{N-1}Y_{n}^{(\operatorname{\theta})}$
\end{algorithmic}
\end{algorithm}

\begin{remark}\label{remark_O}
If it is not straightforward to measure in the eigenbasis of $O$, but instead $O$ is  a linear combination of simpler observables that are each easy to measure, then one can adopt a sampling approach along the lines of \cite[Algorithm~1]{patel2024quantumboltzmannmachine}. See Remark~\ref{rem:measuring-G-theta} for further discussions of this point.
\end{remark}

\subsubsection{Quantum algorithm for estimating the partial derivative with respect to $\phi$}\label{app:VQE_grad_phi-est}

Here we show how to estimate the quantity in~\eqref{eq:VQE-grad-phi}. Consider that
\begin{align}
     i\left\langle\big[\Psi^{\dagger}_{\phi}(H_{k}),O\big] \right\rangle_{\omega(\theta,\phi)}
    & =i\operatorname{Tr}\!\left[\left[\Psi^{\dagger}_{\phi}(H_{k}), O\right]  \omega(\theta,\phi) \right]  \\
    & = \int_0^1 dt \left(i\operatorname{Tr}\!\left[\left[ e^{-iH(\phi)t}H_{k} e^{iH(\phi)t},  O \right] \omega(\theta,\phi)  \right] \right).\label{eq:VQE-grad-phi-eqiv}
\end{align}

We are now in a position to present an algorithm (Algorithm~\ref{algo:VQE-grad-phi}) to estimate~\eqref{eq:VQE-grad-phi} using its equivalent form shown in~\eqref{eq:VQE-grad-phi-eqiv}. At the core of our algorithm lies the quantum circuit that estimates the expected value of the commutator of two operators, $\frac{i}{2} \left\langle \left[ U, H\right] \right\rangle_{\rho}$, where $H$ is Hermitian and $U$ is Hermitian and unitary (refer to Appendix~\ref{app:Hadamard_test} and Figure~\ref{fig:qc-primitive-comm}). In this case, we choose $\rho=\omega(\theta,\phi)$, $U=e^{-iH(\phi)t}H_{k} e^{iH(\phi)t}$, $H=O$. Accordingly, the quantum circuit that estimates the integrand of~\eqref{eq:VQE-grad-phi-eqiv} is depicted in Figure~\ref{fig:VQE-grad-phi}.

\begin{algorithm}[H]
\caption{\texorpdfstring{$\mathtt{gradient\_\phi\_ground\_state\_energy}(j, \theta, \{G_\ell\}_{\ell=1}^{J}, \phi, \{H_m\}_{m=1}^{K}, \varepsilon, \delta)$}{estimate first term}}
\label{algo:VQE-grad-phi}
\begin{algorithmic}[1]
\STATE \textbf{Input:} Index $k \in [K]$, parameter vectors $\theta = \left( \theta_{1}, \ldots,  \theta_{J}\right)^{\mathsf{T}} \in \mathbb{R}^{J}$ and $\phi = \left( \phi_{1}, \ldots,  \phi_{K}\right)^{\mathsf{T}} \in \mathbb{R}^{K}$, Gibbs local Hamiltonians $\{G_\ell\}_{\ell=1}^{J}$ and $\{H_m\}_{m=1}^{K}$, precision $\varepsilon > 0$, error probability $\delta \in (0,1)$
\STATE $N \leftarrow \lceil\sfrac{2 \ln(\sfrac{2}{\delta})}{\varepsilon^2}\rceil$
\FOR{$n = 0$ to $N-1$}
\STATE Initialize the control register to $|1\rangle\!\langle 1 |$
\STATE Prepare the system register in the state $\omega(\theta,\phi)$
\STATE Sample $t$ uniformly at random from the interval $[0,1]$
\STATE Apply the Hadamard gate and the phase gate $S$ to the control register
\STATE Apply the following unitaries to the control and system registers:
\STATE \hspace{0.6cm} \textbullet~$e^{iH(\phi)t}$: Hamiltonian simulation for time $t$ on the system register
\STATE \hspace{0.6cm} \textbullet~Controlled-$H_k$: $H_k$ is a local unitary acting on the system register, controlled by the control register
\STATE \hspace{0.6cm} \textbullet~$e^{-iH(\phi)t}$: Hamiltonian simulation for time $t$ on the system register
\STATE Apply the Hadamard gate to the control register
\STATE Measure the control register in the computational basis and store the measurement outcome~$b_n$
\STATE Measure the system register in the eigenbasis of $O$ and store the measurement outcome $\lambda_n$ (see Remark~\ref{remark_O})
\STATE $Y_{n}^{(\phi)} \leftarrow (-1)^{b_n}\lambda_n$
\ENDFOR

\STATE \textbf{return} $\overline{Y}^{(\phi)} \leftarrow 2\times\frac{1}{N}\sum_{n=0}^{N-1}Y_{n}^{(\operatorname{\phi})}$
\end{algorithmic}
\end{algorithm}

\section{Evolved quantum Boltzmann machines for generative modeling}

\label{app:gen_model}

Here we prove the alternative formulation of the quantum relative entropy in~\eqref{eq:rel_entr_alt}. Consider that
\begin{align}
    D( \eta \| \omega(\theta,\phi) ) & = \Tr\!\left[ \eta \ln \eta \right] - \Tr\!\left[ \eta \ln \omega(\theta,\phi) \right]\\
    & =  \Tr\!\left[ \eta \ln \eta \right] - \Tr\!\left[ \eta \ln \!\left( e^{-iH(\phi)}\rho(\theta) e^{iH(\phi)} \right)\right]\\
    & = \Tr\!\left[ \eta \ln \eta \right] - \Tr\!\left[ \eta  e^{-iH(\phi)} \ln \rho(\theta) e^{iH(\phi)} \right]\label{pass:log_trio}\\
    & = \Tr\!\left[ \eta \ln \eta \right] - \Tr\!\left[e^{iH(\phi)}  \eta  e^{-iH(\phi)} (-G(\theta) )\right] - \Tr\!\left[e^{iH(\phi)}  \eta  e^{-iH(\phi)} (-\ln Z(\theta)) \right]\\
    & = \Tr\!\left[ \eta \ln \eta \right] + \Tr\!\left[ G(\theta)e^{iH(\phi)} \eta e^{-iH(\phi)}  \right] + \ln Z(\theta)\\
    & = \Tr\!\left[ \eta \ln \eta \right] + \Tr\!\left[ G(\theta) \eta(\phi)  \right] + \ln Z(\theta),
\end{align}
where in~\eqref{pass:log_trio} we used the fact that $\ln\!\left( U\!AU^\dag\right)=U \ (\ln A) \ U^\dag$ when $A$ is a positive semidefinite matrix and $U$ is a unitary matrix.

\subsection{Gradient of the quantum relative entropy}

\label{app:grad_gen_model}

We first prove Theorem~\ref{thm:gen_model_der_theta}, that is, how to obtain the $j$th element of the gradient of the quantum relative entropy with respect to the $\theta$ parameter. Using~\eqref{eq:rel_entr_alt}, we find that
    \begin{align}
    \partial_{\theta_j} D( \eta \| \omega(\theta,\phi) ) & =  \partial_{\theta_j}\! \Tr\!\left[    G(\theta) \eta(\phi) \right] + \partial_{\theta_j}\! \ln Z(\theta)\\
    & = \Tr\!\left[    G_j \eta(\phi) \right] - \Tr\!\left[ G_j \rho(\theta)\right]\\
    & = \left\langle G_j \right\rangle_{\eta(\phi)} - \left\langle G_j \right\rangle_{\rho(\theta)},
\end{align}
concluding the proof of Theorem~\ref{thm:gen_model_der_theta}.

Now, we prove Theorem~\ref{thm:gen_model_der_phi}, that is, the analytical expression of  the $k$th element of the gradient of the quantum relative entropy with respect to  $\phi$. Using~\eqref{eq:rel_entr_alt}, we find that
\begin{align}
    \partial_{\phi_k} D( \eta \| \omega(\theta,\phi) ) & = \partial_{\phi_k}\! \Tr\!\left[ e^{iH(\phi)} \eta e^{-iH(\phi)} G(\theta) \right]\\
    & = \Tr\!\left[ \left(\partial_{\phi_k}\! e^{iH(\phi)} \right) \eta e^{-iH(\phi)} G(\theta) + e^{iH(\phi)} \eta \left(\partial_{\phi_k}\! e^{-iH(\phi)}\right) G(\theta) \right]\\
    & = \Tr\!\left[ ie^{iH(\phi)} \Psi^\dag_\phi (H_k) \eta e^{-iH(\phi)} G(\theta) - i e^{iH(\phi)} \eta \Psi^\dag_\phi (H_k)e^{-iH(\phi)} G(\theta) \right]\label{pass:partials_phi}\\
    & = i \Tr\!\left[ e^{iH(\phi)} \Psi^\dag_\phi (H_k) \eta e^{-iH(\phi)} G(\theta) \right] - i \Tr\!\left[ e^{iH(\phi)} \eta \Psi^\dag_\phi (H_k)e^{-iH(\phi)} G(\theta)  \right]\\
    & = i \Tr\!\left[ e^{iH(\phi)} \Psi^\dag_\phi (H_k) e^{-iH(\phi)} e^{iH(\phi)} \eta e^{-iH(\phi)} G(\theta) \right] \nonumber\\
    & \hspace{1cm} - i \Tr\!\left[ e^{iH(\phi)} \eta e^{-iH(\phi)} e^{iH(\phi)} \Psi^\dag_\phi (H_k)e^{-iH(\phi)} G(\theta)  \right]\\
    & = i \Tr\!\left[ \Psi_\phi (H_k) \eta(\phi) G(\theta) \right] - i \Tr\!\left[ \eta(\phi) \Psi_\phi (H_k) G(\theta)  \right]\\
    & = i \Tr\!\left[ G(\theta) \Psi_\phi (H_k) \eta(\phi)  \right] - i \Tr\!\left[ \Psi_\phi (H_k) G(\theta)  \eta(\phi) \right]\\
    & = i \Tr\!\left[ \left[  G(\theta) ,\Psi_\phi (H_k) \right] \eta(\phi)  \right]\\
    & = i \left\langle  \left[ G(\theta) ,\Psi_\phi (H_k) \right] \right\rangle_{\eta(\phi)},
\end{align}
where, in~\eqref{pass:partials_phi}, we have used the facts that $\partial_{\phi_{k}} e^{iH(\phi)}
=ie^{iH(\phi)}\Psi^{\dagger}_{\phi}(H_{k})$ and $\partial_{\phi_{k}} e^{-iH(\phi)} = 
-i\Psi^{\dagger}_{\phi}(H_{k})e^{-iH(\phi)}$, as derived in~\eqref{eq:deriv-phi-proof-1}--\eqref{eq:deriv-phi-proof-last}.

\subsubsection{Quantum algorithm for estimating the partial derivative with respect to $\phi$}\label{app:gen-mod_grad_phi-est}

Here we show how to estimate the quantity in~\eqref{eq:gen_mod_grad_phi}. Consider that
\begin{align}
     i \left\langle  \left[  G(\theta)  ,\Psi_\phi (H_k) \right] \right\rangle_{\eta(\phi)}
    & =i\operatorname{Tr}\!\left[\left[G(\theta),\Psi_{\phi}(H_{k}) \right]  \eta(\phi) \right] \\
    & = \int_0^1 dt \left(i\operatorname{Tr}\!\left[\left[ G(\theta), e^{iH(\phi)t}H_{k} e^{-iH(\phi)t} \right] \eta(\phi)  \right] \right).\label{eq:gen_mod_grad_phi-eqiv}
\end{align}
We are now in a position to present an algorithm to estimate~\eqref{eq:gen_mod_grad_phi} using its equivalent form shown in~\eqref{eq:gen_mod_grad_phi-eqiv}. The algorithm is similar to Algorithm~\ref{algo:VQE-grad-phi}, so here we provide a high-level description of how it works.  At its core, the algorithm relies on a quantum circuit that estimates the expected value of the commutator of two operators (see Appendix~\ref{app:Hadamard_test}). Specifically, if the control register in Figure~\ref{fig:qc-primitive-comm} is initialized in the state $\ket{0}$ instead of $\ket{1}$, the output of the circuit is $\frac{i}{2} \left\langle \left[ H, U\right] \right\rangle_{\rho}$, where $H$ is Hermitian and $U$ is Hermitian and unitary.  In this case, we choose $U=e^{iH(\phi)t} H_{k} e^{-iH(\phi)t}$, $H=G(\theta)$, and $\rho=\eta(\phi)$, where $\eta(\phi)$ is obtained by applying $e^{iH(\phi)}$ to $\eta$. Accordingly, the quantum circuit that plays a role in estimating the integrand of~\eqref{eq:gen_mod_grad_phi-eqiv} is depicted in Figure~\ref{fig:gen-mod-grad-phi}. The algorithm involves running this circuit $N$ times, where $N$ is determined by the desired precision and error probability. During each run, the time $t$ for the Hamiltonian evolution is sampled uniformly at random from the interval $[0,1]$. The final estimation of \eqref{eq:gen_mod_grad_phi} is obtained by averaging the outputs of the $N$ runs and multiplying the result by $2\left \| \theta \right\|_1$. For measuring $G(\theta) $, we adopt a sampling approach (see Remark~\ref{rem:measuring-G-theta}).

\section{Proof of Theorem~\ref{thm:QFI-formulas}}

\label{app:QFI-formulas}

\subsection{Proof of Equations~\eqref{eq:FB-formula-1} and~\eqref{eq:FB-formula-2}}

Consider that
\begin{align}
 \frac{\partial^{2}}{\partial\varepsilon_{i}\partial\varepsilon_{j}}\left[
-2\ln F(\sigma(\gamma),\sigma(\gamma+\varepsilon))\right]  
&  =-2\frac{\partial^{2}}{\partial\varepsilon_{i}\partial\varepsilon_{j}}
\ln\left(  \operatorname{Tr}\!\left[  \sqrt{\sqrt{\sigma(\gamma)}\sigma
(\gamma+\varepsilon)\sqrt{\sigma(\gamma)}}\right]  \right)  ^{2}\\
&  =-4\frac{\partial^{2}}{\partial\varepsilon_{i}\partial\varepsilon_{j}}
\ln\operatorname{Tr}\!\left[  \sqrt{\sqrt{\sigma(\gamma)}\sigma(\gamma
+\varepsilon)\sqrt{\sigma(\gamma)}}\right]  \\
&  =-4\frac{\partial}{\partial\varepsilon_{i}}\left(  \frac{\partial}
{\partial\varepsilon_{j}}\ln\operatorname{Tr}\!\left[  \sqrt{\sqrt
{\sigma(\gamma)}\sigma(\gamma+\varepsilon)\sqrt{\sigma(\gamma)}}\right]
\right)  \\
&  =-4\frac{\partial}{\partial\varepsilon_{i}}\left(  \frac{\frac{\partial
}{\partial\varepsilon_{j}}\operatorname{Tr}\!\left[  \sqrt{\sqrt{\sigma
(\gamma)}\sigma(\gamma+\varepsilon)\sqrt{\sigma(\gamma)}}\right]
}{\operatorname{Tr}\!\left[  \sqrt{\sqrt{\sigma(\gamma)}\sigma(\gamma
+\varepsilon)\sqrt{\sigma(\gamma)}}\right]  }\right)
.\label{eq:FB-deriv-full-2}
\end{align}
Recalling from \cite[Theorem~1.1]{DelMoral2018}  that
\begin{equation}
\frac{\partial}{\partial\gamma_{j}}\sqrt{\sigma(\gamma)}=\int_{0}^{\infty
}dt\ e^{-t\sqrt{\sigma(\gamma)}}\left(  \frac{\partial}{\partial\gamma_{j}
}\sigma(\gamma)\right)  e^{-t\sqrt{\sigma(\gamma)}},
\end{equation}
now consider that
\begin{align}
&  \frac{\partial}{\partial\varepsilon_{j}}\operatorname{Tr}\!\left[
\sqrt{\sqrt{\sigma(\gamma)}\sigma(\gamma+\varepsilon)\sqrt{\sigma(\gamma)}
}\right]  \nonumber\\
&  =\operatorname{Tr}\!\left[  \int_{0}^{\infty}dt\ e^{-t\sqrt{\sqrt
{\sigma(\gamma)}\sigma(\gamma+\varepsilon)\sqrt{\sigma(\gamma)}}}\left(
\frac{\partial}{\partial\varepsilon_{j}}\sqrt{\sigma(\gamma)}\sigma
(\gamma+\varepsilon)\sqrt{\sigma(\gamma)}\right)  e^{-t\sqrt{\sqrt
{\sigma(\gamma)}\sigma(\gamma+\varepsilon)\sqrt{\sigma(\gamma)}}}\right]  \\
&  =\int_{0}^{\infty}dt\ \operatorname{Tr}\!\left[  e^{-2t\sqrt{\sqrt
{\sigma(\gamma)}\sigma(\gamma+\varepsilon)\sqrt{\sigma(\gamma)}}}\sqrt
{\sigma(\gamma)}\left(  \frac{\partial}{\partial\varepsilon_{j}}\sigma
(\gamma+\varepsilon)\right)  \sqrt{\sigma(\gamma)}\right]  \\
&  =\frac{1}{2}\operatorname{Tr}\!\left[  \left(  \sqrt{\sigma(\gamma)}
\sigma(\gamma+\varepsilon)\sqrt{\sigma(\gamma)}\right)  ^{-\frac{1}{2}}
\sqrt{\sigma(\gamma)}\left(  \frac{\partial}{\partial\varepsilon_{j}}
\sigma(\gamma+\varepsilon)\right)  \sqrt{\sigma(\gamma)}\right]
.\label{eq:FB-deriv-full-1}
\end{align}
Substituting~\eqref{eq:FB-deriv-full-1} into the numerator of
\eqref{eq:FB-deriv-full-2}, we find that
\begin{align}
&  \frac{\partial^{2}}{\partial\varepsilon_{i}\partial\varepsilon_{j}}\left[
-2\ln F(\sigma(\gamma),\sigma(\gamma+\varepsilon))\right]  \nonumber\\
&  =-4\frac{\partial}{\partial\varepsilon_{i}}\left(  \frac{\frac{\partial
}{\partial\varepsilon_{j}}\operatorname{Tr}\!\left[  \sqrt{\sqrt{\sigma
(\gamma)}\sigma(\gamma+\varepsilon)\sqrt{\sigma(\gamma)}}\right]
}{\operatorname{Tr}\!\left[  \sqrt{\sqrt{\sigma(\gamma)}\sigma(\gamma
+\varepsilon)\sqrt{\sigma(\gamma)}}\right]  }\right)  \\
&  =-2\frac{\partial}{\partial\varepsilon_{i}}\left(  \frac{\operatorname{Tr}\!
\left[  \left(  \sqrt{\sigma(\gamma)}\sigma(\gamma+\varepsilon)\sqrt
{\sigma(\gamma)}\right)  ^{-\frac{1}{2}}\sqrt{\sigma(\gamma)}\left(
\frac{\partial}{\partial\varepsilon_{j}}\sigma(\gamma+\varepsilon)\right)
\sqrt{\sigma(\gamma)}\right]  }{\operatorname{Tr}\!\left[  \sqrt{\sqrt
{\sigma(\gamma)}\sigma(\gamma+\varepsilon)\sqrt{\sigma(\gamma)}}\right]
}\right)  \\
&  =2\frac{\frac{\partial}{\partial\varepsilon_{i}}\operatorname{Tr}\!\left[
\sqrt{\sqrt{\sigma(\gamma)}\sigma(\gamma+\varepsilon)\sqrt{\sigma(\gamma)}
}\right]  \operatorname{Tr}\!\left[  \left(  \sqrt{\sigma(\gamma)}
\sigma(\gamma+\varepsilon)\sqrt{\sigma(\gamma)}\right)  ^{-\frac{1}{2}}
\sqrt{\sigma(\gamma)}\left(  \frac{\partial}{\partial\varepsilon_{j}}
\sigma(\gamma+\varepsilon)\right)  \sqrt{\sigma(\gamma)}\right]
}{\operatorname{Tr}\!\left[  \sqrt{\sqrt{\sigma(\gamma)}\sigma(\gamma
+\varepsilon)\sqrt{\sigma(\gamma)}}\right]  ^{2}}\nonumber\\
&  \qquad-2\frac{\frac{\partial}{\partial\varepsilon_{i}}\operatorname{Tr}\!
\left[  \left(  \sqrt{\sigma(\gamma)}\sigma(\gamma+\varepsilon)\sqrt
{\sigma(\gamma)}\right)  ^{-\frac{1}{2}}\sqrt{\sigma(\gamma)}\left(
\frac{\partial}{\partial\varepsilon_{j}}\sigma(\gamma+\varepsilon)\right)
\sqrt{\sigma(\gamma)}\right]  }{\operatorname{Tr}\!\left[  \sqrt{\sqrt
{\sigma(\gamma)}\sigma(\gamma+\varepsilon)\sqrt{\sigma(\gamma)}}\right]  }\\
&  =\frac{\left(
\begin{array}
[c]{c}
\operatorname{Tr}\!\left[  \left(  \sqrt{\sigma(\gamma)}\sigma(\gamma
+\varepsilon)\sqrt{\sigma(\gamma)}\right)  ^{-\frac{1}{2}}\sqrt{\sigma
(\gamma)}\left(  \frac{\partial}{\partial\varepsilon_{i}}\sigma(\gamma
+\varepsilon)\right)  \sqrt{\sigma(\gamma)}\right]  \times\\
\operatorname{Tr}\!\left[  \left(  \sqrt{\sigma(\gamma)}\sigma(\gamma
+\varepsilon)\sqrt{\sigma(\gamma)}\right)  ^{-\frac{1}{2}}\sqrt{\sigma
(\gamma)}\left(  \frac{\partial}{\partial\varepsilon_{j}}\sigma(\gamma
+\varepsilon)\right)  \sqrt{\sigma(\gamma)}\right]
\end{array}
\right)  }{\operatorname{Tr}\!\left[  \sqrt{\sqrt{\sigma(\gamma)}\sigma
(\gamma+\varepsilon)\sqrt{\sigma(\gamma)}}\right]  ^{2}}\nonumber\\
&  \qquad-2\frac{\operatorname{Tr}\!\left[  \left(  \frac{\partial}
{\partial\varepsilon_{i}}\left(  \sqrt{\sigma(\gamma)}\sigma(\gamma
+\varepsilon)\sqrt{\sigma(\gamma)}\right)  ^{-\frac{1}{2}}\right)
\sqrt{\sigma(\gamma)}\left(  \frac{\partial}{\partial\varepsilon_{j}}
\sigma(\gamma+\varepsilon)\right)  \sqrt{\sigma(\gamma)}\right]
}{\operatorname{Tr}\!\left[  \sqrt{\sqrt{\sigma(\gamma)}\sigma(\gamma
+\varepsilon)\sqrt{\sigma(\gamma)}}\right]  }\nonumber\\
&  \qquad-2\frac{\operatorname{Tr}\!\left[  \left(  \sqrt{\sigma(\gamma
)}\sigma(\gamma+\varepsilon)\sqrt{\sigma(\gamma)}\right)  ^{-\frac{1}{2}}
\sqrt{\sigma(\gamma)}\left(  \frac{\partial^{2}}{\partial\varepsilon
_{i}\partial\varepsilon_{j}}\sigma(\gamma+\varepsilon)\right)  \sqrt
{\sigma(\gamma)}\right]  }{\operatorname{Tr}\!\left[  \sqrt{\sqrt
{\sigma(\gamma)}\sigma(\gamma+\varepsilon)\sqrt{\sigma(\gamma)}}\right]  }.
\end{align}
Then it follows that
\begin{align}
&  \left.  \frac{\partial^{2}}{\partial\varepsilon_{i}\partial\varepsilon_{j}
}\left[  -2\ln F(\sigma(\gamma),\sigma(\gamma+\varepsilon))\right]
\right\vert _{\varepsilon=0}\nonumber\\
&  =\frac{\left(
\begin{array}
[c]{c}
\operatorname{Tr}\!\left[  \left(  \sqrt{\sigma(\gamma)}\sigma(\gamma
)\sqrt{\sigma(\gamma)}\right)  ^{-\frac{1}{2}}\sqrt{\sigma(\gamma)}\left(
\left.  \frac{\partial}{\partial\varepsilon_{i}}\sigma(\gamma+\varepsilon
)\right\vert _{\varepsilon=0}\right)  \sqrt{\sigma(\gamma)}\right]  \times\\
\operatorname{Tr}\!\left[  \left(  \sqrt{\sigma(\gamma)}\sigma(\gamma
)\sqrt{\sigma(\gamma)}\right)  ^{-\frac{1}{2}}\sqrt{\sigma(\gamma)}\left(
\left.  \frac{\partial}{\partial\varepsilon_{j}}\sigma(\gamma+\varepsilon
)\right\vert _{\varepsilon=0}\right)  \sqrt{\sigma(\gamma)}\right]
\end{array}
\right)  }{\operatorname{Tr}\!\left[  \sqrt{\sqrt{\sigma(\gamma)}\sigma
(\gamma)\sqrt{\sigma(\gamma)}}\right]  ^{2}}\nonumber\\
&  \qquad-2\frac{\operatorname{Tr}\!\left[  \left(  \left.  \frac{\partial
}{\partial\varepsilon_{i}}\left(  \sqrt{\sigma(\gamma)}\sigma(\gamma
+\varepsilon)\sqrt{\sigma(\gamma)}\right)  ^{-\frac{1}{2}}\right\vert
_{\varepsilon=0}\right)  \sqrt{\sigma(\gamma)}\left(  \left.  \frac{\partial
}{\partial\varepsilon_{j}}\sigma(\gamma+\varepsilon)\right\vert _{\varepsilon
=0}\right)  \sqrt{\sigma(\gamma)}\right]  }{\operatorname{Tr}\!\left[
\sqrt{\sqrt{\sigma(\gamma)}\sigma(\gamma)\sqrt{\sigma(\gamma)}}\right]
}\nonumber\\
&  \qquad-2\frac{\operatorname{Tr}\!\left[  \left(  \sqrt{\sigma(\gamma
)}\sigma(\gamma)\sqrt{\sigma(\gamma)}\right)  ^{-\frac{1}{2}}\sqrt
{\sigma(\gamma)}\left(  \left.  \frac{\partial^{2}}{\partial\varepsilon
_{i}\partial\varepsilon_{j}}\sigma(\gamma+\varepsilon)\right\vert
_{\varepsilon=0}\right)  \sqrt{\sigma(\gamma)}\right]  }{\operatorname{Tr}
\!\left[  \sqrt{\sqrt{\sigma(\gamma)}\sigma(\gamma)\sqrt{\sigma(\gamma)}
}\right]  }\\
&  =\operatorname{Tr}\!\left[  \left(  \left.  \frac{\partial}{\partial
\varepsilon_{i}}\sigma(\gamma+\varepsilon)\right\vert _{\varepsilon=0}\right)
\right]  \operatorname{Tr}\!\left[  \left(  \left.  \frac{\partial}
{\partial\varepsilon_{j}}\sigma(\gamma+\varepsilon)\right\vert _{\varepsilon
=0}\right)  \right]  \nonumber\\
&  \qquad-2\operatorname{Tr}\!\left[  \left(  \left.  \frac{\partial}
{\partial\varepsilon_{i}}\left(  \sqrt{\sigma(\gamma)}\sigma(\gamma
+\varepsilon)\sqrt{\sigma(\gamma)}\right)  ^{-\frac{1}{2}}\right\vert
_{\varepsilon=0}\right)  \sqrt{\sigma(\gamma)}\left(  \left.  \frac{\partial
}{\partial\varepsilon_{j}}\sigma(\gamma+\varepsilon)\right\vert _{\varepsilon
=0}\right)  \sqrt{\sigma(\gamma)}\right]  \nonumber\\
&  \qquad-2\operatorname{Tr}\!\left[  \left(  \left.  \frac{\partial^{2}
}{\partial\varepsilon_{i}\partial\varepsilon_{j}}\sigma(\gamma+\varepsilon
)\right\vert _{\varepsilon=0}\right)  \right]  \\
&  =-2\operatorname{Tr}\!\left[  \left(  \left.  \frac{\partial}
{\partial\varepsilon_{i}}\left(  \sqrt{\sigma(\gamma)}\sigma(\gamma
+\varepsilon)\sqrt{\sigma(\gamma)}\right)  ^{-\frac{1}{2}}\right\vert
_{\varepsilon=0}\right)  \sqrt{\sigma(\gamma)}\left(  \left.  \frac{\partial
}{\partial\varepsilon_{j}}\sigma(\gamma+\varepsilon)\right\vert _{\varepsilon
=0}\right)  \sqrt{\sigma(\gamma)}\right]  .
\end{align}
In the transition to the last line above, we observed that
\begin{align}
\operatorname{Tr}\!\left[  \left(  \left.  \frac{\partial}{\partial
\varepsilon_{i}}\sigma(\gamma+\varepsilon)\right\vert _{\varepsilon=0}\right)
\right]   &  =\frac{\partial}{\partial\varepsilon_{i}}\left.
\operatorname{Tr}\!\left[  \sigma(\gamma+\varepsilon)\right]  \right\vert
_{\varepsilon=0}  =0,\\
\operatorname{Tr}\!\left[  \left(  \left.  \frac{\partial^{2}}{\partial
\varepsilon_{i}\partial\varepsilon_{j}}\sigma(\gamma+\varepsilon)\right\vert
_{\varepsilon=0}\right)  \right]   &  =\frac{\partial^{2}}{\partial
\varepsilon_{i}\partial\varepsilon_{j}}\left.  \operatorname{Tr}\!\left[
\sigma(\gamma+\varepsilon)\right]  \right\vert _{\varepsilon=0}  =0.
\end{align}
So then
\begin{multline}
\left.  \frac{\partial^{2}}{\partial\varepsilon_{i}\partial\varepsilon_{j}
}\left[  -2\ln F(\sigma(\gamma),\sigma(\gamma+\varepsilon))\right]
\right\vert _{\varepsilon=0}\label{eq:FB-deriv-full-4}\\
=-2\operatorname{Tr}\!\left[  \left(  \left.  \frac{\partial}{\partial
\varepsilon_{i}}\left(  \sqrt{\sigma(\gamma)}\sigma(\gamma+\varepsilon
)\sqrt{\sigma(\gamma)}\right)  ^{-\frac{1}{2}}\right\vert _{\varepsilon
=0}\right)  \sqrt{\sigma(\gamma)}\left(  \left.  \frac{\partial}
{\partial\varepsilon_{j}}\sigma(\gamma+\varepsilon)\right\vert _{\varepsilon
=0}\right)  \sqrt{\sigma(\gamma)}\right]  .
\end{multline}
Now, recalling that
\begin{equation}
\frac{\partial}{\partial\gamma_{j}}\left(  \sigma(\gamma)^{-1}\right)
=-\sigma(\gamma)^{-1}\left(  \frac{\partial}{\partial\gamma_{j}}\sigma
(\gamma)\right)  \sigma(\gamma)^{-1},
\end{equation}
which follows from applying $\frac{\partial}{\partial\gamma_{j}}$ to the equation $I = \sigma(\gamma)\sigma(\gamma)^{-1}$ and solving for $\frac{\partial}{\partial\gamma_{j}}\left(  \sigma(\gamma)^{-1}\right)$,
consider that
\begin{align}
&  \frac{\partial}{\partial\varepsilon_{i}}\left(  \sqrt{\sigma(\gamma)}
\sigma(\gamma+\varepsilon)\sqrt{\sigma(\gamma)}\right)  ^{-\frac{1}{2}
}\nonumber\\
&  =\frac{\partial}{\partial\varepsilon_{i}}\left[  \sqrt{\sqrt{\sigma
(\gamma)}\sigma(\gamma+\varepsilon)\sqrt{\sigma(\gamma)}}\right]  ^{-1}\\
&  =-\sqrt{\sqrt{\sigma(\gamma)}\sigma(\gamma+\varepsilon)\sqrt{\sigma
(\gamma)}}^{-1}\left(  \frac{\partial}{\partial\varepsilon_{i}}\sqrt
{\sqrt{\sigma(\gamma)}\sigma(\gamma+\varepsilon)\sqrt{\sigma(\gamma)}}\right)
\sqrt{\sqrt{\sigma(\gamma)}\sigma(\gamma+\varepsilon)\sqrt{\sigma(\gamma)}
}^{-1}\\
&  =-\sqrt{\sqrt{\sigma(\gamma)}\sigma(\gamma+\varepsilon)\sqrt{\sigma
(\gamma)}}^{-1}\int_{0}^{\infty}dt\ e^{-t\sqrt{\sqrt{\sigma(\gamma)}
\sigma(\gamma+\varepsilon)\sqrt{\sigma(\gamma)}}}\left(  \frac{\partial
}{\partial\varepsilon_{i}}\sqrt{\sigma(\gamma)}\sigma(\gamma+\varepsilon
)\sqrt{\sigma(\gamma)}\right)  \times\nonumber\\
&  \qquad e^{-t\sqrt{\sqrt{\sigma(\gamma)}\sigma(\gamma+\varepsilon
)\sqrt{\sigma(\gamma)}}}\sqrt{\sqrt{\sigma(\gamma)}\sigma(\gamma
+\varepsilon)\sqrt{\sigma(\gamma)}}^{-1}\\
&  =-\sqrt{\sqrt{\sigma(\gamma)}\sigma(\gamma+\varepsilon)\sqrt{\sigma
(\gamma)}}^{-1}\int_{0}^{\infty}dt\ e^{-t\sqrt{\sqrt{\sigma(\gamma)}
\sigma(\gamma+\varepsilon)\sqrt{\sigma(\gamma)}}}\sqrt{\sigma(\gamma)}\left(
\frac{\partial}{\partial\varepsilon_{i}}\sigma(\gamma+\varepsilon)\right)
\sqrt{\sigma(\gamma)}\times\nonumber\\
&  \qquad e^{-t\sqrt{\sqrt{\sigma(\gamma)}\sigma(\gamma+\varepsilon
)\sqrt{\sigma(\gamma)}}}\sqrt{\sqrt{\sigma(\gamma)}\sigma(\gamma
+\varepsilon)\sqrt{\sigma(\gamma)}}^{-1},
\end{align}
which implies that
\begin{align}
&  \left.  \frac{\partial}{\partial\varepsilon_{i}}\left(  \sqrt{\sigma
(\gamma)}\sigma(\gamma+\varepsilon)\sqrt{\sigma(\gamma)}\right)  ^{-\frac
{1}{2}}\right\vert _{\varepsilon=0}\nonumber\\
&  =-\sqrt{\sqrt{\sigma(\gamma)}\sigma(\gamma)\sqrt{\sigma(\gamma)}}^{-1}
\int_{0}^{\infty}dt\ e^{-t\sqrt{\sqrt{\sigma(\gamma)}\sigma(\gamma
)\sqrt{\sigma(\gamma)}}}\sqrt{\sigma(\gamma)}\left(  \left.  \frac{\partial
}{\partial\varepsilon_{i}}\sigma(\gamma+\varepsilon)\right\vert _{\varepsilon
=0}\right)  \times\nonumber\\
&  \qquad\sqrt{\sigma(\gamma)}e^{-t\sqrt{\sqrt{\sigma(\gamma)}\sigma
(\gamma)\sqrt{\sigma(\gamma)}}}\sqrt{\sqrt{\sigma(\gamma)}\sigma(\gamma
)\sqrt{\sigma(\gamma)}}^{-1}\\
&  =-\sqrt{\sigma(\gamma)}^{-1}\int_{0}^{\infty}dt\ e^{-t\sigma(\gamma
)}\left(  \left.  \frac{\partial}{\partial\varepsilon_{i}}\sigma
(\gamma+\varepsilon)\right\vert _{\varepsilon=0}\right)  e^{-t\sigma(\gamma
)}\sqrt{\sigma(\gamma)}^{-1}\\
&  =-\sqrt{\sigma(\gamma)}^{-1}\int_{0}^{\infty}dt\ e^{-t\sigma(\gamma
)}\left(  \frac{\partial}{\partial\gamma_{i}}\sigma(\gamma)\right)
e^{-t\sigma(\gamma)}\sqrt{\sigma(\gamma)}^{-1}.\label{eq:FB-deriv-full-3}
\end{align}
Substituting~\eqref{eq:FB-deriv-full-3}\ into~\eqref{eq:FB-deriv-full-4}, we
find that
\begin{align}
&  \left.  \frac{\partial^{2}}{\partial\varepsilon_{i}\partial\varepsilon_{j}
}\left[  -2\ln F(\sigma(\gamma),\sigma(\gamma+\varepsilon))\right]
\right\vert _{\varepsilon=0}\nonumber\\
&  =-2\operatorname{Tr}\!\left[  \left(  -\sqrt{\sigma(\gamma)}^{-1}\int
_{0}^{\infty}dt\ e^{-t\sigma(\gamma)}\left(  \frac{\partial}{\partial
\gamma_{i}}\sigma(\gamma)\right)  e^{-t\sigma(\gamma)}\sqrt{\sigma(\gamma
)}^{-1}\right)  \sqrt{\sigma(\gamma)}\left(  \left.  \frac{\partial}
{\partial\varepsilon_{j}}\sigma(\gamma+\varepsilon)\right\vert _{\varepsilon
=0}\right)  \sqrt{\sigma(\gamma)}\right]  \\
&  =2\int_{0}^{\infty}dt\ \operatorname{Tr}\!\left[  e^{-t\sigma(\gamma
)}\left(  \frac{\partial}{\partial\gamma_{i}}\sigma(\gamma)\right)
e^{-t\sigma(\gamma)}\left(  \frac{\partial}{\partial\gamma_{j}}\sigma
(\gamma)\right)  \right]  ,
\end{align}
thus establishing~\eqref{eq:FB-formula-1}.

Now let us consider substituting in the eigenbasis of $\sigma(\gamma)$, taken
as $\sum_{k}\lambda_{k}|k\rangle\!\langle k|$. Then we find that 
\begin{align}
&  2\int_{0}^{\infty}dt\ \operatorname{Tr}\!\left[  e^{-t\sigma(\gamma
)}\left(  \frac{\partial}{\partial\gamma_{i}}\sigma(\gamma)\right)
e^{-t\sigma(\gamma)}\left(  \frac{\partial}{\partial\gamma_{j}}\sigma
(\gamma)\right)  \right] \nonumber\\
&  =2\int_{0}^{\infty}dt\ \operatorname{Tr}\!\left[  \left(  \sum
_{k}e^{-t\lambda_{k}}|k\rangle\langle k|\right)  \left(  \frac{\partial
}{\partial\gamma_{i}}\sigma(\gamma)\right)  \left(  \sum_{\ell}e^{-t\lambda
_{\ell}}|\ell\rangle\langle\ell|\right)  \left(  \frac{\partial}
{\partial\gamma_{j}}\sigma(\gamma)\right)  \right] \\
&  =2\sum_{k,\ell}\left(  \int_{0}^{\infty}dt\ e^{-t\left(  \lambda
_{k}+\lambda_{\ell}\right)  }\right)  \operatorname{Tr}\!\left[
|k\rangle\langle k|\left(  \frac{\partial}{\partial\gamma_{i}}\sigma
(\gamma)\right)  |\ell\rangle\langle\ell|\left(  \frac{\partial}
{\partial\gamma_{j}}\sigma(\gamma)\right)  \right] \\
&  =\sum_{k,\ell}\frac{2}{\lambda_{k}+\lambda_{\ell}}\langle k|\left(
\partial_{i}\sigma(\gamma)\right)  |\ell\rangle\langle\ell|\left(
\partial_{j}\sigma(\gamma)\right)  |k\rangle,
\end{align}
thus establishing~\eqref{eq:FB-formula-2}.

\subsection{Proof of Equations~\eqref{eq:WY-formula-1} and~\eqref{eq:WY-formula-2}}

Consider that
\begin{align}
  \frac{\partial^{2}}{\partial\varepsilon_{i}\partial\varepsilon_{j}}\left[
-2\ln F_{H}(\sigma(\gamma),\sigma(\gamma+\varepsilon))\right]  
&  =-2\frac{\partial^{2}}{\partial\varepsilon_{i}\partial\varepsilon_{j}}
\ln\operatorname{Tr}\!\left[  \sqrt{\sigma(\gamma)}\sqrt{\sigma(\gamma
+\varepsilon)}\right]  ^{2}\\
&  =-4\frac{\partial^{2}}{\partial\varepsilon_{i}\partial\varepsilon_{j}}
\ln\operatorname{Tr}\!\left[  \sqrt{\sigma(\gamma)}\sqrt{\sigma(\gamma
+\varepsilon)}\right]  \\
&  =-4\frac{\partial}{\partial\varepsilon_{i}}\left(  \frac{\partial}
{\partial\varepsilon_{j}}\ln\operatorname{Tr}\!\left[  \sqrt{\sigma(\gamma
)}\sqrt{\sigma(\gamma+\varepsilon)}\right]  \right)  \\
&  =-4\frac{\partial}{\partial\varepsilon_{i}}\left(  \frac{\frac{\partial
}{\partial\varepsilon_{j}}\operatorname{Tr}\!\left[  \sqrt{\sigma(\gamma)}
\sqrt{\sigma(\gamma+\varepsilon)}\right]  }{\operatorname{Tr}\!\left[
\sqrt{\sigma(\gamma)}\sqrt{\sigma(\gamma+\varepsilon)}\right]  }\right)  \\
&  =-4\frac{\partial}{\partial\varepsilon_{i}}\left(  \frac{\operatorname{Tr}\!
\left[  \sqrt{\sigma(\gamma)}\frac{\partial}{\partial\varepsilon_{j}}\sqrt
{\sigma(\gamma+\varepsilon)}\right]  }{\operatorname{Tr}\!\left[  \sqrt
{\sigma(\gamma)}\sqrt{\sigma(\gamma+\varepsilon)}\right]  }\right)  \\
&  =\frac{4\operatorname{Tr}\!\left[  \sqrt{\sigma(\gamma)}\frac{\partial
}{\partial\varepsilon_{i}}\sqrt{\sigma(\gamma+\varepsilon)}\right]
\operatorname{Tr}\!\left[  \sqrt{\sigma(\gamma)}\frac{\partial}{\partial
\varepsilon_{j}}\sqrt{\sigma(\gamma+\varepsilon)}\right]  }{\left(
\operatorname{Tr}\!\left[  \sqrt{\sigma(\gamma)}\sqrt{\sigma(\gamma+\varepsilon
)}\right]  \right)  ^{2}}\nonumber\\
&  \qquad-\frac{4\operatorname{Tr}\!\left[  \sqrt{\sigma(\gamma)}\frac
{\partial^{2}}{\partial\varepsilon_{i}\partial\varepsilon_{j}}\sqrt
{\sigma(\gamma+\varepsilon)}\right]  }{\operatorname{Tr}\!\left[  \sqrt
{\sigma(\gamma)}\sqrt{\sigma(\gamma+\varepsilon)}\right]  }.
\end{align}
Then we find that
\begin{align}
  \left.  \frac{\partial^{2}}{\partial\varepsilon_{i}\partial\varepsilon_{j}
}\left(  -2\ln F_{H}(\sigma(\gamma),\sigma(\gamma+\varepsilon))\right)
\right\vert _{\varepsilon=0}
&  =\frac{4\operatorname{Tr}\!\left[  \sqrt{\sigma(\gamma)}\left.
\frac{\partial}{\partial\varepsilon_{i}}\sqrt{\sigma(\gamma+\varepsilon
)}\right\vert _{\varepsilon=0}\right]  \operatorname{Tr}\!\left[  \sqrt
{\sigma(\gamma)}\left.  \frac{\partial}{\partial\varepsilon_{j}}\sqrt
{\sigma(\gamma+\varepsilon)}\right\vert _{\varepsilon=0}\right]  }{\left(
\operatorname{Tr}\!\left[  \sqrt{\sigma(\gamma)}\sqrt{\sigma(\gamma)}\right]
\right)  ^{2}}\nonumber\\
&  \qquad-\frac{4\operatorname{Tr}\!\left[  \sqrt{\sigma(\gamma)}\left.
\frac{\partial^{2}}{\partial\varepsilon_{i}\partial\varepsilon_{j}}\sqrt
{\sigma(\gamma+\varepsilon)}\right\vert _{\varepsilon=0}\right]  }
{\operatorname{Tr}\!\left[  \sqrt{\sigma(\gamma)}\sqrt{\sigma(\gamma)}\right]  }\\
&  =4\operatorname{Tr}\!\left[  \sqrt{\sigma(\gamma)}\left.  \frac{\partial
}{\partial\varepsilon_{i}}\sqrt{\sigma(\gamma+\varepsilon)}\right\vert
_{\varepsilon=0}\right]  \operatorname{Tr}\!\left[  \sqrt{\sigma(\gamma)}\left.
\frac{\partial}{\partial\varepsilon_{j}}\sqrt{\sigma(\gamma+\varepsilon
)}\right\vert _{\varepsilon=0}\right]  \nonumber\\
&  \qquad-4\operatorname{Tr}\!\left[  \sqrt{\sigma(\gamma)}\left.
\frac{\partial^{2}}{\partial\varepsilon_{i}\partial\varepsilon_{j}}\sqrt
{\sigma(\gamma+\varepsilon)}\right\vert _{\varepsilon=0}\right]  .
\end{align}
Recalling from \cite[Theorem~1.1]{DelMoral2018} that
\begin{equation}
\frac{\partial}{\partial\gamma_{j}}\sqrt{\sigma(\gamma)}=\int_{0}^{\infty
}dt\ e^{-t\sqrt{\sigma(\gamma)}}\left(  \frac{\partial}{\partial\gamma_{j}}
\sigma(\gamma)\right)  e^{-t\sqrt{\sigma(\gamma)}},
\end{equation}
now consider that
\begin{align}
  \operatorname{Tr}\!\left[  \sqrt{\sigma(\gamma)}\left.  \frac{\partial
}{\partial\varepsilon_{i}}\sqrt{\sigma(\gamma+\varepsilon)}\right\vert
_{\varepsilon=0}\right]  
&  =\operatorname{Tr}\!\left[  \sqrt{\sigma(\gamma)}\left.  \int_{0}^{\infty
}dt\ e^{-t\sqrt{\sigma(\gamma+\varepsilon)}}\left(  \frac{\partial}
{\partial\varepsilon_{i}}\sigma(\gamma+\varepsilon)\right)  e^{-t\sqrt
{\sigma(\gamma+\varepsilon)}}\right\vert _{\varepsilon=0}\right]  \\
&  =\int_{0}^{\infty}dt\ \operatorname{Tr}\!\left[  \sqrt{\sigma(\gamma
)}e^{-t\sqrt{\sigma(\gamma)}}\left(  \left.  \frac{\partial}{\partial
\varepsilon_{i}}\sigma(\gamma+\varepsilon)\right\vert _{\varepsilon=0}\right)
e^{-t\sqrt{\sigma(\gamma)}}\right]  \\
&  =\operatorname{Tr}\!\left[  \sqrt{\sigma(\gamma)}\int_{0}^{\infty
}dt\ e^{-2t\sqrt{\sigma(\gamma)}}\left(  \left.  \frac{\partial}{\partial
\varepsilon_{i}}\sigma(\gamma+\varepsilon)\right\vert _{\varepsilon=0}\right)
\right]  \\
&  =\operatorname{Tr}\!\left[  \sqrt{\sigma(\gamma)}\sqrt{\sigma(\gamma)}
^{-1}\left(  \left.  \frac{\partial}{\partial\varepsilon_{i}}\sigma
(\gamma+\varepsilon)\right\vert _{\varepsilon=0}\right)  \right]  \\
&  =\operatorname{Tr}\!\left[  \left.  \frac{\partial}{\partial\varepsilon
_{i}}\sigma(\gamma+\varepsilon)\right\vert _{\varepsilon=0}\right]  \\
&  =\left.  \operatorname{Tr}\!\left[  \frac{\partial}{\partial\varepsilon
_{i}}\sigma(\gamma+\varepsilon)\right]  \right\vert _{\varepsilon=0}\\
&  =0.
\end{align}
Thus, we conclude that
\begin{align}
\left.  \frac{\partial^{2}}{\partial\varepsilon_{i}\partial\varepsilon_{j}
}\left(  -2\ln F_{H}(\sigma(\gamma),\sigma(\gamma+\varepsilon))\right)
\right\vert _{\varepsilon=0}  & =-4\operatorname{Tr}\!\left[  \sqrt
{\sigma(\gamma)}\left.  \frac{\partial^{2}}{\partial\varepsilon_{i}
\partial\varepsilon_{j}}\sqrt{\sigma(\gamma+\varepsilon)}\right\vert
_{\varepsilon=0}\right]  \\
& =-4\operatorname{Tr}\!\left[  \sqrt{\sigma(\gamma)}\frac{\partial^{2}
}{\partial\gamma_{i}\partial\gamma_{j}}\sqrt{\sigma(\gamma)}\right]  ,
\end{align}
where we used that
\begin{equation}
\left.  \frac{\partial^{2}}{\partial\varepsilon_{i}\partial\varepsilon_{j}
}\sqrt{\sigma(\gamma+\varepsilon)}\right\vert _{\varepsilon=0}=\frac
{\partial^{2}}{\partial\gamma_{i}\partial\gamma_{j}}\sqrt{\sigma(\gamma)}.
\end{equation}
Now observe that
\begin{align}
0  & =\frac{\partial^{2}}{\partial\gamma_{i}\partial\gamma_{j}}
\operatorname{Tr}[\sigma(\gamma)]\\
& =\frac{\partial^{2}}{\partial\gamma_{i}\partial\gamma_{j}}\operatorname{Tr}\!
\left[  \sqrt{\sigma(\gamma)}\sqrt{\sigma(\gamma)}\right]  \\
& =\frac{\partial}{\partial\gamma_{i}}\operatorname{Tr}\!\left[  \frac{\partial
}{\partial\gamma_{j}}\left(  \sqrt{\sigma(\gamma)}\sqrt{\sigma(\gamma)}\right)
\right]  \\
& =\frac{\partial}{\partial\gamma_{i}}\operatorname{Tr}\!\left[  \left(
\frac{\partial}{\partial\gamma_{j}}\sqrt{\sigma(\gamma)}\right)  \sqrt
{\sigma(\gamma)}+\sqrt{\sigma(\gamma)}\left(  \frac{\partial}{\partial\gamma_{j}
}\sqrt{\sigma(\gamma)}\right)  \right]  \\
& =2\frac{\partial}{\partial\gamma_{i}}\operatorname{Tr}\!\left[  \left(
\frac{\partial}{\partial\gamma_{j}}\sqrt{\sigma(\gamma)}\right)  \sqrt
{\sigma(\gamma)}\right]  \\
& =2\operatorname{Tr}\!\left[  \left(  \frac{\partial^{2}}{\partial\gamma
_{i}\partial\gamma_{j}}\sqrt{\sigma(\gamma)}\right)  \sqrt{\sigma(\gamma)}+\left(
\frac{\partial}{\partial\gamma_{j}}\sqrt{\sigma(\gamma)}\right)  \left(
\frac{\partial}{\partial\gamma_{i}}\sqrt{\sigma(\gamma)}\right)  \right]  \\
& =2\operatorname{Tr}\!\left[  \left(  \frac{\partial^{2}}{\partial\gamma
_{i}\partial\gamma_{j}}\sqrt{\sigma(\gamma)}\right)  \sqrt{\sigma(\gamma)}\right]
+2\operatorname{Tr}\!\left[  \left(  \frac{\partial}{\partial\gamma_{i}}
\sqrt{\sigma(\gamma)}\right)  \left(  \frac{\partial}{\partial\gamma_{j}}
\sqrt{\sigma(\gamma)}\right)  \right]  .
\end{align}
So we conclude that
\begin{equation}
-\operatorname{Tr}\!\left[  \left(  \frac{\partial^{2}}{\partial\gamma
_{i}\partial\gamma_{j}}\sqrt{\sigma(\gamma)}\right)  \sqrt{\sigma(\gamma)}\right]
=\operatorname{Tr}\!\left[  \left(  \frac{\partial}{\partial\gamma_{i}}
\sqrt{\sigma(\gamma)}\right)  \left(  \frac{\partial}{\partial\gamma_{j}}
\sqrt{\sigma(\gamma)}\right)  \right]  ,
\end{equation}
which in turn implies that
\begin{align}
& \left.  \frac{\partial^{2}}{\partial\varepsilon_{i}\partial\varepsilon_{j}
}\left(  -2\ln F_{H}(\sigma(\gamma),\sigma(\gamma+\varepsilon))\right)
\right\vert _{\varepsilon=0}\nonumber\\
& =4\operatorname{Tr}\!\left[  \left(  \frac{\partial}{\partial\gamma_{i}}
\sqrt{\sigma(\gamma)}\right)  \left(  \frac{\partial}{\partial\gamma_{j}}
\sqrt{\sigma(\gamma)}\right)  \right]  \\
& =4\operatorname{Tr}\!\left[  \left(  \int_{0}^{\infty}dt_{1}\ e^{-t_{1}
\sqrt{\sigma(\gamma)}}\left(  \frac{\partial}{\partial\gamma_{i}}\sigma
(\gamma)\right)  e^{-t_{1}\sqrt{\sigma(\gamma)}}\right)  \left(  \int
_{0}^{\infty}dt_{2}\ e^{-t_{2}\sqrt{\sigma(\gamma)}}\left(  \frac{\partial
}{\partial\gamma_{j}}\sigma(\gamma)\right)  e^{-t_{2}\sqrt{\sigma(\gamma)}
}\right)  \right]  \\
& =4\int_{0}^{\infty}dt_{1}\ \int_{0}^{\infty}dt_{2}\ \operatorname{Tr}\!\left[
e^{-t_{1}\sqrt{\sigma(\gamma)}}\left(  \frac{\partial}{\partial\gamma_{i}}
\sigma(\gamma)\right)  e^{-t_{1}\sqrt{\sigma(\gamma)}}e^{-t_{2}\sqrt{\sigma(\gamma
)}}\left(  \frac{\partial}{\partial\gamma_{j}}\sigma(\gamma)\right)
e^{-t_{2}\sqrt{\sigma(\gamma)}}\right]  \\
& =4\int_{0}^{\infty}dt_{1}\ \int_{0}^{\infty}dt_{2}\ \operatorname{Tr}\!\left[
e^{-\left(  t_{1}+t_{2}\right)  \sqrt{\sigma(\gamma)}}\left(  \frac{\partial
}{\partial\gamma_{i}}\sigma(\gamma)\right)  e^{-\left(  t_{1}+t_{2}\right)
\sqrt{\sigma(\gamma)}}\left(  \frac{\partial}{\partial\gamma_{j}}\sigma
(\gamma)\right)  \right]  ,
\end{align}
thus establishing~\eqref{eq:WY-formula-1}.

We can derive~\eqref{eq:WY-formula-2} by considering a spectral
decomposition for $\sigma(\gamma)$ as
\begin{equation}
\sigma(\gamma)=\sum_{k}\lambda_{k}|k\rangle\!\langle k|,
\end{equation}
where we have suppressed the dependence of the eigenvalues and eigenvectors on
the parameter vector $\gamma$. Then we find that
\begin{align}
&  4\int_{0}^{\infty}\int_{0}^{\infty}dt_{1}\ dt_{2}\ \operatorname{Tr}
\!\left[  e^{-\left(  t_{1}+t_{2}\right)  \sqrt{\sigma(\gamma)}}\left(
\partial_{i}\sigma(\gamma)\right)  e^{-\left(  t_{1}+t_{2}\right)  \sqrt
{\sigma(\gamma)}}\left(  \partial_{j}\sigma(\gamma)\right)  \right]  \notag \\
&  =4\int_{0}^{\infty}\int_{0}^{\infty}dt_{1}\ dt_{2}\ \operatorname{Tr}
\!\left[  \sum_{k}e^{-\left(  t_{1}+t_{2}\right)  \sqrt{\lambda_{k}}}
|k\rangle\!\langle k|\left(  \partial_{i}\sigma(\gamma)\right)  \sum_{\ell
}e^{-\left(  t_{1}+t_{2}\right)  \sqrt{\lambda_{\ell}}}|\ell\rangle
\!\langle\ell|\left(  \partial_{j}\sigma(\gamma)\right)  \right]  \\
&  =4\sum_{k,\ell}\int_{0}^{\infty}\int_{0}^{\infty}dt_{1}\ dt_{2}\ e^{-\left(
t_{1}+t_{2}\right)  \left(  \sqrt{\lambda_{k}}+\sqrt{\lambda_{\ell}}\right)
}\langle k|\left(  \partial_{i}\sigma(\gamma)\right)  |\ell\rangle\!\langle
\ell|\left(  \partial_{j}\sigma(\gamma)\right)  |k\rangle\\
&  =\sum_{k,\ell}\frac{4}{\left(  \sqrt{\lambda_{k}}+\sqrt{\lambda_{\ell}
}\right)  ^{2}}\langle k|\left(  \partial_{i}\sigma(\gamma)\right)  |\ell
\rangle\!\langle\ell|\left(  \partial_{j}\sigma(\gamma)\right)  |k\rangle,
\end{align}
thus establishing~\eqref{eq:WY-formula-2}.
In the last line, we made use of the integral
\begin{equation}
\int_{0}^{\infty}\int_{0}^{\infty}dt_{1}\ dt_{2}\ e^{-\left(  t_{1}
+t_{2}\right)  x}=\frac{1}{x^{2}},
\end{equation}
holding for $x>0$.

\subsection{Proof of Equations~\eqref{eq:KM-formula-1} and~\eqref{eq:KM-formula-2}}

Consider that
\begin{align}
\frac{\partial^{2}}{\partial\varepsilon_{i}\partial\varepsilon_{j}}\left[
D(\sigma(\gamma)\Vert\sigma(\gamma+\varepsilon))\right]   &  =\frac
{\partial^{2}}{\partial\varepsilon_{i}\partial\varepsilon_{j}}\left(
\operatorname{Tr}\!\left[  \sigma(\gamma)\left(  \ln\sigma(\gamma)-\ln
\sigma(\gamma+\varepsilon)\right)  \right]  \right)  \\
&  =-\frac{\partial^{2}}{\partial\varepsilon_{i}\partial\varepsilon_{j}
}\operatorname{Tr}\!\left[  \sigma(\gamma)\ln\sigma(\gamma+\varepsilon
)\right]  \\
&  =-\operatorname{Tr}\!\left[  \sigma(\gamma)\left(  \frac{\partial^{2}
}{\partial\varepsilon_{i}\partial\varepsilon_{j}}\ln\sigma(\gamma
+\varepsilon)\right)  \right]  ,
\end{align}
which implies that
\begin{align}
\left.  \frac{\partial^{2}}{\partial\varepsilon_{i}\partial\varepsilon_{j}
}\left[  D(\sigma(\gamma)\Vert\sigma(\gamma+\varepsilon))\right]  \right\vert
_{\varepsilon=0}  & =-\operatorname{Tr}\!\left[  \sigma(\gamma)\left(  \left.
\frac{\partial^{2}}{\partial\varepsilon_{i}\partial\varepsilon_{j}}\ln
\sigma(\gamma+\varepsilon)\right\vert _{\varepsilon=0}\right)  \right]  \\
& =-\operatorname{Tr}\!\left[  \sigma(\gamma)\left(  \frac{\partial^{2}
}{\partial\gamma_{i}\partial\gamma_{j}}\ln\sigma(\gamma)\right)  \right]
.\label{eq:KM-proof-deriv-full-2}
\end{align}
Observe that the logarithm has the following integral representation for
$x>0$:
\begin{equation}
\ln x=\int_{0}^{\infty}dt\ \left(  1+t\right)  ^{-1}-\left(  x+t\right)
^{-1},
\end{equation}
which implies the following integral representation for the matrix logarithm
of a positive definite operator $\sigma$:
\begin{equation}
\ln\sigma=\int_{0}^{\infty}dt\ \left(  1+t\right)  ^{-1}I-\left(
\sigma+tI\right)  ^{-1}.\label{eq:KM-proof-deriv-full-5}
\end{equation}
It then follows from~\eqref{eq:KM-proof-deriv-full-5}\ and the following
\begin{equation}
\frac{\partial}{\partial\gamma_{j}}\left(  \sigma(\gamma)\right)
^{-1}=-\sigma(\gamma)^{-1}\left(  \frac{\partial}{\partial\gamma_{j}}
\sigma(\gamma)\right)  \sigma(\gamma)^{-1}
\end{equation}
that the derivative of the matrix logarithm is as follows:
\begin{equation}
\frac{\partial}{\partial\gamma_{j}}\ln\sigma(\gamma)=\int_{0}^{\infty
}dt\ \left(  \sigma(\gamma)+tI\right)  ^{-1}\left(  \frac{\partial}
{\partial\gamma_{j}}\sigma(\gamma)\right)  \left(  \sigma(\gamma)+tI\right)
^{-1}.\label{eq:KM-deriv-full-3}
\end{equation}
Then, by making use of the following integral for $x>0$,
\begin{equation}
\int_{0}^{\infty}dt\ \left(  x+t\right)  ^{-2}=\frac{1}{x},
\end{equation}
consider that
\begin{align}
  \operatorname{Tr}\!\left[  \sigma(\gamma)\left(  \frac{\partial}
{\partial\gamma_{j}}\ln\sigma(\gamma)\right)  \right]  
&  =\operatorname{Tr}\!\left[  \sigma(\gamma)\left(  \int_{0}^{\infty
}dt\ \left(  \sigma(\gamma)+tI\right)  ^{-1}\left(  \frac{\partial}
{\partial\gamma_{j}}\sigma(\gamma)\right)  \left(  \sigma(\gamma)+tI\right)
^{-1}\right)  \right]  \\
&  =\int_{0}^{\infty}dt\ \operatorname{Tr}\!\left[  \sigma(\gamma)\left(
\sigma(\gamma)+tI\right)  ^{-2}\left(  \frac{\partial}{\partial\gamma_{j}
}\sigma(\gamma)\right)  \right]  \\
&  =\operatorname{Tr}\!\left[  \sigma(\gamma)\left(  \int_{0}^{\infty
}dt\ \left(  \sigma(\gamma)+tI\right)  ^{-2}\right)  \left(  \frac{\partial
}{\partial\gamma_{j}}\sigma(\gamma)\right)  \right]  \\
&  =\operatorname{Tr}\!\left[  \sigma(\gamma)\sigma(\gamma)^{-1}\left(
\frac{\partial}{\partial\gamma_{j}}\sigma(\gamma)\right)  \right]  \\
&  =\operatorname{Tr}\!\left[  \frac{\partial}{\partial\gamma_{j}}
\sigma(\gamma)\right]  \\
&  =\frac{\partial}{\partial\gamma_{j}}\operatorname{Tr}\!\left[
\sigma(\gamma)\right]  \\
&  =0.
\end{align}
It then follows that
\begin{align}
0  & =\frac{\partial}{\partial\gamma_{i}}\operatorname{Tr}\!\left[
\sigma(\gamma)\left(  \frac{\partial}{\partial\gamma_{j}}\ln\sigma
(\gamma)\right)  \right]  \\
& =\operatorname{Tr}\!\left[  \left(  \frac{\partial}{\partial\gamma_{i}
}\sigma(\gamma)\right)  \left(  \frac{\partial}{\partial\gamma_{j}}\ln
\sigma(\gamma)\right)  \right]  +\operatorname{Tr}\!\left[  \sigma
(\gamma)\left(  \frac{\partial^{2}}{\partial\gamma_{i}\partial\gamma_{j}}
\ln\sigma(\gamma)\right)  \right]  ,
\end{align}
which is equivalent to
\begin{equation}
-\operatorname{Tr}\!\left[  \sigma(\gamma)\left(  \frac{\partial^{2}}
{\partial\gamma_{i}\partial\gamma_{j}}\ln\sigma(\gamma)\right)  \right]
=\operatorname{Tr}\!\left[  \left(  \frac{\partial}{\partial\gamma_{i}}
\sigma(\gamma)\right)  \left(  \frac{\partial}{\partial\gamma_{j}}\ln
\sigma(\gamma)\right)  \right]  .\label{eq:KM-proof-deriv-full-1}
\end{equation}
Substituting~\eqref{eq:KM-proof-deriv-full-1} into
\eqref{eq:KM-proof-deriv-full-2} and again making use of
\eqref{eq:KM-deriv-full-3}, we find that
\begin{align}
  \left.  \frac{\partial^{2}}{\partial\varepsilon_{i}\partial\varepsilon_{j}
}\left[  D(\sigma(\gamma)\Vert\sigma(\gamma+\varepsilon))\right]  \right\vert
_{\varepsilon=0}
&  =\operatorname{Tr}\!\left[  \left(  \frac{\partial}{\partial\gamma_{i}
}\sigma(\gamma)\right)  \left(  \frac{\partial}{\partial\gamma_{j}}\ln
\sigma(\gamma)\right)  \right]  \\
&  =\operatorname{Tr}\!\left[  \left(  \frac{\partial}{\partial\gamma_{i}
}\sigma(\gamma)\right)  \left(  \int_{0}^{\infty}dt\ \left(  \sigma
(\gamma)+tI\right)  ^{-1}\left(  \frac{\partial}{\partial\gamma_{j}}
\sigma(\gamma)\right)  \left(  \sigma(\gamma)+tI\right)  ^{-1}\right)
\right]  \\
&  =\int_{0}^{\infty}dt\ \operatorname{Tr}\!\left[  \frac{\partial}
{\partial\gamma_{i}}\sigma(\gamma)\left(  \sigma(\gamma)+tI\right)
^{-1}\left(  \frac{\partial}{\partial\gamma_{j}}\sigma(\gamma)\right)  \left(
\sigma(\gamma)+tI\right)  ^{-1}\right]  ,
\end{align}
which establishes~\eqref{eq:KM-formula-1}.

Finally, by making use of the following integral for $x,y>0$:
\begin{equation}
\int_{0}^{\infty}dt\ \frac{1}{\left(  x+t\right)  \left(  y+t\right)  }
=\frac{\ln x-\ln y}{x-y},
\end{equation}
consider that
\begin{align}
&  \int_{0}^{\infty}dt\ \operatorname{Tr}\!\left[  \left(  \sigma
(\gamma)+tI\right)  ^{-1}\left(  \partial_{i}\sigma(\gamma)\right)  \left(
\sigma(\gamma)+tI\right)  ^{-1}\left(  \partial_{j}\sigma(\gamma)\right)
\right]  \nonumber\\
&  =\int_{0}^{\infty}dt\ \operatorname{Tr}\!\left[  \left(  \sum_{k}\frac
{1}{\lambda_{k}+t}|k\rangle\langle k|\right)  \left(  \frac{\partial}
{\partial\gamma_{i}}\sigma(\gamma)\right)  \left(  \sum_{\ell}\frac{1}
{\lambda_{\ell}+t}|\ell\rangle\langle\ell|\right)  \left(  \frac{\partial
}{\partial\gamma_{j}}\sigma(\gamma)\right)  \right]  \\
&  =\sum_{k,\ell}\left(  \int_{0}^{\infty}dt\ \frac{1}{\left(  \lambda
_{k}+t\right)  \left(  \lambda_{\ell}+t\right)  }\right)  \operatorname{Tr}
\!\left[  |k\rangle\langle k|\left(  \frac{\partial}{\partial\gamma_{i}}
\sigma(\gamma)\right)  |\ell\rangle\langle\ell|\left(  \frac{\partial
}{\partial\gamma_{j}}\sigma(\gamma)\right)  \right]  \\
&  =\sum_{k,\ell}\left[  \frac{\ln\lambda_{k}-\ln\lambda_{\ell}}{\lambda
_{k}-\lambda_{\ell}}\right]  \langle k|\left(  \frac{\partial}{\partial
\gamma_{i}}\sigma(\gamma)\right)  |\ell\rangle\langle\ell|\left(
\frac{\partial}{\partial\gamma_{j}}\sigma(\gamma)\right)  |k\rangle,
\end{align}
thus establishing~\eqref{eq:KM-formula-2}.

\section{Proof of Equation~\eqref{eq:formula-FB-pure}}

\label{app:proof-formula-FB-pure}

To see~\eqref{eq:formula-FB-pure}, consider that
\begin{align}
&  2\frac{\partial^{2}}{\partial\varepsilon_{i}\partial\varepsilon_{j}}\left[
-\ln\left\vert \langle\psi(\gamma)|\psi(\gamma+\varepsilon)\rangle\right\vert
^{2}\right] \nonumber\\
&  =-2\frac{\partial}{\partial\varepsilon_{i}}\left[  \frac{\partial}
{\partial\varepsilon_{j}}\ln\langle\psi(\gamma)|\psi(\gamma+\varepsilon
)\rangle\langle\psi(\gamma+\varepsilon)|\psi(\gamma)\rangle\right] \\
&  =-2\frac{\partial}{\partial\varepsilon_{i}}\left[  \frac{\langle\psi
(\gamma)|\frac{\partial}{\partial\varepsilon_{j}}\psi(\gamma+\varepsilon
)\rangle\langle\psi(\gamma+\varepsilon)|\psi(\gamma)\rangle+\langle\psi
(\gamma)|\psi(\gamma+\varepsilon)\rangle\langle\frac{\partial}{\partial
\varepsilon_{j}}\psi(\gamma+\varepsilon)|\psi(\gamma)\rangle}{\langle
\psi(\gamma)|\psi(\gamma+\varepsilon)\rangle\langle\psi(\gamma+\varepsilon
)|\psi(\gamma)\rangle}\right] \\
&  =\frac{2\left[  \frac{\partial}{\partial\varepsilon_{i}}\left(  \langle
\psi(\gamma)|\psi(\gamma+\varepsilon)\rangle\langle\psi(\gamma+\varepsilon
)|\psi(\gamma)\rangle\right)  \right]  \left[
\begin{array}
[c]{c}
\langle\psi(\gamma)|\frac{\partial}{\partial\varepsilon_{j}}\psi
(\gamma+\varepsilon)\rangle\langle\psi(\gamma+\varepsilon)|\psi(\gamma
)\rangle\\
+\langle\psi(\gamma)|\psi(\gamma+\varepsilon)\rangle\langle\frac{\partial
}{\partial\varepsilon_{j}}\psi(\gamma+\varepsilon)|\psi(\gamma)\rangle
\end{array}
\right]  }{\langle\psi(\gamma)|\psi(\gamma+\varepsilon)\rangle\langle
\psi(\gamma+\varepsilon)|\psi(\gamma)\rangle^{2}} \notag \\
&  \qquad-2\frac{\frac{\partial}{\partial\varepsilon_{i}}\left[  \langle
\psi(\gamma)|\frac{\partial}{\partial\varepsilon_{j}}\psi(\gamma
+\varepsilon)\rangle\langle\psi(\gamma+\varepsilon)|\psi(\gamma)\rangle
+\langle\psi(\gamma)|\psi(\gamma+\varepsilon)\rangle\langle\frac{\partial
}{\partial\varepsilon_{j}}\psi(\gamma+\varepsilon)|\psi(\gamma)\rangle\right]
}{\langle\psi(\gamma)|\psi(\gamma+\varepsilon)\rangle\langle\psi
(\gamma+\varepsilon)|\psi(\gamma)\rangle}.
\end{align}
It then follows that
\begin{align}
&  \left.  2\frac{\partial^{2}}{\partial\varepsilon_{i}\partial\varepsilon
_{j}}\left[  -\ln\left\vert \langle\psi(\gamma)|\psi(\gamma+\varepsilon
)\rangle\right\vert ^{2}\right]  \right\vert _{\varepsilon=0}\nonumber\\
&  =\frac{2\left[  \left.  \frac{\partial}{\partial\varepsilon_{i}}\left(
\langle\psi(\gamma)|\psi(\gamma+\varepsilon)\rangle\langle\psi(\gamma
+\varepsilon)|\psi(\gamma)\rangle\right)  \right\vert _{\varepsilon=0}\right]
\left[
\begin{array}
[c]{c}
\langle\psi(\gamma)|\left.  \frac{\partial}{\partial\varepsilon_{j}}
\psi(\gamma+\varepsilon)\right\vert _{\varepsilon=0}\rangle\langle\psi
(\gamma)|\psi(\gamma)\rangle\\
+\langle\psi(\gamma)|\psi(\gamma)\rangle\langle\left.  \frac{\partial
}{\partial\varepsilon_{j}}\psi(\gamma+\varepsilon)\right\vert _{\varepsilon
=0}|\psi(\gamma)\rangle
\end{array}
\right]  }{\langle\psi(\gamma)|\psi(\gamma)\rangle\langle\psi(\gamma
)|\psi(\gamma)\rangle^{2}}\nonumber\\
&  \qquad-2\frac{\left.  \frac{\partial}{\partial\varepsilon_{i}}\left[
\langle\psi(\gamma)|\frac{\partial}{\partial\varepsilon_{j}}\psi
(\gamma+\varepsilon)\rangle\langle\psi(\gamma+\varepsilon)|\psi(\gamma
)\rangle+\langle\psi(\gamma)|\psi(\gamma+\varepsilon)\rangle\langle
\frac{\partial}{\partial\varepsilon_{j}}\psi(\gamma+\varepsilon)|\psi
(\gamma)\rangle\right]  \right\vert _{\varepsilon=0}}{\langle\psi(\gamma
)|\psi(\gamma)\rangle\langle\psi(\gamma)|\psi(\gamma)\rangle}\\
&  =2\left[  \left.  \langle\psi(\gamma)|\frac{\partial}{\partial
\varepsilon_{i}}\psi(\gamma+\varepsilon)\rangle\langle\psi(\gamma
+\varepsilon)|\psi(\gamma)\rangle+\langle\psi(\gamma)|\psi(\gamma
+\varepsilon)\rangle\langle\frac{\partial}{\partial\varepsilon_{i}}\psi
(\gamma+\varepsilon)|\psi(\gamma)\rangle\right\vert _{\varepsilon=0}\right]
\times\nonumber\\
&  \qquad\left[  \langle\psi(\gamma)|\frac{\partial}{\partial\gamma_{j}}
\psi(\gamma)\rangle+\langle\frac{\partial}{\partial\gamma_{j}}\psi
(\gamma)|\psi(\gamma)\rangle\right] \nonumber\\
&  \qquad-2\left.  \frac{\partial}{\partial\varepsilon_{i}}\left[  \langle
\psi(\gamma)|\frac{\partial}{\partial\varepsilon_{j}}\psi(\gamma
+\varepsilon)\rangle\langle\psi(\gamma+\varepsilon)|\psi(\gamma)\rangle
+\langle\psi(\gamma)|\psi(\gamma+\varepsilon)\rangle\langle\frac{\partial
}{\partial\varepsilon_{j}}\psi(\gamma+\varepsilon)|\psi(\gamma)\rangle\right]
\right\vert _{\varepsilon=0}\\
&  =2\left[  \langle\psi(\gamma)|\frac{\partial}{\partial\gamma_{i}}
\psi(\gamma)\rangle+\langle\frac{\partial}{\partial\gamma_{i}}\psi
(\gamma)|\psi(\gamma)\rangle\right]  \left[  \langle\psi(\gamma)|\frac
{\partial}{\partial\gamma_{j}}\psi(\gamma)\rangle+\langle\frac{\partial
}{\partial\gamma_{j}}\psi(\gamma)|\psi(\gamma)\rangle\right] \nonumber\\
&  \qquad-2\left.  \frac{\partial}{\partial\varepsilon_{i}}\left[  \langle
\psi(\gamma)|\frac{\partial}{\partial\varepsilon_{j}}\psi(\gamma
+\varepsilon)\rangle\langle\psi(\gamma+\varepsilon)|\psi(\gamma)\rangle
+\langle\psi(\gamma)|\psi(\gamma+\varepsilon)\rangle\langle\frac{\partial
}{\partial\varepsilon_{j}}\psi(\gamma+\varepsilon)|\psi(\gamma)\rangle\right]
\right\vert
\end{align}
Then consider that
\begin{align}
0  &  =\frac{\partial}{\partial\gamma_{i}}\left[  \langle\psi(\gamma
)|\psi(\gamma)\rangle\right] \\
&  =\langle\psi(\gamma)|\frac{\partial}{\partial\gamma_{i}}\psi(\gamma
)\rangle+\langle\frac{\partial}{\partial\gamma_{i}}\psi(\gamma)|\psi
(\gamma)\rangle.
\end{align}
So this implies that
\begin{multline}
\left.  2\frac{\partial^{2}}{\partial\varepsilon_{i}\partial\varepsilon_{j}
}\left[  -\ln\left\vert \langle\psi(\gamma)|\psi(\gamma+\varepsilon
)\rangle\right\vert ^{2}\right]  \right\vert _{\varepsilon=0}=\\
-2\left.  \frac{\partial}{\partial\varepsilon_{i}}\left[  \langle\psi
(\gamma)|\frac{\partial}{\partial\varepsilon_{j}}\psi(\gamma+\varepsilon
)\rangle\langle\psi(\gamma+\varepsilon)|\psi(\gamma)\rangle+\langle\psi
(\gamma)|\psi(\gamma+\varepsilon)\rangle\langle\frac{\partial}{\partial
\varepsilon_{j}}\psi(\gamma+\varepsilon)|\psi(\gamma)\rangle\right]
\right\vert _{\varepsilon=0}.
\end{multline}
So consider that
\begin{multline}
\frac{\partial}{\partial\varepsilon_{i}}\left[  \langle\psi(\gamma
)|\frac{\partial}{\partial\varepsilon_{j}}\psi(\gamma+\varepsilon
)\rangle\langle\psi(\gamma+\varepsilon)|\psi(\gamma)\rangle+\langle\psi
(\gamma)|\psi(\gamma+\varepsilon)\rangle\langle\frac{\partial}{\partial
\varepsilon_{j}}\psi(\gamma+\varepsilon)|\psi(\gamma)\rangle\right] \\
=\langle\psi(\gamma)|\frac{\partial^{2}}{\partial\varepsilon_{i}
\partial\varepsilon_{j}}\psi(\gamma+\varepsilon)\rangle\langle\psi
(\gamma+\varepsilon)|\psi(\gamma)\rangle+\langle\psi(\gamma)|\frac{\partial
}{\partial\varepsilon_{j}}\psi(\gamma+\varepsilon)\rangle\langle\frac
{\partial}{\partial\varepsilon_{i}}\psi(\gamma+\varepsilon)|\psi
(\gamma)\rangle\\
+\langle\psi(\gamma)|\frac{\partial}{\partial\varepsilon_{i}}\psi
(\gamma+\varepsilon)\rangle\langle\frac{\partial}{\partial\varepsilon_{j}}
\psi(\gamma+\varepsilon)|\psi(\gamma)\rangle+\langle\psi(\gamma)|\psi
(\gamma+\varepsilon)\rangle\langle\frac{\partial^{2}}{\partial\varepsilon
_{i}\partial\varepsilon_{j}}\psi(\gamma+\varepsilon)|\psi(\gamma)\rangle.
\end{multline}
Then
\begin{align}
&  -2\left.  \frac{\partial}{\partial\varepsilon_{i}}\left[  \langle
\psi(\gamma)|\frac{\partial}{\partial\varepsilon_{j}}\psi(\gamma
+\varepsilon)\rangle\langle\psi(\gamma+\varepsilon)|\psi(\gamma)\rangle
+\langle\psi(\gamma)|\psi(\gamma+\varepsilon)\rangle\langle\frac{\partial
}{\partial\varepsilon_{j}}\psi(\gamma+\varepsilon)|\psi(\gamma)\rangle\right]
\right\vert _{\varepsilon=0}\nonumber\\
&  =-2\left[
\begin{array}
[c]{c}
\langle\psi(\gamma)|\frac{\partial^{2}}{\partial\gamma_{i}\partial\gamma_{j}
}\psi(\gamma)\rangle\langle\psi(\gamma)|\psi(\gamma)\rangle+\langle\psi
(\gamma)|\frac{\partial}{\partial\gamma_{j}}\psi(\gamma)\rangle\langle
\frac{\partial}{\partial\gamma_{i}}\psi(\gamma)|\psi(\gamma)\rangle\\
+\langle\psi(\gamma)|\frac{\partial}{\partial\gamma_{i}}\psi(\gamma
)\rangle\langle\frac{\partial}{\partial\gamma_{j}}\psi(\gamma)|\psi
(\gamma)\rangle+\langle\psi(\gamma)|\psi(\gamma)\rangle\langle\frac
{\partial^{2}}{\partial\gamma_{i}\partial\gamma_{j}}\psi(\gamma)|\psi
(\gamma)\rangle
\end{array}
\right] \\
&  =-2\left[
\begin{array}
[c]{c}
\langle\psi(\gamma)|\frac{\partial^{2}}{\partial\gamma_{i}\partial\gamma_{j}
}\psi(\gamma)\rangle+\langle\psi(\gamma)|\frac{\partial}{\partial\gamma_{j}
}\psi(\gamma)\rangle\langle\frac{\partial}{\partial\gamma_{i}}\psi
(\gamma)|\psi(\gamma)\rangle\\
+\langle\psi(\gamma)|\frac{\partial}{\partial\gamma_{i}}\psi(\gamma
)\rangle\langle\frac{\partial}{\partial\gamma_{j}}\psi(\gamma)|\psi
(\gamma)\rangle+\langle\frac{\partial^{2}}{\partial\gamma_{i}\partial
\gamma_{j}}\psi(\gamma)|\psi(\gamma)\rangle
\end{array}
\right] \\
&  =-4\operatorname{Re}\!\left[  \langle\psi(\gamma)|\frac{\partial^{2}
}{\partial\gamma_{i}\partial\gamma_{j}}\psi(\gamma)\rangle+\langle\psi
(\gamma)|\frac{\partial}{\partial\gamma_{j}}\psi(\gamma)\rangle\langle
\frac{\partial}{\partial\gamma_{i}}\psi(\gamma)|\psi(\gamma)\rangle\right]  .
\label{eq:second-deriv-simplify-finalize}
\end{align}
Now consider that
\begin{align}
0  &  =\frac{\partial^{2}}{\partial\gamma_{i}\partial\gamma_{j}}\langle
\psi(\gamma)|\psi(\gamma)\rangle\\
&  =\frac{\partial}{\partial\gamma_{i}}\left[  \langle\psi(\gamma
)|\frac{\partial}{\partial\gamma_{j}}\psi(\gamma)\rangle+\langle\frac
{\partial}{\partial\gamma_{j}}\psi(\gamma)|\psi(\gamma)\rangle\right] \\
&  =\langle\frac{\partial}{\partial\gamma_{i}}\psi(\gamma)|\frac{\partial
}{\partial\gamma_{j}}\psi(\gamma)\rangle+\langle\psi(\gamma)|\frac
{\partial^{2}}{\partial\gamma_{i}\partial\gamma_{j}}\psi(\gamma)\rangle
\nonumber\\
&  \qquad+\langle\frac{\partial^{2}}{\partial\gamma_{i}\partial\gamma_{j}}
\psi(\gamma)|\psi(\gamma)\rangle+\langle\frac{\partial}{\partial\gamma_{j}
}\psi(\gamma)|\frac{\partial}{\partial\gamma_{i}}\psi(\gamma)\rangle\\
&  =2\operatorname{Re}\!\left[  \langle\frac{\partial}{\partial\gamma_{i}}
\psi(\gamma)|\frac{\partial}{\partial\gamma_{j}}\psi(\gamma)\rangle\right]
+2\operatorname{Re}\!\left[  \langle\psi(\gamma)|\frac{\partial^{2}}
{\partial\gamma_{i}\partial\gamma_{j}}\psi(\gamma)\rangle\right]  .
\end{align}
So this implies that
\begin{equation}
-\operatorname{Re}\!\left[  \langle\psi(\gamma)|\frac{\partial^{2}}
{\partial\gamma_{i}\partial\gamma_{j}}\psi(\gamma)\rangle\right]
=\operatorname{Re}\!\left[  \langle\frac{\partial}{\partial\gamma_{i}}
\psi(\gamma)|\frac{\partial}{\partial\gamma_{j}}\psi(\gamma)\rangle\right]  .
\label{eq:second-deriv-simplify}
\end{equation}
Substituting~\eqref{eq:second-deriv-simplify} into
\eqref{eq:second-deriv-simplify-finalize}, we conclude that
\begin{align}
&  \left.  2\frac{\partial^{2}}{\partial\varepsilon_{i}\partial\varepsilon
_{j}}\left[  -\ln\left\vert \langle\psi(\gamma)|\psi(\gamma+\varepsilon
)\rangle\right\vert ^{2}\right]  \right\vert _{\varepsilon=0}\nonumber\\
&  =4\operatorname{Re}\!\left[  \langle\frac{\partial}{\partial\gamma_{i}}
\psi(\gamma)|\frac{\partial}{\partial\gamma_{j}}\psi(\gamma)\rangle
-\langle\psi(\gamma)|\frac{\partial}{\partial\gamma_{j}}\psi(\gamma
)\rangle\langle\frac{\partial}{\partial\gamma_{i}}\psi(\gamma)|\psi
(\gamma)\rangle\right] \\
&  =4\operatorname{Re}\!\left[  \langle\frac{\partial}{\partial\gamma_{i}}
\psi(\gamma)|\frac{\partial}{\partial\gamma_{j}}\psi(\gamma)\rangle
-\langle\frac{\partial}{\partial\gamma_{i}}\psi(\gamma)|\psi(\gamma
)\rangle\langle\psi(\gamma)|\frac{\partial}{\partial\gamma_{j}}\psi
(\gamma)\rangle\right]  .
\end{align}

\section{Alternative proof of Proposition~\ref{prop:FB-WY-canonical-purifications}}

\label{app:proof-FB-WY-canonical-purifications}

To evaluate the expression in~\eqref{eq:formula-FB-pure} for the parameterized family $\left(  \varphi^{\sigma}(\gamma)\right)  _{\gamma\in\mathbb{R}^{L}}$, we need to compute $|\partial_{j}\varphi^{\sigma}
(\gamma)\rangle$. To this end, consider that
\begin{align}
|\partial_{j}\varphi^{\sigma}(\gamma)\rangle &  =\frac{\partial}
{\partial\gamma_{j}}|\varphi^{\sigma}(\gamma)\rangle\\
&  =\frac{\partial}{\partial\gamma_{j}}\left(  \sqrt{\sigma(\gamma)}\otimes
I\right)  |\Gamma\rangle\\
&  =\left(  \left(  \frac{\partial}{\partial\gamma_{j}}\sqrt{\sigma(\gamma
)}\right)  \otimes I\right)  |\Gamma\rangle\\
&  =\left(  \int_{0}^{\infty}dt\ e^{-t\sqrt{\sigma(\gamma)}}\left(
\frac{\partial}{\partial\gamma_{j}}\sigma(\gamma)\right)  e^{-t\sqrt
{\sigma(\gamma)}}\otimes I\right)  |\Gamma\rangle\\
&  =\left(  \int_{0}^{\infty}dt\ e^{-t\sqrt{\sigma(\gamma)}}\left(
\partial_{j}\sigma(\gamma)\right)  e^{-t\sqrt{\sigma(\gamma)}}\otimes
I\right)  |\Gamma\rangle,
\end{align}
where we made use of \cite[Theorem~1.1]{DelMoral2018} in the penultimate line.
Now consider that
\begin{align}
  \left\langle \varphi^{\sigma}(\gamma)|\partial_{j}\varphi^{\sigma}
(\gamma)\right\rangle 
&  =\langle\Gamma|\left(  \sqrt{\sigma(\gamma)}\otimes I\right)  \left(
\int_{0}^{\infty}dt\ e^{-t\sqrt{\sigma(\gamma)}}\left(  \partial_{j}
\sigma(\gamma)\right)  e^{-t\sqrt{\sigma(\gamma)}}\otimes I\right)
|\Gamma\rangle\\
&  =\int_{0}^{\infty}dt\ \langle\Gamma|\left(  \sqrt{\sigma(\gamma)}
e^{-t\sqrt{\sigma(\gamma)}}\left(  \partial_{j}\sigma(\gamma)\right)
e^{-t\sqrt{\sigma(\gamma)}}\otimes I\right)  |\Gamma\rangle\\
&  =\int_{0}^{\infty}dt\ \operatorname{Tr}\!\left[  \sqrt{\sigma(\gamma
)}e^{-t\sqrt{\sigma(\gamma)}}\left(  \partial_{j}\sigma(\gamma)\right)
e^{-t\sqrt{\sigma(\gamma)}}\right]  \\
&  =\int_{0}^{\infty}dt\ \operatorname{Tr}\!\left[  \sqrt{\sigma(\gamma
)}e^{-2t\sqrt{\sigma(\gamma)}}\left(  \partial_{j}\sigma(\gamma)\right)
\right]  \\
&  =\frac{1}{2}\operatorname{Tr}\!\left[  \sqrt{\sigma(\gamma)}\sqrt
{\sigma(\gamma)}^{-1}\left(  \partial_{j}\sigma(\gamma)\right)  \right]  \\
&  =\frac{1}{2}\operatorname{Tr}\!\left[  \partial_{j}\sigma(\gamma)\right]
\\
&  =0.
\end{align}
In the second-to-last line, we made use of the integral
\begin{equation}
\int_{0}^{\infty}dt\ e^{-2tx}=\frac{1}{2x}.
\end{equation}
So this implies that the second term in~\eqref{eq:formula-FB-pure} is
equal to zero.

Now consider that
\begin{align}
&  \langle\partial_{i}\varphi^{\sigma}(\gamma)|\partial_{j}\varphi^{\sigma
}(\gamma)\rangle \notag \\
&  =\langle\Gamma|\left(  \int_{0}^{\infty}dt_{1}\ e^{-t_{1}\sqrt
{\sigma(\gamma)}}\left(  \partial_{i}\sigma(\gamma)\right)  e^{-t_{1}
\sqrt{\sigma(\gamma)}}\otimes I\right)  \left(  \int_{0}^{\infty}
dt_{2}\ e^{-t_{2}\sqrt{\sigma(\gamma)}}\left(  \partial_{j}\sigma
(\gamma)\right)  e^{-t_{2}\sqrt{\sigma(\gamma)}}\otimes I\right)
|\Gamma\rangle\\
&  =\int_{0}^{\infty}\int_{0}^{\infty}dt_{1}\ dt_{2}\ \langle\Gamma|\left(
\ e^{-t_{1}\sqrt{\sigma(\gamma)}}\left(  \partial_{i}\sigma(\gamma)\right)
e^{-t_{1}\sqrt{\sigma(\gamma)}}e^{-t_{2}\sqrt{\sigma(\gamma)}}\left(
\partial_{j}\sigma(\gamma)\right)  e^{-t_{2}\sqrt{\sigma(\gamma)}}\otimes
I\right)  |\Gamma\rangle\\
&  =\int_{0}^{\infty}\int_{0}^{\infty}dt_{1}\ dt_{2}\ \operatorname{Tr}
\!\left[  e^{-t_{1}\sqrt{\sigma(\gamma)}}\left(  \partial_{i}\sigma
(\gamma)\right)  e^{-t_{1}\sqrt{\sigma(\gamma)}}e^{-t_{2}\sqrt{\sigma(\gamma
)}}\left(  \partial_{j}\sigma(\gamma)\right)  e^{-t_{2}\sqrt{\sigma(\gamma)}
}\right]  \\
&  =\int_{0}^{\infty}\int_{0}^{\infty}dt_{1}\ dt_{2}\ \operatorname{Tr}
\!\left[  e^{-\left(  t_{1}+t_{2}\right)  \sqrt{\sigma(\gamma)}}\left(
\partial_{i}\sigma(\gamma)\right)  e^{-\left(  t_{1}+t_{2}\right)
\sqrt{\sigma(\gamma)}}\left(  \partial_{j}\sigma(\gamma)\right)  \right]  \\
& = \sum_{k,\ell}\frac{1}{\left(  \sqrt{\lambda_{k}}+\sqrt{\lambda_{\ell}
}\right)  ^{2}}\langle k|\left(  \partial_{i}\sigma(\gamma)\right)
|\ell\rangle\!\langle\ell|\left(  \partial_{j}\sigma(\gamma)\right)
|k\rangle,
\end{align}
where the last equality follows from~\eqref{eq:WY-formula-2}.
Putting everything together, we find that the Fisher--Bures
information matrix elements of the parameterized family $\left(
\varphi^{\sigma}(\gamma)\right)  _{\gamma}$ are given by
\begin{align}
I_{ij}^{\operatorname{FB}}(\theta) &  =4\operatorname{Re}\left[
\langle\partial_{i}\varphi^{\sigma}(\gamma)|\partial_{j}\varphi^{\sigma
}(\gamma)\rangle-\left\langle \partial_{i}\varphi^{\sigma}(\gamma
)|\varphi^{\sigma}(\gamma)\right\rangle \left\langle \varphi^{\sigma}
(\gamma)|\partial_{j}\varphi^{\sigma}(\gamma)\right\rangle \right]  \\
&  =4\operatorname{Re}\left[  \langle\partial_{i}\varphi^{\sigma}
(\gamma)|\partial_{j}\varphi^{\sigma}(\gamma)\rangle\right]  \\
&  =\sum_{k,\ell}\frac{4}{\left(  \sqrt{\lambda_{k}}+\sqrt{\lambda_{\ell}
}\right)  ^{2}}\langle k|\left(  \partial_{i}\sigma(\gamma)\right)
|\ell\rangle\!\langle\ell|\left(  \partial_{j}\sigma(\gamma)\right)
|k\rangle.
\end{align}
This concludes the alternative proof of Proposition~\ref{prop:FB-WY-canonical-purifications}.

\section{Information matrix elements for evolved quantum Boltzmann machines}

\subsection{Fisher--Bures information matrix elements for evolved quantum Boltzmann machines}

\label{app:FB-elements}

\subsubsection{Proof of Theorem~\ref{thm:FB-theta}}

\label{proof:FB-theta}

Using~\eqref{eq:Fisher-info-help-alt}, consider that
\begin{align}
&  I_{ij}^{\text{FB}}(\theta)\nonumber\\
&  =\sum_{k,\ell}\frac{2}{\lambda_{k}+\lambda_{\ell}}\langle k|\left[
\frac{\partial}{\partial\theta_{i}}\omega(\theta,\phi)\right]  |\ell
\rangle\!\langle\ell|\left[  \frac{\partial}{\partial\theta_{j}}\omega(\theta
,\phi)\right]  |k\rangle\\
&  =\sum_{k,\ell}\frac{2}{\lambda_{k}+\lambda_{\ell}} \left( -\frac{1}{2}\left( \lambda_\ell + \lambda_k\right) \bra{\tilde{k}} \Phi_{\theta}(G_{i})\ket{\tilde{\ell}} + \delta_{k\ell} \lambda_\ell \left\langle G_{i}\right\rangle_{\rho(\theta)} \right) \left( -\frac{1}{2}\left( \lambda_\ell + \lambda_k\right) \bra{\tilde{\ell}} \Phi_{\theta}(G_{j}) \ket{\tilde{k}} + \delta_{\ell k} \lambda_k \left\langle G_{j}\right\rangle _{\rho(\theta)}\right)\\
& = \frac{1}{2} \sum_{k,\ell} \left( \lambda_\ell + \lambda_k \right) \bra{\tilde{k}} \Phi_{\theta}(G_{i}) \ket{\tilde{\ell}} \bra{\tilde{\ell}} \Phi_{\theta}(G_{j})\ket{\tilde{k}} + \sum_k \lambda_k \left\langle G_{i} \right\rangle_{\rho(\theta)} \left\langle G_{j} \right\rangle_{\rho(\theta)}  \nonumber\\
& \hspace{6.9cm} - \sum_k \lambda_k \bra{\tilde{k}} \Phi_{\theta}(G_{i})\ket{\tilde{k}}\left\langle G_{j} \right\rangle_{\rho(\theta)} - \sum_k \lambda_k \bra{\tilde{k}}\Phi_{\theta}(G_{j}) \ket{\tilde{k}}\left\langle G_{i} \right\rangle_{\rho(\theta)}\\
& = \frac{1}{2} \Tr\!\left[ \Phi_{\theta}(G_{i}) \rho(\theta) \Phi_{\theta}(G_{j}) \right] + \frac{1}{2} \Tr\!\left[ \Phi_{\theta}(G_{j}) \rho(\theta) \Phi_{\theta}(G_{i}) \right] + \left\langle G_{i} \right\rangle_{\rho(\theta)} \left\langle G_{j} \right\rangle_{\rho(\theta)} \nonumber\\
& \hspace{6.9cm}  - \Tr\!\left[\Phi_{\theta}(G_{i})\rho(\theta)\right] \left\langle G_{j} \right\rangle_{\rho(\theta)} - \Tr\!\left[\Phi_{\theta}(G_{j})\rho(\theta)\right] \left\langle G_{i} \right\rangle_{\rho(\theta)}\\
& = \frac{1}{2} \Tr\!\left[ \left\{ \Phi_{\theta}(G_{i}) , \Phi_{\theta}(G_{j}) \right\}\rho(\theta) \right] - \left\langle G_{i} \right\rangle_{\rho(\theta)} \left\langle G_{j} \right\rangle_{\rho(\theta)}\\
& = \frac{1}{2} \left\langle \left\{ \Phi_{\theta}(G_{i}) , \Phi_{\theta}(G_{j}) \right\} \right\rangle_{\rho(\theta)} - \left\langle G_{i} \right\rangle_{\rho(\theta)} \left\langle G_{j} \right\rangle_{\rho(\theta)}.\label{eq:last-FB-theta-app}
\end{align}
This concludes the proof of Theorem~\ref{thm:FB-theta}. Note that the first term of~\eqref{eq:last-FB-theta-app} can also be written as $\text{Re}\left[ \Tr\!\left[ \Phi_{\theta}(G_{i}) \Phi_{\theta}(G_{j}) \rho(\theta) \right] \right]$.

\subsubsection{Proof of Theorem~\ref{thm:FB-phi}}

\label{proof:FB-phi}

Using~\eqref{eq:Fisher-info-help-last}, consider that
\begin{align}
 I_{ij}^{\text{FB}}(\phi)
&  =\sum_{k,\ell}\frac{2}{\lambda_{k}+\lambda_{\ell}}\langle k|\left[
\frac{\partial}{\partial\phi_{i}}\omega(\theta,\phi)\right]  |\ell
\rangle\langle\ell|\left[  \frac{\partial}{\partial\phi_{j}}\omega(\theta
,\phi)\right]  |k\rangle\\
&  =\sum_{k,\ell}\frac{2}{\lambda_{k}+\lambda_{\ell}}\left(  i\left(
\lambda_{k}-\lambda_{\ell}\right)  \langle k|\Psi^{\dagger}_{\phi}(H_{i})|\ell
\rangle\right)  \left(  i\left(  \lambda_{\ell}-\lambda_{k}\right)
\langle\ell|\Psi^{\dagger}_{\phi}(H_{j})|k\rangle\right)  \\
&  =2\sum_{k,\ell}\frac{\left(  \lambda_{k}-\lambda_{\ell}\right)  ^{2}
}{\lambda_{k}+\lambda_{\ell}}\langle k|\Psi^{\dagger}_{\phi}(H_{i})|\ell\rangle
\langle\ell|\Psi^{\dagger}_{\phi}(H_{j})|k\rangle\\
&  =2\sum_{k,\ell}\frac{\left(  \lambda_{k}+\lambda_{\ell}\right)
^{2}-4\lambda_{k}\lambda_{\ell}}{\lambda_{k}+\lambda_{\ell}}\langle
k|\Psi^{\dagger}_{\phi}(H_{i})|\ell\rangle\langle\ell|\Psi^{\dagger}_{\phi}(H_{j})|k\rangle\\
&  =2\sum_{k,\ell}\left(  \lambda_{k}+\lambda_{\ell}\right)  \langle
k|\Psi^{\dagger}_{\phi}(H_{i})|\ell\rangle\langle\ell|\Psi^{\dagger}_{\phi}(H_{j})|k\rangle
\nonumber\\
&  \qquad-8\sum_{k,\ell}\frac{\lambda_{k}\lambda_{\ell}}{\lambda_{k}
+\lambda_{\ell}}\langle k|\Psi^{\dagger}_{\phi}(H_{i})|\ell\rangle\langle\ell|\Psi^\dag
_{\phi}(H_{j})|k\rangle\\
&  =2\sum_{k,\ell}\lambda_{k}\langle k|\Psi^{\dagger}_{\phi}(H_{i})|\ell\rangle
\langle\ell|\Psi^{\dagger}_{\phi}(H_{j})|k\rangle+2\sum_{k,\ell}\lambda_{\ell}\langle
k|\Psi^{\dagger}_{\phi}(H_{i})|\ell\rangle\langle\ell|\Psi^{\dagger}_{\phi}(H_{j})|k\rangle
\nonumber\\
&  \qquad-8\sum_{k,\ell}\frac{\lambda_{k}\lambda_{\ell}}{\lambda_{k}
+\lambda_{\ell}}\langle k|\Psi^{\dagger}_{\phi}(H_{i})|\ell\rangle\langle\ell|\Psi^\dag
_{\phi}(H_{j})|k\rangle\\
&  =2\operatorname{Tr}[\omega(\theta,\phi)\Psi^{\dagger}_{\phi}(H_{i})\Psi^\dag_{\phi}
(H_{j})]+2\operatorname{Tr}[\Psi^{\dagger}_{\phi}(H_{i})\omega(\theta,\phi)\Psi^\dag_{\phi
}(H_{j})]\nonumber\\
&  \qquad-8\sum_{k,\ell}\frac{\lambda_{k}\lambda_{\ell}}{\lambda_{k}
+\lambda_{\ell}}\langle k|\Psi^{\dagger}_{\phi}(H_{i})|\ell\rangle\langle\ell|\Psi^\dag
_{\phi}(H_{j})|k\rangle\\
&  =2\left\langle \left\{  \Psi^{\dagger}_{\phi}(H_{i}),\Psi^{\dagger}_{\phi}(H_{j})\right\}
\right\rangle _{\omega(\theta,\phi)}-8\sum_{k,\ell}\frac{\lambda_{k}
\lambda_{\ell}}{\lambda_{k}+\lambda_{\ell}}\langle k|\Psi^\dag_{\phi}(H_{i}
)|\ell\rangle\langle\ell|\Psi^{\dagger}_{\phi}(H_{j})|k\rangle
 \\
&= 2\left\langle \left\{  \Psi_{\phi}(H_{i}),\Psi_{\phi}(H_{j})\right\}  \right\rangle _{\rho(\theta)}
-8\sum_{k,\ell}\frac{\lambda_{k}
\lambda_{\ell}}{\lambda_{k}+\lambda_{\ell}}\langle k|\Psi^\dag_{\phi}(H_{i}
)|\ell\rangle\langle\ell|\Psi^{\dagger}_{\phi}(H_{j})|k\rangle.
\label{eq:proof-thm-10-1st-block}
\end{align}
Now let us recall that $|k\rangle=e^{-iH(\phi)}|\tilde{k}\rangle$, and define
the spectral decomposition
\begin{equation}
\rho(\theta)=\sum_{k}\lambda_{k}|\tilde{k}\rangle\!\langle\tilde{k}|=\frac{1}
{Z}\sum_{k}e^{-\mu_{k}}|\tilde{k}\rangle\!\langle\tilde{k}|,
\end{equation}
where a spectral decomposition of $G(\theta) = \sum_k \mu_k |\tilde{k}\rangle\!\langle\tilde{k}|$, so that the key quantity in the second term in~\eqref{eq:proof-thm-10-1st-block} can be written as
\begin{align}
&  \sum_{k,\ell}\frac{\lambda_{k}\lambda_{\ell}}{\lambda_{k}+\lambda_{\ell}
}\langle k|\Psi^{\dagger}_{\phi}(H_{i})|\ell\rangle\langle\ell|\Psi^\dag_{\phi}
(H_{j})|k\rangle\nonumber\\
&  =\sum_{k,\ell}\frac{\lambda_{k}\lambda_{\ell}}{\lambda_{k}+\lambda_{\ell}
}\langle\tilde{k}|e^{iH(\phi)}\Psi^{\dagger}_{\phi}(H_{i})e^{-iH(\phi)}|\tilde{\ell
}\rangle\langle\tilde{\ell}|e^{iH(\phi)}\Psi^{\dagger}_{\phi}(H_{j})e^{-iH(\phi)}
|\tilde{k}\rangle\\
&  =\sum_{k,\ell}\frac{\lambda_{k}\lambda_{\ell}}{\lambda_{k}+\lambda_{\ell}
}\langle\tilde{k}|\Psi_{\phi}(H_{i})|\tilde{\ell}\rangle\langle
\tilde{\ell}|\Psi_{\phi}(H_{j})|\tilde{k}\rangle\\
&  =\operatorname{Re}\!\left[  \sum_{k,\ell}\frac{\lambda_{k}\lambda_{\ell}
}{\lambda_{k}+\lambda_{\ell}}\langle\tilde{k}|\Psi_{\phi}(H_{i}
)|\tilde{\ell}\rangle\langle\tilde{\ell}|\Psi_{\phi}(H_{j})|\tilde
{k}\rangle\right]  \\
&  =\operatorname{Re}\!\left[  \sum_{k,\ell}\frac{\lambda_{k}\frac{e^{-\mu
_{\ell}}}{Z}}{\frac{e^{-\mu_{k}}}{Z}+\frac{e^{-\mu_{\ell}}}{Z}}\langle
\tilde{k}|\Psi_{\phi}(H_{i})|\tilde{\ell}\rangle\langle\tilde{\ell
}|\Psi_{\phi}(H_{j})|\tilde{k}\rangle\right]  \\
&  =\operatorname{Re}\!\left[  \sum_{k,\ell}\lambda_{k}\frac{1}{e^{-\left(
\mu_{k}-\mu_{\ell}\right)  }+1}\langle\tilde{k}|\Psi_{\phi}
(H_{i})|\tilde{\ell}\rangle\langle\tilde{\ell}|\Psi_{\phi}
(H_{j})|\tilde{k}\rangle\right]  .
\end{align}
Now observe that, for all $x\in\mathbb{R}$,
\begin{align}
\frac{1}{e^{-x}+1} &  =\frac{e^{x/2}}{e^{x/2}+e^{-x/2}}\\
&  =\frac{e^{x/2}}{e^{x/2}+e^{-x/2}}-\frac{1}{2}+\frac{1}{2}\\
&  =\frac{e^{x/2}}{e^{x/2}+e^{-x/2}}-\frac{\frac{1}{2}e^{x/2}+\frac{1}
{2}e^{-x/2}}{e^{x/2}+e^{-x/2}}+\frac{1}{2}\\
&  =\frac{1}{2}\frac{e^{x/2}-e^{-x/2}}{e^{x/2}+e^{-x/2}}+\frac{1}{2}\\
&  =\frac{1}{2}\tanh(x/2)+\frac{1}{2}\\
&  =\frac{x}{4}\frac{\tanh(x/2)}{x/2}+\frac{1}{2}.
\end{align}
Substituting above, we find that
\begin{align}
&  \operatorname{Re}\!\left[  \sum_{k,\ell}\lambda_{k}\frac{1}{e^{-\left(
\mu_{k}-\mu_{\ell}\right)  }+1}\langle\tilde{k}|\Psi_{\phi}
(H_{i})|\tilde{\ell}\rangle\langle\tilde{\ell}|\Psi_{\phi}
(H_{j})|\tilde{k}\rangle\right]  \nonumber\\
&  =\operatorname{Re}\!\left[  \sum_{k,\ell}\lambda_{k}\left(  \frac{\mu_{k}
-\mu_{\ell}}{4}\frac{\tanh(\left(  \mu_{k}-\mu_{\ell}\right)  /2)}{\left(
\mu_{k}-\mu_{\ell}\right)  /2}+\frac{1}{2}\right)  \langle\tilde{k}|\Psi
_{\phi}(H_{i})|\tilde{\ell}\rangle\langle\tilde{\ell}|\Psi_{\phi}
(H_{j})|\tilde{k}\rangle\right]  \\
&  =\frac{1}{4}\operatorname{Re}\!\left[  \sum_{k,\ell}\lambda_{k}\left(
\mu_{k}-\mu_{\ell}\right)  \frac{\tanh(\left(  \mu_{k}-\mu_{\ell}\right)
/2)}{\left(  \mu_{k}-\mu_{\ell}\right)  /2}\langle\tilde{k}|\Psi_{\phi}(H_{i})|\tilde{\ell}\rangle\langle\tilde{\ell}|\Psi_{\phi}
(H_{j})|\tilde{k}\rangle\right]  \nonumber\\
&  \qquad+\frac{1}{2}\operatorname{Re}\!\left[  \sum_{k,\ell}\lambda_{k}
\langle\tilde{k}|\Psi_{\phi}(H_{i})|\tilde{\ell}\rangle\langle
\tilde{\ell}|\Psi_{\phi}(H_{j})|\tilde{k}\rangle\right]  \\
&  =\frac{1}{4}\operatorname{Re}\!\left[  \sum_{k,\ell}\lambda_{k}\left(
\mu_{k}-\mu_{\ell}\right)  \int_{\mathbb{R}}dt\ p(t)e^{-i\left(  \mu_{k}
-\mu_{\ell}\right)  t}\langle\tilde{k}|\Psi_{\phi}(H_{i})|\tilde{\ell
}\rangle\langle\tilde{\ell}|\Psi_{\phi}(H_{j})|\tilde{k}\rangle\right]
\nonumber\\
&  \qquad+\frac{1}{2}\operatorname{Re}\!\left[  \operatorname{Tr}\!\left[
\rho(\theta)\Psi_{\phi}(H_{i})\Psi_{\phi}(H_{j})\right]
\right]  \\
&  =\frac{1}{4}\operatorname{Re}\!\left[  \sum_{k,\ell}\lambda_{k}\left(
\mu_{k}-\mu_{\ell}\right)  \int_{\mathbb{R}}dt\ p(t)e^{-i\left(  \mu_{k}
-\mu_{\ell}\right)  t}\langle\tilde{k}|\Psi_{\phi}(H_{i})|\tilde{\ell
}\rangle\langle\tilde{\ell}|\Psi_{\phi}(H_{j})|\tilde{k}\rangle\right]
\nonumber\\
&  \qquad+\frac{1}{4}\left\langle \left\{  \Psi_{\phi}(H_{i}
),\Psi_{\phi}(H_{j})\right\}  \right\rangle _{\rho(\theta)}.
\end{align}
The third equality above follows from \cite[Lemma~12]{patel2024quantumboltzmannmachine}.
Consider now that the key quantity in the first term above can be rewritten as
\begin{align}
&  \sum_{k,\ell}\lambda_{k}\left(  \mu_{k}-\mu_{\ell}\right)  \int
_{\mathbb{R}}dt\ p(t)e^{-i\left(  \mu_{k}-\mu_{\ell}\right)  t}\langle
\tilde{k}|\Psi_{\phi}(H_{i})|\tilde{\ell}\rangle\langle\tilde{\ell
}|\Psi_{\phi}(H_{j})|\tilde{k}\rangle\nonumber\\
&  =\int_{\mathbb{R}}dt\ p(t)\sum_{k,\ell}\lambda_{k}\left(  \mu_{k}-\mu
_{\ell}\right)  \operatorname{Tr}\!\left[  \Psi_{\phi}(H_{i}
)e^{i\mu_{\ell}t}|\tilde{\ell}\rangle\langle\tilde{\ell}|\Psi_{\phi}(H_{j})e^{-i\mu_{k}t}|\tilde{k}\rangle\langle\tilde{k}|\right]  \\
&  =\int_{\mathbb{R}}dt\ p(t)\sum_{k,\ell}\operatorname{Tr}\!\left[  \Psi_{\phi
}(H_{i})e^{i\mu_{\ell}t}|\tilde{\ell}\rangle\langle\tilde{\ell}
|\Psi_{\phi}(H_{j})\lambda_{k}\mu_{k}e^{-i\mu_{k}t}|\tilde{k}
\rangle\langle\tilde{k}|\right]  \nonumber\\
&  \qquad-\int_{\mathbb{R}}dt\ p(t)\sum_{k,\ell}\operatorname{Tr}\!\left[
\Psi_{\phi}(H_{i})\mu_{\ell}e^{i\mu_{\ell}t}|\tilde{\ell}\rangle
\langle\tilde{\ell}|\Psi_{\phi}(H_{j})\lambda_{k}e^{-i\mu_{k}t}
|\tilde{k}\rangle\langle\tilde{k}|\right]  \\
&  =\int_{\mathbb{R}}dt\ p(t)\operatorname{Tr}\!\left[  \Psi_{\phi}
(H_{i})e^{iG(\theta)t}\Psi_{\phi}(H_{j})\rho(\theta)G(\theta
)e^{-iG(\theta)t}\right]  \nonumber\\
&  \qquad-\int_{\mathbb{R}}dt\ p(t)\operatorname{Tr}\!\left[  \Psi_{\phi}(H_{i})G(\theta)e^{iG(\theta)t}\Psi_{\phi}(H_{j})\rho(\theta
)e^{-iG(\theta)t}\right]  \\
&  =\operatorname{Tr}\!\left[  \Phi_{\theta}(\Psi_{\phi}(H_{i}
))\Psi_{\phi}(H_{j})G(\theta)\rho(\theta)\right]  -\operatorname{Tr}\!
\left[  \Phi_{\theta}(\Psi_{\phi}(H_{i}))G(\theta)\Psi_{\phi}(H_{j})\rho(\theta)\right]  .
\end{align}
Putting everything together, we find that
\begin{align}
  I_{ij}^{\text{FB}}(\phi)
&  =2\left\langle \left\{  \Psi_{\phi}(H_{i}),\Psi_{\phi}(H_{j})\right\}  \right\rangle _{\rho(\theta)}\nonumber\\
&  \qquad-8\left(
\begin{array}
[c]{c}
\frac{1}{4}\left(
\begin{array}
[c]{c}
\operatorname{Re}\!\left[  \operatorname{Tr}\!\left[  \Phi_{\theta}(\Psi_{\phi
}(H_{i}))\Psi_{\phi}(H_{j})G(\theta)\rho(\theta)\right]
\right]  \\
-\operatorname{Re}\!\left[  \operatorname{Tr}\!\left[  \Phi_{\theta}(\Psi_{\phi
}(H_{i}))G(\theta)\Psi_{\phi}(H_{j})\rho(\theta)\right]
\right]
\end{array}
\right)  \\
+\frac{1}{4}\left\langle \left\{  \Psi_{\phi}(H_{i}),\Psi_{\phi}(H_{j})\right\}  \right\rangle _{\rho(\theta)}
\end{array}
\right)  \\
&  =2\left\langle \left\{  \Psi_{\phi}(H_{i}),\Psi_{\phi}(H_{j})\right\}  \right\rangle _{\rho(\theta)}-\operatorname{Tr}\!\left[  \Phi_{\theta
}(\Psi_{\phi}(H_{i}))\Psi_{\phi}(H_{j})G(\theta)\rho
(\theta)\right]  \nonumber\\
&  \qquad+\operatorname{Tr}\!\left[  \Phi_{\theta}(\Psi_{\phi}
(H_{i}))G(\theta)\Psi_{\phi}(H_{j})\rho(\theta)\right]  \nonumber\\
&  \qquad-\operatorname{Tr}\!\left[  \Phi_{\theta}(\Psi_{\phi}
(H_{i}))\rho(\theta)G(\theta)\Psi_{\phi}(H_{j})\right]
+\operatorname{Tr}\!\left[  \Phi_{\theta}(\Psi_{\phi}(H_{i}))\rho
(\theta)\Psi_{\phi}(H_{j})G(\theta)\right]  \nonumber\\
&  \qquad-2\left\langle \left\{  \Psi_{\phi}(H_{i}),\Psi_{\phi}(H_{j})\right\}  \right\rangle _{\rho(\theta)}\\
&  =\operatorname{Tr}\!\left[  \Phi_{\theta}(\Psi_{\phi}(H_{i}
))G(\theta)\Psi_{\phi}(H_{j})\rho(\theta)\right]  -\operatorname{Tr}\!
\left[  \Phi_{\theta}(\Psi_{\phi}(H_{i}))\Psi_{\phi}
(H_{j})G(\theta)\rho(\theta)\right]  \nonumber\\
&  \qquad+\operatorname{Tr}\!\left[  \Phi_{\theta}(\Psi_{\phi}
(H_{i}))\rho(\theta)\Psi_{\phi}(H_{j})G(\theta)\right]
-\operatorname{Tr}\!\left[  \Phi_{\theta}(\Psi_{\phi}(H_{i}))\rho
(\theta)G(\theta)\Psi_{\phi}(H_{j})\right]  \\
&  =\left\langle \left[  \left[  \Psi_{\phi}(H_{j}),G(\theta)\right]
,\Phi_{\theta}(\Psi_{\phi}(H_{i}))\right]  \right\rangle _{\rho
(\theta)}\\
& = \left\langle \left[  \Phi_{\theta}(\Psi_{\phi}(H_{i})), \left[  G(\theta) , \Psi_{\phi}
(H_{j})\right]  \right]
\right\rangle _{\rho(\theta)}.
\end{align}
This concludes the proof of Theorem~\ref{thm:FB-phi}.

\subsubsection{Proof of Theorem~\ref{thm:FB-theta-phi}}

\label{proof:FB-theta-phi}

Using~\eqref{eq:Fisher-info-help-last} and~\eqref{eq:Fisher-info-help-alt1}, consider that
\begin{align}
&  I_{ij}^{\text{FB}}(\theta, \phi)\nonumber\\
&  =\sum_{k,\ell}\frac{2}{\lambda_{k}+\lambda_{\ell}}\langle k|\left[
\frac{\partial}{\partial\theta_{i}}\omega(\theta,\phi)\right]  |\ell
\rangle\!\langle\ell|\left[  \frac{\partial}{\partial\phi_{j}}\omega(\theta
,\phi)\right]  |k\rangle\\
&  =\sum_{k,\ell}\frac{2}{\lambda_{k}+\lambda_{\ell}} \left( -\frac{1}{2}\left( \lambda_\ell + \lambda_k\right) \bra{k} e^{-iH(\phi)}\Phi_{\theta}(G_{i})e^{iH(\phi)}\ket{\ell} + \delta_{k\ell} \lambda_\ell \left\langle G_{i}\right\rangle_{\rho(\theta)} \right) \left(  i\left(  \lambda_{\ell}-\lambda_{k}\right)
\langle\ell|\Psi^{\dagger}_{\phi}(H_{j})|k\rangle\right)\\
& = \sum_{k,\ell} -i \left(\lambda_{\ell}-\lambda_{k}\right) \bra{k} e^{-iH(\phi)}\Phi_{\theta}(G_{i})e^{iH(\phi)}\ket{\ell} \langle\ell|\Psi^{\dagger}_{\phi}(H_{j})|k\rangle\\
& = \sum_{k,\ell} -i \lambda_{\ell} \bra{k} e^{-iH(\phi)}\Phi_{\theta}(G_{i})e^{iH(\phi)}\ket{\ell} \langle\ell|\Psi^{\dagger}_{\phi}(H_{j})|k\rangle 
+ \sum_{k,\ell} i \lambda_{k} \bra{k} e^{-iH(\phi)}\Phi_{\theta}(G_{i})e^{iH(\phi)}\ket{\ell} \langle\ell|\Psi^{\dagger}_{\phi}(H_{j})|k\rangle\\
& = -i \Tr\!\left[ e^{-iH(\phi)}\Phi_{\theta}(G_{i})e^{iH(\phi)} \omega(\theta,\phi) \Psi^{\dagger}_{\phi}(H_{j}) \right] 
+i \Tr\!\left[ \Psi^{\dagger}_{\phi}(H_{j}) \omega(\theta,\phi) e^{-iH(\phi)}\Phi_{\theta}(G_{i})e^{iH(\phi)}\right]\\
& = -i \Tr\!\left[ \Psi^{\dagger}_{\phi}(H_{j}) e^{-iH(\phi)}\Phi_{\theta}(G_{i})e^{iH(\phi)} \omega(\theta,\phi)  \right] 
+i \Tr\!\left[ e^{-iH(\phi)}\Phi_{\theta}(G_{i})e^{iH(\phi)} \Psi^{\dagger}_{\phi}(H_{j}) \omega(\theta,\phi) \right]\\
& = -i \Tr\!\left[ \Psi^{\dagger}_{\phi}(H_{j}) e^{-iH(\phi)}\Phi_{\theta}(G_{i}) \rho(\theta) e^{iH(\phi)} \right] 
+i \Tr\!\left[ e^{-iH(\phi)}\Phi_{\theta}(G_{i})e^{iH(\phi)} \Psi^{\dagger}_{\phi}(H_{j}) e^{-iH(\phi)} \rho(\theta) e^{iH(\phi)} \right]\\
& = -i \Tr\!\left[ e^{iH(\phi)} \Psi^{\dagger}_{\phi}(H_{j}) e^{-iH(\phi)}\Phi_{\theta}(G_{i}) \rho(\theta) \right] 
+i \Tr\!\left[ \Phi_{\theta}(G_{i})e^{iH(\phi)} \Psi^{\dagger}_{\phi}(H_{j}) e^{-iH(\phi)} \rho(\theta) \right]\\
& = -i \Tr\!\left[ \Psi_{\phi}(H_{j}) \Phi_{\theta}(G_{i}) \rho(\theta) \right] 
+i \Tr\!\left[ \Phi_{\theta}(G_{i}) \Psi_{\phi}(H_{j}) \rho(\theta) \right]\\
& = i \Tr\!\left[ \left[ \Phi_{\theta}(G_{i}) , \Psi_{\phi}(H_{j}) \right] \rho(\theta) \right]\\
& = i \left\langle \left[ \Phi_{\theta}(G_{i}) , \Psi_{\phi}(H_{j}) \right] \right\rangle_{\rho(\theta)}. \label{eq:last-FB-theta-phi-app}
\end{align}
The third equality above follows because
\begin{equation}
    \sum_{k,\ell}\frac{2}{\lambda_{k}+\lambda_{\ell}}  \delta_{k\ell} \lambda_\ell \left\langle G_{i}\right\rangle_{\rho(\theta)}  \left(  i\left(  \lambda_{\ell}-\lambda_{k}\right)
\langle\ell|\Psi^{\dagger}_{\phi}(H_{j})|k\rangle\right) = 0.
\end{equation}
This concludes the proof of Theorem~\ref{thm:FB-theta-phi}. Note that the final expression in~\eqref{eq:last-FB-theta-phi-app} can also be written as $- 2\operatorname{Im}\!\left[  \Tr\!\left[ \Phi_{\theta}(G_{i}) \Psi_{\phi}(H_{j}) \rho(\theta) \right] \right]$.

\subsection{Wigner--Yanase information matrix elements for evolved quantum Boltzmann machines}

\subsubsection{Proof of Theorem~\ref{thm:WY-partial-derivatives}}

\label{proof:WY-partial-derivatives}

Using the notations in the statement of Theorem~\ref{thm:WY-partial-derivatives}, consider that
\begin{align}
|\partial_{j}\psi(\theta,\phi)\rangle & =\frac{\partial}{\partial\theta_{j}
}\left( \sqrt{\omega(\theta,\phi)}\otimes I\right)  |\Gamma\rangle\\
& =\frac{\partial}{\partial\theta_{j}}\left(  e^{-iH(\phi)} \sqrt
{\rho(\theta)} \ e^{iH(\phi)} \otimes I\right)  |\Gamma\rangle\\
& =\left(  e^{-iH(\phi)}\left(  \frac{\partial}{\partial\theta_{j}}\sqrt
{\rho(\theta)}\right) e^{iH(\phi)} \otimes I\right)  |\Gamma\rangle\\
& =\left[  e^{-iH(\phi)}\left(  -\frac{1}{4}\left\{  \Phi_{\frac{\theta}{2}
}(G_{j}),\sqrt{\rho(\theta)}\right\}  +\frac{1}{2}\sqrt{\rho(\theta
)}\left\langle G_{j}\right\rangle _{\rho(\theta)}\right) e^{iH(\phi)} \otimes I\right]
|\Gamma\rangle\\
& =-\frac{1}{4}\left(  e^{-iH(\phi)}\left\{  \Phi_{\frac{\theta}{2}}
(G_{j}),\sqrt{\rho(\theta)}\right\} e^{iH(\phi)} \otimes I\right)  |\Gamma\rangle
+\frac{1}{2}\left\langle G_{j}\right\rangle _{\rho(\theta)}
|\psi(\theta,\phi)\rangle.
\end{align}
The third equality follows because
\begin{equation}
\sqrt{\rho(\theta)}   =\sqrt{\frac{e^{-G(\theta)}}{Z(\theta)}}
  =\frac{e^{-\frac{1}{2}G(\theta)}}{\sqrt{Z(\theta)}}
  =\frac{e^{-G(\theta/2)}}{\sqrt{Z(\theta)}},
\end{equation}
which implies that
\begin{align}
\frac{\partial}{\partial\theta_{j}}\sqrt{\rho(\theta)} &  =\frac{\partial
}{\partial\theta_{j}}\left(  \frac{e^{-G(\theta/2)}}{\sqrt{Z(\theta)}}\right)
\\
&  =\frac{\partial}{\partial\theta_{j}}\left(  e^{-G(\theta/2)}\frac{1}
{\sqrt{Z(\theta)}}\right)  \\
&  =\left(  \frac{\partial}{\partial\theta_{j}}e^{-G(\theta/2)}\right)
\frac{1}{\sqrt{Z(\theta)}}+e^{-G(\theta/2)}\left(  \frac{\partial}
{\partial\theta_{j}}\frac{1}{\sqrt{Z(\theta)}}\right)  \\
&  =-\frac{1}{4}\left\{  \Phi_{\frac{\theta}{2}}(G_{j}),e^{-G(\theta
/2)}\right\}  \frac{1}{\sqrt{Z(\theta)}}-\frac{1}{2}e^{-G(\theta/2)}\left(
\frac{\frac{\partial}{\partial\theta_{j}}Z(\theta)}{Z(\theta)^{\frac{3}{2}}
}\right) \label{eq:3rd-eq-deriv-sq-root-QBM}  \\
&  =-\frac{1}{4}\left\{  \Phi_{\frac{\theta}{2}}(G_{j}),e^{-G(\theta
/2)}\right\}  \frac{1}{\sqrt{Z(\theta)}}+\frac{1}{2}e^{-G(\theta/2)}\left(
\frac{\left\langle G_{j}\right\rangle _{\rho(\theta)}}{Z(\theta)^{\frac{1}{2}
}}\right)  \\
&  =-\frac{1}{4}\left\{  \Phi_{\frac{\theta}{2}}(G_{j}),\sqrt{\rho(\theta
)}\right\}  +\frac{1}{2}\sqrt{\rho(\theta)}\left\langle G_{j}\right\rangle
_{\rho(\theta)}.
\end{align}
To arrive at~\eqref{eq:3rd-eq-deriv-sq-root-QBM}, we applied \cite[Lemma~10]{patel2024quantumboltzmannmachine}.

Furthermore,
\begin{align}
|\partial_{i}\psi(\theta,\phi)\rangle & =\frac{\partial}{\partial\phi_{i}
}\left( \sqrt{\omega(\theta,\phi)}\otimes I\right)  |\Gamma\rangle\\
& =\frac{\partial}{\partial\phi_{i}
}\left(  e^{-iH(\phi)}\sqrt{\rho(\theta)}\ e^{iH(\phi)}\otimes I\right)  |\Gamma\rangle\\
& =\left(  \left(  \frac{\partial}{\partial\phi_{i}}e^{-iH(\phi)}\right)
\sqrt{\rho(\theta)} \ e^{iH(\phi)}\otimes I\right)  |\Gamma\rangle + \left(  e^{-iH(\phi)}
\sqrt{\rho(\theta)} \left(\frac{\partial}{\partial\phi_{i}}e^{iH(\phi)}\right)\otimes I\right)  |\Gamma\rangle\\
& =-i\left(  \Psi^{\dagger}_{\phi}(H_{i})e^{-iH(\phi)} \sqrt{\rho(\theta)}\ e^{iH(\phi)}\otimes I\right)  |\Gamma\rangle + i\left(  e^{-iH(\phi)}
\sqrt{\rho(\theta)} \ e^{iH(\phi)} \Psi^{\dagger}_{\phi}(H_{i}) \otimes I\right)  |\Gamma\rangle \\
& =i\left(  \left[ e^{-iH(\phi)}
\sqrt{\rho(\theta)} \ e^{iH(\phi)}, \Psi^{\dagger}_{\phi}(H_{i}) \right] \otimes I\right)  |\Gamma\rangle\\
& = i\left(  \left[
\sqrt{\omega(\theta,\phi)}, \Psi^{\dagger}_{\phi}(H_{i}) \right] \otimes I\right)  |\Gamma\rangle,
\end{align}
where we applied~\eqref{eq:deriv-phi-proof-1}--\eqref{eq:deriv-phi-proof-1-2}.

\subsubsection{Proof of Theorem~\ref{thm:WY-theta}}

\label{proof:WY-theta}

Here we employ the shorthands
$\partial_{i}\equiv\frac{\partial}{\partial\theta_{i}}$ and $\partial_{j}\equiv\frac{\partial}{\partial\theta_{j}}$.  Using Proposition~\ref{prop:FB-WY-canonical-purifications} and~\eqref{eq:part_der_theta_purified-evolved-QBM}, we find that
\begin{align}
&  \langle\partial_{i}\psi(\theta,\phi)|\partial_{j}\psi(\theta,\phi)\rangle\nonumber\\
&  =\left(
\begin{array}
[c]{c}
-\frac{1}{4}\left(  \langle \Gamma | e^{-iH(\phi)} \left\{  \Phi_{\frac{\theta}{2}}(G_{i}
),\sqrt{\rho(\theta)}\right\} e^{iH(\phi)} \otimes I\right)  \\
+ \frac{1}{2} \langle \psi(\theta
,\phi)| \left\langle G_{i}\right\rangle _{\rho(\theta)}
\end{array}
\right)  \left(
\begin{array}
[c]{c}
-\frac{1}{4}\left(  e^{-iH(\phi)}\left\{  \Phi_{\frac{\theta}{2}}(G_{j}
),\sqrt{\rho(\theta)}\right\} e^{iH(\phi)} \otimes I\right)  |\Gamma\rangle\\
+\frac{1}{2}\left\langle G_{j}\right\rangle _{\rho(\theta)}|\psi(\theta
,\phi)\rangle
\end{array}
\right)\\
&  = \frac{1}{16} \langle\Gamma| e^{-iH(\phi)} \left\{  \Phi_{\frac{\theta}{2}}(G_{i}
),\sqrt{\rho(\theta)}\right\} e^{iH(\phi)}e^{-iH(\phi)}\left\{  \Phi_{\frac{\theta}{2}}(G_{j}
),\sqrt{\rho(\theta)}\right\} e^{iH(\phi)} \otimes I
|\Gamma\rangle\nonumber\\
&  \qquad -\frac{1}{8} \left\langle G_{j}\right\rangle _{\rho(\theta)} \langle \Gamma | e^{-iH(\phi)} \left\{  \Phi_{\frac{\theta}{2}}(G_{i}
),\sqrt{\rho(\theta)}\right\} e^{iH(\phi)} \otimes I |\psi(\theta
,\phi)\rangle
\nonumber\\
&  \qquad -\frac{1}{8} \left\langle G_{i}\right\rangle _{\rho(\theta)} \langle \psi(\theta
,\phi) | e^{-iH(\phi)}\left\{  \Phi_{\frac{\theta}{2}}(G_{j}
),\sqrt{\rho(\theta)}\right\} e^{iH(\phi)} \otimes I  |\Gamma\rangle
\nonumber\\
&  \qquad + \frac{1}{4} \left\langle G_{i}\right\rangle _{\rho(\theta)} \left\langle G_{j}\right\rangle _{\rho(\theta)} \\
&  = \frac{1}{16} \langle\Gamma| e^{-iH(\phi)} \left\{  \Phi_{\frac{\theta}{2}}(G_{i}
),\sqrt{\rho(\theta)}\right\} \left\{  \Phi_{\frac{\theta}{2}}(G_{j}
),\sqrt{\rho(\theta)}\right\} e^{iH(\phi)} \otimes I
|\Gamma\rangle\nonumber\\
&  \qquad -\frac{1}{8} \left\langle G_{j}\right\rangle _{\rho(\theta)} \langle \Gamma | e^{-iH(\phi)} \left\{  \Phi_{\frac{\theta}{2}}(G_{i}
),\sqrt{\rho(\theta)}\right\} \sqrt{\rho(\theta)} e^{iH(\phi)} \otimes I |\Gamma\rangle
\nonumber\\
&  \qquad -\frac{1}{8} \left\langle G_{i}\right\rangle _{\rho(\theta)} \langle \Gamma| e^{-iH(\phi)} \sqrt{\rho(\theta)} \left\{  \Phi_{\frac{\theta}{2}}(G_{j}
),\sqrt{\rho(\theta)}\right\} e^{iH(\phi)} \otimes I  |\Gamma\rangle
\nonumber\\
&  \qquad + \frac{1}{4} \left\langle G_{i}\right\rangle _{\rho(\theta)} \left\langle G_{j}\right\rangle _{\rho(\theta)} \\
&  =\frac{1}{16}\langle\Gamma| e^{-iH(\phi)} \left\{  \Phi_{\frac{\theta}{2}}(G_{i}
),\sqrt{\rho(\theta)}\right\}  \left\{  \Phi_{\frac{\theta}{2}}(G_{j}
),\sqrt{\rho(\theta)}\right\} e^{iH(\phi)} \otimes I|\Gamma\rangle-\frac{1}{4}\left\langle
G_{i}\right\rangle _{\rho(\theta)}\left\langle G_{j}\right\rangle
_{\rho(\theta)}.
\end{align}
The last equality follows because
\begin{align}
      \langle \Gamma| e^{-iH(\phi)}  \left\{  \Phi_{\frac{\theta}{2}}(G_{j}
),\sqrt{\rho(\theta)}\right\} \sqrt{\rho(\theta)} e^{iH(\phi)} \otimes I  |\Gamma\rangle & =   \operatorname{Tr}\!\left[ e^{-iH(\phi)} \left\{  \Phi_{\frac{\theta}{2}}(G_{j}
),\sqrt{\rho(\theta)}\right\} \sqrt{\rho(\theta)}  e^{iH(\phi)} \right]\\
& =  \operatorname{Tr}\!\left[ \left\{  \Phi_{\frac{\theta}{2}}(G_{j}
),\sqrt{\rho(\theta)}\right\} \sqrt{\rho(\theta)}  \right]\\
& = 2 \operatorname{Tr}\!\left[    \Phi_{\frac{\theta}{2}}(G_{j}
)\rho(\theta)  \right]\\
& = 2 \operatorname{Tr}\!\left[    G_{j}
\rho(\theta)  \right] \\ 
 \langle \Gamma| e^{-iH(\phi)}\sqrt{\rho(\theta)} \left\{  \Phi_{\frac{\theta}{2}}(G_{j}
),\sqrt{\rho(\theta)}\right\} e^{iH(\phi)} \otimes I  |\Gamma\rangle 
& = \operatorname{Tr}\!\left[e^{-iH(\phi)} \sqrt{\rho(\theta)} \left\{  \Phi_{\frac{\theta}{2}}(G_{j}
),\sqrt{\rho(\theta)}\right\} e^{iH(\phi)} \right] \\
& = \operatorname{Tr}\!\left[\sqrt{\rho(\theta)} \left\{  \Phi_{\frac{\theta}{2}}(G_{j}
),\sqrt{\rho(\theta)}\right\} \right] \\
& = 2 \operatorname{Tr}\!\left[  \Phi_{\frac{\theta}{2}}(G_{j}
)\rho(\theta) \right] \\
& = 2 \operatorname{Tr}\!\left[  G_{j}
\rho(\theta) \right].
\end{align}
Also, consider that
\begin{align}
&  \langle\Gamma|e^{-iH(\phi)} \left\{  \Phi_{\frac{\theta}{2}}(G_{i}),\sqrt{\rho(\theta
)}\right\}  \left\{  \Phi_{\frac{\theta}{2}}(G_{j}),\sqrt{\rho(\theta
)}\right\} e^{iH(\phi)} \otimes I |\Gamma\rangle\nonumber\\
&  =\langle\Gamma|e^{-iH(\phi)}\Phi_{\frac{\theta}{2}}(G_{i})\sqrt{\rho(\theta)}\Phi
_{\frac{\theta}{2}}(G_{j})\sqrt{\rho(\theta)}e^{iH(\phi)}\otimes I|\Gamma\rangle\nonumber\\
&\qquad+\langle
\Gamma|e^{-iH(\phi)}\sqrt{\rho(\theta)}\Phi_{\frac{\theta}{2}}(G_{i})\Phi_{\frac{\theta}
{2}}(G_{j})\sqrt{\rho(\theta)}e^{iH(\phi)}\otimes I|\Gamma\rangle\nonumber\\
&  \qquad+\langle\Gamma|e^{-iH(\phi)}\Phi_{\frac{\theta}{2}}(G_{i})\sqrt{\rho(\theta)}
\sqrt{\rho(\theta)}\Phi_{\frac{\theta}{2}}(G_{j})e^{iH(\phi)}\otimes I|\Gamma\rangle\nonumber\\
&\qquad+\langle
\Gamma|e^{-iH(\phi)}\sqrt{\rho(\theta)}\Phi_{\frac{\theta}{2}}(G_{i})\sqrt{\rho(\theta
)}\Phi_{\frac{\theta}{2}}(G_{j})e^{iH(\phi)}\otimes I|\Gamma\rangle\\
&  =\operatorname{Tr}\!\left[  \Phi_{\frac{\theta}{2}}(G_{i})\sqrt{\rho(\theta
)}\Phi_{\frac{\theta}{2}}(G_{j})\sqrt{\rho(\theta)}\right]  +\operatorname{Tr}\!
\left[  \sqrt{\rho(\theta)}\Phi_{\frac{\theta}{2}}(G_{i})\Phi_{\frac{\theta
}{2}}(G_{j})\sqrt{\rho(\theta)}\right]  \nonumber\\
&  \qquad+\operatorname{Tr}\!\left[  \Phi_{\frac{\theta}{2}}(G_{i})\sqrt
{\rho(\theta)}\sqrt{\rho(\theta)}\Phi_{\frac{\theta}{2}}(G_{j})\right]
+\operatorname{Tr}\!\left[  \sqrt{\rho(\theta)}\Phi_{\frac{\theta}{2}}
(G_{i})\sqrt{\rho(\theta)}\Phi_{\frac{\theta}{2}}(G_{j})\right]  \\
&  =2\operatorname{Tr}\!\left[  \Phi_{\frac{\theta}{2}}(G_{i})\sqrt{\rho
(\theta)}\Phi_{\frac{\theta}{2}}(G_{j})\sqrt{\rho(\theta)}\right]
+\left\langle \left\{  \Phi_{\frac{\theta}{2}}(G_{i}),\Phi_{\frac{\theta}{2}
}(G_{j})\right\}  \right\rangle _{\rho(\theta)}.
\end{align}
So we find that
\begin{align}
&  \langle\partial_{i}\psi(\theta,\phi)|\partial_{j}\psi(\theta,\phi)\rangle\nonumber\\
&  =\frac{1}{16}\left[  2\operatorname{Tr}\!\left[  \Phi_{\frac{\theta}{2}
}(G_{i})\sqrt{\rho(\theta)}\Phi_{\frac{\theta}{2}}(G_{j})\sqrt{\rho(\theta
)}\right]  +\left\langle \left\{  \Phi_{\frac{\theta}{2}}(G_{i}),\Phi
_{\frac{\theta}{2}}(G_{j})\right\}  \right\rangle _{\rho(\theta)}\right]
-\frac{1}{4}\left\langle G_{i}\right\rangle _{\rho(\theta
)}\left\langle G_{j}\right\rangle _{\rho(\theta)}\\
&  =\frac{1}{8}\operatorname{Tr}\!\left[  \Phi_{\frac{\theta}{2}}(G_{i}
)\sqrt{\rho(\theta)}\Phi_{\frac{\theta}{2}}(G_{j})\sqrt{\rho(\theta)}\right]
+\frac{1}{16}\left\langle \left\{  \Phi_{\frac{\theta}{2}}(G_{i}),\Phi
_{\frac{\theta}{2}}(G_{j})\right\}  \right\rangle _{\rho(\theta)}-\frac{1}{4}\left\langle G_{i}\right\rangle _{\rho(\theta
)}\left\langle G_{j}\right\rangle _{\rho(\theta)}.
\end{align}
Additionally, consider that
\begin{align}
  \left\langle \psi(\theta,\phi)|\partial_{j}\psi(\theta,\phi)\right\rangle 
&  =\langle\Gamma|\sqrt{\omega(\theta,\phi)} \otimes I \left( -\frac{1}{4}  e^{-iH(\phi)}\left\{  \Phi_{\frac{\theta}{2}}
(G_{j}),\sqrt{\rho(\theta)}\right\}e^{iH(\phi)}  \otimes I  |\Gamma\rangle+\frac{1}{2}\left\langle G_{j}\right\rangle _{\rho(\theta)}
|\psi(\theta,\phi)\rangle\right)\\
&  =-\frac{1}{4}\langle\Gamma|e^{-iH(\phi)}\sqrt{\rho(\theta)}\left\{  \Phi_{\frac{\theta
}{2}}(G_{j}),\sqrt{\rho(\theta)}\right\} e^{iH(\phi)} \otimes I |\Gamma\rangle\notag\\
&\qquad+\frac{1}{2}\left\langle G_{j}\right\rangle _{\rho(\theta)} \langle\Gamma|e^{-iH(\phi)}\sqrt{\rho(\theta)}\sqrt{\rho(\theta
)} e^{iH(\phi)} \otimes I |\Gamma\rangle\\
&  =-\frac{1}{2} \Tr\!\left[ \Phi_{\frac{\theta
}{2}}(G_{j})\rho(\theta) \right] +\frac{1}{2}\left\langle G_{j}\right\rangle _{\rho(\theta)}\\
&  =-\frac{1}{2}\left\langle G_{j}\right\rangle _{\rho(\theta)} +\frac{1}{2}\left\langle G_{j}\right\rangle _{\rho(\theta)}\\
&  =0.
\end{align}
So the final expression for the Wigner--Yanase information is given by
\begin{align}
I_{ij}^{\operatorname{WY}}(\theta) &  =4\operatorname{Re}\!\left[
\frac{1}{8}\operatorname{Tr}\!\left[  \Phi_{\frac{\theta}{2}}(G_{i})\sqrt
{\rho(\theta)}\Phi_{\frac{\theta}{2}}(G_{j})\sqrt{\rho(\theta)}\right]  
+\frac{1}{16}\left\langle \left\{  \Phi_{\frac{\theta}{2}}(G_{i}),\Phi
_{\frac{\theta}{2}}(G_{j})\right\}  \right\rangle _{\rho(\theta)}
-\frac{1}{4}\left\langle G_{i}\right\rangle _{\rho(\theta)}\left\langle
G_{j}\right\rangle _{\rho(\theta)}
\right]  \\
&  =4\left[
\frac{1}{8}\operatorname{Tr}\!\left[  \Phi_{\frac{\theta}{2}}(G_{i})\sqrt
{\rho(\theta)}\Phi_{\frac{\theta}{2}}(G_{j})\sqrt{\rho(\theta)}\right]  
+\frac{1}{16}\left\langle \left\{  \Phi_{\frac{\theta}{2}}(G_{i}),\Phi
_{\frac{\theta}{2}}(G_{j})\right\}  \right\rangle _{\rho(\theta)}
-\frac{1}{4}\left\langle G_{i}\right\rangle _{\rho(\theta)}\left\langle
G_{j}\right\rangle _{\rho(\theta)}
\right]  \\
&  =\frac{1}{2}\operatorname{Tr}\!\left[  \Phi_{\frac{\theta}{2}}(G_{i}
)\sqrt{\rho(\theta)}\Phi_{\frac{\theta}{2}}(G_{j})\sqrt{\rho(\theta)}\right]
+\frac{1}{4}\left\langle \left\{  \Phi_{\frac{\theta}{2}}(G_{i}
),\Phi_{\frac{\theta}{2}}(G_{j})\right\}  \right\rangle _{\rho(\theta
)}-\left\langle G_{i}\right\rangle _{\rho(\theta)}\left\langle G_{j}
\right\rangle _{\rho(\theta)}.
\end{align}
This concludes the proof of Theorem~\ref{thm:WY-theta}.

\subsubsection{Canonical purification of quantum Boltzmann machines and thermofield double state}

\label{app:canon-pure-thermofield-double}

Let us note that the state in~\eqref{eq:canon-pure-QBM} is what is prepared by
various quantum algorithms (see, e.g.,~\cite{Holmes2022quantumalgorithms,chen2023q_Gibbs_sampl,rouze2024efficientthermalization}). Here we show briefly how this is the case. Writing an
eigendecomposition of $G(\theta)$ as $G(\theta)=\sum_{k}g_{k}|\phi_{k}
\rangle\!\langle\phi_{k}|$, the state prepared by thermal-state preparation algorithms is
\begin{equation}
\frac{1}{\sqrt{Z(\theta)}}\sum_{k}e^{-g_{k}/2}|\phi_{k}\rangle\otimes
|\phi_{k}^{\ast}\rangle,
\label{eq:thermal-state-alg-prep}
\end{equation}
where $|\phi_{k}^{\ast}\rangle$ is the complex conjugate of $|\phi_{k}\rangle$ and is defined with respect to the computational basis $\{|k\rangle\}_k$ as $|\phi_{k}^{\ast}\rangle \coloneqq \sum_{k'}\langle\phi_{k}|k'\rangle|k'\rangle$. See, e.g., \cite[Eq.~(37)]{Holmes2022quantumalgorithms}. To see why the state in~\eqref{eq:thermal-state-alg-prep} is equal to the canonical purification in~\eqref{eq:canon-pure-QBM}, consider the following steps:
\begin{align}
|\psi(\theta)\rangle & =\left(  \sqrt{\rho(\theta)}\otimes I\right)
|\Gamma\rangle\\
& =\left(  \sqrt{\rho(\theta)}\otimes I\right)  \sum_{k^{\prime}}|k^{\prime
}\rangle\otimes|k^{\prime}\rangle\\
& =\frac{1}{\sqrt{Z(\theta)}}\left(  \sum_{k}e^{-g_{k}/2}|\phi_{k}
\rangle\langle\phi_{k}|\otimes I\right)  \sum_{k^{\prime}}|k^{\prime}
\rangle\otimes|k^{\prime}\rangle\\
& =\frac{1}{\sqrt{Z(\theta)}}\sum_{k}e^{-g_{k}/2}|\phi_{k}\rangle\otimes
\sum_{k^{\prime}}\langle\phi_{k}|k^{\prime}\rangle|k^{\prime}\rangle\\
& =\frac{1}{\sqrt{Z(\theta)}}\sum_{k}e^{-g_{k}/2}|\phi_{k}\rangle\otimes
|\phi_{k}^{\ast}\rangle.
\end{align}

\subsubsection{Proof of Theorem~\ref{thm:WY-phi}}

\label{proof:WY-phi}

Here we employ the shorthands
$\partial_{i}\equiv\frac{\partial}{\partial\phi_{i}}$ and $\partial_{j}\equiv\frac{\partial}{\partial\phi_{j}}$. Using Proposition~\ref{prop:FB-WY-canonical-purifications} and~\eqref{eq:part_der_phi_purified-evolved-QBM}, we find that
\begin{align}
\langle\partial_{i}\psi(\theta,\phi)|\partial_{j}\psi(\theta,\phi)\rangle &  =
\langle\Gamma|\left(  i\left[
\sqrt{\omega(\theta,\phi)}, \Psi^{\dagger}_{\phi}(H_{i}) \right] \otimes I\right)   \left(  i\left[
\sqrt{\omega(\theta,\phi)}, \Psi^{\dagger}_{\phi}(H_{j}) \right] \otimes I\right)   |\Gamma\rangle  \\
& = -\langle\Gamma| 
\sqrt{\omega(\theta,\phi)} \Psi^{\dagger}_{\phi}(H_{i})  \sqrt{\omega(\theta,\phi)} \Psi^{\dagger}_{\phi}(H_{j}) \otimes I  |\Gamma\rangle\\
& \quad +  \langle\Gamma| 
\sqrt{\omega(\theta,\phi)} \Psi^{\dagger}_{\phi}(H_{i})  \Psi^{\dagger}_{\phi}(H_{j}) \sqrt{\omega(\theta,\phi)}  \otimes I  |\Gamma\rangle\\
& \quad +  \langle\Gamma| 
\Psi^{\dagger}_{\phi}(H_{i}) \sqrt{\omega(\theta,\phi)}   \sqrt{\omega(\theta,\phi)} \Psi^{\dagger}_{\phi}(H_{j}) \otimes I  |\Gamma\rangle\\
& \quad - \langle\Gamma| 
\Psi^{\dagger}_{\phi}(H_{i}) \sqrt{\omega(\theta,\phi)}   \Psi^{\dagger}_{\phi}(H_{j}) \sqrt{\omega(\theta,\phi)}  \otimes I  |\Gamma\rangle\\
& = - \Tr\!\left[ 
\sqrt{\omega(\theta,\phi)} \Psi^{\dagger}_{\phi}(H_{i})  \sqrt{\omega(\theta,\phi)} \Psi^{\dagger}_{\phi}(H_{j}) \right]\\
& \quad +  \Tr\!\left[ 
\sqrt{\omega(\theta,\phi)} \Psi^{\dagger}_{\phi}(H_{i})  \Psi^{\dagger}_{\phi}(H_{j}) \sqrt{\omega(\theta,\phi)}  \right]\\
& \quad + \Tr\!\left[
\Psi^{\dagger}_{\phi}(H_{i}) \sqrt{\omega(\theta,\phi)}   \sqrt{\omega(\theta,\phi)} \Psi^{\dagger}_{\phi}(H_{j}) \right]\\
& \quad - \Tr\!\left[
\Psi^{\dagger}_{\phi}(H_{i}) \sqrt{\omega(\theta,\phi)}   \Psi^{\dagger}_{\phi}(H_{j}) \sqrt{\omega(\theta,\phi)}  \right]\\
& = - \Tr\!\left[ 
e^{-iH(\phi)} \sqrt{\rho(\theta)} e^{iH(\phi)} \Psi^{\dagger}_{\phi}(H_{i}) e^{-iH(\phi)} \sqrt{\rho(\theta)} e^{iH(\phi)} \Psi^{\dagger}_{\phi}(H_{j}) \right]\\
& \quad +  \Tr\!\left[ 
 \Psi^{\dagger}_{\phi}(H_{i})  \Psi^{\dagger}_{\phi}(H_{j}) e^{-iH(\phi)} \rho(\theta) e^{iH(\phi)} \right]\\
& \quad + \Tr\!\left[
\Psi^{\dagger}_{\phi}(H_{i}) e^{-iH(\phi)} \rho(\theta) e^{iH(\phi)}\Psi^{\dagger}_{\phi}(H_{j}) \right]\\
& \quad - \Tr\!\left[
\Psi^{\dagger}_{\phi}(H_{i}) e^{-iH(\phi)} \sqrt{\rho(\theta)} e^{iH(\phi)}   \Psi^{\dagger}_{\phi}(H_{j}) e^{-iH(\phi)}\sqrt{\rho(\theta)} e^{iH(\phi)} \right]\\
& = - 2 \Tr\!\left[ \Psi_{\phi}(H_{j}) \sqrt{\rho(\theta)} \Psi_{\phi}(H_{i}) \sqrt{\rho(\theta)} \right]\\
& \quad +  \Tr\!\left[ 
 \left\{ \Psi_{\phi}(H_{i}) , \Psi_{\phi}(H_{j}) \right\} \rho(\theta)  \right]\\
 & = - 2 \Tr\!\left[ \Psi_{\phi}(H_{j}) \sqrt{\rho(\theta)} \Psi_{\phi}(H_{i}) \sqrt{\rho(\theta)} \right]\\
& \quad + \left\langle 
 \left\{ \Psi_{\phi}(H_{i}) , \Psi_{\phi}(H_{j}) \right\}   \right\rangle_{\rho(\theta)}.
\end{align}
Additionally, consider that 
\begin{align}
 \left\langle
\psi(\theta,\phi)|\partial_{j}\psi(\theta,\phi)\right\rangle\notag 
& = \langle\Gamma|\left( \sqrt{\omega(\theta,\phi)} \otimes I\right)  \left( i \left[
\sqrt{\omega(\theta,\phi)}, \Psi^{\dagger}_{\phi}(H_{j}) \right] \otimes I\right)   |\Gamma \rangle\\
& = i  \langle\Gamma| \sqrt{\omega(\theta,\phi)}
\sqrt{\omega(\theta,\phi)} \Psi^{\dagger}_{\phi}(H_{j}) \otimes I   |\Gamma\rangle\\
& \hspace{0.4cm} - i \langle\Gamma| \sqrt{\omega(\theta,\phi)}
\Psi^{\dagger}_{\phi}(H_{j}) \sqrt{\omega(\theta,\phi)}  \otimes I  |\Gamma\rangle\\
& = i \Tr\!\left[ \Psi_{\phi}(H_{j}) \rho(\theta) \right] - i \Tr\!\left[ \Psi_{\phi}(H_{j}) \rho(\theta) \right]\\
& = 0.
\end{align}
Then it follows that
\begin{align}
I_{ij}^{\operatorname{WY}} (\phi) &=   4\operatorname{Re}\!\left[  \langle\partial_{i}\psi(\theta,\phi)|\partial_{j}
\psi(\theta,\phi)\rangle-\left\langle \partial_{i}\psi(\theta,\phi)|\psi(\theta,\phi)\right\rangle
\left\langle \psi(\theta,\phi)|\partial_{j}\psi(\theta,\phi)\right\rangle \right]
\\
&  =4\operatorname{Re}\!\left[ - 2 \Tr\!\left[ \Psi_{\phi}(H_{j}) \sqrt{\rho(\theta)} \Psi_{\phi}(H_{i}) \sqrt{\rho(\theta)} \right] + \left\langle 
 \left\{ \Psi_{\phi}(H_{i}) , \Psi_{\phi}(H_{j}) \right\}   \right\rangle_{\rho(\theta)} \right] \\
&  = - 8 \Tr\!\left[ \Psi_{\phi}(H_{j}) \sqrt{\rho(\theta)} \Psi_{\phi}(H_{i}) \sqrt{\rho(\theta)} \right] + 4 \left\langle 
 \left\{ \Psi_{\phi}(H_{i}) , \Psi_{\phi}(H_{j}) \right\}   \right\rangle_{\rho(\theta)}.
\end{align}
This concludes the proof of Theorem~\ref{thm:WY-phi}.

\subsubsection{Proof of Theorem~\ref{thm:WY-theta-phi}}

\label{proof:WY-theta-phi}

Here we employ the shorthands
$\partial_{i}\equiv\frac{\partial}{\partial\phi_{i}}$ and $\partial_{j}\equiv\frac{\partial}{\partial\theta_{j}}$.
Using Proposition~\ref{prop:FB-WY-canonical-purifications} and Theorem~\ref{thm:WY-partial-derivatives}, we find that
\begin{align}
& \langle\partial_{i}\psi(\theta,\phi)|\partial_{j}\psi(\theta,\phi
)\rangle\nonumber\\
& =\langle\Gamma|\left( i \left[
\sqrt{\omega(\theta,\phi)}, \Psi^{\dagger}_{\phi}(H_{i}) \right] \otimes I\right)    \left(
\begin{array}
[c]{c}
-\frac{1}{4}\left(  e^{-iH(\phi)}\left\{  \Phi_{\frac{\theta}{2}}(G_{j}
),\sqrt{\rho(\theta)}\right\} e^{iH(\phi)} \otimes I\right)  |\Gamma\rangle\\
+\frac{1}{2}\left\langle G_{j}\right\rangle _{\rho(\theta)}|\psi(\theta
,\phi)\rangle
\end{array}
\right)  \\
& =-\frac{i}{4}\langle\Gamma|\left(   \left[
\sqrt{\omega(\theta,\phi)}, \Psi^{\dagger}_{\phi}(H_{i}) \right]
e^{-iH(\phi)}\left\{  \Phi_{\frac{\theta}{2}}(G_{j}),\sqrt{\rho(\theta
)}\right\} e^{iH(\phi)} \otimes I\right)  |\Gamma\rangle\nonumber\\
& \qquad +\frac{i}{2}\left\langle G_{j}\right\rangle _{\rho(\theta)} \langle\Gamma|  \left[
\sqrt{\omega(\theta,\phi)}, \Psi^{\dagger}_{\phi}(H_{i}) \right] \otimes I  |\psi
(\theta,\phi)\rangle\\
& =-\frac{i}{4}\langle\Gamma|\left(  
\sqrt{\omega(\theta,\phi)} \Psi^{\dagger}_{\phi}(H_{i}) 
e^{-iH(\phi)} \Phi_{\frac{\theta}{2}}(G_{j}) \sqrt{\rho(\theta
)} e^{iH(\phi)} \otimes I\right)  |\Gamma\rangle\nonumber\\
& \qquad - \frac{i}{4}\langle\Gamma|\left(  
\sqrt{\omega(\theta,\phi)} \Psi^{\dagger}_{\phi}(H_{i}) 
e^{-iH(\phi)} \sqrt{\rho(\theta)} \Phi_{\frac{\theta}{2}}(G_{j})
 e^{iH(\phi)} \otimes I\right)  |\Gamma\rangle\nonumber\\
 & \qquad + \frac{i}{4}\langle\Gamma|\left(  
 \Psi^{\dagger}_{\phi}(H_{i}) \sqrt{\omega(\theta,\phi)}
e^{-iH(\phi)} \Phi_{\frac{\theta}{2}}(G_{j}) \sqrt{\rho(\theta)}
 e^{iH(\phi)} \otimes I\right)  |\Gamma\rangle\nonumber\\
 & \qquad + \frac{i}{4}\langle\Gamma|\left(  
 \Psi^{\dagger}_{\phi}(H_{i}) \sqrt{\omega(\theta,\phi)}
e^{-iH(\phi)} \sqrt{\rho(\theta)} \Phi_{\frac{\theta}{2}}(G_{j})
 e^{iH(\phi)} \otimes I\right)  |\Gamma\rangle\nonumber\\
& \qquad +\frac{i}{2}\left\langle G_{j}\right\rangle _{\rho(\theta)} \langle\Gamma|  
\sqrt{\omega(\theta,\phi)} \Psi^{\dagger}_{\phi}(H_{i}) \sqrt{\omega(\theta,\phi)}  \otimes I  |\Gamma\rangle\nonumber\\
& \qquad - \frac{i}{2}\left\langle G_{j}\right\rangle _{\rho(\theta)} \langle\Gamma|  \Psi^{\dagger}_{\phi}(H_{i}) 
\sqrt{\omega(\theta,\phi)} \sqrt{\omega(\theta,\phi)} \otimes I  |\Gamma\rangle\\
& =-\frac{i}{4} \Tr\!\left[  
\sqrt{\rho(\theta)} \Psi_{\phi}(H_{i})  \Phi_{\frac{\theta}{2}}(G_{j}) \sqrt{\rho(\theta
)} \right]\nonumber\\
& \qquad - \frac{i}{4} \Tr\!\left[  
\sqrt{\rho(\theta)} \Psi_{\phi}(H_{i})  \sqrt{\rho(\theta)} \Phi_{\frac{\theta}{2}}(G_{j})\right] \nonumber\\
& \qquad + \frac{i}{4}\Tr\!\left[  
 \Psi_{\phi}(H_{i}) \sqrt{\rho(\theta)}
\Phi_{\frac{\theta}{2}}(G_{j}) \sqrt{\rho(\theta)} \right]\nonumber\\
& \qquad + \frac{i}{4}\Tr\!\left[  
 \Psi_{\phi}(H_{i}) \sqrt{\rho(\theta} \sqrt{\rho(\theta)} \Phi_{\frac{\theta}{2}}(G_{j}) \right]\nonumber\\
& \qquad +\frac{i}{2}\left\langle G_{j}\right\rangle _{\rho(\theta)} \Tr\!\left[ \Psi_{\phi}(H_{i}) \rho(\theta) \right] \nonumber\\
& \qquad - \frac{i}{2}\left\langle G_{j}\right\rangle _{\rho(\theta)} \Tr\!\left[ \Psi_{\phi}(H_{i}) \rho(\theta) \right] \\
& = \frac{i}{4}\Tr\!\left[  \left[ \Phi_{\frac{\theta}{2}}(G_{j}) ,
\Psi_{\phi}(H_{i}) \right] \rho(\theta) \right]\\
& = \frac{i}{4} \left\langle  \left[ \Phi_{\frac{\theta}{2}}(G_{j}) ,
\Psi_{\phi}(H_{i}) \right] \right\rangle_{\rho(\theta)}.
\end{align}
Additionally, consider that
\begin{align}
  \left\langle \psi(\theta,\phi)|\partial_{j}\psi(\theta,\phi)\right\rangle 
&  =\langle\Gamma| \sqrt{\omega(\theta,\phi)} \otimes I \left( -\frac{1}{4}  e^{-iH(\phi)}\left\{  \Phi_{\frac{\theta}{2}}
(G_{j}),\sqrt{\rho(\theta)}\right\}e^{iH(\phi)}  \otimes I  |\Gamma\rangle+\frac{1}{2}\left\langle G_{j}\right\rangle _{\rho(\theta)}
|\psi(\theta,\phi)\rangle\right)\\
&  =-\frac{1}{4}\langle\Gamma|e^{-iH(\phi)}\sqrt{\rho(\theta)}\left\{  \Phi_{\frac{\theta
}{2}}(G_{j}),\sqrt{\rho(\theta)}\right\} e^{iH(\phi)} \otimes I |\Gamma\rangle\notag\\
&\qquad+\frac{1}{2}\left\langle G_{j}\right\rangle _{\rho(\theta)} \langle\Gamma|e^{-iH(\phi)}\sqrt{\rho(\theta)}\sqrt{\rho(\theta
)} e^{iH(\phi)} \otimes I |\Gamma\rangle\\
&  =-\frac{1}{2} \Tr\!\left[ \Phi_{\frac{\theta
}{2}}(G_{j})\rho(\theta) \right] +\frac{1}{2}\left\langle G_{j}\right\rangle _{\rho(\theta)}\\
&  =-\frac{1}{2}\left\langle G_{j}\right\rangle _{\rho(\theta)} +\frac{1}{2}\left\langle G_{j}\right\rangle _{\rho(\theta)}\\
&  =0.
\end{align}
So this means that
\begin{align}
 I_{ij}^{\operatorname{WY}}(\theta,\phi)
& = 4\operatorname{Re}\!\left[  \langle\partial_{i}\psi(\theta
,\phi)|\partial_{j}\psi(\theta,\phi)\rangle-\left\langle \partial_{i}
\psi(\theta,\phi)|\psi(\theta,\phi)\right\rangle \left\langle \psi(\theta
,\phi)|\partial_{j}\psi(\theta,\phi)\right\rangle \right]  \\
& =4\operatorname{Re}\!\left[
\frac{i}{4} \left\langle  \left[ \Phi_{\frac{\theta}{2}}(G_{j}) ,
\Psi_{\phi}(H_{i}) \right] \right\rangle_{\rho(\theta)} \right] \label{eq:step-with-imag-numbers} \\
& = i\left\langle \left[   \Phi_{\frac{\theta}{2}}(G_{j}), \Psi_{\phi
}(H_{i})\right]  \right\rangle _{\rho(\theta)}.
\end{align}
This concludes the proof of Theorem~\ref{thm:WY-theta-phi}.

\subsection{Kubo--Mori information matrix elements for evolved quantum Boltzmann machines}

\subsubsection{Proof of Theorem~\ref{thm:KM-theta}}

\label{proof:KM-theta}

Using~\eqref{eq:Fisher-info-help-alt}, consider that
\begin{align}
  I_{ij}^{\text{KM}}(\theta)
&  =\sum_{k,\ell}\frac{\ln\lambda_{k}-\ln\lambda_{\ell}}{\lambda_{k}
-\lambda_{\ell}}\langle k|\left[  \frac{\partial}{\partial\theta_{i}}
\omega(\theta,\phi)\right]  |\ell\rangle\!\langle\ell|\left[  \frac{\partial
}{\partial\theta_{j}}\omega(\theta,\phi)\right]  |k\rangle\\
& = \sum_{k,\ell}\frac{\ln\lambda_{k}-\ln\lambda_{\ell}}{\lambda_{k}
-\lambda_{\ell}} \left( -\frac{1}{2}\left( \lambda_\ell + \lambda_k\right) \bra{\tilde{k}} \Phi_{\theta}(G_{i})\ket{\tilde{\ell}} + \delta_{k\ell} \lambda_\ell \left\langle G_{i}\right\rangle_{\rho(\theta)} \right) \nonumber\\
& \hspace{6cm} \times\left( -\frac{1}{2}\left( \lambda_\ell + \lambda_k\right) \bra{\tilde{\ell}} \Phi_{\theta}(G_{j})\ket{\tilde{k}} + \delta_{\ell k} \lambda_k \left\langle G_{j}\right\rangle _{\rho(\theta)}\right)\\
& = \sum_{k,\ell} \frac{1}{4} \frac{\ln\lambda_{k}-\ln\lambda_{\ell}}{\lambda_{k}-\lambda_{\ell}} (\lambda_\ell+\lambda_k)^2 \bra{\tilde{k}} \Phi_{\theta}(G_{i})\ket{\tilde{\ell}} \bra{\tilde{\ell}} \Phi_{\theta}(G_{j})\ket{\tilde{k}}\nonumber\\
& \hspace{1cm} + \sum_{k,\ell} - \frac{1}{2} \frac{\ln\lambda_{k}-\ln\lambda_{\ell}}{\lambda_{k}-\lambda_{\ell}} (\lambda_\ell+\lambda_k) \delta_{k\ell}\lambda_\ell \bra{\tilde{k}} \Phi_{\theta}(G_{i})\ket{\tilde{\ell}}\left\langle G_{j}\right\rangle _{\rho(\theta)}\nonumber\\
& \hspace{1cm} + \sum_{k,\ell} - \frac{1}{2} \frac{\ln\lambda_{k}-\ln\lambda_{\ell}}{\lambda_{k}-\lambda_{\ell}} (\lambda_\ell+\lambda_k) \delta_{k\ell}\lambda_\ell \bra{\tilde{k}}\Phi_{\theta}(G_{j})\ket{\tilde{\ell}}\left\langle G_{i}\right\rangle _{\rho(\theta)}\nonumber\\
& \hspace{1cm} + \sum_{k,\ell} \frac{\ln\lambda_{k}-\ln\lambda_{\ell}}{\lambda_{k}-\lambda_{\ell}} \delta_{k\ell}\lambda_\ell \left\langle G_{i}\right\rangle _{\rho(\theta)} \delta_{k\ell}\lambda_\ell \left\langle G_{j}\right\rangle _{\rho(\theta)}\\
& = \sum_{k,\ell} \frac{1}{4} \frac{(\ln\lambda_{k}-\ln\lambda_{\ell})(\lambda_\ell+\lambda_k)^2}{\lambda_{k}-\lambda_{\ell}}  \bra{\tilde{k}} \Phi_{\theta}(G_{i})\ket{\tilde{\ell}} \bra{\tilde{\ell}} \Phi_{\theta}(G_{j})\ket{\tilde{k}} - \sum_{k} \lambda_k \bra{\tilde{k}} \Phi_{\theta}(G_{i})\ket{\tilde{k}}\left\langle G_{j}\right\rangle _{\rho(\theta)}\nonumber\\
& \hspace{1cm} - \sum_{k} \lambda_k \bra{\tilde{k}} \Phi_{\theta}(G_{j})\ket{\tilde{k}}\left\langle G_{i}\right\rangle _{\rho(\theta)} + \sum_k \lambda_k \left\langle G_{i}\right\rangle _{\rho(\theta)} \left\langle G_{j}\right\rangle _{\rho(\theta)}\label{lim_log_formula}\\
& = \sum_{k,\ell} \frac{1}{4} \frac{(\ln\lambda_{k}-\ln\lambda_{\ell})(\lambda_\ell+\lambda_k)^2}{\lambda_{k}-\lambda_{\ell}}  \bra{\tilde{k}} \Phi_{\theta}(G_{i})\ket{\tilde{\ell}} \bra{\tilde{\ell}} \Phi_{\theta}(G_{j})\ket{\tilde{k}} - \Tr\!\left[ \rho(\theta) \Phi_{\theta}(G_{i}) \right]\left\langle G_{j}\right\rangle _{\rho(\theta)} \nonumber\\
& \hspace{1cm} - \Tr\!\left[ \rho(\theta) \Phi_{\theta}(G_{j})  \right]\left\langle G_{i}\right\rangle _{\rho(\theta)} + \left\langle G_{i}\right\rangle _{\rho(\theta)} \left\langle G_{j}\right\rangle _{\rho(\theta)}\\
& = \sum_{k,\ell} \frac{1}{4} \frac{(\ln\lambda_{k}-\ln\lambda_{\ell})(\lambda_\ell+\lambda_k)^2}{\lambda_{k}-\lambda_{\ell}}  \bra{\tilde{k}} \Phi_{\theta}(G_{i})\ket{\tilde{\ell}} \bra{\tilde{\ell}} \Phi_{\theta}(G_{j})\ket{k} - \left\langle G_{i}\right\rangle _{\rho(\theta)} \left\langle G_{j}\right\rangle _{\rho(\theta)}.\label{last_expression}
\end{align}
The~\eqref{lim_log_formula} equality is a consequence of the following fact:
\begin{equation}
    \lim_{x\to y} \frac{\ln x - \ln y}{x - y}=\frac{1}{y}.
\end{equation}
Now let us focus on the first term of~\eqref{last_expression}. Let a spectral decomposition of $G(\theta)$ be given by $G(\theta)=\sum_k \mu_k |\tilde{k}\rangle\!\langle \tilde{k}|$. This implies that for all $k$, the eigenvalues of $\omega(\theta,\phi)$, and so the eigenvalues of $\rho(\theta)$, are $\lambda_k=\frac{e^{-\mu_k}}{Z}$, where $Z$ is the partition function. Plugging this into the first term of~\eqref{last_expression}, we obtain:
\begin{align}
    &\sum_{k,\ell} \frac{1}{4} \frac{(\ln\lambda_{k}-\ln\lambda_{\ell})(\lambda_k+\lambda_\ell)^2}{\lambda_{k}-\lambda_{\ell}}  \bra{\tilde{k}} \Phi_{\theta}(G_{i})\ket{\tilde{\ell}} \bra{\tilde{\ell}} \Phi_{\theta}(G_{j})\ket{\tilde{k}}\nonumber\\
    & = \sum_{k,\ell} \frac{1}{4} \frac{(\ln \frac{e^{-\mu_k}}{Z} - \ln\frac{e^{-\mu_\ell}}{Z})(\frac{e^{-\mu_k}}{Z}+\frac{e^{-\mu_\ell}}{Z})^2}{\frac{e^{-\mu_k}}{Z}-\frac{e^{-\mu_\ell}}{Z}}  \bra{\tilde{k}} \Phi_{\theta}(G_{i})\ket{\tilde{\ell}} \bra{\tilde{\ell}} \Phi_{\theta}(G_{j})\ket{\tilde{k}}\\
    & = \sum_{k,\ell} \frac{1}{4Z} \frac{(-\mu_k + \mu_\ell)(e^{-\mu_k}+e^{-\mu_\ell})^2}{e^{-\mu_k}-e^{-\mu_\ell}}  \bra{\tilde{k}} \Phi_{\theta}(G_{i})\ket{\tilde{\ell}} \bra{\tilde{\ell}} \Phi_{\theta}(G_{j})\ket{\tilde{k}}\\
    & = \sum_{k,\ell} \frac{1}{4Z} \frac{(-\mu_k + \mu_\ell)(e^{-\mu_k}+e^{-\mu_\ell})}{\frac{e^{-\mu_k}-e^{-\mu_\ell}}{e^{-\mu_k}+e^{-\mu_\ell}}}  \bra{\tilde{k}} \Phi_{\theta}(G_{i})\ket{\tilde{\ell}} \bra{\tilde{\ell}} \Phi_{\theta}(G_{j})\ket{\tilde{k}}\\
    & = \sum_{k,\ell} \frac{1}{4Z} \frac{(-\mu_k + \mu_\ell)(e^{-\mu_k}+e^{-\mu_\ell})}{\frac{1-e^{\mu_k-\mu_\ell}}{1+ e^{\mu_k-\mu_\ell}}}  \bra{\tilde{k}} \Phi_{\theta}(G_{i})\ket{\tilde{\ell}} \bra{\tilde{\ell}} \Phi_{\theta}(G_{j})\ket{\tilde{k}}\\
    & = \sum_{k,\ell} \frac{1}{4Z} \frac{(-\mu_k + \mu_\ell)(e^{-\mu_k}+e^{-\mu_\ell})}{-\tanh\left( \frac{\mu_k-\mu_\ell}{2} \right)}  \bra{\tilde{k}} \Phi_{\theta}(G_{i})\ket{\tilde{\ell}} \bra{\tilde{\ell}} \Phi_{\theta}(G_{j})\ket{\tilde{k}}\\
    & = \sum_{k,\ell} - \frac{1}{4Z} \frac{(-\mu_k + \mu_\ell)(e^{-\mu_k}+e^{-\mu_\ell})}{\tanh\left( \frac{\mu_k-\mu_\ell}{2} \right)}  \bra{\tilde{k}} \int_{\mathbb{R}}dt\ p(t)\ e^{-iG(\theta)t} G_{i} e^{iG(\theta)t}  \ket{\tilde{\ell}} \bra{\tilde{\ell}} \Phi_{\theta}(G_{j})\ket{\tilde{k}}\\
    & = \sum_{k,\ell} -\frac{1}{4Z} \frac{(-\mu_k + \mu_\ell)(e^{-\mu_k}+e^{-\mu_\ell})}{\tanh\left( \frac{\mu_k-\mu_\ell}{2} \right)} \bra{\tilde{k}} \int_{\mathbb{R}}dt\ p(t)\ \left( \sum_m |\tilde{m} \rangle \! \langle \tilde{m}|e^{-i\mu_m t} \right) G_{i} \left( \sum_n |\tilde{n} \rangle \! \langle \tilde{n}| e^{i\mu_n t} \right)  \ket{\tilde{\ell}} \bra{\tilde{\ell}} \Phi_{\theta}(G_{j})\ket{\tilde{k}}\\
    & = \sum_{k,\ell} -\frac{1}{4Z} \frac{(-\mu_k + \mu_\ell)(e^{-\mu_k}+e^{-\mu_\ell})}{\tanh\left( \frac{\mu_k-\mu_\ell}{2} \right)} \int_{\mathbb{R}}dt\ p(t)\ e^{-i\mu_k t} \bra{\tilde{k}}G_{i}\ket{\tilde{\ell}} e^{i\mu_\ell t}  \bra{\tilde{\ell}} \Phi_{\theta}(G_{j})\ket{\tilde{k}}\\
    & = \sum_{k,\ell} -\frac{1}{4Z} \frac{(-\mu_k + \mu_\ell)(e^{-\mu_k}+e^{-\mu_\ell})}{\tanh\left( \frac{\mu_k-\mu_\ell}{2} \right)} \int_{\mathbb{R}}dt\ p(t)\ e^{-i(\mu_k-\mu_\ell) t} \bra{\tilde{k}}G_{i}\ket{\tilde{\ell}}  \bra{\tilde{\ell}} \Phi_{\theta}(G_{j})\ket{\tilde{k}}\\
     & = \sum_{k,\ell} -\frac{1}{4Z} \frac{(-\mu_k + \mu_\ell)(e^{-\mu_k}+e^{-\mu_\ell})}{\tanh\left( \frac{\mu_k-\mu_\ell}{2} \right)}  \frac{\tanh(\frac{\mu_k-\mu_\ell}{2})}{\frac{\mu_k-\mu_\ell}{2}} \bra{\tilde{k}}G_{i}\ket{\tilde{\ell}}  \bra{\tilde{\ell}} \Phi_{\theta}(G_{j})\ket{\tilde{k}} \label{eq:apply-fourier-trans-p-t}\\
     & = \sum_{k,\ell} \frac{1}{2Z} (e^{-\mu_k}+e^{-\mu_\ell}) \bra{\tilde{k}}G_{i}\ket{\tilde{\ell}}  \bra{\tilde{\ell}} \Phi_{\theta}(G_{j})\ket{\tilde{k}}\\
     & = \sum_{k,\ell} \frac{1}{2} \left(\frac{e^{-\mu_k}}{Z}+\frac{e^{-\mu_\ell}}{Z}\right) \bra{\tilde{k}}G_{i}\ket{\tilde{\ell}}  \bra{\tilde{\ell}} \Phi_{\theta}(G_{j})\ket{\tilde{k}}\\
     & = \sum_{k,\ell} \frac{1}{2} \left(\lambda_k+\lambda_\ell \right) \bra{\tilde{k}}G_{i}\ket{\tilde{\ell}}  \bra{\tilde{\ell}} \Phi_{\theta}(G_{j})\ket{\tilde{k}}\\
     & = \frac{1}{2}\Tr\!\left[ \Phi_{\theta}(G_{j})\rho(\theta) G_{i} \right] + \frac{1}{2}\Tr\!\left[ G_{i} \rho(\theta) \Phi_{\theta}(G_{j}) \right]\\
     & = \frac{1}{2} \Tr\!\left[ \left\{ G_i, \Phi_{\theta}(G_{j}) \right\} \rho(\theta) \right]\\
     & = \frac{1}{2} \left\langle \left\{ G_i, \Phi_{\theta}(G_{j}) \right\}\right\rangle_{\rho(\theta)}.\label{end_KMI_theta}
\end{align}
The equality in~\eqref{eq:apply-fourier-trans-p-t} follows from \cite[Lemma~12]{patel2024quantumboltzmannmachine}.
When combining~\eqref{end_KMI_theta} with~\eqref{last_expression}, the proof is concluded.

\subsubsection{Proof of Theorem~\ref{thm:KM-phi}}

\label{proof:KM-phi}

Using~\eqref{eq:Fisher-info-help-last}, consider
that
\begin{align}
 I_{ij}^{\text{KM}}(\phi)
&  =\sum_{k,\ell}\frac{\ln\lambda_{k}-\ln\lambda_{\ell}}{\lambda_{k}
-\lambda_{\ell}}\langle k|\left[  \frac{\partial}{\partial\phi_{i}}
\omega(\theta,\phi)\right]  |\ell\rangle\!\langle\ell|\left[  \frac{\partial
}{\partial\phi_{j}}\omega(\theta,\phi)\right]  |k\rangle\\
&  =\sum_{k,\ell}\frac{\ln\lambda_{k}-\ln\lambda_{\ell}}{\lambda_{k}
-\lambda_{\ell}}\left[  i\left(  \lambda_{k}-\lambda_{\ell}\right)  \langle
k|\Psi^{\dagger}_{\phi}(H_{i})|\ell\rangle\right]  \left[  i\left(  \lambda_{\ell
}-\lambda_{k}\right)  \langle\ell|\Psi^{\dagger}_{\phi}(H_{j})|k\rangle\right]  \\
&  =\sum_{k,\ell}\frac{\ln\lambda_{k}-\ln\lambda_{\ell}}{\lambda_{k}
-\lambda_{\ell}}\left(  \lambda_{k}-\lambda_{\ell}\right)  ^{2}\langle
k|\Psi^{\dagger}_{\phi}(H_{i})|\ell\rangle\!\langle\ell|\Psi^{\dagger}_{\phi}(H_{j})|k\rangle\\
&  =\sum_{k,\ell}\left(  \ln\lambda_{k}-\ln\lambda_{\ell}\right)  \left(
\lambda_{k}-\lambda_{\ell}\right)  \langle k|\Psi^{\dagger}_{\phi}(H_{i})|\ell
\rangle\!\langle\ell|\Psi^{\dagger}_{\phi}(H_{j})|k\rangle\\
&  =\sum_{k,\ell}\lambda_{k}\ln\lambda_{k}\langle k|\Psi^\dag_{\phi}(H_{i}
)|\ell\rangle\!\langle\ell|\Psi^{\dagger}_{\phi}(H_{j})|k\rangle\nonumber\\
&  \qquad-\sum_{k,\ell}\lambda_{\ell}\ln\lambda_{k}\langle k|\Psi^\dag_{\phi}
(H_{i})|\ell\rangle\!\langle\ell|\Psi^{\dagger}_{\phi}(H_{j})|k\rangle\nonumber\\
&  \qquad-\sum_{k,\ell}\lambda_{k}\ln\lambda_{\ell}\langle k|\Psi^\dag_{\phi}
(H_{i})|\ell\rangle\!\langle\ell|\Psi^{\dagger}_{\phi}(H_{j})|k\rangle\nonumber\\
&  \qquad+\sum_{k,\ell}\lambda_{\ell}\ln\lambda_{\ell}\langle k|\Psi^\dag_{\phi
}(H_{i})|\ell\rangle\!\langle\ell|\Psi^{\dagger}_{\phi}(H_{j})|k\rangle\\
&  =\operatorname{Tr}[\left(  \omega(\theta,\phi)\ln\omega(\theta
,\phi)\right)  \Psi^{\dagger}_{\phi}(H_{i})\Psi^{\dagger}_{\phi}(H_{j})]\nonumber\\
&  \qquad-\operatorname{Tr}[\left(  \ln\omega(\theta,\phi)\right)  \Psi^\dag_{\phi
}(H_{i})\omega(\theta,\phi)\Psi^{\dagger}_{\phi}(H_{j})]\nonumber\\
&  \qquad-\operatorname{Tr}[\omega(\theta,\phi)\Psi^{\dagger}_{\phi}(H_{i})\left(
\ln\omega(\theta,\phi)\right)  \Psi^{\dagger}_{\phi}(H_{j})]\nonumber\\
&  \qquad+\operatorname{Tr}[\Psi^{\dagger}_{\phi}(H_{i})\left(  \omega(\theta,\phi
)\ln\omega(\theta,\phi)\right)  \Psi^{\dagger}_{\phi}(H_{j})]\\
&  =\operatorname{Tr}[\left\{  \Psi^{\dagger}_{\phi}(H_{i}),\Psi^{\dagger}_{\phi}(H_{j})\right\}
\left(  \omega(\theta,\phi)\ln\omega(\theta,\phi)\right)  ]\nonumber\\
&  \qquad-2\operatorname{Re}\!\left[  \operatorname{Tr}[\left(  \ln\omega
(\theta,\phi)\right)  \Psi^{\dagger}_{\phi}(H_{i})\omega(\theta,\phi)\Psi^\dag_{\phi}
(H_{j})]\right]  .
\end{align}
Now consider that
\begin{align}
\ln\omega(\theta,\phi) &  =\ln\!\left[  e^{-iH(\phi)}\frac{e^{-G(\theta)}
}{Z(\theta)}e^{iH(\phi)}\right]  \\
&  =e^{-iH(\phi)}\ln\!\left[  \frac{e^{-G(\theta)}}{Z(\theta)}\right]
e^{iH(\phi)}\\
&  =-e^{-iH(\phi)}G(\theta)e^{iH(\phi)}-I\ln Z(\theta).
\end{align}
So then we find that
\begin{align}
&  =\operatorname{Tr}\!\left[\left\{  \Psi^{\dagger}_{\phi}(H_{i}),\Psi^{\dagger}_{\phi}(H_{j})\right\}
\left(  \omega(\theta,\phi)\ln\omega(\theta,\phi)\right)  \right]\nonumber\\
&  \qquad-2\operatorname{Re}\!\left[  \operatorname{Tr}[\left(  \ln\omega
(\theta,\phi)\right)  \Psi^{\dagger}_{\phi}(H_{i})\omega(\theta,\phi)\Psi^\dag_{\phi}
(H_{j})]\right]  \\
&  =\operatorname{Tr}\!\left[\left\{  \Psi^{\dagger}_{\phi}(H_{i}),\Psi^{\dagger}_{\phi}(H_{j})\right\}
\left(  \omega(\theta,\phi)\left(  -e^{-iH(\phi)}G(\theta)e^{iH(\phi)}-I\ln
Z(\theta)\right)  \right)  \right]\nonumber\\
&  \qquad-2\operatorname{Re}\!\left[  \operatorname{Tr}\!\left[\left(  -e^{-iH(\phi
)}G(\theta)e^{iH(\phi)}-I\ln Z(\theta)\right)  \Psi^{\dagger}_{\phi}(H_{i})\omega
(\theta,\phi)\Psi^{\dagger}_{\phi}(H_{j})\right]\right]  \\
&  =-\operatorname{Tr}\!\left[\left\{  \Psi^{\dagger}_{\phi}(H_{i}),\Psi^\dag_{\phi}(H_{j}
)\right\}  \left(  \omega(\theta,\phi)\left(  e^{-iH(\phi)}G(\theta
)e^{iH(\phi)}\right)  \right)  \right]\nonumber\\
&  \qquad-\operatorname{Tr}\!\left[\left\{  \Psi^{\dagger}_{\phi}(H_{i}),\Psi^\dag_{\phi}
(H_{j})\right\}  \omega(\theta,\phi)\right]\ln Z(\theta)\nonumber\\
&  \qquad+2\operatorname{Re}\!\left[  \operatorname{Tr}\!\left[e^{-iH(\phi)}
G(\theta)e^{iH(\phi)}\Psi^{\dagger}_{\phi}(H_{i})\omega(\theta,\phi)\Psi^\dag_{\phi}
(H_{j})\right]\right]  \nonumber\\
&  \qquad+2\operatorname{Re}\!\left[  \operatorname{Tr}\!\left[\Psi^\dag_{\phi}(H_{i}
)\omega(\theta,\phi)\Psi^{\dagger}_{\phi}(H_{j})\right]\right]  \ln Z(\theta)\\
&  =-\operatorname{Tr}\!\left[\left\{  \Psi^{\dagger}_{\phi}(H_{i}),\Psi^\dag_{\phi}(H_{j}
)\right\}  \left(  \omega(\theta,\phi)\left(  e^{-iH(\phi)}G(\theta
)e^{iH(\phi)}\right)  \right)  \right]\nonumber\\
&  \qquad+2\operatorname{Re}\!\left[  \operatorname{Tr}\!\left[e^{-iH(\phi)}
G(\theta)e^{iH(\phi)}\Psi^{\dagger}_{\phi}(H_{i})\omega(\theta,\phi)\Psi^\dag_{\phi}
(H_{j})\right]\right] \\
& = -\Tr\!\left[ \left\{ \Psi_{\phi}(H_{i}),\Psi_{\phi}(H_{j}
)\right\}  \rho(\theta) G(\theta
)\right]
+2\operatorname{Re}\!\left[  \operatorname{Tr}\!\left[
G(\theta)\Psi_{\phi}(H_{i})\rho(\theta)\Psi_{\phi}
(H_{j})\right]\right] \\
& = -\Tr\!\left[  \Psi_{\phi}(H_{i}) \Psi_{\phi}(H_{j}
)\rho(\theta) G(\theta
)\right] -\Tr\!\left[ \Psi_{\phi}(H_{j}
) \Psi_{\phi}(H_{i}) \rho(\theta) G(\theta
)\right] \nonumber\\
& \qquad + \operatorname{Tr}\!\left[
G(\theta)\Psi_{\phi}(H_{i})\rho(\theta)\Psi_{\phi}
(H_{j})\right] +  \operatorname{Tr}\!\left[ \Psi_{\phi}
(H_{j}) \rho(\theta) \Psi_{\phi}(H_{i})
G(\theta)\right] \\
& = -\Tr\!\left[  \Psi_{\phi}(H_{i}) \Psi_{\phi}(H_{j}
) G(\theta) \rho(\theta) \right] -\Tr\!\left[ G(\theta) \Psi_{\phi}(H_{j}
) \Psi_{\phi}(H_{i})  \rho(\theta) \right] \nonumber\\
& \qquad + \operatorname{Tr}\!\left[ \Psi_{\phi}(H_{j}) G(\theta) \Psi_{\phi}(H_{i}) \rho(\theta) \right] + \operatorname{Tr}\!\left[ \Psi_{\phi}(H_{i}) G(\theta) \Psi_{\phi}
(H_{j}) \rho(\theta) \right]\label{commutator-fact1}\\ 
& = \Tr\! \left[ \left( - \Psi_{\phi}(H_{i}) \Psi_{\phi}(H_{j}
) G(\theta) - G(\theta) \Psi_{\phi}(H_{j}
) \Psi_{\phi}(H_{i})  +  \Psi_{\phi}(H_{j}) G(\theta) \Psi_{\phi}(H_{i}) +  \Psi_{\phi}(H_{i}) G(\theta) \Psi_{\phi}
(H_{j}) \right) \rho(\theta) \right]\\
& = \Tr\! \left[ \left(  \Psi_{\phi}(H_{i}) \left[ G(\theta) , \Psi_{\phi}
(H_{j}) \right] - \left[ G(\theta) ,\Psi_{\phi}(H_{j}) \right] \Psi_{\phi}(H_{i})  \right) \rho(\theta) \right]\\
& = \Tr\! \bigg[ \!\left[ \Psi_{\phi}(H_{i}) , \left[ G(\theta) , \Psi_{\phi}
(H_{j}) \right] \right] \rho(\theta) \bigg]\\
& =  \Tr\! \bigg[ \!\left[ \left[\Psi_{\phi}(H_{j}) ,G(\theta) 
 \right] , \Psi_{\phi}(H_{i}) \right] \rho(\theta) \bigg]\\
& =  \left\langle \left[ \left[\Psi_{\phi}(H_{j}) ,G(\theta) 
 \right] , \Psi_{\phi}(H_{i}) \right]\right\rangle_{\rho(\theta)}\\
& = \left\langle \left[  \Psi_{\phi}(H_{i}), \left[  G(\theta) , \Psi_{\phi}
(H_{j})\right]  \right]
\right\rangle _{\rho(\theta)},
\end{align}
where in~\eqref{commutator-fact1} we used that $\left[G(\theta), \rho(\theta) \right]=0$. This concludes the proof of Theorem~\ref{thm:KM-phi}.

\subsubsection{Proof of Theorem~\ref{thm:KM-theta-phi}}

\label{proof:KM-theta-phi}

Using~\eqref{eq:Fisher-info-help-last} and~\eqref{eq:Fisher-info-help-alt1}, consider that
\begin{align}
&  I_{ij}^{\text{KM}}(\theta, \phi)\nonumber\\
&  =\sum_{k,\ell}\frac{\ln\lambda_{k}-\ln\lambda_{\ell}}{\lambda_{k}
-\lambda_{\ell}}\langle k|\left[  \frac{\partial}{\partial\theta_{i}}
\omega(\theta,\phi)\right]  |\ell\rangle\!\langle\ell|\left[  \frac{\partial
}{\partial\phi_{j}}\omega(\theta,\phi)\right]  |k\rangle\\
& = \sum_{k,\ell}\frac{\ln\lambda_{k}-\ln\lambda_{\ell}}{\lambda_{k}
-\lambda_{\ell}} \left(  -\frac{1}{2}\left( \lambda_k + \lambda_\ell\right) \bra{k} e^{-iH(\phi)}\Phi_{\theta}(G_{i})e^{iH(\phi)}\ket{\ell} + \delta_{k\ell} \lambda_\ell \left\langle G_{i}\right\rangle_{\rho(\theta)} \right) i\left(  \lambda_{\ell}-\lambda_{k}\right)  \bra{\ell}\Psi^\dag_{\phi}(H_j)\ket{k}\\
& = \frac{i}{2}  \sum_{k,\ell} \left(\ln\lambda_{k}-\ln\lambda_{\ell}\right)\left( \lambda_k + \lambda_\ell\right)  \bra{k} e^{-iH(\phi)}\Phi_{\theta}(G_{i})e^{iH(\phi)}\ket{\ell}\bra{\ell}\Psi^\dag_{\phi}(H_j)\ket{k}\\
& = \frac{i}{2}  \sum_{k,\ell} \frac{i}{2}   \lambda_{k} \ln\lambda_{k} \bra{k} e^{-iH(\phi)}\Phi_{\theta}(G_{i})e^{iH(\phi)}\ket{\ell}\bra{\ell}\Psi^\dag_{\phi}(H_j)\ket{k}\nonumber\\
& \hspace{1cm}  + \sum_{k,\ell} \frac{1}{2} i \lambda_{\ell} \ln\lambda_{k} \bra{k} e^{-iH(\phi)}\Phi_{\theta}(G_{i})e^{iH(\phi)}\ket{\ell}\bra{\ell}\Psi^\dag_{\phi}(H_j)\ket{k} \nonumber\\
& \hspace{1cm}  - \frac{i}{2} \sum_{k,\ell}  \lambda_{k} \ln\lambda_{\ell} \bra{k} e^{-iH(\phi)}\Phi_{\theta}(G_{i})e^{iH(\phi)}\ket{\ell}\bra{\ell}\Psi^\dag_{\phi}(H_j)\ket{k} \nonumber\\
& \hspace{1cm}  - \frac{i}{2} \sum_{k,\ell}  \lambda_{\ell} \ln\lambda_{\ell} \bra{k} e^{-iH(\phi)}\Phi_{\theta}(G_{i})e^{iH(\phi)}\ket{\ell}\bra{\ell}\Psi^\dag_{\phi}(H_j)\ket{k}\\
& = \frac{i}{2}  \Tr\!\left[\omega(\theta,\phi) \ln\omega(\theta,\phi)e^{-iH(\phi)}\Phi_{\theta}(G_{i}) e^{iH(\phi)}\Psi^\dag_{\phi}(H_j)  \right]\nonumber\\
& \hspace{1cm}  + \frac{i}{2}  \Tr\!\left[\ln\omega(\theta,\phi)e^{-iH(\phi)}\Phi_{\theta}(G_{i})e^{iH(\phi)} \omega(\theta,\phi) \Psi^\dag_{\phi}(H_j)  \right]\nonumber\\
& \hspace{1cm}  - \frac{i}{2}  \Tr\!\left[\Psi^\dag_{\phi}(H_j) \omega(\theta,\phi) e^{-iH(\phi)}\Phi_{\theta}(G_{i})e^{iH(\phi)}   \ln\omega(\theta,\phi) \right]\nonumber\\
& \hspace{1cm}  - \frac{i}{2}  \Tr\!\left[\omega(\theta,\phi) \ln\omega(\theta,\phi)\Psi^\dag_{\phi}(H_j)e^{-iH(\phi)}\Phi_{\theta}(G_{i}) e^{iH(\phi)}  \right].\label{eq: last_expression_KBI_mixed}
\end{align}
Now consider that 
\begin{align}
    \ln \omega(\theta,\phi) & = \ln \!\left[ e^{-iH(\phi)} \frac{e^{-G(\theta)}}{Z(\theta)}e^{iH(\phi)} \right]\\
    & = e^{-iH(\phi)} \ln\!\left[ \frac{e^{-G(\theta)}}{Z(\theta)}\right] e^{iH(\phi)}\\
    & = -e^{-iH(\phi)} G(\theta) e^{iH(\phi)} - I \ln Z(\theta).
\end{align}
So, plugging this into~\eqref{eq: last_expression_KBI_mixed}, we find that 
\begin{align}
    & = \frac{i}{2}  \Tr\!\left[\omega(\theta,\phi) \ln\omega(\theta,\phi)e^{-iH(\phi)}\Phi_{\theta}(G_{i}) e^{iH(\phi)}\Psi^\dag_{\phi}(H_j)  \right]\nonumber\\
    & \hspace{1cm}  + \frac{i}{2}  \Tr\!\left[\ln\omega(\theta,\phi)e^{-iH(\phi)}\Phi_{\theta}(G_{i})e^{iH(\phi)} \omega(\theta,\phi) \Psi^\dag_{\phi}(H_j)  \right]\nonumber\\
    & \hspace{1cm}  - \frac{i}{2}  \Tr\!\left[\Psi^\dag_{\phi}(H_j) \omega(\theta,\phi) e^{-iH(\phi)}\Phi_{\theta}(G_{i})e^{iH(\phi)}   \ln\omega(\theta,\phi) \right]\nonumber\\
    & \hspace{1cm}  - \frac{i}{2}  \Tr\!\left[\omega(\theta,\phi) \ln\omega(\theta,\phi)\Psi^\dag_{\phi}(H_j)e^{-iH(\phi)}\Phi_{\theta}(G_{i}) e^{iH(\phi)}  \right]\\
    & = - \frac{i}{2}  \Tr\!\left[ \rho(\theta) G(\theta) \Phi_{\theta}(G_{i}) \Psi_{\phi}(H_j) \right] - \frac{i}{2}  \Tr\!\left[ \rho(\theta) \Phi_{\theta}(G_{i}) \Psi_{\phi}(H_j) \right] \ln Z(\theta)  \nonumber\\
    & \hspace{1cm}  - \frac{i}{2}  \Tr\!\left[G(\theta) \Phi_{\theta}(G_{i})\rho(\theta) \Psi_{\phi}(H_j)  \right] - \frac{i}{2}  \Tr\!\left[\Phi_{\theta}(G_{i})\rho(\theta) \Psi_{\phi}(H_j)  \right] \ln Z(\theta) \nonumber\\
    & \hspace{1cm}  + \frac{i}{2}  \Tr\!\left[\Psi_{\phi}(H_j) \rho(\theta) \Phi_{\theta}(G_{i}) G(\theta) \right] + \frac{i}{2}  \Tr\!\left[\Psi_{\phi}(H_j) \rho(\theta) \Phi_{\theta}(G_{i}) \right] \ln Z(\theta) \nonumber\\
    & \hspace{1cm}  + \frac{i}{2}  \Tr\!\left[\rho(\theta) G(\theta) \Psi_{\phi}(H_j)\Phi_{\theta}(G_{i}) \right] + \frac{i}{2}  \Tr\!\left[\rho(\theta)  \Psi_{\phi}(H_j)\Phi_{\theta}(G_{i}) \right] \ln Z(\theta)\\
    & = - \frac{i}{2}  \Tr\!\left[\Phi_{\theta}(G_{i}) \Psi_{\phi}(H_j)  G(\theta) \rho(\theta) \right] - \frac{i}{2}  \Tr\!\left[\Psi_{\phi}(H_j) G(\theta) \Phi_{\theta}(G_{i})\rho(\theta)  \right]  \nonumber\\
    & \hspace{1cm}  + \frac{i}{2}  \Tr\!\left[\Phi_{\theta}(G_{i}) G(\theta) 
    \Psi_{\phi}(H_j) \rho(\theta) \right]  + \frac{i}{2}  \Tr\!\left[  G(\theta) \Psi_{\phi}(H_j)\Phi_{\theta}(G_{i}) \rho(\theta) \right]\label{commutator-fact}\\
    & = \frac{i}{2}  \Tr\!\left[ \left( - \Phi_{\theta}(G_{i}) \Psi_{\phi}(H_j) G(\theta) + \Phi_{\theta}(G_{i}) G(\theta) 
    \Psi_{\phi}(H_j) - \Psi_{\phi}(H_j) G(\theta) \Phi_{\theta}(G_{i}) + G(\theta) \Psi_{\phi}(H_j)\Phi_{\theta}(G_{i})  \right) \rho(\theta) \right]\\
    & = \frac{i}{2}  \Tr\!\left[ \left( - \Phi_{\theta}(G_{i}) \left[\Psi_{\phi}(H_j), G(\theta)\right] - \left[\Psi_{\phi}(H_j) , G(\theta)\right] \Phi_{\theta}(G_{i}) \right) \rho(\theta) \right]\\
    & = - \frac{i}{2}  \Tr\!\bigg[ \left\{  \Phi_{\theta}(G_{i}) , \left[\Psi_{\phi}(H_j) , G(\theta) \right] \right\} \rho(\theta) \bigg]\\
    & = - \frac{i}{2} \left\langle \left\{  \Phi_{\theta}(G_{i}) , \left[\Psi_{\phi}(H_j) , G(\theta) \right] \right\}\right\rangle_{\rho(\theta)}\\
    & = \frac{i}{2} \left\langle \left\{  \Phi_{\theta}(G_{i}) , \left[G(\theta) , \Psi_{\phi}(H_j)  \right] \right\}\right\rangle_{\rho(\theta)},
\end{align}
where in~\eqref{commutator-fact} we used that $\left[G(\theta), \rho(\theta) \right]=0$. This concludes the proof of Theorem~\ref{thm:KM-theta-phi}.

\section{Quantum algorithms for estimating information matrix elements}

\subsection{Quantum algorithms for estimating Fisher--Bures information matrix elements}

\subsubsection{Elements with respect to $\theta$}

\label{app:FB-theta}

Let us first recall from the statement of  Theorem~\ref{thm:FB-theta} the expression for the $(i, j)$-th element of the Fisher--Bures information matrix $I^{\operatorname{FB}}(\theta)$:
\begin{equation}
I_{ij}^{\operatorname{FB}}(\theta)=\frac{1}
{2}\left\langle \{\Phi_{\theta}(G_{i}),\Phi_{\theta}(G_{j})\}\right\rangle
_{\rho(\theta)}-\left\langle G_{i}\right\rangle _{\rho(\theta)}\left\langle
G_{j}\right\rangle _{\rho(\theta)}.\label{eq:Fisher-matrix-entries-app}
\end{equation}
Estimating the second term of the right-hand side of the above equation is relatively straightforward. So, in what follows, we present an algorithm for estimating the first term of the above equation in greater detail.

Consider the following:
\begin{align}
     \frac{1}
{2}\left\langle \{\Phi_{\theta}(G_{i}),\Phi_{\theta}(G_{j})\}\right\rangle
_{\rho(\theta)}
& = \frac{1}{2}\operatorname{Tr}[\left\{
\Phi_{\theta}(G_{i}),\Phi_{\theta}(G_{j})\right\}  \rho(\theta)]\\
& = \int_{\mathbb{R}}\int_{\mathbb{R}} dt_1\ dt_2\ p(t_1) p(t_2) \left(\frac{1}{2}\operatorname{Tr}\!\left[\left\{
e^{-iG(\theta)t_1} G_i e^{iG(\theta)t_1}, e^{-iG(\theta)t_2} G_j e^{iG(\theta)t_2}\right\} \rho(\theta)\right]\right)\label{eq:first_fb_simplify}.
\end{align}

We are now in a position to present an algorithm (Algorithm~\ref{algo:FB-theta}) to estimate the first term of~\eqref{eq:Fisher-matrix-entries-app} by using its equivalent form shown in~\eqref{eq:first_fb_simplify}. At the core of our algorithm lies the quantum circuit that estimates the expected value of the anticommutator of two operators (see Appendix~\ref{app:Hadamard_test}). In particular, one can notice that when initializing the control register in the state $\ket{0}$ instead of $\ket{1}$, the quantum circuit shown in Figure~\ref{fig:qc-primitive-anticomm} allows to estimate the quantity $\frac{1}{2} \left\langle \left\{ U, H\right\} \right\rangle_{\rho}$, where $H$ is Hermitian and $U$ is Hermitian and unitary. In this case, we choose $\rho = \rho(\theta)$, $U = e^{-iG(\theta)t_1} G_i e^{iG(\theta)t_1}$, and $H = e^{-iG(\theta)t_2} G_j e^{iG(\theta)t_2}$. We then make some further simplifications that follow because $\rho(\theta)$ commutes with $e^{-iG(\theta)t_1}$. Accordingly, the quantum circuit that estimates the integrand of~\eqref{eq:first_fb_simplify} is depicted in Figure~\ref{fig:FB-theta}.

\begin{algorithm}[H]
\caption{\texorpdfstring{$\mathtt{estimate\_first\_term\_FB\_\theta}(i, j, \theta, \{G_\ell\}_{\ell=1}^{J}, p(\cdot), \varepsilon, \delta)$}{estimate first term}}
\label{algo:FB-theta}
\begin{algorithmic}[1]
\STATE \textbf{Input:} Indices $i, j \in [J]$, parameter vector $\theta = \left( \theta_{1}, \ldots,  \theta_{J}\right)^{\mathsf{T}} \in \mathbb{R}^{J}$, Gibbs local Hamiltonians $\{G_\ell\}_{\ell=1}^{J}$, probability distribution $p(t)$ over $\mathbb{R}$, precision $\varepsilon > 0$, error probability $\delta \in (0,1)$
\STATE $N \leftarrow \lceil\sfrac{2 \ln(\sfrac{2}{\delta})}{\varepsilon^2}\rceil$
\FOR{$n = 0$ to $N-1$}
\STATE Initialize the control register to $|0\rangle\!\langle 0 |$
\STATE Prepare the system register in the state $\rho(\theta)$
\STATE Sample $t_1$ and $t_2$ independently at random with probability $p(t)$ (defined in~\eqref{eq:high-peak-tent-density})
\STATE Apply the Hadamard gate to the control register
\STATE Apply the following unitaries to the control and system registers:
\STATE \hspace{0.6cm} \textbullet~Controlled-$G_i$: $G_i$ is a local unitary acting on the system register, controlled by the control register
\STATE \hspace{0.6cm} \textbullet~$e^{-iG(\theta)(t_1 - t_2)}$: Hamiltonian simulation for time $t_1 - t_2$ on the system register
\STATE Apply the Hadamard gate to the control register
\STATE Measure the control register in the computational basis and store the measurement outcome~$b_n$
\STATE Measure the system register in the eigenbasis of $G_j$ and store the measurement outcome $\lambda_n$
\STATE $Y_{n}^{(\operatorname{FB(\theta)})} \leftarrow (-1)^{b_n}\lambda_n$
\ENDFOR

\STATE \textbf{return} $\overline{Y}^{(\operatorname{FB(\theta)})} \leftarrow \frac{1}{N}\sum_{n=0}^{N-1}Y_{n}^{(\operatorname{FB(\theta)})}$
\end{algorithmic}
\end{algorithm}

\subsubsection{Elements with respect to $\phi$}

\label{app:FB-phi}

Let us first recall from the statement of Theorem~\ref{thm:FB-phi} the expression for the $(i, j)$-th element of the Fisher–Bures information matrix $I_{ij}^{\operatorname{FB}}(\phi)$:
\begin{equation}
    I_{ij}^{\operatorname{FB}}(\phi) =\left\langle \left[  \Phi_{\theta}(\Psi_{\phi}(H_{i})), \left[  G(\theta) , \Psi_{\phi}
(H_{j})\right]  \right]
\right\rangle _{\rho(\theta)}.\label{eq:FB-phi_app}
\end{equation}
Consider the following:
\begin{align}
    & \left\langle \left[  \Phi_{\theta}(\Psi_{\phi}(H_{i})), \left[  G(\theta) , \Psi_{\phi} (H_{j})\right]  \right]\right\rangle _{\rho(\theta)}\\
    & = \left\langle \left[  \left[  \Psi_{\phi}(H_{j}),G(\theta)\right]  ,\Phi_{\theta}(\Psi_{\phi}(H_{i}))\right]\right\rangle _{\rho(\theta)}\nonumber\\
    & = \Tr\!\bigg[ \left[ \left[  \Psi_{\phi}(H_{j}),G(\theta)\right]  ,\Phi_{\theta}(\Psi_{\phi}(H_{i}))\right] \rho(\theta) \bigg] \\
    & = \int_0^1 \int_\mathbb{R} \int_0^1 dt_1\ dt_2\ dt_3\ p(t_2)\ \bigg( \Tr\!\bigg[ \left[ \left[  e^{iH(\phi)t_1} H_j e^{-iH(\phi)t_1} , G(\theta)\right] , e^{-iG(\theta)t_2} e^{iH(\phi)t_3} H_i e^{-iH(\phi)t_3}e^{iG(\theta)t_2}\right] \rho(\theta) \bigg]  \bigg).\label{eq:FB-phi-last}
\end{align}

We are now in a position to present an algorithm (Algorithm~\ref{algo:FB-phi}) to estimate~\eqref{eq:FB-phi_app} using its equivalent form shown in~\eqref{eq:FB-phi-last}. At the core of our algorithm lies the quantum circuit that estimates the expected value of two nested commutators of three operators, $\frac{1}{4} \left\langle \big[ \left[U_1 , H \right], U_0\big]\right\rangle_\rho$, where $H$ is Hermitian, and $U_0$ and $U_1$ are both Hermitian and unitary (refer to Appendix~\ref{app:Hadamard_test} and Figure~\ref{fig:prim-nest-comm}). In this case, we choose $\rho = \rho(\theta)$, $U_1 = e^{iH(\phi)t_1} H_j e^{-iH(\phi)t_1}$, $H = G(\theta)$, and $U_0 = e^{-iG(\theta)t_2} e^{iH(\phi)t_3} H_i e^{-iH(\phi)t_3}e^{iG(\theta)t_2}$. We then make some further simplifications where possible. Specifically, the quantum circuit that plays a role in estimating the integrand of~\eqref{eq:FB-phi-last} is depicted in Figure~\ref{fig:FB-phi}.

\begin{remark}
\label{rem:measuring-G-theta}
In Algorithm~\ref{algo:FB-phi} below, note that we adopt a sampling approach to measuring $G(\theta)$, similar to that used in \cite[Algorithm~1]{patel2024quantumboltzmannmachine}. This seems to be necessary, as it is not obvious how to measure directly in the eigenbasis of $G(\theta)$. We adopt a similar approach in other circuits that involve measuring $G(\theta)$.
\end{remark}

\begin{algorithm}[H]
\caption{\texorpdfstring{$\mathtt{estimate\_FB\_\phi}(i, j, \theta, \{G_\ell\}_{\ell=1}^{J}, \phi, \{H_m\}_{m=1}^{K}, p(\cdot), \varepsilon, \delta)$}{estimate first term}}
\label{algo:FB-phi}
\begin{algorithmic}[1]
\STATE \textbf{Input:} Indices $i, j \in [K]$, parameter vectors $\theta = \left( \theta_{1}, \ldots,  \theta_{J}\right)^{\mathsf{T}} \in \mathbb{R}^{J}$ and $\phi = \left( \phi_{1}, \ldots,  \phi_{K}\right)^{\mathsf{T}} \in \mathbb{R}^{K}$, Gibbs local Hamiltonians $\{G_\ell\}_{\ell=1}^{J}$ and $\{H_m\}_{m=1}^{K}$, probability distribution $p(t)$ over $\mathbb{R}$, precision $\varepsilon > 0$, error probability $\delta \in (0,1)$
\STATE $N \leftarrow \lceil\sfrac{2\left\| \theta \right\|_1^2 \ln(\sfrac{2 }{\delta})}{\varepsilon^2}\rceil$
\FOR{$n = 0$ to $N-1$}
\STATE Initialize the first control register to $|1\rangle\!\langle 1 |$
\STATE Initialize the second control register to $|1\rangle\!\langle 1 |$
\STATE Prepare the system register in the state $\rho(\theta)$
\STATE Sample $t_1$ and $t_3$ independently and uniformly at random from the interval $[0,1]$, and sample $t_2$ independently at random with probability $p(t)$ (defined in~\eqref{eq:high-peak-tent-density})
\STATE Apply the Hadamard gate and the phase gate $S$ to the first and second control registers
\STATE Apply the following unitaries to the control and system registers:
\STATE \hspace{0.6cm} \textbullet~$e^{-iH(\phi)t_3}$: Hamiltonian simulation for time $t_3$ on the system register
\STATE \hspace{0.6cm} \textbullet~Controlled-$H_i$: $H_i$ is a local unitary acting on the system register, controlled by the first control register
\STATE \hspace{0.6cm} \textbullet~$e^{iH(\phi)t_3}$: Hamiltonian simulation for time $t_3$ on the system register
\STATE \hspace{0.6cm} \textbullet~$e^{-iG(\theta)t_2}$: Hamiltonian simulation for time $t_2$ on the system register
\STATE \hspace{0.6cm} \textbullet~$e^{-iH(\phi)t_1}$: Hamiltonian simulation for time $t_1$ on the system register
\STATE \hspace{0.6cm} \textbullet~Controlled-$H_j$: $H_j$ is a local unitary acting on the system register, controlled by the second control register
\STATE \hspace{0.6cm} \textbullet~$e^{iH(\phi)t_1}$: Hamiltonian simulation for time $t_1$ on the system register
\STATE Apply the Hadamard gate to the first and second control registers
\STATE Measure the two control registers in the computational basis and store the measurement outcomes~$b_n$ and $c_n$
\STATE Sample $\ell$ according to the probability distribution $|\theta_\ell| / \left\| \theta \right\|_1$, measure the system register in the eigenbasis of sign$(\theta_\ell) G_\ell$ and store the measurement outcome $\lambda_n$
\STATE $Y_{n}^{(\operatorname{FB(\phi)})} \leftarrow (-1)^{b_n}(-1)^{c_n}\lambda_n$
\ENDFOR

\STATE \textbf{return} $\overline{Y}^{(\operatorname{FB(\phi)})} \leftarrow 4 \left\| \theta \right\|_1 \times\frac{1}{N}\sum_{n=0}^{N-1}Y_{n}^{(\operatorname{FB(\phi)})}$
\end{algorithmic}
\end{algorithm}

\subsubsection{Elements with respect to $\theta$ and $\phi$}

\label{app:FB-theta-phi}

Let us first recall from the statement of Theorem~\ref{thm:FB-theta-phi} the expression for the $(i, j)$-th element of the Fisher–Bures information matrix $I_{ij}^{\operatorname{FB}}(\theta, \phi)$:
\begin{equation}
    I_{ij}^{\operatorname{FB}}(\theta
,\phi)=i\left\langle \left[  \Phi_{\theta}(G_{i}),\Psi_{\phi}
(H_{j})\right]  \right\rangle _{\rho(\theta)}.\label{eq:FB-theta-phi_app}
\end{equation}
Consider the following:
\begin{align}
     i\left\langle \left[  \Phi_{\theta}(G_{i}),\Psi_{\phi}(H_{j})\right]  \right\rangle _{\rho(\theta)}
    & = i \left[ \Tr\!\left[ \left[ \Phi_{\theta}(G_{i}) , \Psi_{\phi}(H_{j}) \right] \rho(\theta)  \right] \right]\\
    & = \int_\mathbb{R} \int_0^1 dt_1\ dt_2\ p(t_1) \Bigg( i \Tr\!\left[ \left[ e^{-iG(\theta)t_1} G_i e^{iG(\theta)t_1}, e^{iH(\phi)t_2}H_je^{-iH(\phi)t_2} \right] \rho(\theta) \right]\Bigg),\label{eq:FB-theta-phi_est}
\end{align}

We are now in a position to present an algorithm (Algorithm~\ref{algo:FB-theta-phi}) to estimate~\eqref{eq:FB-theta-phi_app} using its equivalent form shown in~\eqref{eq:FB-theta-phi_est}. At the core of our algorithm lies the quantum circuit that estimates the expected value of the commutator of two operators (see Appendix~\ref{app:Hadamard_test}). In particular, one can notice that when initializing the control register in the state $\ket{0}$ instead of $\ket{1}$, the quantum circuit shown in Figure~\ref{fig:qc-primitive-comm} allows to estimate the quantity $\frac{i}{2} \left\langle \left[ H, U\right] \right\rangle_{\rho}$, where $H$ is Hermitian and $U$ is Hermitian and unitary. In this case, we choose $\rho = \rho(\theta)$, $H = e^{-iG(\theta)t_1} G_i e^{iG(\theta)t_1}$, and $U = e^{iH(\phi)t_2}H_je^{-iH(\phi)t_2}$. We then make some further simplifications where possible. Accordingly, the quantum circuit that plays a role in estimating the integrand of~\eqref{eq:FB-theta-phi_est} is depicted in Figure~\ref{fig:FB-theta-phi}.

\begin{algorithm}[H]
\caption{\(\mathtt{estimate\_FB\_\theta\_\phi}(i,j,\theta,\{G_\ell\}_{\ell=1}^{J},\phi,\{H_k\}_{k=1}^K,p(\cdot),\varepsilon,\delta)\)}
\label{algo:FB-theta-phi}\begin{algorithmic}[1]
\STATE \textbf{Input:} Indices $i\in[J]$, $j\in[K]$, parameter vectors $\theta = \left( \theta_{1}, \ldots,  \theta_{J}\right)^{\mathsf{T}} \in \mathbb{R}^{J}$ and $\phi = \left( \phi_{1}, \ldots,  \phi_{K}\right)^{\mathsf{T}} \in \mathbb{R}^{K}$, sets of local Hamiltonians $\{G_\ell\}_{\ell=1}^{J}$ and $\{H_k\}_{k=1}^K$, probability distribution $p(t)$ over $\mathbb{R}$, precision $\varepsilon>0$, error probability $\delta\in (0,1)$

\STATE Set $N \leftarrow \lceil 2\ln(2/\delta)/\varepsilon^2 \rceil$

\FOR{$n=0$ to $N-1$}
    \STATE Initialize the control register to $|0\rangle\!\langle 0 |$
    \STATE Prepare the system register in the state $\rho(\theta)$
    \STATE Sample $t_1$ and $t_2$ independently and at random, $t_1$ with probability $p(t)$ (defined in~\eqref{eq:high-peak-tent-density}), $t_2$ uniformly from the interval $[0,1]$
    \STATE Apply the Hadamard gate and the phase gate $S$ to the control register
    \STATE Apply the following unitaries to the control and system registers:
    \STATE \hspace{0.6cm} \textbullet~$e^{-iH(\phi)t_2}$: Hamiltonian simulation for time $t_2$ on the system register
    \STATE \hspace{0.6cm} 
    \textbullet~Controlled-$H_j$: $H_j$ is a local unitary acting on the system register, controlled by the control register
    \STATE \hspace{0.6cm} \textbullet~$e^{iH(\phi)t_2}$: Hamiltonian simulation for time $t_2$ on the system register
    \STATE \hspace{0.6cm} \textbullet~$e^{iG(\theta)t_1}$: Hamiltonian simulation for time $t_1$ on the system register
    \STATE Apply the Hadamard gate to the control register
    \STATE Measure the control register in the computational basis and store the measurement outcome $b_n$
    \STATE Measure the system register in the eigenbasis of $G_i$ and store the measurement outcome $\lambda_n$
    \STATE $Y_{n}^{(\operatorname{FB(\theta,\phi)})} \leftarrow (-1)^{b_n}\lambda_n$
\ENDFOR

\STATE \textbf{return} $\overline{Y}^{(\operatorname{FB(\theta,\phi))})} \leftarrow 2 \times \frac{1}{N}\sum_{n=0}^{N-1}Y_{n}^{(\operatorname{FB(\theta,\phi)})}$
\end{algorithmic}
\end{algorithm}

\subsection{Quantum algorithms for estimating Wigner--Yanase information matrix elements}

\subsubsection{Elements with respect to $\theta$}

\label{app:WY-theta}

Let us first recall from the statement of Theorem~\ref{thm:WY-theta} the expression for the $(i, j)$-th element of the Wigner--Yanase information matrix $I_{ij}^{\operatorname{WY}}(\theta)$:
\begin{align}
    I&_{ij}^{\operatorname{WY}}(\theta)=\frac{1}{2}\operatorname{Tr}\!\left[  \Phi_{\frac{\theta}{2}}(G_{i})\sqrt{\rho(\theta)}\Phi_{\frac{\theta}{2}}(G_{j})\sqrt{\rho(\theta)}\right]+\frac{1}{4}\left\langle \left\{  \Phi_{\frac{\theta}{2}}(G_{i}),\Phi_{\frac{\theta}{2}}(G_{j})\right\}  \right\rangle _{\rho(\theta)}-\left\langle G_{i}\right\rangle _{\rho(\theta)}\left\langle G_{j}\right\rangle _{\rho(\theta)}. \label{eq:WY-theta_app}
\end{align}
Estimating the third term of the right-hand side of the above equation is relatively straightforward. Details on the estimation of the first term are provided in Section~\ref{sec:1st-term-WY-theta}. Therefore, the focus here is on estimating the second term of~\eqref{eq:WY-theta_app}. Consider the following:
\begin{align}
    &\frac{1}{4}\left\langle \left\{  \Phi_{\frac{\theta}{2}}(G_{i}),\Phi_{\frac{\theta}{2}}(G_{j})\right\}  \right\rangle _{\rho(\theta)}\nonumber\\
    & = \frac{1}{4}\Tr\!\left[ \left\{  \Phi_{\frac{\theta}{2}}(G_{i}),\Phi_{\frac{\theta}{2}}(G_{j})\right\}  \rho(\theta) \right]\\
    & = \int_{\mathbb{R}}\int_{\mathbb{R}} dt_1\ dt_2\ p(t_1) p(t_2) \left( \frac{1}{4}\operatorname{Tr}\!\left[ \left\{
    e^{-iG(\theta/2)t_1} G_i e^{iG(\theta/2)t_1}, e^{-iG(\theta/2)t_2} G_j e^{iG(\theta/2)t_2} \right\} \rho(\theta)\right] \right).\label{eq:WY-theta_app-alt}
\end{align}

We are now in a position to present an algorithm to estimate the second term of~\eqref{eq:WY-theta_app} using its equivalent form shown in~\eqref{eq:WY-theta_app-alt}. The algorithm is similar to Algorithm~\ref{algo:FB-theta}, so here we provide a high-level description of how it works.  At its core, the algorithm relies on a quantum circuit that estimates the expected value of the anticommutator of two operators (see Appendix~\ref{app:Hadamard_test}). Specifically, if the control register in Figure~\ref{fig:qc-primitive-anticomm} is initialized in the state $\ket{0}$ instead of $\ket{1}$, the output of the circuit is $\frac{1}{2} \left\langle \left\{H,U\right\} \right\rangle_{\rho}$, where $H$ is Hermitian and $U$ is Hermitian and unitary. Accordingly, the quantum circuit that plays a role in estimating the integrand of~\eqref{eq:WY-theta_app-alt} is depicted in Figure~\ref{fig:WY-theta}. In this case, we choose $\rho = \rho(\theta)$, $H = e^{-iG(\theta/2)t_1} G_i e^{iG(\theta/2)t_1}$, and $U = e^{-iG(\theta/2)t_2} G_j e^{iG(\theta/2)t_2}$. We then make some further simplifications where possible. The algorithm involves running this circuit $N$ times, where $N$ is determined by the desired precision and error probability. During each run, the times $t_1$ and $t_2$ for the Hamiltonian evolution are sampled independently at random with probability $p(t)$ (defined in~\eqref{eq:high-peak-tent-density}). The final estimation of the second term of~\eqref{eq:WY-theta_app} is obtained by averaging the outputs of the $N$ runs and dividing the result by 2.

\subsubsection{Elements with respect to $\phi$}

\label{app:WY-phi}

Let us first recall from the statement of Theorem~\ref{thm:WY-phi} the expression for the $(i, j)$-th element of the Kubo--Mori information matrix $I_{ij}^{\operatorname{WY}}(\phi)$:
\begin{align}
    I_{ij}^{\operatorname{WY}}(\phi)& = - 8 \Tr\!\left[ \Psi_{\phi}(H_{j}) \sqrt{\rho(\theta)} \Psi_{\phi}(H_{i}) \sqrt{\rho(\theta)} \right] + 4 \left\langle \left\{ \Psi_{\phi}(H_{i}) , \Psi_{\phi}(H_{j}) \right\}   \right\rangle_{\rho(\theta)}. \label{eq:WY-phi_app}
\end{align}

\subsubsubsection{Estimation of the first element}\label{app:1st-term-WY-phi}
We begin by outlining the estimation of the first term in~\eqref{eq:WY-phi_app}. This follows the procedure detailed in~\ref{sec:1st-term-WY-theta}, where the motivations and relevant references for this method are provided. We assume that one  has access to the canonical purification $|\psi(\theta)\rangle$ of a quantum Boltzmann machine, defined in~\eqref{eq:canon-pure-QBM}, and also known as a thermofield double state.
Since many quantum algorithms for thermal-state preparation actually prepare this canonical purification, this assumption is just as reasonable as our assumption of having sample access to the thermal state $\rho(\theta)$. 
Under this assumption, the following identity implies that one can estimate the first term of~\eqref{eq:WY-phi_app} efficiently:
\begin{equation}
\Tr\!\left[ \Psi_{\phi}(H_{j}) \sqrt{\rho(\theta)} \Psi_{\phi}(H_{i}) \sqrt{\rho(\theta)} \right] =
\langle\psi(\theta)|\left(  \Psi_{\phi}(H_{j}) \otimes\left[\Psi_{\phi}(H_{i}) \right]  ^{T}\right)  |\psi(\theta)\rangle.
\label{eq:identity-WY-phi-canon-pur}
\end{equation}
The identity in~\eqref{eq:identity-WY-phi-canon-pur} follows because
\begin{align}  \langle\psi(\theta)|\left(  \Psi_{\phi}(H_{j}) \otimes\left[ \Psi_{\phi}(H_{i}) \right]  ^{T}\right)  |\psi(\theta)\rangle 
& =\langle\Gamma|\left(  \sqrt{\rho(\theta)} \Psi_{\phi}(H_{j}) \sqrt{\rho(\theta)}\otimes\left[  \Psi_{\phi}(H_{i}) \right]  ^{T}\right)  |\Gamma\rangle\\
&  =\langle\Gamma|\left(  \sqrt{\rho(\theta)}\Psi_{\phi}(H_{j}) \sqrt{\rho(\theta)}\Psi_{\phi}(H_{i}) \otimes I\right)
|\Gamma\rangle\\
&  =\operatorname{Tr}\!\left[  \sqrt{\rho(\theta)}\Psi_{\phi}(H_{j})\sqrt{\rho(\theta)}\Psi_{\phi}(H_{i})\right]  .
\end{align}
The second equality follows from the transpose trick \cite[Exercise~3.7.12]{Wbook17}.
Thus, in order to estimate the right-hand side of~\eqref{eq:identity-WY-phi-canon-pur}, we need to be able to measure the expectation of the operator $\left[  \Psi_{\phi}(H_{i}) \right]^{T}$. Consider that
\begin{align}
& \left[  \Psi_{\phi}(H_{i}) \right]^{T} \notag \\ & =\left[  \int_0^1  dt\  e^{iH(\phi)t}H_{i}e^{-iH(\phi)t}\right]
^{T}\\
& = \int_0^1  dt\ \left[  e^{iH(\phi)t}H_{i}e^{-iH(\phi)t} \right]  ^{T}\\
& =\int_0^1 dt\ \left(  e^{iH(\phi)t}\right)^{T}\!H_{i}
^{T}\!\left(  e^{-iH(\phi)t}\right)  ^{T}\\
& =\int_0^1 dt\ e^{-iH^{T}(\phi)t}H_{i}^{T}e^{iH^{T}
(\phi)t} \, .
\end{align}
If each $H_{i}$ is a Pauli string, this is easy to implement by noting that
$I^{T}=I$, $\sigma_{X}^{T}=\sigma_{X}$, $\sigma_{Y}^{T}=-\sigma_{Y}$, and
$\sigma_{Z}^{T}=\sigma_{Z}$.
Then, adopting the shorthand $\psi(\theta)\equiv|\psi(\theta)\rangle\!
\langle\psi(\theta)|$ and applying the definition of $\Psi^\dagger_{\phi}$ in~\eqref{eq:Psi-adjoint}
and cyclicity and linearity of trace, consider that
\begin{align}
\langle\psi(\theta)|\left(  \Psi_{\phi}(H_{j}) \otimes\left[ \Psi_{\phi}(H_{i}) \right]  ^{T}\right)  |\psi(\theta)\rangle
& =\operatorname{Tr}\!\left[  \left(   \Psi_{\phi}(H_{j}) \otimes\left[   \Psi_{\phi}(H_{i})\right]  ^{T}\right) \psi(\theta)\right]  \\
& =\mathbb{E}_{\tau_{1},\tau_{2}}\!\left[  \operatorname{Tr}\!\left[  \left(
H_{j}\otimes H_{i}^{T}\right)  \mathcal{U}_{\tau_{1},\tau_{2}}(\psi
(\theta))\right]  \right]  ,
\end{align}
where $\tau_{1}$ and $\tau_{2}$ are independent random variables each chosen uniformly from the interval $[0,1]$ and $\mathcal{U}_{\tau_{1},\tau_{2}}$ is the following unitary channel:
\begin{equation}
\mathcal{U}_{\tau_{1},\tau_{2}}(Y)\coloneqq \left(  e^{-iH\left(  \phi\right)  \tau_{1}}\otimes e^{iH^{T}\left(  \phi\right)\tau_{2}}\right)  Y \left(  e^{iH\left(  \phi\right)  \tau_{1}
}\otimes e^{-iH^{T}\left( \phi \right)  \tau_{2}}\right)  .
\end{equation}

Thus, a quantum algorithm for estimating the first term of~\eqref{eq:WY-phi_app} consists of
repeating the following steps and averaging: prepare the canonical purification $\psi(\theta)$ in~\eqref{eq:canon-pure-QBM}, pick $\tau_{1}$ and $\tau_{2}$ uniformly at
random from the interval $[0,1]$, apply the Hamiltonian evolution $\mathcal{U}_{\tau_{1},\tau_{2}}$ to $\psi(\theta)$, and measure the observable
$H_{j}\otimes H_{i}^{T}$. The respective quantum circuit is shown in Figure~\ref{fig:WY-phi-1}.

\subsubsubsection{Estimation of the second element}\label{app:WY-phi-2}
We now present an algorithm for estimating the first term of~\eqref{eq:WY-phi_app} in greater detail. Consider the following:
\begin{align}
    4\left\langle \left\{  \Psi_{\phi}(H_{i}),\Psi_{\phi}(H_{j}) \right\}  \right\rangle _{\rho(\theta)}
    & = 4\Tr\!\left[ \left\{  \Psi_{\phi}(H_{i}),\Psi_{\phi}(H_{j}) \right\}  \rho(\theta) \right]\\
    & = \int_0^1 \int_0^1 dt_1\ dt_2\ \left( 4\operatorname{Tr}\!\left[ \left\{
     e^{iH(\phi)t_1}  H_i e^{-iH(\phi)t_1} , e^{iH(\phi)t_2} H_j e^{-iH(\phi)t_2} \right\} \rho(\theta) \right] \right).\label{eq:WY-phi_app-alt}
\end{align}
We are now in a position to present an algorithm to estimate the second term of~\eqref{eq:WY-phi_app} using its equivalent form shown in~\eqref{eq:WY-phi_app-alt}. The algorithm is similar to Algorithm~\ref{algo:FB-theta}, so here we provide a high-level description of how it works.  At its core, the algorithm relies on a quantum circuit that estimates the expected value of the anticommutator of two operators (see Appendix~\ref{app:Hadamard_test}). In particular, if the control register in Figure~\ref{fig:qc-primitive-anticomm} is initialized in the state $\ket{0}$ instead of $\ket{1}$, the output of the circuit is $\frac{1}{2} \left\langle \left\{H,U\right\} \right\rangle_{\rho}$, where $H$ is Hermitian and $U$ is Hermitian and unitary. In this case, we choose $\rho = \rho(\theta)$, $H = e^{iH(\phi)t_1}  H_i e^{-iH(\phi)t_1}$, and $U = e^{iH(\phi)t_2} H_j e^{-iH(\phi)t_2}$. We then make some further simplifications where possible. Accordingly, the quantum circuit that plays a role in estimating the integrand of~\eqref{eq:WY-phi_app-alt} is depicted in Figure~\ref{fig:WY-phi-2}. The algorithm involves running this circuit $N$ times, where $N$ is determined by the desired precision and error probability. During each run, the times $t_1$ and $t_2$ for the Hamiltonian evolution are sampled independently and uniformly at random from the interval $[0,1]$. The final estimation of the second term of~\eqref{eq:WY-phi_app} is obtained by averaging the outputs of the $N$ runs and multiplying the result by 8.

\subsubsection{Elements with respect to $\theta$ and $\phi$} \label{app:WY-theta-phi}
Let us first recall from the statement of Theorem~\ref{thm:WY-theta-phi} the expression for the $(i, j)$-th element of the Kubo--Mori information matrix $I_{ij}^{\operatorname{WY}}(\theta,\phi)$:
\begin{align}
    I_{ij}^{\operatorname{WY}}(\theta, \phi)=i\left\langle \left[  \Phi_{\frac{\theta}{2}}(G_{j}),\Psi_{\phi}(H_{i})\right]  \right\rangle _{\rho(\theta)}. \label{eq:WY-theta-phi_app}
\end{align}
Here, we show how to estimate~\eqref{eq:WY-theta-phi_app}. Consider the following:
\begin{align}
    & i\left\langle \left[  \Phi_{\frac{\theta}{2}}(G_{j}),\Psi_{\phi}(H_{i})\right]  \right\rangle _{\rho(\theta)}\nonumber\\
    & = i \Tr\!\left[ \left[  \Phi_{\frac{\theta}{2}}(G_{j}),\Psi_{\phi}(H_{i})\right] \rho(\theta) \right]\\
    & = \int_\mathbb{R} \int_0^1 dt_1\ dt_2\ p(t_1)\ \left( i  \Tr\!\left[ \left[ e^{-iG(\theta/2)t_1}G_{j} e^{iG(\theta/2)t_1}, e^{iH(\phi)t_2} H_{i} e^{-iH(\phi)t_2}\right] \rho(\theta) \right] \right).\label{eq:WY-theta-phi_app-alt}
\end{align}

We are now in a position to present an algorithm to estimate~\eqref{eq:WY-theta-phi_app} using its equivalent form shown in~\eqref{eq:WY-theta-phi_app-alt}. The algorithm is similar to Algorithm~\ref{algo:FB-theta-phi}, so here we provide a high-level description of how it works.  At its core, the algorithm relies on a quantum circuit that estimates the expected value of the commutator of two operators (see Appendix~\ref{app:Hadamard_test}). In particular, if the control register in Figure~\ref{fig:qc-primitive-comm} is initialized in the state $\ket{0}$ instead of $\ket{1}$, the output of the circuit is $\frac{1}{2} \left\langle \left[H,U\right] \right\rangle_{\rho}$, where $H$ is Hermitian and $U$ is Hermitian and unitary. In this case, we choose $\rho = \rho(\theta)$, $H = e^{-iG(\theta/2)t_1}G_{j} e^{iG(\theta/2)t_1}$, and $U = e^{iH(\phi)t_2} H_{i} e^{-iH(\phi)t_2}$. We then make some further simplifications where possible. Accordingly, the quantum circuit that estimates the integrand of~\eqref{eq:WY-theta-phi_app-alt} is depicted in Figure~\ref{fig:WY-theta-phi}. The algorithm involves running this circuit $N$ times, where $N$ is determined by the desired precision and error probability. During each run, the times $t_1$ and $t_2$ for the Hamiltonian evolution are sampled independently at random, $t_1$ with probability $p(t)$ (defined in~\eqref{eq:high-peak-tent-density}) and $t_2$ from the interval $[0,1]$. The final estimation of the second term of~\eqref{eq:WY-theta-phi_app} is obtained by averaging the outputs of the $N$ runs and multiplying the result by 2.

\subsection{Quantum algorithms for estimating Kubo--Mori information matrix elements}

\subsubsection{Elements with respect to $\theta$}

\label{app:KM-theta}

Let us first recall from the statement of Theorem~\ref{thm:KM-theta} the expression for the $(i, j)$-th element of the Kubo--Mori information matrix $I_{ij}^{\operatorname{KM}}(\theta)$:
\begin{equation}
    I_{ij}^{\operatorname{KM}}(\theta)=\frac{1}{2}\left\langle \left\{  G_{i},\Phi_{\theta}(G_{j})\right\}  \right\rangle_{\rho(\theta)}-\left\langle G_{i}\right\rangle _{\rho(\theta)}\left\langle G_{j}\right\rangle _{\rho(\theta)}.\label{eq:KM-theta_app}
\end{equation}
Here, we show how to estimate the first term of~\eqref{eq:KM-theta_app}. Consider the following:
\begin{align}
     \frac{1}{2}\left\langle \left\{  G_{i},\Phi_{\theta}(G_{j})\right\}  \right\rangle_{\rho(\theta)}
    & = \frac{1}{2}\operatorname{Tr}[\left\{
    G_{i},\Phi_{\theta}(G_{j})\right\}  \rho(\theta)]\\
    & = \int_{\mathbb{R}} dt\ p(t)\ \left( \frac{1}{2} \operatorname{Tr}\!\left[ \left\{G_i , e^{-iG(\theta)t} G_j e^{iG(\theta)t} \right\} \rho(\theta)\right] \right).\label{eq:KM-theta_app-alt}
\end{align}

We are now in a position to present an algorithm to estimate the first term of~\eqref{eq:KM-theta_app} using its equivalent form shown in~\eqref{eq:KM-theta_app-alt}. The algorithm is similar to Algorithm~\ref{algo:FB-theta}, so here we provide a high-level description of how it works.  At its core, the algorithm relies on a quantum circuit that estimates the expected value of the anticommutator of two operators (see Appendix~\ref{app:Hadamard_test}). In particular, one can notice that when initializing the control register in the state $\ket{0}$ instead of $\ket{1}$, the quantum circuit shown in Figure~\ref{fig:qc-primitive-anticomm} allows to estimate the quantity $\frac{1}{2} \left\langle \left\{ H, U\right\} \right\rangle_{\rho}$, where $H$ is Hermitian and the $U$ is Hermitian and unitary. In this case, we choose $\rho = \rho(\theta)$, $H = G_i $, and $U = e^{-iG(\theta)t} G_j e^{iG(\theta)t}$. We then make some further simplifications that follow because $\rho(\theta)$ commutes with $e^{-iG(\theta)t}$. Accordingly, the quantum circuit that plays a role in estimating the integrand of~\eqref{eq:KM-theta_app-alt} is depicted in Figure~\ref{fig:KM-theta}. The algorithm involves running this circuit $N$ times, where $N$ is determined by the desired precision and error probability. During each run, the time $t$ for the Hamiltonian evolution is sampled at random with probability $p(t)$ (defined in~\eqref{eq:high-peak-tent-density}). The final estimation of the first term of~\eqref{eq:KM-theta_app} is obtained by averaging the outputs of the $N$ runs.

\subsubsection{Elements with respect to $\phi$}

\label{app:KM-phi}

Let us first recall from the statement of Theorem~\ref{thm:KM-phi} the expression for the $(i, j)$-th element of the Kubo--Mori information matrix $I_{ij}^{\operatorname{KM}}(\phi)$:
\begin{equation}
    I_{ij}^{\operatorname{KM}}(\phi)=\left\langle \left[  \Psi_{\phi}(H_{i}), \left[  G(\theta) , \Psi_{\phi}(H_{j})\right]  \right] \right\rangle _{\rho(\theta)}.\label{eq:KM-phi_app}
\end{equation}
Consider the following:
\begin{align}
    & \left\langle \left[  \Psi_{\phi}(H_{i}), \left[  G(\theta) , \Psi_{\phi} (H_{j})\right]  \right] \right\rangle _{\rho(\theta)}\\
    & = \left\langle \left[\left[  \Psi_{\phi}(H_{j}),G(\theta)\right]  ,\Psi_{\phi}(H_{i})\right]  \right\rangle _{\rho(\theta)}\nonumber\\
    & = \Tr\! \bigg[ \!\left[ \left[\Psi_{\phi}(H_{j}) ,G(\theta) \right], \Psi_{\phi}(H_{i}) \right] \rho(\theta) \bigg]\\
    & = \int_0^1\!\int_0^1 dt_1\ dt_2\ \Bigg(\!  \Tr\!\bigg[ \left[\left[ e^{iH(\phi)t_1} H_j e^{-iH(\phi)t_1},G(\theta)\right] , e^{iH(\phi)t_2} H_i e^{-iH(\phi)t_2} \right] \rho(\theta)\bigg] \Bigg).\label{eq:KM-phi_est}
\end{align}

We are now in a position to present an algorithm to estimate the first term of~\eqref{eq:KM-phi_app} using its equivalent form shown in~\eqref{eq:KM-phi_est}. The algorithm is similar to Algorithm~\ref{algo:FB-phi}, so here we provide a high-level description of how it works.  At its core, the algorithm relies on a quantum circuit that estimates the expected value of nested commutators of three operators, $\frac{1}{4} \left\langle \big[ \left[U_1 , H \right], U_0\big]\right\rangle_\rho$, where $H$ is Hermitian, and $U_0$ and $U_1$ are both Hermitian and unitary (refer to Appendix~\ref{app:Hadamard_test} and Figure~\ref{fig:prim-nest-comm}). In this case, we choose $\rho = \rho(\theta)$, $U_1 = e^{iH(\phi)t_1} H_j e^{-iH(\phi)t_1}$, $H = G(\theta)$, and $U_0 = e^{iH(\phi)t_2} H_i e^{-iH(\phi)t_2}$. We then make some further simplifications where possible. Specifically, the quantum circuit that plays a role in estimating the integrand of~\eqref{eq:KM-phi_est} is depicted in Figure~\ref{fig:KM-phi}. The algorithm involves running this circuit $N$ times, where $N$ is determined by the desired precision and error probability. During each run, the times $t_1$ and $t_2$ for the Hamiltonian evolutions are sampled independently and uniformly at random from the interval $[0,1]$. The final estimation of the first term of~\eqref{eq:KM-theta_app} is obtained by averaging the outputs of the $N$ runs and multiplying the result by $4 \left \| \theta \right\|_1$. For measuring $G(\theta) $, we again adopt a sampling approach (see Remark~\ref{rem:measuring-G-theta}).

\subsubsection{Elements with respect to $\theta$ and $\phi$}

\label{app:KM-theta-phi}

Let us first recall from the statement of Theorem~\ref{thm:KM-theta-phi} the expression for the $(i, j)$-th element of the Fisher–Bures information matrix $I_{ij}^{\operatorname{KM}}(\theta,\phi)$:
\begin{equation}
    I_{ij}^{\operatorname{KM}}(\theta,\phi)=\frac{i}{2}\left\langle \left\{  \Phi_{\theta}(G_{i}),\left[G(\theta),  \Psi_{\phi}(H_{j})\right]  \right\}  \right\rangle _{\rho(\theta)}
.\label{eq:KM-theta-phi_app}
\end{equation}
Consider the following:
\begin{align}
    & \frac{i}{2}\left\langle \left\{  \Phi_{\theta}(G_{i}),\left[G(\theta),   \Psi_{\phi}(H_{j})\right]  \right\}  \right\rangle _{\rho(\theta)}\notag\\
    & = \frac{i}{2} \Tr\!\bigg[ \left\{  \Phi_{\theta}(G_{i}) , \left[G(\theta) \Psi_{\phi}(H_j) \right] \right\} \rho(\theta) \bigg]\\
    & = \frac{i}{2} \int_\mathbb{R} \int_0^1 dt_1\ dt_2\ p(t_1)\ \Bigg( \Tr\!\Big[ \left\{ e^{-iG(\theta)t_1} G_{i} e^{iG(\theta)t_1}, \left[  G(\theta) , e^{iH(\phi)t_2} H_j e^{-iH(\phi)t_2} \right]\right\} \rho(\theta) \Big]
    \Bigg).\label{eq:KM-theta-phi_est}
\end{align}

We are now in a position to present an algorithm to estimate~\eqref{eq:KM-theta-phi_app} using its equivalent form shown in~\eqref{eq:KM-theta-phi_est}. The algorithm is similar to Algorithm~\ref{algo:FB-phi}, so here we provide a high-level description of how it works.  At its core, the algorithm relies on a quantum circuit that estimates the expected value of nested anticommutator and commutator of three operators, $\frac{1}{4} \left\langle \big\{ U_0,\left[H, U_1 \right]\big\}\right\rangle_\rho$, where $H$ is Hermitian, and $U_0$ and $U_1$ are both Hermitian and unitary (refer to Appendix~\ref{app:Hadamard_test} and Figure~\ref{fig:prim-nest-anticomm-comm}). In this case, we choose $\rho = \rho(\theta)$, $U_0 = e^{-iG(\theta)t_1} G_{i} e^{iG(\theta)t_1}$, $H = G(\theta)$, and $U_1 =e^{iH(\phi)t_2} H_j e^{-iH(\phi)t_2}$. We then make some further simplifications where possible. Specifically, the quantum circuit that plays a role in estimating the integrand of~\eqref{eq:KM-theta-phi_est} is depicted in Figure~\ref{fig:KM-theta-phi}. The algorithm involves running this circuit $N$ times, where $N$ is determined by the desired precision and error probability. During each run, the times $t_1$ and $t_2$ for the Hamiltonian evolutions are sampled independently and at random, $t_1$ with probability $p(t)$ (defined in~\eqref{eq:high-peak-tent-density}) and $t_2$ from the interval $[0,1]$. The final estimation of~\eqref{eq:KM-theta-phi_app} is obtained by averaging the outputs of the $N$ runs and multiplying the result by $2 \left \| \theta \right\|_1$. For measuring $G(\theta) $, we again adopt a sampling approach (see Remark~\ref{rem:measuring-G-theta}).

\end{document}